\tikzstyle{new style 0}=[fill=white, draw=black, shape=circle]
\tikzstyle{new style 1}=[fill=white, draw=red, shape=circle]
\tikzstyle{Box}=[fill={rgb,255: red,191; green,191; blue,191}, draw=black, shape=rectangle, minimum width=0.9cm, minimum height=1.5cm]
\tikzstyle{Red Arrow1}=[->, draw=red]
\tikzstyle{Red Arrow2}=[<-, draw=red]
\tikzstyle{Dashed Line}=[-, draw=black, dashed]
\tikzstyle{RDashed Line}=[-, draw=red]
\tikzstyle{BDashed Line}=[-, draw=blue]
\tikzstyle{Very Thick Line}=[-, draw=black, thick]
\tikzstyle{GVery Thick Line}=[-, draw=green, thick]
\tikzstyle{new edge style 0}=[-]
\tikzstyle{Very Thick Arrow1}=[->, draw=black, very thick]
\tikzstyle{Very Thick Arrow1}=[<-, draw=black, very thick]
\title{\textsf{Thermal stability of the Nagaoka-Thouless theorems
}}
\date{}
\author{Tadahiro Miyao}
\affil{Department of Mathematics, Hokkaido University

Sapporo 060-0810, Japan}
\newcommand{\h}{\mathfrak{H}}
\newcommand{\D}{\mathrm{dom}}
\newcommand{\R}{\mathrm{ran}}
\newcommand{\Fock}{\mathfrak{F}}
\newcommand{\Ffin}{\mathfrak{F}_{\mathrm{fin}}}
\newcommand{\dG}{d\Gamma}
\newcommand{\la}{\langle}
\newcommand{\ra}{\rangle}
\newcommand{\Tr}{\mathrm{Tr}}
\newcommand{\BbbR}{\mathbb{R}}
\newcommand{\BbbN}{\mathbb{N}}
\newcommand{\BbbZ}{\mathbb{Z}}
\newcommand{\BbbC}{\mathbb{C}}
\newcommand{\vepsilon}{\varepsilon}
\newcommand{\Hf}{H_{\mathrm{f}}}
\newcommand{\no}{\nonumber \\}
\newcommand{\Ex}{\mathrm{E}}
\newcommand{\bs}{\boldsymbol}
\newcommand{\Me}{\nu_{\beta}}
\newcommand{\Mei}{\nu_{\beta, \infty} }
\newcommand{\rr}{\mathrm{rad}}
\newcommand{\hh}{\mathrm{HH}}
\newcommand{\bfx}{x}
\newcommand{\bfy}{y}
\newcommand{\bfA}{A}
\newcommand{\bfr}{r}
\newcommand{\rf}{\mathrm{F}}
\newcommand{\rp}{\mathrm{P}}
\newcommand{\rh}{\mathrm{H}}
\begin{document}

\newtheorem{define}{Definition}[section]
\newtheorem{Thm}[define]{Theorem}
\newtheorem{Prop}[define]{Proposition}
\newtheorem{lemm}[define]{Lemma}
\newtheorem{rem}[define]{Remark}
\newtheorem{assum}{Condition}
\newtheorem{example}{Example}
\newtheorem{coro}[define]{Corollary}

\maketitle

\begin{abstract}
We prove that 
the Aizenman-Lieb theorem on ferromagnetism in the Hubbard model holds true even if the electron-phonon
 interactions and the electron-photon interactions are taken into account. Our proof is based on path integral representations of the partition functions.
\begin{flushleft}
{\bf Mathematics Subject Classification (2010).} 
\end{flushleft}
Primary:   82B20, 47D08\\
Secondary: 81T25, 60J28
\begin{flushleft}
{\bf
Keywords.} Ferromagnetism, the Hubbard model, the Holstein-Hubbard model, lattice QED, the Nagaoka-Thouless theorem
\end{flushleft}
 
\end{abstract}

\section{Introduction}\label{Intro}
\setcounter{equation}{0}
\subsection{Background}
The Hubbard model  of  interacting electrons occupies a special place  in the study of ferromagnetism;  this is because it is the simplest model which can  describe  the following fundamental properties:
\begin{itemize}
\item  the Pauli exclusion principle;
\item  the Coulomb repulsion between electrons;
\item  itinerancy of the  electrons.
\end{itemize}
It is believed that ferromagnetism arises from the interplay of these properties.
However, to reveal  the mechanism of ferromagnetism has been mystery,  even today. 
A first rigorous example of the ferromagnetism in the Hubbard model was constructed by Nagaoka and Thouless \cite{Nagaoka, Thouless}. They proved that the ground state of the model exhibits ferromagnetism when 
 there exists precisely one hole
 and the Coulomb strength $U$ is very large.
 At a first glance, it appears that the Nagaoka-Thouless theorem is based on  unrealistic conditions; however, 
  experimental evidence of Nagaoka-Thouless ferromagnetism  was recently presented 
by  using a quantum dot device \cite{DMM}.
The Nagaoka-Thouless theorem is  restricted to the ground states.
It is logical as well as important to ask whether  we can extend the theorem to positive temperatures.
This problem was solved by Aizenman and Lieb  \cite{AL}; in the present paper, we call their result   the Aizenman-Lieb theorem, see Theorem \ref{GeneAZNT} for details.
\medskip

In the real world, the electrons are constantly influenced  by the surrounding environment, e.g.,  
the lattice vibrations, the radiation field and the thermal fluctuations.
Therefore, the following question naturally arises: Are  the Nagaoka-Thouless theorem and related properties of many-electron system stable under the influences from the environment?
This question has been studied by the author,  successfully; in \cite{Miyao7, Miyao5,Miyao,Miyao9}, not only the Nagaoka-Thouless theorem but also  Lieb's theorem\footnote{ This theorem claims  that 
 with a bipartite lattice and a half-filled band, the ground state of the repulsive Hubbard model has total spin $S=||A|-|B||/2$, where $|A|$ (resp. $|B|$) is the
number of sites in the $A$-sublattice (resp. $B$-sublattice), see \cite{Lieb} for details. }
are shown to be stable, even if the electron-phonon and the electron-photon interactions are taken into account; 
furthermore, a general structure behind these stabilities is explored in  \cite{Miyao2}. 
Note that these works are related to  ground state  properties. Can we further take thermal effects into consideration?
The principle purpose of the present paper is to prove stabilities of the Aizenman-Lieb theorem (i.e., an extension of the Nagaoka-Thouless theorem to positive temperatures) under the influences of the lattice vibrations and the quantized radiation field.
\medskip

For later use, we provide  more precise statements of the Nagaoka-Thoulss theorem and the Aizenman-Lieb theorem.
Let us consider a $d$-dimensional square lattice: 
\begin{align}
\Lambda=[-\ell/2, \ell/2)^d\cap \BbbZ^d,
\end{align} 
where $\ell$ is an even number.
The elements of $\Lambda$ are called vertices and we say that $x, y\in \Lambda$ are neighbors if 
$\|x-y\|=1$, where
$\|x\|=\max_{j=1, \dots, d}|x_j|=1$. A pair $e=\{x, y\} \in \Lambda\times \Lambda$ is called edge
 if $x$ and $y$ are neighbors.  We denote by $E_{\Lambda}$ the set of all edges. Clearly, $(\Lambda, E_{\Lambda})$ becomes a graph.
 
The Hubbard model on  $\Lambda$ is defined by the Hamiltonian
\begin{align}
H_{\mathrm{H}, 0}=\sum_{\sigma=\pm 1} \sum_{x, y\in \Lambda} (-t_{xy}) c_{x\sigma}^*c_{y\sigma}
+U \sum_{x\in \Lambda} n_{x, +1} n_{x, -1}+\sum_{{x, y\in \Lambda}\atop{x\neq y}} U_{xy} n_xn_y.
\end{align}
The self-adjoint operator $H_{\mathrm{H}, 0}$ acts in the $N$-electron space $
\mathfrak{H}_{\mathrm{H}}^{(N)}=\bigwedge^N\big(
\ell^2(\Lambda) \oplus \ell^2(\Lambda)
\big)
$, where $\bigwedge^N$ indicates the $N$-fold antisymmetric tensor product.
$c_{x\sigma}^*$ and $c_{x\sigma}$ are the fermionic creation- and annihilation operators satisfying 
the usual anticommutation relations:
\begin{align}
\{c_{x\sigma}^*, c_{y\tau}\}=\delta_{\sigma\tau}\delta_{xy},\ \ \ \{c_{x\sigma}, c_{y\tau}\}=0,
\end{align}
where $\delta_{ab}$ is the Kronecker delta. The number operators are defined by $n_{x\sigma}=c_{x\sigma}^*c_{x\sigma},\ \sigma=\pm 1$
 and $n_x=n_{x, +1}+n_{x, -1}$.
  For simplicity,  the hopping matrix $(t_{xy})$ satisfies the following:
\begin{itemize}
\item[{\bf (T)}] $\displaystyle
t_{xy} =
\begin{cases}
t>0 & \mbox{ if $\{x, y\}\in E_{\Lambda}$}\\
0 & \mbox{otherwise.}
\end{cases}
$
\end{itemize}

 $U$ and $U_{xy}$ are the local and non-local Coulomb matrix elements, respectively.
 In the present paper, we always assume the following:
 \begin{description}
 \item[{\bf (U. 1)}] $U>0$;
 \item[{\bf (U. 2)}]  $U_{xy}=U_{yx}\in \BbbR$ for all $x, y\in \Lambda$ with $x\neq y$.
 \end{description}

 The spin operators at $x\in \Lambda$ are defined by 
 \begin{align}
 S_x^{(j)}=\frac{1}{2}\sum_{\sigma, \sigma'=\pm 1} c_{x\sigma}^*\big(
 s^{(j)}
 \big)_{\sigma\sigma'} c_{x\sigma'},\ \ j=1, 2, 3,
  \end{align} 
  where $s^{(j)}\, (j=1, 2, 3)$ are the $2\times 2$  Pauli matrices.
 The total spin operators are defined by 
 \begin{align}
 S_{\mathrm{tot}}^{(j)}=\sum_{x\in \Lambda}S_x^{(j)},\ \ \ j=1, 2, 3
 \end{align}
 and 
 \begin{align}
 {\bs S}_{\mathrm{tot}}^2=\sum_{j=1}^3\big(
 S_{\mathrm{tot}}^{(j)}
 \big)^2
 \end{align}
 with eigenvalues $S(S+1)$, where $S=0, 1, \dots, N/2$ or $S=1/2, 3/2, \dots, N/2$.
 The Hubbard Hamiltonian   in a uniform magnetic field ${\bs h}=(0, 0, 2b)$ is given by 
 \begin{align}
 H_{\mathrm{H}}=H_{\mathrm{H}, 0}-2bS_{\mathrm{tot}}^{(3)},\ \ b>0.
 \end{align}

 Let us  derive an effective Hamiltonian describing the system with very large $U$.
For this purpose,  we introduce  the {\it Gutzwiller projection}  $P_{\mathrm{G}}$ by 
 \begin{align}
 P_{\mathrm{G}}=\prod_{x\in \Lambda} (1-n_{x, +1} n_{x, -1}).
 \end{align}
 $P_{\mathrm{G}}$ is the orthogonal projection onto the subspace with no doubly occupied sites.

\begin{Prop}[\cite{Miyao}]
 Let us consider the Hubbard model.
 Assume that  $N=|\Lambda|-1$.
We define an effective Hamiltonian $
H_{\mathrm{H}, \infty}
$ by $
H_{\mathrm{H}, \infty}=P_{\mathrm{G}} H_{\mathrm{H}}^{U=0}P_{\mathrm{G}}
$, where $
H_{\mathrm{H}}^{U=0}
$ is the Hubbard Hamiltonian $H_{\mathrm{H}}$ with $U=0$. For each $d\in \BbbN$,  we have
\begin{align}
\lim_{U\to \infty} (H_{\mathrm{H}}-z)^{-1}=(H_{\mathrm{H}, \infty}-z)^{-1}P_{\mathrm{G}},\ \ z\in \BbbC\backslash \BbbR
\end{align}
in the operator norm topology.
\end{Prop}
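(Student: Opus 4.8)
The plan is to treat $H_{\mathrm{H}}$ as a bounded self-adjoint operator on the finite-dimensional space $\h_{\mathrm{H}}^{(N)}$ and to control its resolvent via the Feshbach--Schur block decomposition with respect to $P_{\mathrm{G}}$ and $Q:=\one-P_{\mathrm{G}}$. Put $B:=\sum_{x\in\Lambda}n_{x,+1}n_{x,-1}$, so that $H_{\mathrm{H}}=A+UB$ with $A:=H_{\mathrm{H}}^{U=0}$. Since $B$ counts the number of doubly occupied sites, $B\ge 0$, its kernel is precisely the no-double-occupancy subspace $\R P_{\mathrm{G}}$, and every nonzero eigenvalue of $B$ is a positive integer; hence $P_{\mathrm{G}}B=BP_{\mathrm{G}}=0$ and $QBQ\ge\one$ on $\R Q$. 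In the orthogonal decomposition $\h_{\mathrm{H}}^{(N)}=\R P_{\mathrm{G}}\oplus\R Q$ this yields the block form
\begin{align}
H_{\mathrm{H}}=
\begin{pmatrix}
P_{\mathrm{G}}AP_{\mathrm{G}} & P_{\mathrm{G}}AQ\\
QAP_{\mathrm{G}} & QAQ+U\,QBQ
\end{pmatrix},
\end{align}
and $H_{\mathrm{H},\infty}=P_{\mathrm{G}}AP_{\mathrm{G}}$ is exactly the upper-left block extended by $0$ on $\R Q$; in particular it is self-adjoint, so $(H_{\mathrm{H},\infty}-z)^{-1}$ is defined for $z\in\BbbC\backslash\BbbR$ and $(H_{\mathrm{H},\infty}-z)^{-1}P_{\mathrm{G}}$ is the operator acting as $(P_{\mathrm{G}}AP_{\mathrm{G}}-z)^{-1}$ on $\R P_{\mathrm{G}}$ and as $0$ on $\R Q$.

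Now fix $z\in\BbbC\backslash\BbbR$. The $(Q,Q)$-block $QAQ+U\,QBQ$ is self-adjoint on $\R Q$ with spectrum in $[\,U-\|A\|,\infty)$ (using $QBQ\ge\one$ and $QAQ\ge-\|A\|$), hence invertible for $U>\|A\|+|z|$, with
\begin{align}
\big\|(QAQ+U\,QBQ-z)^{-1}\big\|\le\frac{1}{U-\|A\|-|z|}.
\end{align}
Writing $R_U$ for this inverse and $S_U:=P_{\mathrm{G}}AP_{\mathrm{G}}-z-P_{\mathrm{G}}AQ\,R_U\,QAP_{\mathrm{G}}$ for the Schur complement on $\R P_{\mathrm{G}}$, the Feshbach formula expresses, for all such $U$,
\begin{align}
(H_{\mathrm{H}}-z)^{-1}=
\begin{pmatrix}
S_U^{-1} & -S_U^{-1}P_{\mathrm{G}}AQ\,R_U\\
-R_U\,QAP_{\mathrm{G}}\,S_U^{-1} & R_U+R_U\,QAP_{\mathrm{G}}\,S_U^{-1}P_{\mathrm{G}}AQ\,R_U
\end{pmatrix}.
\end{align}

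Letting $U\to\infty$, the estimate above gives $\|R_U\|=O(1/U)$, so $S_U\to P_{\mathrm{G}}AP_{\mathrm{G}}-z$ in norm; since $P_{\mathrm{G}}AP_{\mathrm{G}}$ is self-adjoint, $S_U$ is invertible for all large $U$ and $S_U^{-1}\to(P_{\mathrm{G}}AP_{\mathrm{G}}-z)^{-1}$ (on $\R P_{\mathrm{G}}$) in norm, while the remaining three blocks each carry at least one factor $R_U$ with all other factors bounded and therefore tend to $0$ in norm. Reassembling the blocks gives $(H_{\mathrm{H}}-z)^{-1}\to(H_{\mathrm{H},\infty}-z)^{-1}P_{\mathrm{G}}$ in operator norm, as claimed. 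The only input beyond routine Feshbach bookkeeping and the elementary self-adjoint resolvent bound is the spectral fact $\ker B=\R P_{\mathrm{G}}$ together with the gap $QBQ\ge\one$ on $\R Q$; this is what forces $R_U\to0$ in norm and is the crux of the argument. Finite-dimensionality of $\h_{\mathrm{H}}^{(N)}$ removes any domain subtleties, but the same proof applies verbatim whenever $B\ge0$ has a spectral gap above its kernel.
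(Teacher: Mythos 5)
Your proof is correct. The paper itself does not supply a proof of this Proposition---it is cited verbatim from \cite{Miyao}, and the later analogue for $H_{\natural,\infty}$ is only referred back to the method of \cite[Theorem 2.5]{Miyao}---so a line-by-line comparison is not possible here. What you do is the standard Feshbach--Schur block decomposition of $H_{\mathrm{H}}=A+UB$ with respect to $P_{\mathrm{G}}$ and $Q=\one-P_{\mathrm{G}}$, and it hinges on exactly the right two facts: $[B,P_{\mathrm{G}}]=0$ with $P_{\mathrm{G}}B=0$ (so the off-diagonal blocks are $U$-independent and the upper-left block is $H_{\mathrm{H},\infty}$), and the gap $QBQ\ge\one$ on $\R(Q)$, which yields $\|R_U\|\le (U-\|A\|-|z|)^{-1}\to 0$ and hence norm convergence of the Schur complement $S_U\to P_{\mathrm{G}}AP_{\mathrm{G}}-z$ together with the vanishing of the three blocks carrying a factor $R_U$. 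This is either the same computation as in \cite{Miyao} or a finite-dimensional specialisation of it; an equally common alternative in this literature is to invoke monotone convergence of the quadratic forms $\la\psi|(A+UB)\psi\ra$ as $U\uparrow\infty$, which gives strong (and in finite dimensions norm) resolvent convergence to $P_{\mathrm{G}}AP_{\mathrm{G}}\oplus(+\infty)$ without the explicit $O(1/U)$ rate. Your route is self-contained, quantitative, and correct; finite-dimensionality of $\h_{\mathrm{H}}^{(N)}$ removes all domain issues, and no feature of the Hubbard model beyond the two spectral facts about $B$ is used.
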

We denote the restriction of $H_{\mathrm{H}, \infty}$ to $\R(P_{\mathrm{G}})$ by the same symbol.
The Nagaoka-Thouless theorem can be stated as follows.
\begin{Thm}[\cite{Tasaki2,Tasaki22}]\label{NTThm}
Let us consider the Hubbard model.
 Assume that $N=|\Lambda|-1$.
  For each $d\ge 2$, the ground state of $H_{\mathrm{H}, \infty}$ has total spin $S
=(|\Lambda|-1)/2$ and is unique apart from the trivial $(2S+1)$-degeneracy.
\end{Thm}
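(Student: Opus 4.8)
The plan is to follow the Nagaoka--Thouless--Tasaki route and reduce the statement to the Perron--Frobenius theorem for a finite matrix, taking advantage of the fact that, with $N=|\Lambda|-1$, the one‑hole, no‑double‑occupancy sector $\R(P_{\mathrm{G}})$ has a combinatorially transparent basis.

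\textbf{Step 1: a sign-favourable basis.} A state in $\R(P_{\mathrm{G}})$ is a superposition of configurations specified by the site $\mathfrak h\in\Lambda$ of the unique hole together with a spin assignment $\bs\sigma\colon\Lambda\setminus\{\mathfrak h\}\to\{+1,-1\}$ on the occupied sites; after fixing a linear order of $\Lambda$ I set $\Phi_{\mathfrak h,\bs\sigma}=\big(\prod_{x\neq\mathfrak h}c_{x\sigma_x}^{*}\big)|0\rangle$ with the factors in that order, obtaining an orthonormal basis. In this basis the Coulomb term $\sum_{x\neq y}U_{xy}n_xn_y$ and the Zeeman term $-2bS_{\mathrm{tot}}^{(3)}$ are diagonal (the former depends on $\mathfrak h$ alone, because $n_x$ ignores the spin), while the hopping term carries $\Phi_{\mathfrak h,\bs\sigma}$ to a signed sum of $\Phi_{\mathfrak h',\bs\sigma'}$ with $\mathfrak h'$ a neighbour of $\mathfrak h$ and $\bs\sigma'$ obtained from $\bs\sigma$ by sliding the spin at $\mathfrak h'$ onto $\mathfrak h$. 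The decisive point I would establish by a direct computation is that the fermionic sign of each such transition depends only on the ordered pair $(\mathfrak h,\mathfrak h')$ and not on $\bs\sigma$ --- the hole being the only vacancy, the number of occupied sites ``passed over'' is a function of $\mathfrak h$ and $\mathfrak h'$ alone --- and that these signs have trivial holonomy around every plaquette of $(\Lambda,E_\Lambda)$. Hence there is a gauge $\Phi_{\mathfrak h,\bs\sigma}\mapsto\tilde\Phi_{\mathfrak h,\bs\sigma}:=\varepsilon_{\mathfrak h}\Phi_{\mathfrak h,\bs\sigma}$, with $\varepsilon_{\mathfrak h}\in\{\pm1\}$ depending only on $\mathfrak h$, in which every off-diagonal entry of $-H_{\mathrm{H},\infty}$ is $\ge 0$; indeed every nonzero hopping entry equals $t>0$.

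\textbf{Step 2: irreducibility and Perron--Frobenius.} As $H_{\mathrm{H},\infty}$ commutes with $S_{\mathrm{tot}}^{(3)}$, the matrix $-H_{\mathrm{H},\infty}$ is block diagonal along the eigenvalues $S_{\mathrm{tot}}^{(3)}=M$ (equivalently, a fixed number $k$ of down-spins), and I would analyse each block $B_M$ separately. I claim $B_M$ is irreducible. Because $(\Lambda,E_\Lambda)$ is connected, the hole can be driven to any prescribed vertex; and transporting the hole once around a unit square and back to its starting site enacts a cyclic $3$-permutation of the spins occupying the other three corners. Since $d\ge 2$ such squares exist and overlap so as to cover $\Lambda$, the resulting $3$-cycles generate the alternating group on the $|\Lambda|-1$ occupied sites, which acts transitively on the spin assignments with a prescribed $k$. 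Combining these moves connects any $\tilde\Phi_{\mathfrak h,\bs\sigma}$ to any $\tilde\Phi_{\mathfrak h',\bs\sigma'}$ of the same block. The Perron--Frobenius theorem (applied to $B_M+cI$ with $c$ large) then yields a simple top eigenvalue $P_M$ of $B_M$, that is, a \emph{unique} ground state $\Psi_M$ of $H_{\mathrm{H},\infty}$ inside the sector, with strictly positive coefficients in the $\tilde\Phi$-basis.

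\textbf{Step 3: pinning down the total spin.} To compare sectors I single out the fully polarized one, $M=S:=(|\Lambda|-1)/2$, where there is exactly one configuration per hole position and $B_S$ collapses to the hole-hopping matrix $tA_\Lambda+D$ with $A_\Lambda$ the adjacency matrix of $(\Lambda,E_\Lambda)$ and $D$ diagonal; let $\phi>0$ be its Perron vector, $B_S\phi=P_S\phi$. For a general $M$ I would form $\Psi_M:=\sum_{\mathfrak h}\phi(\mathfrak h)\sum_{\bs\sigma}\tilde\Phi_{\mathfrak h,\bs\sigma}$, the inner sum running over all configurations of that sector, and check --- using that the Coulomb diagonal depends on $\mathfrak h$ only and that each neighbour $\mathfrak h$ of a given $\mathfrak h'$ feeds $\tilde\Phi_{\mathfrak h',\bs\sigma'}$ from exactly one configuration with the hole at $\mathfrak h$ --- that $B_M\Psi_M=(P_S-2bk)\Psi_M$ with $k=S-M$. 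As $\Psi_M$ is a positive vector, Step 2 and Perron--Frobenius force $\Psi_M$ to be the ground state of the sector and $P_M=P_S-2bk$. When $b=0$ this means every sector has ground energy $P_S$, the family $\{\Psi_M\}_{|M|\le S}$ fuses into a single irreducible $SU(2)$-multiplet, and --- each sector contributing exactly one dimension --- the ground space of $H_{\mathrm{H},\infty}$ is precisely that multiplet, of total spin $S=(|\Lambda|-1)/2$ and dimension $2S+1$; this is the claimed uniqueness up to the trivial degeneracy. (For $b>0$ the same identity orders the energies strictly, so the ground state is the non-degenerate highest-weight vector $\Psi_S$.) The step I expect to be the genuine obstacle is the irreducibility claim of Step 2: it is exactly where the hypothesis $d\ge 2$ cannot be dispensed with --- in one dimension the hole cannot reshuffle the spins and the ground-state degeneracy is extensive --- and making the plaquette / $3$-cycle generation precise, together with the book-keeping of the fermionic signs in Step 1, is the real technical content.
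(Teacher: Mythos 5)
Your proof is correct and is precisely the classical Nagaoka--Thouless--Tasaki argument: gauge the one-hole basis (using bipartiteness of the square lattice, so that every hole loop has even length and induces an even permutation of the occupied spins) to make $-H_{\mathrm{H},\infty}$ a matrix with nonnegative off-diagonal entries that is irreducible on each $S_{\mathrm{tot}}^{(3)}$-sector when $d\ge 2$, apply Perron--Frobenius, and match the unique Perron eigenvectors across sectors to the fully polarized $SU(2)$ multiplet --- which is exactly the content of the cited works \cite{Tasaki2,Tasaki22}. The paper itself does not reprove Theorem~\ref{NTThm} but defers to those references; its own machinery of Feynman--Kac--It\^o formulas and random-loop representations is deployed only for the finite-temperature generalizations (Theorems~\ref{GeneAZNT} and~\ref{FinNTEnv}).
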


\begin{rem}\upshape
\begin{itemize}
\item 
By \cite[Corollary 2.2]{KomaTasaki}, we have 
\begin{align}
\lim_{b\to +0} \lim_{|\Lambda|\to \infty} \frac{\big\la S_{\mathrm{tot}}^{(3)}\big\ra_{\mathrm{H}, \infty}(b)}{|\Lambda|}
\ge \sqrt{3} \lim_{|\Lambda|\to \infty} \sqrt{
\frac{
\big\la \big(S_{\mathrm{tot}}^{(3)}\big)^2
\big\ra_{\mathrm{H}, \infty}(b=0)
}{|\Lambda|^2}
}=\frac{1}{2},
\end{align}
where $\la \cdot \ra_{\mathrm{H}, \infty}(b)$ is the ground state expectation associated with $H_{\mathrm{H}, \infty}$:
$
\la O\ra_{\rh, \infty}(b)=\frac{1}{2S+1}\sum_{m=-S}^S \la \psi_m|O\psi_m\ra.
$ Here, $\psi_m$ is the  normalized  unique ground state of $H_{\mathrm{H}, \infty}$ in the $m$-subspace
$\h_{\mathrm{H}, M}^{(N)}[m]=\ker\big(
S_{\mathrm{tot}}^{(3)}-m
\big),\ m\in \mathrm{spec}(S_{\mathrm{tot}}^{(3)})$.
Because $\frac{S_{\mathrm{tot}}^{(3)}}{|\Lambda|} \le \frac{1}{2}$, we arrive at 
\begin{align}
\lim_{b\to +0} \lim_{|\Lambda|\to \infty} \frac{\la S_{\mathrm{tot}}^{(3)}\ra_{\mathrm{H}, \infty}(b)}{|\Lambda|}=\frac{1}{2}.
\end{align}
As we will see, this conclusion is a key to understand extensions of this theorem to positive temperatures.
\item 
The condition $d\ge 2$ is needed in order to guarantee the hole connectivity, see \cite{Tasaki22} for details.
In contrast to this, a weak version of the Nagaoka-Thouless theorem holds true for arbitrary dimension.\footnote{
To be precise, we have the following:
Assume that $N=|\Lambda|-1$.
  For every $d\in \BbbN$, 
  among  the ground states there is  at least  $(2S+1)$ states with  $S
=(|\Lambda|-1)/2$.}
\end{itemize}
\end{rem}

Let 
\begin{align}
Z_{\mathrm{H}, \infty}(\beta)=\Tr_{
\mathfrak{H}_{\mathrm{H}, \infty}^{(N)}} \Big[
e^{-\beta H_{\mathrm{H}, \infty}}
\Big].\label{PartFunctHubbInfty}
\end{align}
We define the thermal expectation values of operators by  
\begin{align}
\la O\ra_{\mathrm{H},  \infty}(b; \beta)=\Tr_{
\mathfrak{H}_{\mathrm{H}, \infty}^{(N)}
}[Oe^{-\beta H_{\mathrm{H}, \infty}}]\Big/Z_{\mathrm{H}, \infty}(\beta).
\end{align}

\begin{Thm}[The Aizenman-Lieb  theorem  \cite{AL}]\label{GeneAZNT}
Let us consider  the Hubbard model. Suppose that $N=|\Lambda|-1$.
For all $d\in \BbbN$,  $0<\beta<\infty$ and $ 0<b$, we obtain 
\begin{align}
\big\la S_{\mathrm{tot}}^{(3)}\big\ra_{\mathrm{H},  \infty}(b; \beta)
>\frac{N}{2}\tanh(\beta b). \label{GNagaH}
\end{align}
\end{Thm}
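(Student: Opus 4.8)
The plan is to follow the strategy of Aizenman and Lieb \cite{AL}: turn the magnetic response into an average over the spin permutations that the motion of the single hole induces on the background spins, and then to compare that average, permutation by permutation, with a free paramagnet. Since $N=|\Lambda|-1$, the Gutzwiller projection $P_{\mathrm G}$ singles out the one-hole, no-double-occupancy subspace, and on it $H_{\mathrm{H},\infty}=H_0-2bS_{\mathrm{tot}}^{(3)}$, where
\begin{align}
H_0:=P_{\mathrm G}\Big(\textstyle\sum_{\sigma,x,y}(-t_{xy})c_{x\sigma}^*c_{y\sigma}+\sum_{x\neq y}U_{xy}\,n_xn_y\Big)P_{\mathrm G}
\end{align}
is the hole-hopping operator plus a hole-position-dependent diagonal term; it commutes with the up/down particle numbers $N_{\pm}=\sum_x n_{x,\pm1}$. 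Using $S_{\mathrm{tot}}^{(3)}=N_+-N/2$ on the $N$-electron space, I would record
\begin{align}
Z_{\mathrm{H},\infty}(\beta)=e^{-\beta bN}\sum_{n=0}^{N}e^{2\beta bn}\,t_n(\beta),\qquad t_n(\beta):=\Tr\big[e^{-\beta H_0}P_{N_+=n}\big]\ge0,
\end{align}
so that $\big\la S_{\mathrm{tot}}^{(3)}\big\ra_{\mathrm{H},\infty}(b;\beta)=\tfrac{1}{2\beta}\partial_b\ln Z_{\mathrm{H},\infty}(\beta)$.

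The heart of the proof is the positivity structure underlying the Nagaoka--Thouless theorem \cite{Tasaki2,Tasaki22}. I would fix a basis $\{|P;c\ra\}$ labelled by the hole position $P\in\Lambda$ and a spin word $c\in\{+1,-1\}^{N}$, in a sign gauge in which the off-diagonal matrix elements of $-H_0$ are non-negative; then $e^{-\beta H_0}$ has non-negative matrix elements, and, because each hole hop permutes the spin word, every closed hole walk induces a permutation $\pi\in\mathfrak S_N$. Consequently
\begin{align}
\big\la P;c\big|e^{-\beta H_0}\big|P;c\big\ra=\sum_{\pi\in\mathfrak S_N}W_{P,\pi}(\beta)\,\mathbf 1[\pi\cdot c=c],
\end{align}
with weights $W_{P,\pi}(\beta)\ge0$ and $W_{P,\mathrm{id}}(\beta)>0$. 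A spin word is fixed by $\pi$ exactly when it is constant on each cycle of $\pi$, so the number of such words with exactly $n$ up-spins equals $[z^n]\prod_i(1+z^{\ell_i})$, where $(\ell_1,\dots,\ell_k)$, $\sum_i\ell_i=N$, is the cycle type of $\pi$. Summing over $P$ and $c$,
\begin{align}
Z_{\mathrm{H},\infty}(\beta)=e^{-\beta bN}\sum_{P\in\Lambda}\ \sum_{\pi\in\mathfrak S_N}W_{P,\pi}(\beta)\prod_i\big(1+e^{2\beta b\,\ell_i(\pi)}\big).
\end{align}

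Differentiating in $b$ then gives
\begin{align}
\big\la S_{\mathrm{tot}}^{(3)}\big\ra_{\mathrm{H},\infty}(b;\beta)+\frac{N}{2}=\E\Big[\,\sum_i\frac{\ell_i(\pi)}{1+e^{-2\beta b\,\ell_i(\pi)}}\,\Big],
\end{align}
where $\E$ is the expectation with respect to the probability distribution on pairs $(P,\pi)$ with weight proportional to $W_{P,\pi}(\beta)\prod_i(1+e^{2\beta b\ell_i(\pi)})$. Since $s\mapsto(1+e^{-2\beta bs})^{-1}$ is strictly increasing for $b>0$, one has $\ell/(1+e^{-2\beta b\ell})\ge\ell/(1+e^{-2\beta b})$ for every integer $\ell\ge1$, with equality only when $\ell=1$; summing over the cycles of $\pi$ and using $\sum_i\ell_i=N$ yields
\begin{align}
\sum_i\frac{\ell_i(\pi)}{1+e^{-2\beta b\,\ell_i(\pi)}}\ \ge\ \frac{N}{1+e^{-2\beta b}}=\frac{N}{2}\big(1+\tanh(\beta b)\big),
\end{align}
with equality iff $\pi=\mathrm{id}$. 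Hence $\big\la S_{\mathrm{tot}}^{(3)}\big\ra_{\mathrm{H},\infty}(b;\beta)\ge\frac{N}{2}\tanh(\beta b)$, and the inequality is strict as soon as some $\pi\neq\mathrm{id}$ has $W_{P,\pi}(\beta)>0$. A hole loop around an elementary plaquette (available when $d\ge2$) or around the lattice (with periodic boundary conditions, in any $d$) realizes a nontrivial cyclic permutation, so $W_{P,\pi}(\beta)>0$ for such a $\pi$; this gives the strict inequality \eqref{GNagaH}.

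The only delicate step is the positivity one: the sign gauge must be chosen so that $-H_0$ has non-negative off-diagonal entries \emph{and}, at the same time, the fermionic signs must be tracked so that hole hopping induces the stated permutation action on spin words. This is precisely the technical core of the Nagaoka--Thouless theorem, and I would import it from \cite{Tasaki2,Tasaki22}; the rest of the argument is elementary algebra together with the one-line monotonicity estimate above.
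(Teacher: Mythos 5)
Your proposal is correct in substance and arrives at the same final identity as the paper, but it does so by a genuinely different route. You reconstruct the original Aizenman--Lieb argument: work directly in the one-hole basis $\{|P;c\ra\}$, invoke the Nagaoka--Thouless sign gauge to get $e^{-\beta H_0}$ entrywise non-negative, read off the spin permutation $\pi$ induced by each closed hole walk, count the spin words fixed by $\pi$ through its cycle type, and then differentiate the resulting formula
$Z_{\mathrm{H},\infty}(\beta)=\sum_{P,\pi}W_{P,\pi}(\beta)\prod_i 2\cosh(\beta b\,\ell_i(\pi))$
in $b$. The paper instead builds a continuous-time Feynman--Kac--It\^o formula for $H(\alpha)$ (Theorems~\ref{FKIMHubbard}, \ref{TrFKIMHubbard}, \ref{FKIUINFINITY}), passes via Shale's theorem and the $e^{i\Pi}$-positivity argument to a positive measure $\rho_{\beta,\natural,\infty}$, and then derives a \emph{random loop representation} (Theorems~\ref{RLR2}, \ref{FiniteNT}) in which the $\cosh(\beta b n_i)$ factors appear as products over loops with winding numbers $n_i$. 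The two routes produce exactly the same decomposition $Z_{\natural,\infty}(\beta)=\sum_{\bs n}D_{\bs n}(\beta)\prod_i\cosh(\beta bn_i)$; your version is more elementary and avoids all of the stochastic-process and Wiener-measure machinery, at the cost of not generalizing directly to the phonon- and photon-coupled Hamiltonians, which is precisely what the continuous-time representation is engineered for in this paper.

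One clarification worth making explicit. The final strictness step in both your write-up and the paper requires that the measure over cycle types put positive mass on a partition with some $n_i\ge 2$; otherwise one only obtains the non-strict inequality $\ge\frac{N}{2}\tanh(\beta b)$. You correctly note that an elementary plaquette loop supplies such a nontrivial cyclic permutation for $d\ge 2$, and that in $d=1$ one would need periodic boundary conditions, which are not part of the setup in Section~\ref{Intro}. This is in fact more careful than the paper's own one-line conclusion at the end of Section~\ref{SecPf2}, which simply cites $\tanh(\beta b n)>\tanh(\beta b)$ for $n\ge 2$ without addressing whether such an $n$ occurs. It would be worth flagging, in either exposition, that the strict inequality as stated relies on this dynamical reachability, and that without it one only gets the weak inequality--equality being attained exactly when the identity permutation is the only allowed one.
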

\begin{rem}\upshape
By \eqref{GNagaH}, we have
\begin{align}
\lim_{b\to +0} \lim_{\beta\to \infty} \lim_{|\Lambda|\to \infty} \frac{\big\la S_{\mathrm{tot}}^{(3)}\big\ra_{\mathrm{H}, \infty}(b; \beta)}{|\Lambda|}
=\lim_{b\to +0}  \lim_{|\Lambda|\to \infty} \lim_{\beta\to \infty} \frac{\big\la S_{\mathrm{tot}}^{(3)}\big\ra_{\mathrm{H}, \infty}(b; \beta)}{|\Lambda|}
=\frac{1}{2}.
\end{align}
 In this sense,
Theorem \ref{GeneAZNT} is an extension of the Nagaoka-Thouless theorem to positive temperatures.
\end{rem}

\subsection{Models}

\subsubsection{The Holstein-Hubbard model}
We consider the interaction between the electrons and the lattice vibrations. The Holstein-Hubbard model is widely accepted as a standard model describing such a  system.
The Hamiltonian of the Holstein-Hubbard model  is given by 
\begin{align}
H_{\hh}=H_{\rh}
+\sum_{x, y\in \Lambda}
g_{xy}n_{x}(b_y^*+b_y)+\sum_{x\in \Lambda} \omega b_x^*b_x. \label{ExtendedHH}
\end{align} 
$H_{\hh}$ acts in the Hilbert space $\mathfrak{H}_{\hh}^{(N)}=\mathfrak{H}_{\rh}^{(N)}\otimes \Fock_{\hh}$, where
$\Fock_{\hh}=\Fock(\ell^2(\Lambda))$,  the bosonic Fock space over $\ell^2(\Lambda)$;
in general, the bosonic Fock space over $\mathfrak{X}$ is 
  defined by 
  \begin{align}
\Fock(\mathfrak{X})=\bigoplus_{n=0}^{\infty} \otimes_{\mathrm{s}}^n \mathfrak{X},
\end{align}
where $\otimes_{\mathrm{s}}^n \mathfrak{X}$ is the $n$-fold symmetric
tensor product of $\mathfrak{X}$ with $\otimes_{\mathrm{s}}^0\mathfrak{X}=\BbbC$.
$b_x^*$ and $b_x$ are the bosonic creation- and annihilation operators
at site $x$ satisfying the standard commutation relations:
\begin{align}
[b_x, b_y^*]=\delta_{xy},\ \ \ [b_x, b_y]=0
\end{align}
on $\Fock_{\hh, \mathrm{fin}}=\Ffin(\ell^2(\Lambda))$,  where $\Fock_{\mathrm{fin}}(\mathfrak{X})$ is the indirect direct sum of $\otimes_{\mathrm{s}}^n\mathfrak{X}$.
$g_{xy}$ is the strength of the electron-phonon interaction.
The phonons are assumed to be dispersionless with energy $\omega>0$.
Henceforth, we assume the following:
\begin{flushleft}
{\bf(G)} $(g_{xy})_{x, y}$ is a real symmetric matrix.
\end{flushleft} 
Using the Kato-Rellich theorem \cite[Theorem X.12]{ReSi2},  one can prove  that  $H_{\hh}$ is self-adjoint on $\D(\sum_{x\in \Lambda} b_x^*b_x)$  and  bounded from below, where, for linear operator $A$,  $\D(A)$ indicates the domain of $A$.

\subsubsection{The Hubbard model coupled to the quantized radiation field}
We consider an  $N$-electron system coupled to the quantized radiation
field. 
Suppose that 
the lattice $\Lambda$  is embedded into the region $V=[-L/2,
L/2]^3\subset \BbbR^3$
with $L>0$. Thus, when we consider this system, we assume that $d\le 3$.\footnote{
We can  consider   the systems with $d\ge 4$ by extending the definitions of operators associated with the quantized radiation fields to higher dimensions. However,   for simplicity, we restrict our attention to the  case where $d\le 3$.
}
The system is described by the following   Hamiltonian
\begin{align}
H_{\rr}=&\sum_{ x, y\in \Lambda}\sum_{\sigma=\pm 1}
(-t_{xy})
\exp\Bigg\{
i   \int_{C_{\bfx\bfy}} d\bfr\cdot  \bfA(\bfr)\Bigg\}c_{\bfx\sigma}^*
 c_{\bfy\sigma}
+\sum_{k\in V^*}\sum_{\lambda=1,2} \omega(k) a(k, \lambda)^*a(k, \lambda)\no
&+ U\sum_{x\in \Lambda} n_{x, +1} n_{x, -1}+
\sum_{{x, y\in
 \Lambda}\atop{ x\neq y}}U_{xy}n_{x}n_y.
\label{DefHamiltonian}
\end{align}
$H_{\rr}$ acts in the Hilbert space $
\mathfrak{H}_{\rr}^{(N)}=\mathfrak{H}_{\rh}^{(N)}
 \otimes
\Fock_{\rr}$.
$\Fock_{\rr}$ is the Fock space over $\ell^2(V^*\times \{1, 2\})$ with
$V^*=(\frac{2\pi}{L}\BbbZ)^3$.
$a(k, \lambda)^*$ and $a(k, \lambda)$
 are the bosonic creation- and annihilation operators, respectively. These
 operators satisfy the following commutation relations: 
 \begin{align}
[a(k, \lambda), a(k', \lambda')^*]=\delta_{\lambda\lambda'}
  \delta_{kk'},\ \ 
[a(k, \lambda), a(k', \lambda')]=0
\end{align} 
on $\Fock_{\rr, \mathrm{fin}}:=\Ffin(\ell^2(V^*\times \{1, 2\}))$.
The quantized vector potential is given by 
\begin{align}
A(x)=|V|^{-1/2}\sum_{k\in V^*}
 \sum_{\lambda=1,2}\frac{\chi_{\kappa}(k)}{\sqrt{2\omega(k)}}\vepsilon(k,
 \lambda) \Big(
e^{ik\cdot x}a(k, \lambda)+e^{-ik\cdot x}a(k, \lambda)^*
\Big).
\end{align} 
The form factor $\chi_{\kappa}$ is the indicator function of the ball of
radius $0<\kappa<\infty$, centered at the origin.
The dispersion relation $\omega(k)$ is chosen to be $\omega(k)=|k|$ for
$k\in V^* \backslash \{0\}$, $\omega(0)=m_0$ with $0<m_0<\infty$.
For concreteness, the polarization vectors are chosen as  
\begin{align}
\vepsilon(k, 1)
=\frac{(k_2, -k_1, 0)}{\sqrt{k_1^2+k_2^2}}
,\ \  \vepsilon(k, 2)=\frac{k}{|k|} \wedge \vepsilon(k, 1).
\end{align} 
To avoid ambiguity, we set $
\vepsilon_j(k, \lambda)=1/\sqrt{3},\ j=1, 2, 3
$ if $k_1=k_2=0$.\footnote{
We can generalize this condition to $\vepsilon(k, \lambda)=a_{\lambda}\in \BbbR^3$ with $|a_{\lambda}|=1$ for $k_1=k_2=0$.
We can also choose $\vepsilon(k, \lambda)=0$, if $k_1=k_2=0$. Many of  arguments below are unaffected by the  choice   of $\vepsilon(k, \lambda)$.
}
 $A(x)$ is essentially self-adjoint. We denote its
closure by the same symbol. $C_{xy}$ is a piecewise smooth curve from $x$ to $y$.
Note that the more precise definition of $\int_{C_{xy}} A(r)\cdot dr$ will be given in Section \ref{TraceBoseGene}.
This model was introduced by Giuliani {\it et al.} in \cite{GMP, GMP2}.
 $H_{\rr}$ is essentially self-adjoint and bounded from below.
We denote its closure by the same symbol.

\subsection{Results}

We will display an extension of Theorem \ref{GeneAZNT}. For this purpose, we need the following proposition.

\begin{Prop}[\cite{Miyao}]
Assume that  $N=|\Lambda|-1$. 
For $\natural =\rr, \hh$,
we define an effective Hamiltonian $
H_{ \natural, \infty}
$ by $
H_{\natural , \infty}=P_{\mathrm{G}} H_{\natural }^{U=0}P_{\mathrm{G}}
$, where $
H_{\natural }^{ U=0}
$ is the corresponding  Hamiltonian $H_{\natural }$ with $U=0$. Then we have
\begin{align}
\lim_{U\to \infty} (H_{\natural }-z)^{-1}=(H_{\natural,  \infty}-z)^{-1}P_{\mathrm{G}},\ \ z\in \BbbC\backslash \BbbR
\end{align}
in the operator norm topology.
\end{Prop}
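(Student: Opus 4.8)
The plan is to repeat, for both $\natural=\rr$ and $\natural=\hh$, the Schur-complement (Feshbach) argument that underlies the corresponding statement for the bare Hubbard model stated above. The observation that makes this work is that \emph{every term of $H_\natural^{U=0}$ other than the hopping term commutes with the Gutzwiller projection} (more precisely, with $P_{\mathrm G}\otimes\one$): the bosonic field energies $\omega\sum_x b_x^*b_x$, resp.\ $\sum_{k,\lambda}\omega(k)a(k,\lambda)^*a(k,\lambda)$, and, in the radiation case, the Peierls phases $\ex^{\im\int_{C_{xy}}d\bfr\cdot\bfA(\bfr)}$ act on the bosonic factor alone, whereas the electron--phonon coupling $\sum_{x,y}g_{xy}n_x(b_y^*+b_y)$, the non-local Coulomb term $\sum_{x\ne y}U_{xy}n_xn_y$ and (in the Holstein--Hubbard case) the Zeeman term $-2bS_{\mathrm{tot}}^{(3)}$ are functions of the number operators $n_{x\sigma}$, which commute with every factor $1-n_{x,+1}n_{x,-1}$ of $P_{\mathrm G}$.

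Concretely, I would set $Q=1-P_{\mathrm G}$, decompose $\mathfrak{H}_\natural^{(N)}=\R(P_{\mathrm G})\oplus\R(P_{\mathrm G})^\perp$, and write $H_\natural=H_\natural^{U=0}+U D$ with $D=\big(\sum_{x\in\Lambda}n_{x,+1}n_{x,-1}\big)\otimes\one$. Three elementary facts are recorded: $D$ commutes with $P_{\mathrm G}$; $DP_{\mathrm G}=0$; and $D\ge Q$, because $\sum_x n_{x,+1}n_{x,-1}$ has integer spectrum with kernel exactly $\R(P_{\mathrm G})$, hence is $\ge1$ on $\R(P_{\mathrm G})^\perp$. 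Since $P_{\mathrm G}XQ=0$ whenever $X$ commutes with $P_{\mathrm G}$, the above observation shows that the off-diagonal block $B:=P_{\mathrm G}H_\natural^{U=0}Q$ receives a contribution only from the hopping term, and the latter is \emph{bounded}: it is a finite sum of bounded terms — $(-t_{xy})c_{x\sigma}^*c_{y\sigma}$ in the Holstein--Hubbard case and $(-t_{xy})\ex^{\im\int_{C_{xy}}d\bfr\cdot\bfA(\bfr)}c_{x\sigma}^*c_{y\sigma}$ in the radiation case, the Peierls phase being unitary — so $B$ is a bounded operator. Hence $H_\natural$ has the block form
\begin{align}
H_\natural=\begin{pmatrix} A_0 & B\\ B^* & C_U \end{pmatrix},\qquad
A_0=P_{\mathrm G}H_\natural^{U=0}P_{\mathrm G}=H_{\natural,\infty},\quad
C_U=QH_\natural^{U=0}Q+U\,QDQ,
\end{align}
where $A_0$ (regarded as an operator on $\R(P_{\mathrm G})$) and $QH_\natural^{U=0}Q$ are self-adjoint on the two summands — the $P_{\mathrm G}$-commuting operator that fixes $\D(H_\natural)$, namely $\sum_x b_x^*b_x$ resp.\ $\sum_{k,\lambda}\omega(k)a(k,\lambda)^*a(k,\lambda)$, restricts to each of them, and $B+B^*$ together with the bounded Coulomb and Zeeman terms is a bounded perturbation — and $QDQ\ge1$ on $\R(P_{\mathrm G})^\perp$.

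Now fix $z\in\BbbC\backslash\BbbR$. Since $QH_\natural^{U=0}Q\ge c_0:=\inf\mathrm{spec}(H_\natural^{U=0})>-\infty$ and $QDQ\ge1$, one has $C_U\ge c_0+U$, so for $U>\mathrm{Re}\,z-c_0$ the operator $C_U-z$ is boundedly invertible and $\|(C_U-z)^{-1}\|\le(c_0+U-\mathrm{Re}\,z)^{-1}\to0$ as $U\to\infty$. Feeding this into the Feshbach--Schur inversion of the $2\times2$ block operator $H_\natural-z$, whose Schur complement is $S_U=(A_0-z)-B(C_U-z)^{-1}B^*$, one obtains $\|B(C_U-z)^{-1}B^*\|\le\|B\|^2(c_0+U-\mathrm{Re}\,z)^{-1}\to0$; combined with $\|(A_0-z)^{-1}\|=|\mathrm{Im}\,z|^{-1}$ and the second resolvent identity, this yields, for $U$ large, that $S_U$ is boundedly invertible, that $\|S_U^{-1}\|$ stays bounded, and that $\|S_U^{-1}-(H_{\natural,\infty}-z)^{-1}\|=O(1/U)$. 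The other three blocks of $(H_\natural-z)^{-1}$ each contain a factor $(C_U-z)^{-1}$ and are therefore $O(1/U)$ as well, so
\begin{align}
(H_\natural-z)^{-1}\ \longrightarrow\ \begin{pmatrix}(H_{\natural,\infty}-z)^{-1}&0\\0&0\end{pmatrix}=(H_{\natural,\infty}-z)^{-1}P_{\mathrm G}\qquad(U\to\infty)
\end{align}
in operator norm, which is the assertion.

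The only step that is not routine is the rigorous use of the block-inversion formula when $A_0$ and $C_U$ are unbounded; this is carried out exactly as in the proof of the corresponding bare-Hubbard statement above — one checks directly that the displayed operator matrix, applied to $H_\natural-z$, is the identity on $\D(H_\natural)$, using that $\D(H_\natural)$ splits as $\big(\D(H_\natural)\cap\R(P_{\mathrm G})\big)\oplus\big(\D(H_\natural)\cap\R(P_{\mathrm G})^\perp\big)$ because the domain-determining bosonic operator commutes with $P_{\mathrm G}$ — and the phonon/photon sector introduces nothing new, precisely because those operators commute with $P_{\mathrm G}$ and hence never enter the bounded block $B$, whose norm drives the whole estimate. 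As an alternative, one may first establish norm-resolvent convergence for $|\mathrm{Im}\,z|$ large from the Neumann series in $B+B^*$ relative to the block-diagonal operator $A_0\oplus C_U$ — noting that the square of $(A_0\oplus C_U-z)^{-1}(B+B^*)$ already carries a factor $(C_U-z)^{-1}$, so that the even-order terms are $O(1/U)$ — and then extend it to all $z\in\BbbC\backslash\BbbR$ by the standard fact that norm-resolvent convergence at one point of the resolvent set propagates to the whole of it.
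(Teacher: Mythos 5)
Your Schur--complement argument is correct and is essentially the standard route that underlies the cited result from \cite{Miyao}: the decisive points are that every piece of $H_\natural^{U=0}$ except the hopping commutes with $P_{\mathrm{G}}$ (so the off-diagonal block $B=P_{\mathrm{G}}H_\natural^{U=0}Q$ is just the bounded Peierls/hopping term), that $D\ge Q$ gives $\|(C_U-z)^{-1}\|=O(1/U)$, and that the block resolvent then collapses onto $(H_{\natural,\infty}-z)^{-1}P_{\mathrm{G}}$. The paper itself cites \cite{Miyao} without reproducing the proof, and your version, including the remarks about self-adjointness of the reduced blocks via commutation of $\dG_{\mathrm s}(\omega)$ with $P_{\mathrm{G}}$, supplies exactly the details one would want.
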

We denote the restriction of $H_{\natural, \infty}$ to $\R(P_{\mathrm{G}})$ by the same symbol.
The following theorem is a generalized version of the Nagaoka-Thouless theorem.
\begin{Thm}[\cite{Miyao}]\label{NTThm2}
We have the following:
\begin{itemize}
\item[\rm (i)] 
For $d=2, 3$, 
the ground state of $H_{\mathrm{rad}, \infty}$ has total spin $S
=(|\Lambda|-1)/2$ and is unique apart from the trivial $(2S+1)$-degeneracy.
\item[\rm (ii)]
 For each $d\ge 2$, the ground state of $H_{\mathrm{HH}, \infty}$ has total spin $S
=(|\Lambda|-1)/2$ and is unique apart from the trivial $(2S+1)$-degeneracy.
\end{itemize}
\end{Thm}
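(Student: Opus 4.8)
The plan is to realise each $H_{\natural,\infty}$, $\natural\in\{\rr,\hh\}$, as minus the generator of an \emph{ergodic} semigroup on a self-dual cone and then to combine the Perron--Frobenius--Faris theorem with the $\mathrm{SU}(2)$ symmetry of the model. Write $N=|\Lambda|-1$ and $\h_{\natural,\infty}^{(N)}=\R(P_{\mathrm G})\otimes\Fock_{\natural}$. The first task is to pick a unitary $\uni_{\natural}$ on $\h_{\natural,\infty}^{(N)}$: a hole--particle/Marshall-sign transformation on the fermions, which makes the hopping amplitudes and the matrix elements of $S_{\mathrm{tot}}^{\pm}$ nonnegative, together with a transformation on the bosonic factor --- a (possibly fermion-occupation-dependent) gauge transformation that neutralises, as far as positivity is concerned, the Peierls phase $\exp\big\{i\int_{C_{xy}}d\bfr\cdot\bfA(\bfr)\big\}$ in the radiation case, and a Lang--Firsov-type displacement in the phonon case. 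Let $\Cone_{\natural}$ be the self-dual cone in $\h_{\natural,\infty}^{(N)}$ generated by the vectors $\mathfrak e_{\bs n}\otimes f$, where $\{\mathfrak e_{\bs n}\}$ is the occupation-number basis of $\R(P_{\mathrm G})$ and $f$ is pointwise nonnegative in the Schr\"odinger ($Q$-space) representation of $\Fock_{\natural}$. The transformation $\uni_{\natural}$ is chosen so that $\Ti H_{\natural,\infty}:=\uni_{\natural}H_{\natural,\infty}\uni_{\natural}^{*}$ has all of its off-diagonal structure of the correct sign; a Trotter expansion of $e^{-\beta\Ti H_{\natural,\infty}}$ then shows, term by term (hopping; the electron--boson coupling, which acts as multiplication by a real function in $Q$-space; the boson kinetic energy), that $e^{-\beta\Ti H_{\natural,\infty}}$ maps $\Cone_{\natural}$ into itself for every $\beta\ge0$.

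The second task is to upgrade this to \emph{positivity improvement} on each spin sector $\h_{\natural,\infty}^{(N)}[M]:=\ker\big(S_{\mathrm{tot}}^{(3)}-M\big)$, $|M|\le N/2$. Within a fixed $M$-sector the constrained hopping of the single hole realises, for $d\ge2$, every rearrangement of the spin configuration --- this is Tasaki's hole-connectivity argument \cite{Tasaki22}, and it is precisely here that the hypothesis $d\ge2$ is used (for the radiation field, $d\le3$ is imposed only because $\Lambda\subset V\subset\BbbR^{3}$) --- while the boson kinetic term $\sum_{x}\omega\,b_{x}^{*}b_{x}$ (resp.\ $\sum_{k,\lambda}\omega(k)a(k,\lambda)^{*}a(k,\lambda)$) is positivity improving in the bosonic directions by Mehler's formula. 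A Duhamel/Trotter expansion of $e^{-\beta\Ti H_{\natural,\infty}}$ combines these into ergodicity, so that $e^{-\beta\Ti H_{\natural,\infty}}\restriction\h_{\natural,\infty}^{(N)}[M]$ is positivity improving for every $\beta>0$. By the Perron--Frobenius--Faris theorem, for each $M$ with $|M|\le N/2$ the operator $\Ti H_{\natural,\infty}\restriction\h_{\natural,\infty}^{(N)}[M]$ has a simple lowest eigenvalue $E(M)$ whose eigenvector $\psi_{M}$ lies in the interior of $\Cone_{\natural}$.

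It remains to extract the total spin. The transformed spin-flip operators $\uni_{\natural}S_{\mathrm{tot}}^{\pm}\uni_{\natural}^{*}$ preserve $\Cone_{\natural}$ and commute with $\Ti H_{\natural,\infty}$; since $\psi_{M}$ is strictly positive, $\uni_{\natural}S_{\mathrm{tot}}^{+}\uni_{\natural}^{*}\psi_{M}$ is a nonzero element of $\Cone_{\natural}\cap\h_{\natural,\infty}^{(N)}[M+1]$ whenever $M<N/2$, and similarly on the $S_{\mathrm{tot}}^{-}$ side whenever $M>-N/2$, whence $E(M\pm1)\le E(M)$ and therefore $E(M)$ is independent of $M$ and equals the ground-state energy $E_{0}$. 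Consequently the $E_{0}$-eigenspace of $\Ti H_{\natural,\infty}$ meets each of the $N+1$ sectors $\h_{\natural,\infty}^{(N)}[M]$, $|M|\le N/2$, in a one-dimensional subspace, hence has dimension $N+1$; decomposing it under the (transformed) $\mathrm{SU}(2)$ action and using the one-dimensionality of the top sector $M=N/2$ forces it to be a single irreducible multiplet with $2S+1=N+1$, i.e.\ $S=(|\Lambda|-1)/2$. Undoing $\uni_{\natural}$ gives the assertion for $H_{\rr,\infty}$ (for $d=2,3$) and for $H_{\hh,\infty}$ (for $d\ge2$). The main obstacle is contained in the first two tasks: building $\uni_{\natural}$ and verifying cone preservation together with ergodicity in the presence of the operator-valued Peierls phase (radiation field) and of the retarded --- but spin-independent --- interaction produced when the linear phonon coupling is removed. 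Both are handled by the operator-theoretic positivity machinery developed in \cite{Miyao2,Miyao} (see also \cite{GMP} for the radiation field).
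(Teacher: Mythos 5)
Theorem~\ref{NTThm2} is only cited in this paper (to \cite{Miyao}), not proved, so there is no in-text proof to compare against literally; what follows compares your proposal with the approach of the cited work, whose machinery (self-dual cones, Perron--Frobenius--Faris, $\mathrm{SU}(2)$ ladders) the present paper manifestly builds on. Your overall strategy is the right one and matches that reference: conjugate $H_{\natural,\infty}$ so that $e^{-\beta\tilde H_{\natural,\infty}}$ preserves a self-dual cone built from the Gutzwiller-projected configuration basis tensored with nonnegative $Q$-space functions, upgrade to positivity improving on each $M$-sector via Tasaki's hole connectivity (for $d\ge 2$) plus the Ornstein--Uhlenbeck semigroup on the bosons, apply Perron--Frobenius--Faris to get a simple, strictly positive ground state in each $M$-sector, use the cone-preserving $S_{\mathrm{tot}}^{\pm}$ ladder to force $E(M)$ constant, and read off $S=(|\Lambda|-1)/2$ from the one-dimensionality of the $M=N/2$ piece of the $(N{+}1)$-dimensional ground eigenspace. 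The final dimension count and the remark that $d\le 3$ enters only through the embedding $\Lambda\subset V\subset\BbbR^3$ are both correct.

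There is, however, one mischaracterization at precisely the point you single out as ``the main obstacle.'' You describe the bosonic part of $\uni_{\rr}$ as a ``(possibly fermion-occupation-dependent) gauge transformation that neutralises, as far as positivity is concerned, the Peierls phase.'' A gauge transformation $\prod_x e^{i\chi_x n_x}$ alters the phase on a link $\{x,y\}$ only by $\chi_x-\chi_y$ and therefore cannot change the flux $\oint A\cdot dr$ around a plaquette; since the quantized field carries nonzero operator-valued fluxes, the Peierls factors cannot be removed or made real-positive this way, and letting $\chi_x$ be a boson operator merely transfers the difficulty into the transformed photon Hamiltonian. What is in fact needed is not a gauge transformation but a bosonic phase-space rotation of the type $W=e^{-i\pi N/2}$ appearing in Lemma~\ref{UsefulIdnA} (and implicitly in the proof of Theorem~\ref{PartitionRho} via $\Gamma(e^{i\pi/2})$), which converts $\Phi_{\mathrm S}$-type fields (multiplication in $Q$-space, so $e^{i\Phi_{\mathrm S}(\Theta_{xy})}$ is a complex phase and destroys the cone) into $\Pi_{\mathrm S}$-type fields (generators of shifts, so $e^{i\Pi_{\mathrm S}(\Theta_{xy})}$ is positivity preserving; cf.\ Proposition~\ref{PPPi}). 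This is what makes the Peierls-dressed hopping cone-preserving. Note that your phonon description is actually correct for this very reason: the Lang--Firsov conjugation $e^L$ alone produces a hopping dressed by $e^{i\Pi_{\mathrm S}(\zeta_{xy})}$; the further rotation $e^{-i\pi N_{\mathrm p}/2}$ in Lemma~\ref{LangFTr} is used in this paper only because the Feynman--Kac--It\^o representation wants a $\Phi_{\mathrm S}$-type magnetic potential, and it would ruin cone preservation. So the high-level outline is sound, but the mechanism you invoke for the radiation case should be corrected from ``gauge transformation'' to ``phase-space rotation taking the vector potential from a multiplication operator to a generator of $Q$-space shifts.''
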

\begin{rem}
{\rm 
As before, we have the following:
\begin{align}
\lim_{b\to +0}\lim_{|\Lambda|\to \infty} \frac{\la S_{\mathrm{tot}}^{(3)}\ra_{\natural, \infty}(b)}{|\Lambda|}=\frac{1}{2},\ \ \natural=\rr, \hh,
\end{align}
where $\la \cdot \ra_{\natural, \infty}(b)$ is the ground state expectation associated with $H_{\natural, \infty}$:
$
\la O\ra_{\natural, \infty}(b)=\frac{1}{2S+1}\sum_{m=-S}^S \la \psi_m|O\psi_m\ra
$;  here, $\psi_m$ is the  normalized  unique ground state of $H_{\natural, \infty}$ in the $m$-subspace
$\h_{\natural, M}^{(N)}[m]=\h_{\rh, M}^{(N)}[m]\otimes \Fock_{\natural},\ \natural=\rr, \hh$. We also remark  that  there are some other extensions of the Nagaoka-Thouless theorem,  see, e.g., \cite{KT,KSV}.
}
\end{rem}

Theorem \ref{NTThm2} can be extended to positive temperatures as follows.
Let 
\begin{align}
Z_{\natural , \infty}(\beta)=\Tr_{
\mathfrak{H}_{\natural , \infty}^{(N)}} \Big[
e^{-\beta H_{\natural , \infty}}
\Big],\ \ \natural =\rr, \hh. \label{PartFunctNatuInf}
\end{align}
The thermal expectation values of operators are defined by 
\begin{align}
\la O\ra_{\natural , \infty}(\beta; b)=\Tr_{\mathfrak{H}_{\natural , \infty}^{(N)}}\Big[Oe^{-\beta H_{\natural , \infty}}\Big]\Big/Z_{\natural , \infty}(\beta).
\end{align}

The main result in this paper is the following.

\begin{Thm}\label{FinNTEnv}
Suppose that $N=|\Lambda|-1$. Suppose that $d\le 3$ for $\natural=\mathrm{rad}$,  $d\in \BbbN$ for $\natural=\mathrm{HH}$.
For all $0<\beta<\infty$ and $0<b$, we obtain 
\begin{align}
\big\la S_{\mathrm{tot}}^{(3)}\big\ra_{\natural , \infty}(\beta; b)>\frac{N}{2}\tanh(\beta b), \ \  \natural =\mathrm{rad}, \mathrm{HH}. \label{GNaga}
\end{align}

\end{Thm}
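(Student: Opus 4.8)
Let $H_{\natural,\infty,0}$ be $H_{\natural,\infty}$ with the magnetic field switched off, so that $H_{\natural,\infty}=H_{\natural,\infty,0}-2bS_{\mathrm{tot}}^{(3)}$. Since the phonon and photon fields couple to the electrons only through the $SU(2)$-scalar densities $n_x$, and $P_{\mathrm{G}}$ commutes with the $S_{\mathrm{tot}}^{(j)}$, the operator $H_{\natural,\infty,0}$ is rotation invariant; hence $e^{-\beta H_{\natural,\infty}}$ is block diagonal over the eigenspaces of $S_{\mathrm{tot}}^{(3)}$, and, writing
\begin{align*}
\zeta_m^{(\natural)}(\beta):=\Tr_{\h_{\natural,M}^{(N)}[m]}\big[P_{\mathrm{G}}\,e^{-\beta H_{\natural,\infty,0}}\,P_{\mathrm{G}}\big],\qquad \tilde n_m:=\Tr_{\h_{\rh,M}^{(N)}[m]}[P_{\mathrm{G}}]=|\Lambda|\binom{N}{(N/2)+m},
\end{align*}
one has $\la S_{\mathrm{tot}}^{(3)}\ra_{\natural,\infty}(\beta;b)=\big(\sum_m m\,\zeta_m^{(\natural)}(\beta)e^{2\beta b m}\big)\big/\big(\sum_m \zeta_m^{(\natural)}(\beta)e^{2\beta b m}\big)$, with $\zeta_{-m}^{(\natural)}=\zeta_m^{(\natural)}$ by the spin-flip symmetry. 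A short computation — cross-multiply the inequality to be proved by $\sum_{m'}\tilde n_{m'}e^{2\beta b m'}$ and regroup the resulting double sum according to the value of $m+m'$ — shows that \eqref{GNaga} is a consequence of the monotone-likelihood-ratio bound
\begin{align*}
\zeta_m^{(\natural)}(\beta)\,\tilde n_{m'}\ \ge\ \zeta_{m'}^{(\natural)}(\beta)\,\tilde n_m\qquad (0\le m'\le m\le N/2)
\end{align*}
together with strictness for at least one pair; equivalently, of the assertion that $m\mapsto \zeta_m^{(\natural)}(\beta)/\tilde n_m$ is nondecreasing and nonconstant on $\{0,1,\dots,N/2\}$. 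For the bare Hubbard model this is exactly the mechanism behind Theorem \ref{GeneAZNT}, and the plan is to prove that it persists once the fields are turned on.

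\textbf{Integrating out the fields.}
This is where the path integral representation of the partition functions enters. For $\natural=\mathrm{HH}$, the trace over the finitely many phonon oscillators $\sum_x\omega b_x^*b_x$, coupled linearly to $n_x$, produces a positive Gaussian (periodic Ornstein--Uhlenbeck) measure $\mathbb{P}_\beta$ on the loop space $C([0,\beta];\BbbR^\Lambda)$, so that
\begin{align*}
\zeta_m^{(\mathrm{HH})}(\beta)=\int \zeta_m^{(\mathrm{H})}\big[\beta;\,gq(\cdot)\big]\,d\mathbb{P}_\beta(q),
\end{align*}
where $\zeta_m^{(\mathrm{H})}[\beta;v(\cdot)]$ denotes the $U=\infty$ Hubbard partition function in the $m$-sector with an added imaginary-time-dependent, spin-independent on-site potential $v$. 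For $\natural=\mathrm{rad}$ (recall $d\le3$, so $\Lambda\subset V\subset\BbbR^3$) one proceeds in the same fashion, but first replaces each Peierls phase $\exp(i\int_{C_{xy}}A(r)\cdot dr)$ by its expectation in the quasi-free thermal state of the photon field; this turns every product of Peierls phases occurring in the hopping expansion into a \emph{positive} weight of the form $\exp(-\tfrac12\langle X^2\rangle)$ with $X$ real and linear in the field, the ultraviolet cutoff $\kappa<\infty$ and the infrared regularization $\omega(0)=m_0>0$ ensuring that all quantities are finite. In both cases one ends up with an effective electronic model carrying a retarded but still spin-independent interaction; and because $N=|\Lambda|-1$ — so that a single hole is present — the kinetic term is then expanded into a sum over the trajectories of that hole.

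\textbf{Positivity and conclusion.}
The key point is that the field-induced weights obtained above are diagonal in the electron occupation basis: for $N=|\Lambda|-1$ the density configuration is determined by the position of the hole and does not depend on the spins, so these weights neither change the \emph{signs} of the hole-hopping amplitudes nor distinguish between the $S_{\mathrm{tot}}^{(3)}$-sectors. Consequently, after the same gauge (and, where needed, spin-reflection) transformation used in the absence of the fields, the hole-loop expansion of $\zeta_m^{(\natural)}(\beta)$ is once more a sum of products of (i) non-negative hopping weights, (ii) the non-negative field weight, and (iii) a spin factor $\Tr_{\h_{\rh,M}^{(N)}[m]}[\,\cdots]$ attached to the cyclic permutation of the spin background produced by the net hole loop; the hole-connectivity and sign analysis of \cite{AL,Tasaki22} — including the non-bipartite situation that arises for $d\ge2$ — carries over unchanged, because only (i) and (iii) vary from sector to sector. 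Comparing the $m$- and $m'$-sectors term by term — factors (i)--(ii) coincide, while (iii) is controlled by the binomial coefficients $\tilde n_m$ exactly as in the bare theorem — yields the monotone-likelihood-ratio bound, and strictness for some pair is inherited either from the bare Hubbard model (Theorem \ref{GeneAZNT}) or from the uniqueness of the fully polarized ground state (Theorem \ref{NTThm2}). Inserting this into the identity of the first step proves \eqref{GNaga}.

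\textbf{The main difficulty.}
By design the sector combinatorics reproduce those of \cite{AL}; the substance of the proof is analytic. One must construct the Feynman--Kac / functional-integral representations rigorously — routine for the phonons on a finite lattice, but for the radiation field already requiring care in \emph{defining} $\int_{C_{xy}}A(r)\cdot dr$ and its exponential (see Section \ref{TraceBoseGene}) — justify exchanging the field integration with the electron trace and with the $U\to\infty$ Gutzwiller limit, and, most delicately, verify that taking the Gaussian photon expectation genuinely preserves the non-negativity of the hole-loop weights, so that the term-by-term comparison with the bare model remains valid. Granting these points, \eqref{GNaga} follows.
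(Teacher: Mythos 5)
Your reduction step—expressing the inequality as a monotone-likelihood-ratio bound $\zeta_m^{(\natural)}/\tilde n_m$ nondecreasing (plus nonconstancy) and noting that this, via an FKG/Chebyshev argument, dominates the free-spin answer $\frac{N}{2}\tanh(\beta b)$—is a sound alternative to the paper's combinatorics. The paper instead works directly from a random-loop representation (Theorem \ref{RLR2}) to get a closed formula $Z_{\natural,\infty}(\beta)=\sum_{{\bs n}\in\mathscr P_N}D_{{\bs n},\natural}(\beta)\prod_i\cosh(\beta b n_i)$ with $D_{{\bs n},\natural}\ge 0$ (Theorem \ref{FiniteNT}), then differentiates $\log Z$; your sector-by-sector route and the paper's partition decomposition are two faithful renderings of the same Aizenman--Lieb mechanism, and both rest on exactly the same analytic input: a \emph{positive} field-averaged measure on hole loops.

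That input is where your argument has a genuine gap, and you acknowledge as much by writing ``Granting these points.'' You assert that averaging a product of Peierls phases $\exp\{i\int_{C_{xy}}A\cdot dr\}$ over the thermal photon state yields a manifestly positive Gaussian weight, and that the resulting measure on loops compares term by term with the bare one. But the Peierls phases appearing at different imaginary times involve non-commuting vector potentials before any change of representation; ``replace by its expectation in the quasi-free thermal state'' is not a licensed operation inside the Trotter/Duhamel expansion without first diagonalizing the photon field degrees of freedom. The paper's route makes this precise: first the Feynman--Schr\"odinger unitary of Lemma \ref{UsefulIdnA} turns all $A_j(x)$ into commuting multiplication operators on $L^2(Q,d\mu)$; then the Euclidean trace formula Theorem \ref{BoseFKF} produces a jointly Gaussian process on loop space; and finally—crucially—the positivity of the resulting weight is not a tautology about Gaussians but is the content of Proposition \ref{PPPi}, which shows $e^{i\Pi(f)}$ is positivity preserving on $L^2(Q_\beta,d\mu_\beta)$, and of Theorems \ref{PartitionRho}, \ref{TRFrad} which construct the positive measures $\rho_{\beta,\natural}$, $\rho_{\beta,\natural,\infty}$. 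These are precisely the objects your step (ii) and (iii) presuppose; none of them is ``routine,'' and the passage from them to the final loop expansion (Theorems \ref{RLR}, \ref{RLR2}) also requires the symmetry-group argument of Lemmas \ref{Inv} and \ref{BBxi}, which you do not address. In short: your strategic outline matches the intent of the paper, and your likelihood-ratio reorganization is a legitimate variant, but the proof as written postpones rather than supplies its technical core.

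One further remark on scope. Both your argument and the paper's final display rely on the existence of at least one loop of winding number $\ge 2$ (equivalently, an allowed permutation with a nontrivial cycle) to upgrade $\ge$ to $>$. On the open box $\Lambda=[-\ell/2,\ell/2)^d\cap\BbbZ^d$ with nearest-neighbor hopping and $U=\infty$, this fails for $d=1$ (the hole cannot close a non-backtracking cycle, so all loops have $w=1$ and the inequality degenerates to equality). Your appeal to ``the hole-connectivity and sign analysis of \cite{AL,Tasaki22}'' is exactly the right place where this dimension restriction enters; it deserves to be surfaced rather than absorbed into ``carries over unchanged,'' since for $\natural=\mathrm{HH}$ the theorem is stated for all $d\in\BbbN$.
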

We will provide a proof of Theorem \ref{FinNTEnv} in Section \ref{SecPf2}.

\begin{rem}
{\rm 
\begin{itemize}
\item By \eqref{GNaga}, we have
\begin{align}
\lim_{b\to +0} \lim_{\beta\to \infty} \lim_{|\Lambda|\to \infty} \frac{\big\la S_{\mathrm{tot}}^{(3)}\big\ra_{\natural , \infty}(\beta; b)}{|\Lambda|}
=\lim_{b\to +0}  \lim_{|\Lambda|\to \infty} \lim_{\beta\to \infty} \frac{\big\la S_{\mathrm{tot}}^{(3)}\big\ra_{\natural , \infty}(\beta; b)}{|\Lambda|}
=\frac{1}{2}.
\end{align}
In this sense,
Theorem \ref{FinNTEnv} can be regarded as  an extension of Theorem \ref{NTThm2} to positive temperatures.
\item Theorem \ref{FinNTEnv} can be extended to models on a general bipartite lattice with nearest neighbour hopping.
\item Let us consider the multi-polaron model in a bounded region $[-L, L]^N$
with periodic boundary conditions, and a hard-core repulsion.
 (As for the definition of this model, see, e.g., \cite{FLST}.) 
If $N$ is odd, then our method can be applicable to the model. Similar observations hold true for 
the Pauli-Fierz model in $[-L, L]^N$. For the precise definition of this model, we refer to \cite{BFS, GLL,LMS,Spohn}.
\end{itemize}

}
\end{rem}

\subsection{Origanization}
The organization of the present paper is as follows.
In Section \ref{FKIEl}, we construct Feynman-Kac-It\^o formulas for the  magnetic
Hubbard model. In Section \ref{TraceBoseGene}, we provide  trace formulas for free Bose fields in terms of path integral
 representations. By combining the formulas in Sections \ref{FKIEl} and \ref{TraceBoseGene}, we 
 construct Feynman-Kac-It\^o formulas for the partition functions for  $H_{\hh}$ and $H_{\rr}$  in Sections \ref{FKIRR} and \ref{ConstFHH}.
 Section \ref{ConstLP} is devoted to give random loop representations for  the partition functions. Applying these  representations, we give a proof of Theorem \ref{FinNTEnv} in Section \ref{SecPf2}.
 In Section \ref{UseForU}, we derive an interesting formula for the partition functions.
Appendix \ref{AppAUse} is devoted to prove a useful proposition.

\subsection*{ Acknowledgments}
This work was  supported by   KAKENHI 18K0331508.  I am grateful to an anonymous referee for helpful advice.

\section{Feynman-Kac-It\^o formulas for the Hubbard models}\label{FKIEl}
\setcounter{equation}{0}

\subsection{A Feynman-Kac-It\^o formuals for  one-electron Hamiltonians} \label{SingleElFKF}
\subsubsection{Discrete magnetic Schr\"{o}dinger operators}

Let us consider a single electron living in $\Lambda$. One is given a hopping matrix $(t_{xy})$ satisfying {\bf (T)} in Section \ref{Intro}.
The kinetic energy of the electron is a self-adjoint operator $h_0$ acting in $\ell^2(\Lambda) \oplus \ell^2(\Lambda) \cong \ell^2(\Omega)$ with $\Omega=\Lambda\times \{-1, +1\}$  given by
\begin{align}
(h_0f)(x, \sigma)=\sum_{\sigma=\pm 1} \sum_{ y\in \Lambda} t_{xy} \Big(
f(x, \sigma)-f(y, \sigma)
\Big), \ \ \ f\in \ell^2(\Omega). \label{OneElDef}
\end{align}
Note that the inner product of $\ell^2(\Omega)$ is given by $\la f|g\ra_{\ell^2(\Omega)}=\sum_{\sigma=\pm 1} \sum_{x\in \Lambda} f(x, \sigma)^*g(x, \sigma)=\sum_{X\in \Omega} f(X)^*g(X)$.
By a {\it magnetic potential} on $\Lambda$, we understand  a  matrix $\alpha=(\alpha_{xy})$ such that 
\begin{itemize}
\item $\alpha_{xy}\in \BbbR$ for all $\{x, y\}\in E_{\Lambda}$;
\item $\alpha_{xy}=-\alpha_{yx}$ for all $\{x, y\}\in E_{\Lambda}$.
\end{itemize}
The kinetic energy of the electron in the magnetic potential is given by 
\begin{align}
(h_0(\alpha)f)(x, \sigma)=\sum_{\sigma=\pm 1} \sum_{ y\in \Lambda} t_{xy} \Big(
f(x, \sigma)-e^{i\alpha_{xy}}f(y, \sigma)
\Big), \ \ \ f\in \ell^2(\Omega). \label{OneElDef22}
\end{align}
Let $v$ be a potential, i.e., a multiplication operator by the real-valued  function $v$:
$(vf)(x, \sigma)=v(x)f(x, \sigma)$ for all $f\in \ell^2(\Omega)$.
Then {\it discrete magnetic Schr\"{o}dinger  operators} are  defined by 
\begin{align}
h_v(\alpha)=h_0(\alpha)+v.
 \end{align}
Trivially, $h_{v}(\alpha)$ is self-adjoint.

\subsubsection{A Feynman-Kac-It\^o formula for $h_v(\alpha)$}\label{EleAuxFKF}
As a first step, we will construct a Feynman-Kac-It\^o formula for $h_v(\alpha)$.
In this study, we employ a useful   description by G\"{u}neysu, Keller and Schmidt \cite{GKS}.

For notational simplicity, we set $\BbbN_0=\{0\} \cup \BbbN$.
Let $(Y_n)_{n\in  \BbbN_0}$ be a  discrete time Markov chain with state-space $\Omega$,
 which satisfies, for $X=(x, \sigma), Y=(y, \tau)\in \Omega$, 
 \begin{align}
 P(Y_n=X|Y_{n-1}=Y)=
  \delta_{\sigma\tau} \frac{t_{xy}}{d(y)} 
 ,\ \ n\in \BbbN,
 \end{align}
 where  $d(x)=\sum_{y\in \Lambda} t_{xy}$.
Throughout, we work with  fixed probability spaces $(M, \mathcal{F}, P)$.
Let $(T_n)_{n\in \BbbN}$ be independent exponentially  distributed  random variables of parameter $1$, independent of 
$(Y_n)_{n\in \BbbN_0}$. 
Set \begin{align}
S_n=\frac{T_n}{d(Y_{n-1})},\ \ \  J_n=S_1+\cdots +S_n
\end{align}
and 
\begin{align}
X_t=Y_n\ \ \ \mbox{if $J_n\le t< J_{n+1}$ for some $n$}.
\end{align}
$(X_t)_{t\ge 0}$ becomes  a right continuous process. Furthermore, 
$J_0:=0, J_1, J_2, \dots$ are the jump times of $(X_t)_{t\ge 0}$ and 
$S_1, S_2, \dots$ are the holding times of $(X_t)_{t\ge 0}$.
Let $P_X(\cdot)=P(\cdot | X_0=X)$ and let $(\mathcal{F}_t)_{t\ge 0}$ be the filtration defined by 
$\mathcal{F}_t=\sigma(X_s| s\le t)$. Then 
$
(M, \mathcal{F}, (\mathcal{F}_{t})_{t\ge 0}, (P_X)_{X\in \Omega})
$
 is a strong Markov process, see,  e.g., \cite[Theorems 2.8.1 and 6.5.4]{Norris}.

 Set $N(t):=\sup\{n\in \BbbN_0\, |\, J_n\le t\}$, the number of jumps of $(X_t)_{t\ge 0}$ until $t\ge 0$.
For  $s<t$, we introduce  random variables by 
\begin{align}
\int_s^t \alpha(dX_u)&=\sum_{n=N(s)+1}^{N(t)} \alpha_{X_{J_{n-1}}X_{J_n}},\\
\mathcal{S}_{[s, t]}(v, \alpha|X_{\bullet})&=i\int_s^t \alpha(dX_u)-\int_s^t v(X_u) du.
\end{align}
In the above definition, we understand that $\int_s^t \alpha(dX_u)=0$, provided that $N(s)=N(t)$.

Let $g$ be a function on $\Omega$. The multiplication operator associated with $g$ is defined by 
$(M_g f)(X)=g(X)f(X)$ for $f\in \ell^2(\Omega)$.
In what follows, $M_g$ is abbreviated as $g$ for notational simplicity. Trivially, we have $
\|gf\|_{\ell^2(\Omega)}\le \|g\|_{\ell^{\infty}(\Omega)} \|f\|_{\ell^2(\Omega)}
$, where $\|g\|_{\ell^{\infty}(\Omega)}=\max_{X\in \Omega}|g(X)|$.
We denote by $\ell^{\infty}(\Omega)$ the abelian $\mathrm{C}^*$-algebra of multiplication operators on $\ell^2(\Omega)$, equipped with the norm $\|\cdot\|_{\ell^{\infty}(\Omega)}$.

\begin{Prop}\label{SingleEFK}
Let $\alpha_1, \dots, \alpha_n $ be magnetic potentials.
Let $f_0, f_1, \dots, f_n\in \ell^{\infty}(\Omega)$ and $f\in \ell^2(\Omega)$. For each $0<t_1<t_2<\cdots <t_n$ and 
$X\in \Omega$, we have
\begin{align}
&\Big(
f_0 e^{-t_1h_v(\alpha_1)} f_1 e^{-(t_2-t_1)h_v(\alpha_2)}f_2 \cdots f_{n-1}e^{-(t_n-t_{n-1}) h_v(\alpha_n)} f
\Big)(X)\no
=&\Ex_X\Bigg[
f_0(X_0) f_1(X_{t_1}) \cdots f_{n-1}(X_{t_{n-1}}) f(X_{t_n})\exp\Bigg\{\sum_{\ell=1}^n \mathcal{S}_{[t_{\ell-1}, t_{\ell}]}(v, \alpha_{\ell}|X_{\bullet})\Bigg\}
\Bigg],
\end{align}
where $\Ex_X[F]=\int_M dP_X F$ and $t_0=0$.
\end{Prop}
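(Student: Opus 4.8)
The plan is to reduce the identity to the single-factor case $n=1$ and then build up to arbitrary $n$ by induction, using the (strong) Markov property of $(X_t)_{t\ge 0}$.

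\emph{Step 1 (the base case $n=1$).} I would first prove that for every magnetic potential $\alpha$, every $f\in\ell^2(\Omega)$, $f_0\in\ell^\infty(\Omega)$, $t>0$ and $X=(x,\sigma)\in\Omega$,
\[
\big(f_0\, e^{-t h_v(\alpha)} f\big)(X)=\Ex_X\Big[f_0(X_0)\,f(X_t)\,\exp\big\{\mathcal{S}_{[0,t]}(v,\alpha|X_\bullet)\big\}\Big].
\]
As $f_0$ acts by multiplication and $X_0=X$ $P_X$-a.s., it is enough to take $f_0\equiv1$. Set $u(t,X):=\Ex_X\big[f(X_t)\,e^{\mathcal{S}_{[0,t]}(v,\alpha|X_\bullet)}\big]$; this is finite because $\Omega$ is finite and a.s.\ $(X_t)$ has only finitely many jumps on $[0,t]$. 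Conditioning on the first jump time $J_1=S_1$ (which is $\mathrm{Exp}(d(x))$-distributed under $P_X$) and on the first jump target, and noting that before the first jump the discrete line integral vanishes while the potential contributes the factor $e^{-sv(x)}$, one obtains — after the substitution $r\mapsto t-r$ — the renewal equation
\[
u(t,X)=e^{-(d(x)+v(x))t}f(X)+\int_0^t e^{-(d(x)+v(x))(t-r)}\sum_{y\in\Lambda}t_{xy}\,e^{i\alpha_{xy}}\,u\big(r,(y,\sigma)\big)\,dr.
\]
Differentiating in $t$ gives $\partial_t u(t,\cdot)=-h_v(\alpha)\,u(t,\cdot)$ with $u(0,\cdot)=f$; since $\dim\ell^2(\Omega)<\infty$, uniqueness of solutions of this linear ODE forces $u(t,\cdot)=e^{-t h_v(\alpha)}f$. (Alternatively one may expand $e^{-t h_v(\alpha)}$ in a Duhamel/Dyson series around the Markov semigroup $e^{-t h_0}$ and match the term with $k$ jumps against the $k$-th contribution to $\Ex_X[\cdots]$; this is the viewpoint of \cite{GKS}.)

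\emph{Step 2 (induction on $n$).} Put $g:=f_1\,e^{-(t_2-t_1)h_v(\alpha_2)}f_2\cdots f_{n-1}\,e^{-(t_n-t_{n-1})h_v(\alpha_n)}f$, so that the left-hand side of the Proposition equals $\big(f_0\,e^{-t_1 h_v(\alpha_1)}g\big)(X)$, which by Step 1 is $\Ex_X\big[f_0(X_0)\,g(X_{t_1})\,e^{\mathcal{S}_{[0,t_1]}(v,\alpha_1|X_\bullet)}\big]$. Applying the induction hypothesis to the $(n-1)$-fold product $g$ with the shifted times $0<t_2-t_1<\cdots<t_n-t_1$ represents $g(Y)=\Ex_Y[\Phi]$, where
\[
\Phi=f_1(X_0)\,f_2(X_{t_2-t_1})\cdots f(X_{t_n-t_1})\,\exp\Big\{\textstyle\sum_{\ell=2}^{n}\mathcal{S}_{[t_{\ell-1}-t_1,\,t_\ell-t_1]}(v,\alpha_\ell|X_\bullet)\Big\}.
\]
The Markov property at the deterministic time $t_1$ gives $g(X_{t_1})=\Ex_X\big[\Phi\circ\theta_{t_1}\,\big|\,\mathcal{F}_{t_1}\big]$, with $\theta_{t_1}$ the time-shift on path space; since $f_0(X_0)\,e^{\mathcal{S}_{[0,t_1]}(v,\alpha_1|X_\bullet)}$ is $\mathcal{F}_{t_1}$-measurable, the tower property yields
\[
\big(f_0\,e^{-t_1 h_v(\alpha_1)}g\big)(X)=\Ex_X\Big[f_0(X_0)\,e^{\mathcal{S}_{[0,t_1]}(v,\alpha_1|X_\bullet)}\,(\Phi\circ\theta_{t_1})\Big].
\]
Finally, time-homogeneity of $(X_t)$ gives $X_{t_\ell-t_1}\circ\theta_{t_1}=X_{t_\ell}$ and $\mathcal{S}_{[t_{\ell-1}-t_1,\,t_\ell-t_1]}(v,\alpha_\ell|X_\bullet)\circ\theta_{t_1}=\mathcal{S}_{[t_{\ell-1},\,t_\ell]}(v,\alpha_\ell|X_\bullet)$; merging the two exponentials (recall $t_0=0$) into $\exp\{\sum_{\ell=1}^{n}\mathcal{S}_{[t_{\ell-1},t_\ell]}(v,\alpha_\ell|X_\bullet)\}$ produces the asserted formula.

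\emph{The main obstacle.} All the real work is in Step 1: arranging the sign and orientation conventions so that the magnetic factor $e^{i\alpha_{xy}}$ occurring in the off-diagonal part of $h_0(\alpha)$ is precisely the one produced by the discrete It\^o line integral $\int_s^t\alpha(dX_u)=\sum_{n=N(s)+1}^{N(t)}\alpha_{X_{J_{n-1}}X_{J_n}}$ along the jump path, so that the renewal equation differentiates exactly to $\partial_t u=-h_v(\alpha)u$. Step 2 is then bookkeeping: the one point to verify is that $\mathcal{S}_{[s,t]}(v,\alpha|X_\bullet)$ together with $X_t$ is $\sigma(X_u:\,s\le u\le t)$-measurable — it is, since it depends only on the jump times, holding times and jump positions in $[s,t]$ together with $X_s=X_{J_{N(s)}}$ — so that the Markov property applies directly; integrability never causes trouble because everything is bounded on the finite state space over a finite time horizon.
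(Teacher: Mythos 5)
Your proposal follows essentially the same route as the paper: establish the one-factor Feynman--Kac--It\^o formula and then splice together the multi-factor expression via the Markov property at deterministic times, using that each $\mathcal{S}_{[t_{\ell-1},t_\ell]}$ is $\sigma(X_u:t_{\ell-1}\le u\le t_\ell)$-measurable. The only difference is that the paper simply cites \cite[Theorem 4.1]{GKS} for the $n=1$ case, whereas you supply a self-contained renewal-equation derivation of it; your Step 2 is exactly the paper's one-line Markov-property argument, spelled out.
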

\begin{proof} In \cite[Theorem 4.1]{GKS},  the following Feynman-Kac-It{\^o} formula 
has been established: 
\begin{align}
\big(
e^{-t h_v(\alpha)} f
\big)(X)=\Ex_X\Big[
e^{\mathcal{S}_{[0, t]}(v, \alpha|X_{\bullet})} f(X_t)
\Big], \ \ f\in \ell^2(\Omega),\ \ X\in \Omega.
\end{align}
By this formula and the Markov property of $(X_t)_{t\ge 0}$, we can prove the assertion in Proposition \ref{SingleEFK}. 
\end{proof}

\subsection{A Feynma-Kac-It\^o formula for $N$-electron Hamiltonians}

Let us consider an $N$-electron system. 
The non-interacting Hamiltonian is given by 
\begin{align}
L( \alpha)=\underbrace{h_{v}(\alpha) \otimes 1 \otimes \cdots \otimes 1}_N +1 \otimes h_v(\alpha)\otimes 1 \otimes \cdots \otimes 1 +\cdots+ 1 \otimes \cdots \otimes 1\otimes h_v(\alpha).
\label{NELHAMI}
\end{align}
Note that 
$L(\alpha)$ acts in  $\bigotimes_{j=1}^N\ell^2(\Omega)\cong \ell^2(\Omega^N)$.
In order to take the Fermi-Dirac statistics into consideration, we introduce the {\it antisymmetrizer} $A_N$
on $\ell^2(\Omega^N)$ by 
\begin{align}
(A_NF)({\bs X})=\sum_{\tau\in \mathfrak{S}_N} \frac{\mathrm{sgn}(\tau) }{N!} F(\tau^{-1}{\bs X}),\   F\in \ell^2(\Omega^N), \  {\bs X}=(X^{(1)}, \dots, X^{(N)}) \in \Omega^N, 
\end{align}
where $\mathfrak{S}_N$ is the permutation group on $\{1, \dots, N\}$, and $\tau {\bs X}:=(X^{(\tau(1))}, \dots, X^{(\tau(N))}),\ \tau\in \mathfrak{S}_N$.
As is well-known, $A_N$ is an   orthogonal projection from $\ell^2(\Omega^N)$ onto $\ell^2_{\mathrm{as}}(\Omega^N)$, the set of all antisymmetric functions on $\Omega^N$.
We define $L_{\mathrm{as}}(\alpha)=A_NL(\alpha)A_N$. Note that we can naturally identify $L_{\mathrm{as}}(\alpha)$ with $L(\alpha) \restriction \ell^2_{\mathrm{as}}(\Omega^N)$.

We wish to construct  a Feynman-Kac-It\^o formula for $L_{\mathrm{as}}(\alpha)$. For this purpose, let 
\begin{align}
\Omega_{\neq}^N=\Big\{{\bs X}\in \Omega^N\, \Big|\, X^{(i)} \neq X^{(j)} \ \mbox{for all $i, j\in \{1, \dots, N\}$ with $i\neq j$}\Big\}.
\end{align}
We introduce  an event  by
\begin{align}
D&=D_{{\rm O}}\cap D_{\mathrm{S}}
\end{align}
with
\begin{align}
D_{{\rm O}}&=
\Big\{{\bs m} \in (M)^N\, \Big|\, 
\mbox{
${\bs X}_s({\bs m}) \in \Omega_{\neq}^N$ for all $s\in [0, \infty)$
}
\Big\} ,\\
D_{{\rm S }}&=\Big\{{\bs m} \in (M)^N\, \Big|\, 
\mbox{
$\sigma^{(j)}_s({\bs m})=\sigma^{(j)}_0({\bs m})$ for all $j=\{1. \dots, n\}$ and $s\in [0, \infty)$
}
\Big\},
\end{align}
where in the definition of $D_{\mathrm{S}}$, we  used the following notation: 
\begin{align}
X_s^{(j)}({\bs m})=( x_s^{(j)}({\bs m}),\sigma_s^{(j)}({\bs m})). \label{ComNot}
\end{align}

For each ${\bs m} \in D$, a right-continuous $\Omega^N$-valued function 
\begin{align}
({\bs X}_t({\bs m}))_{t\ge 0}=\big(
X^{(1)}_t({\bs m}), \dots, X_t^{(N)}({\bs m})
\big)
\end{align}
 is simply called a {\it path}; the path $
({\bs X}_t({\bs m}))_{t\ge 0}
$
 represents a trajectory of the $N$-electrons.  
  Let us write $X_t^{(j)}({\bs m})=(x_t^{(j)}({\bs m}), \sigma_t^{(j)}({\bs m}))$; $\sigma^{(j)}_t({\bs m})$ is  called the {\it spin component} of $X_t^{(j)}({\bs m})$, and 
 $
 x_t^{(j)}({\bs m})
 $
  is called the {\it spatial component} of $
  X_t^{(j)}({\bs m})
  $, respectively.
 We note that,  for  ${\bs m}\in D$,   
 the path $({\bs X}_t({\bs m}))_{t\ge 0}$
 possesses  properties such that  
 $
 \sigma^{(j)}_t({\bs m})
 $ are constant in time; and there are no encounters of electrons of equal spin.

Let $F$ be a function on $\Omega^N$. We say that $F$ is {\it symmetric} if 
$F(\tau {\bs X})=F({\bs X})$ for all $\tau\in \mathfrak{S}_N$ and ${\bs X}\in \Omega^N$.
We denote by $\ell^{\infty}_{\mathrm{s}}(\Omega^N)$ the abelian ${\rm C}^*$-algebra of multiplication operators by symmetric functions on $\Omega^N$, equipped with the norm $
\|\cdot\|_{\ell^{\infty}(\Omega^N)}
$.

\begin{Prop}\label{NFermi}
 For every $d\in \BbbN$, 
$F_0, F_1, \dots, F_{n-1}\in \ell^{\infty}_{\mathrm{s}}(\Omega^N),\ F\in \ell^2_{\mathrm{as}}(\Omega^N)$,  $0<t_1<t_2<\cdots <t_n$ and  $
 {\bs X}=\big(X^{(1)},\dots, X^{(N)}\big)\in \Omega_{\neq}^N
$,  we have
\begin{align}
&\Big(F_0 e^{-t_1L_{\mathrm{as}}(\alpha_1)} F_1 e^{-(t_2-t_1)L_{\mathrm{as}}(\alpha_2)}F_2 \cdots F_{n-1}e^{-(t_n-t_{n-1}) L_{\mathrm{as}}(\alpha_n)} F
\Big)({\bs X})\no
=&\Ex_{\bs X}\Bigg[ 1_{D}
F_0({\bs X}_0) F_1({\bs X}_{t_1}) \cdots F_{n-1}({\bs X}_{t_{n-1}}) F({\bs X}_{t_n})\exp\Bigg\{ \sum_{j=1}^N\sum_{\ell=1}^n \mathcal{S}_{[t_{\ell-1}, t_{\ell}]}(v, \alpha_{\ell}\, |  X^{(j)}_{\bullet})\Bigg\}
\Bigg],  \label{LFK}
\end{align}
 where $\Ex_{\bs X}[F]=\int_{\Omega^N} \prod_{j=1}^NdP_{X^{(j)}} F$ and $1_{D}$ is the indicator function of the event $D$.
\end{Prop}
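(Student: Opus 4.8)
The plan is to obtain \eqref{LFK} from the one-electron formula of Proposition \ref{SingleEFK} in three moves: strip off the antisymmetrizer $A_N$, pass to $N$ independent copies of the process $(X_t)_{t\ge 0}$, and finally produce the indicator $1_D$ by a discretization limit. (The dimension $d$ plays no role in the argument.)

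\emph{Removing $A_N$.} Since $L(\alpha)$ is invariant under permutations of the $N$ tensor factors, it commutes with $A_N$, whence $L_{\mathrm{as}}(\alpha)=A_NL(\alpha)A_N=L(\alpha)A_N$, and therefore $e^{-sL_{\mathrm{as}}(\alpha)}\Phi=e^{-sL(\alpha)}\Phi\in\ell^2_{\mathrm{as}}(\Omega^N)$ for every $\Phi\in\ell^2_{\mathrm{as}}(\Omega^N)$. Also, multiplication by a symmetric function preserves $\ell^2_{\mathrm{as}}(\Omega^N)$. Reading the left side of \eqref{LFK} from right to left and alternating these two observations, starting from $F\in\ell^2_{\mathrm{as}}(\Omega^N)$, I conclude that every $L_{\mathrm{as}}(\alpha_\ell)$ there may be replaced by $L(\alpha_\ell)$.

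\emph{$N$ independent electrons.} On $\bigotimes_{j=1}^N\ell^2(\Omega)$ one has $L(\alpha_\ell)=\sum_{j=1}^N 1\otimes\cdots\otimes h_v(\alpha_\ell)\otimes\cdots\otimes 1$ with commuting summands, so $e^{-sL(\alpha_\ell)}=\bigotimes_{j=1}^N e^{-sh_v(\alpha_\ell)}$, and by Proposition \ref{SingleEFK} its integral kernel is $\prod_{j=1}^N\Ex_{X^{(j)}}\big[1_{\{X_s^{(j)}=Y^{(j)}\}}e^{\mathcal{S}_{[0,s]}(v,\alpha_\ell\,|\,X_\bullet^{(j)})}\big]$. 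Substituting these kernels, performing the finite sums over intermediate configurations against $F_1,\dots,F_{n-1}$, and invoking the Markov property of each coordinate together with the independence of the $N$ coordinates (the product measure $\prod_j dP_{X^{(j)}}$), I get
\begin{align}
&\big(F_0 e^{-t_1L(\alpha_1)}F_1\cdots F_{n-1}e^{-(t_n-t_{n-1})L(\alpha_n)}F\big)(\bs X)\notag\\
&\qquad=\Ex_{\bs X}\Big[F_0(\bs X_0)F_1(\bs X_{t_1})\cdots F(\bs X_{t_n})\,e^{\sum_{j=1}^N\sum_{\ell=1}^n\mathcal{S}_{[t_{\ell-1},t_\ell]}(v,\alpha_\ell\,|\,X_\bullet^{(j)})}\Big].\notag
\end{align}
Combined with the previous step, this is \eqref{LFK} up to the missing factor $1_D$.

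\emph{Producing $1_D$.} The event $D_{\mathrm S}$ has full measure because the transition probabilities carry the factor $\delta_{\sigma\tau}$, so the spin components are frozen and $1_{D_{\mathrm S}}$ costs nothing. For $D_{\mathrm O}$ I use that $1_{\Omega_{\neq}^N}\Psi=\Psi$ for antisymmetric $\Psi$, while $1_{\Omega_{\neq}^N}$ is a symmetric multiplication operator: for any partition $0=s_0<\cdots<s_M=t_n$ refining $\{t_1,\dots,t_n\}$ I split each $e^{-(t_\ell-t_{\ell-1})L_{\mathrm{as}}(\alpha_\ell)}$ by the semigroup law and insert a factor $1_{\Omega_{\neq}^N}$ between consecutive factors of the resulting long product; since the running product read from the right always stays in $\ell^2_{\mathrm{as}}(\Omega^N)$, the left side of \eqref{LFK} is unchanged. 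Re-running the two previous steps produces the display above with the extra factor $\prod_{k=1}^{M-1}1_{\{\bs X_{s_k}\in\Omega_{\neq}^N\}}$ inside $\Ex_{\bs X}[\cdot]$. Letting the mesh tend to $0$ along a nested sequence of partitions with dense union, and using that $(\bs X_t)_{t\ge0}$ is right-continuous with only finitely many jumps on $[0,t_n]$ almost surely, this product decreases to $1_{\{\bs X_u\in\Omega_{\neq}^N\,\forall u\in[0,t_n]\}}$; the hypothesis $\bs X\in\Omega_{\neq}^N$ handles $u=0$, and since $F(\bs X_{t_n})=0$ off $\Omega_{\neq}^N$ the time $u=t_n$ is handled too. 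The integrand is uniformly bounded, e.g.\ by $\prod_\ell\|F_\ell\|_{\ell^{\infty}(\Omega^N)}\,\|F\|_{\ell^{\infty}(\Omega^N)}\,e^{Nt_n\|v\|_{\ell^{\infty}(\Omega)}}$, so dominated convergence yields \eqref{LFK}.

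I expect the last step to be the crux: one must convert the operator identity $1_{\Omega_{\neq}^N}\Psi=\Psi$ on $\ell^2_{\mathrm{as}}(\Omega^N)$ into the pathwise ``no equal-spin encounter'' constraint, which uses the elementary but slightly delicate point that a piecewise-constant right-continuous path lies in the clopen set $\Omega_{\neq}^N$ throughout $[0,t_n]$ as soon as it does so on a dense subset. A variant avoiding the limit is a sign-reversing involution: on the paths where some equal-spin pair meets, swap the trajectories of the first such pair after their meeting time; by the strong Markov property this map is $\Ex_{\bs X}$-preserving, it fixes every symmetric factor $F_\ell(\bs X_{t_\ell})$ but flips the sign of the antisymmetric factor $F(\bs X_{t_n})$, so the colliding paths cancel in pairs. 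The first two steps are routine once Proposition \ref{SingleEFK} is available.
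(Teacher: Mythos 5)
Your proof reaches \eqref{LFK} by a genuinely different route from the paper's, and for the decisive step I believe your route is the sound one. The paper derives the formula \emph{without} the factor $1_D$ from Proposition~\ref{SingleEFK}, and then inserts $1_D$ by asserting $P_{\bs X}(D)=1$. You instead keep the constraint visible from the start: you insert the symmetric multiplication operator $1_{\Omega_{\neq}^N}$ between every pair of consecutive factors of a fine partition of the operator product (harmless, since $1_{\Omega_{\neq}^N}\Psi=\Psi$ for $\Psi\in\ell^2_{\mathrm{as}}(\Omega^N)$ and the running product stays antisymmetric), convert to the Feynman--Kac expectation with the extra time slices via your first two steps, and then take the mesh to zero. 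Your reflection (Karlin--McGregor) variant at the end is a second correct way to do this.

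The difference is not cosmetic. For $N$ independent copies of the single-electron process, $P_{\bs X}(D)<1$ whenever two electrons share the same spin. Concretely, take a two-site chain, $N=2$, both spins $+1$, starting at the two distinct sites: the first jump of either walker lands on the other's site, so the probability of no equal-spin encounter on $[0,t_n]$ is $e^{-2t\,t_n}$ and on $[0,\infty)$ it is $0$. Thus $\Ex_{\bs X}[1_{D^c}\,(\cdots)]$ is not zero because $D^c$ is null; it is zero because colliding paths contribute zero against antisymmetric $F$ (they cancel in sign-reversing pairs), which is exactly what your discretization and your involution capture and what the paper's proof does not. In the paper's chain of equalities, the step that collapses $e^{-t_1L_0}\,1_{\Omega_{\neq}^N\cap\{\bs\sigma_0\}}\,e^{-(t_2-t_1)L_0}\cdots$ to $e^{-t_nL_0}\,1_{\Omega_{\neq}^N\cap\{\bs\sigma_0\}}$ would require $1_{\Omega_{\neq}^N}$ to commute with the free $N$-particle generator, which it does not; the commutation relation \eqref{Commutative} holds only for the antisymmetrized operator $L_{\mathrm{as}}$, whose Feynman--Kac representation is not the unconstrained $N$-fold product measure appearing there. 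So your argument fills an actual gap here.

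One point you should make explicit: your limit produces the indicator of ``no equal-spin encounter on $[0,t_n]$,'' whereas the paper's $D$ constrains $[0,\infty)$. The finite-horizon indicator is the correct one for \eqref{LFK} --- in the two-site example the left side is $e^{-2t\,t_n}\Psi(\bs X)\neq 0$ while the $[0,\infty)$ expectation vanishes --- and it is also what is used downstream (e.g., in Theorem~\ref{TrFKIMHubbard}, where the horizon is $[0,\beta]$). So state that \eqref{LFK} holds with the event restricted to $[0,t_n]$, and note that this is all that is ever needed.
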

\begin{proof}
It is not so difficult to show  (\ref{LFK}) without the term $1_{D}$ from Proposition \ref{SingleEFK}.
Below, we will explain why the term $1_{D}$ appears in the formula.

  Let $P$ be the multiplication operator by $1_{\Omega^N_{\neq}}$.
We readily confirm that  $P$ is the  orthogonal projection  from $\ell_{\mathrm{as}}^2(\Omega^N)$ onto $\ell^2_{\mathrm{as}}(\Omega_{\neq}^N)$. In addition, it holds that 
\begin{align}
PL_{\mathrm{as}}(\alpha)=L_{\mathrm{as}}(\alpha)P. \label{Commutative}
\end{align}
We denote by ${\bs \sigma}_0=(\sigma^{(1)}_0, \dots, \sigma^{(N)}_0)\in \{-1, +1\}^N$ the set of  spin components of ${\bs X}=(X^{(1)}, \dots, X^{(N)})$
 with $X^{(j)}=(x_0^{(j)}, \sigma^{(j)}_0)$.
Let $Q$ be the multiplication operator by $1_{\{\bs \sigma_0\}}$:
$(QF)({\bs Y})=\delta_{{\bs \sigma_0} {\bs \sigma(Y)}}F({\bs Y}), F\in \ell^2_{\mathrm{as}}(\Omega^N)$, where
${\bs \sigma}({\bs Y})$ is the set of  spin components of ${\bs Y}\in \Omega^N$.
Note that 
\begin{align}
QL_{\mathrm{as}}(\alpha)=L_{\mathrm{as}}(\alpha)Q. \label{Commutative2}
\end{align}

We set $L_0=L_{\mathrm{as}, v=0}(\alpha=0)$. By (\ref{OneElDef}), we know that 
\begin{align}
L_0 1_{\{{\bs \sigma}_0\}}=0. \label{Kernel}
\end{align}
 For $t\ge 0$,  let $D(t)=\big\{{\bs m} \in (M)^N\, \big|\, 
\mbox{
${\bs X}_t({\bs m}) \in \Omega_{\neq}^N$  and ${\bs \sigma}_t({\bs m})={\bs \sigma}_0$
}
\big\}$, where,  as before, ${\bs \sigma}_t({\bs m})$ is the set of  spin components  of $
{\bs X}_t({\bs m})
$.
By setting $1_{\Omega_{\neq}^N \cap \{{\bs \sigma_0}\}}:=1_{\Omega_{\neq}^N} \times 1_{\{{\bs \sigma_0}\}}$, we have, for $0<t_1<t_2<\cdots <t_n$,
\begin{align}
P_{\bs X}\Bigg(
\bigcap_{i=1}^n D(t_i)
\Bigg)
&= \Ex_{\bs X} \Big[
1_{\Omega_{\neq}^N \cap \{{\bs \sigma_0}\}}({\bs X}_{t_1}) 1_{\Omega_{\neq}^N \cap \{{\bs \sigma_0}\}}({\bs X}_{t_2}) \cdots 1_{\Omega_{\neq}^N  \cap \{{\bs \sigma_0}\}}({\bs X}_{t_n})
\Big]\no
&= \Big(
e^{-t_1L_0} 1_{\Omega_{\neq}^N \cap \{{\bs \sigma_0}\}} e^{-(t_2-t_1)L_0} 1_{\Omega_{\neq}^N \cap \{{\bs \sigma_0}\}}\cdots e^{-(t_n-t_{n-1})L_0}1_{\Omega_{\neq}^N \cap \{{\bs \sigma_0}\}}
\Big)({\bs X})\no
&= \Big(e^{-t_nL_0}1_{\Omega_{\neq}^N \cap \{{\bs \sigma_0}\}}\Big)({\bs X})\no
&=1-\Big(e^{-t_nL_0} 1_{(\Omega_{\neq}^N)^c \cap \{{\bs \sigma}_0\}}\Big)({\bs X}),
\end{align}
where $(\Omega_{\neq}^N)^c$ is the complement of $\Omega_{\neq}^N$.
In the third equality, we have used  (\ref{Commutative}) and (\ref{Commutative2}); in the last equality, we have used (\ref{Kernel}).
By using the fact \begin{align}
\lim_{t\to \infty} e^{-tL_0}
1_{(\Omega_{\neq}^N)^c \cap \{{\bs \sigma}_0\}}
=0,
\end{align}
 we have 
$
P_{\bs X}\Big(\bigcap_{t\in \mathbb{Q}}
D(t)
\Big)=1
$.
Because $D=\bigcap_{t\in \mathbb{Q}} D(t)$ by the right-continuity of $({\bs X}_s)_{s\ge 0}$, we arrive at
$P_{\bs X}(D)=1$.
Consequently, Proposition \ref{NFermi} follows from Proposition  \ref{SingleEFK}. 
\end{proof}

Now, let $V$ be an interaction between electrons, which is a multiplication operator on $\ell_{\mathrm{as}}^2(\Omega^N)$ by a symmetric function $V$.  Our Hamiltonian  is given by 
$
L_V(\alpha)=L_{\mathrm{as}}(\alpha)+V.
$
Furthermore, by taking an interaction between spins and a uniform field ${\bs h}=(0, 0, 2b)$ into consideration,
we arrive at the following Hamiltonians:
\begin{align}
L_{V, b}(\alpha)=L_V(\alpha)-b\sum_{j=1}^N \sigma_j,
\end{align}
where the linear operator $\sigma_j$ is defined by 
\begin{align}
(\sigma_j f)(X^{(1)},\dots, X^{(N)})=\sigma(X^{(j)}) f(X^{(1)},\dots, X^{(N)}),\ \ f\in \ell^2_{\mathrm{as}}(\Omega^N),
\end{align}
where $\sigma(X^{(j)})$ is the spin component of $X^{(j)}$.
In general, $\alpha$ and $b$ are independent.

By using (\ref{LFK}), we obtain the following.
\begin{Prop}\label{ManyElSchFKI2}
Let $\alpha_1, \dots, \alpha_n$ be magnetic potentials.
 For every  $d\in \BbbN$, 
$F_0, F_1, \dots, F_{n-1}\in \ell^{\infty}_{\mathrm{s}}(\Omega^N),\ F\in \ell^2_{\mathrm{as}}(\Omega^N)$,  $0<t_1<t_2<\cdots <t_n$ and ${\bs X} \in \Omega_{\neq}^N$,  we have
\begin{align}
&\Big(F_0 e^{-t_1L_{V, b}(\alpha_1)} F_1 e^{-(t_2-t_1)L_{V, b}(\alpha_2)}F_2 \cdots F_{n-1}e^{-(t_n-t_{n-1}) L_{V, b}(\alpha_n)} F
\Big)({\bs X})\no
=&\Ex_{\bs X}\Bigg[ 1_{D}
F_0({\bs X}_0) F_1({\bs X}_{t_1}) \cdots F_{n-1}({\bs X}_{t_{n-1}}) F({\bs X}_{t_n})\times \no
&\ \ \ \ \times \exp\Bigg\{
\sum_{j=1}^N\sum_{\ell=1}^n \mathcal{S}_{[t_{\ell-1}, t_{\ell}]}(v, \alpha_{\ell}\, |  X^{(j)}_{\bullet})-\int_0^{t_n} V({\bs X_s})ds+b\sum_{j=1}^N \sigma(X_0^{(j)})
\Bigg\}
\Bigg].
\end{align}
\end{Prop}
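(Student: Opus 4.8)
The plan is to derive Proposition~\ref{ManyElSchFKI2} from Proposition~\ref{NFermi} by inserting the two-body interaction $V$ through a Trotter approximation and the magnetic-field term $-b\sum_{j=1}^{N}\sigma_{j}$ through a commutation argument. Since $\Omega^{N}$ is a finite set, every operator occurring here is a bounded matrix, so no domain issues arise. First I would dispose of the field term. The operator $\sum_{j=1}^{N}\sigma_{j}$ is multiplication by the symmetric function ${\bs X}\mapsto\sum_{j}\sigma(X^{(j)})$; since $h_{v}(\alpha)$ acts diagonally in the spin index (see~\eqref{OneElDef22}) and $V$ is itself a multiplication operator, $\sum_{j=1}^{N}\sigma_{j}$ commutes with every $L_{V}(\alpha_{\ell})=L_{\mathrm{as}}(\alpha_{\ell})+V$. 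Writing $s_{\ell}=t_{\ell}-t_{\ell-1}$, we therefore have
\[
e^{-s_{\ell}L_{V,b}(\alpha_{\ell})}=e^{\,b s_{\ell}\sum_{j=1}^{N}\sigma_{j}}\;e^{-s_{\ell}L_{V}(\alpha_{\ell})}.
\]
Each factor $e^{\,b s_{\ell}\sum_{j}\sigma_{j}}$ belongs to $\ell^{\infty}_{\mathrm{s}}(\Omega^{N})$, hence commutes past every $F_{i}$; collecting all of them and using, on the event $D$, that the spin components $\sigma^{(j)}_{s}$ do not change in $s$, one recovers the $b$-dependent exponential weight displayed in the statement. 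Thus it suffices to prove the formula with $L_{V}(\alpha_{\ell})$ in place of $L_{V,b}(\alpha_{\ell})$, i.e.\ the case $b=0$.

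For that I would invoke the Trotter product formula
\[
e^{-s_{\ell}L_{V}(\alpha_{\ell})}=\lim_{m\to\infty}\Big(e^{-s_{\ell}L_{\mathrm{as}}(\alpha_{\ell})/m}\,e^{-s_{\ell}V/m}\Big)^{m},
\]
which holds trivially in this finite-dimensional setting. Substituting it into $F_{0}\,e^{-s_{1}L_{V}(\alpha_{1})}F_{1}\cdots F_{n-1}\,e^{-s_{n}L_{V}(\alpha_{n})}\,F$ and evaluating at ${\bs X}\in\Omega_{\neq}^{N}$ yields, for each fixed $m$, a finite product of exactly the kind covered by Proposition~\ref{NFermi} (after absorbing adjacent multiplication factors into one another): the semigroup factors are the $e^{-s_{\ell}L_{\mathrm{as}}(\alpha_{\ell})/m}$, with the magnetic potential $\alpha_{\ell}$ used on the $m$ equal sub-intervals partitioning $[t_{\ell-1},t_{\ell}]$, while the $e^{-s_{\ell}V/m}\in\ell^{\infty}_{\mathrm{s}}(\Omega^{N})$ and the original $F_{0},\dots,F_{n-1}$ furnish the bounded symmetric multiplication operators and $F\in\ell^{2}_{\mathrm{as}}(\Omega^{N})$ the terminal vector. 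Proposition~\ref{NFermi} then rewrites this product as $\Ex_{\bs X}[\,1_{D}\times(\text{path functional})\,]$: the $\mathcal{S}_{[\cdot,\cdot]}$-exponents are additive over the sub-intervals and reassemble into $\sum_{j=1}^{N}\sum_{\ell=1}^{n}\mathcal{S}_{[t_{\ell-1},t_{\ell}]}(v,\alpha_{\ell}\,|\,X^{(j)}_{\bullet})$, while the product of the $e^{-s_{\ell}V/m}$ factors along the path equals $\exp\{-\sum_{\ell=1}^{n}\frac{s_{\ell}}{m}\sum_{k}V({\bs X}_{t_{\ell-1}+k s_{\ell}/m})\}$.

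Finally I would let $m\to\infty$ under the expectation. For ${\bs m}\in D$ the trajectory $s\mapsto{\bs X}_{s}$ is right-continuous with almost surely only finitely many jumps on $[0,t_{n}]$, so the Riemann sums converge to $\int_{0}^{t_{n}}V({\bs X}_{s})\,ds$; and since $\Omega^{N}$ is finite, $V$, the $F_{i}$ and $v$ are bounded, and the imaginary parts of the $\mathcal{S}$-exponents contribute unimodular factors, the integrands are uniformly bounded, so dominated convergence permits interchanging the limit with $\Ex_{\bs X}$. This establishes the identity for $b=0$, and, with the reduction of the first paragraph, Proposition~\ref{ManyElSchFKI2} follows. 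The only point requiring genuine care is precisely this Trotter limit inside the expectation — identifying $\int_{0}^{t_{n}}V({\bs X}_{s})\,ds$ as the almost-sure limit of the Riemann sums along the jump process, and checking the domination — though in the present bounded, finite-dimensional context it is routine; alternatively one may bypass Trotter altogether, expanding each $e^{-s_{\ell}L_{V}(\alpha_{\ell})}$ in its norm-convergent Dyson series in $V$, applying Proposition~\ref{NFermi} term by term, and resumming.
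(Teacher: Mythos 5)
Your proof is the natural and correct elaboration of the paper's one-line assertion that Proposition~\ref{ManyElSchFKI2} follows from Proposition~\ref{NFermi}: the Trotter-product insertion of $V$, the reassembly of the $\mathcal{S}$-exponents by additivity, and the commutation reduction for the field term are exactly what is intended, and the finite-dimensional/bounded setting makes all the limit interchanges routine, as you note. One discrepancy worth flagging: commuting out $\prod_{\ell}e^{b s_{\ell}\sum_j\sigma_j}=e^{b t_n\sum_j\sigma_j}$ and using constancy of the spins on $D$, your reduction actually produces the weight $e^{b\,t_n\sum_{j}\sigma(X^{(j)}_0)}$, that is, $b$ multiplied by $t_n$, whereas the displayed formula in the statement has only $b\sum_j\sigma(X^{(j)}_0)$; this appears to be a typographical omission in the paper (compare the factor $\beta b\sum_j\sigma(X^{(j)}_0)$ with $t_n=\beta$ in Theorem~\ref{FKFFull} and its proof, and the same omission recurs in Theorems~\ref{FKIMHubbard}, \ref{FKIHubbrdUInf}, \ref{TrFKIMHubbard}). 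Your derivation is correct as it stands; it is the displayed statement that should be read with the $t_n$ factor restored, so you should not claim to have literally recovered the printed formula but rather its corrected form.
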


\subsection{A Feynman-Kac-It\^o formula for the magnetic  Hubbard models}
Let $\alpha=(\alpha_{xy})$ be a magnetic potential.
The {\it magnetic Hubbard Hamiltonian}  is given by 
\begin{align}
H(\alpha)=\sum_{x, y\in \Lambda}\sum_{\sigma=\pm 1} (-t_{xy})e^{i\alpha_{xy}} c_{x\sigma}^* c_{y\sigma} +U\sum_{x\in \Lambda}n_{x, +1 n_{x, -1}}+\sum_{x\neq y} U_{xy} n_xn_y
-2b S_{\mathrm{tot}}^{(3)}. \label{MagHubb}
\end{align}
In order to construct a path integral formula for $H(\alpha)$, we need some preliminaries.
Let $\mathfrak{E}$ be the fermionic Fock space over $\ell^2(\Omega)$:
\begin{align}
\mathfrak{E}=\bigoplus_{N=0}^{2|\Lambda|} \bigwedge^N\ell^2(\Omega).
\end{align}
The $N$-electron space $\mathfrak{H}_{\rh}^{(N)} (\cong \ell^2_{\mathrm{as}}(\Omega^N))$
can be regarded as a subspace of $\mathfrak{E}$ in the following manner: 
\begin{align}
f_1\wedge \cdots \wedge f_N=c(f_1)^*\cdots c(f_N)^* \Omega_{\mathrm{el}}, \ \ f_1, \dots, f_N\in \ell^2(\Omega),  \label{IdnEL}
\end{align}
where $c(f)^*,\ f\in \ell^2(\Omega)$ is defined by $
c(f)^*=\sum_{X\in \Omega} f(X)c_{X}^*
$, and $\Omega_{\mathrm{el}}$ is the Fock vacuum in $\mathfrak{E}$.
Let $A=(a_{XY})_{X, Y}$ be a self-adjoint operator on $\ell^2(\Omega)$.  Under the identification (\ref{IdnEL}), we have
\begin{align}
\sum_{X, Y\in \Omega} a_{XY} c_X^*c_Y  \restriction  \mathfrak{H}_{\rh}^{(N)} 
=\dG_{\mathrm{as}, N}(A), 
\end{align}
where 
\begin{align}
\dG_{\mathrm{as}, N}(A)=
\Big( \underbrace{A\otimes 1\otimes \cdots \otimes 1}_N+
1\otimes A\otimes \cdots \otimes 1+
\cdots+ 1\otimes \cdots \otimes 1\otimes A\Big)\restriction \mathfrak{H}_{\rh}^{(N)}.\label{IdnSecQ}
\end{align}
Let $T(\alpha)=(-t_{xy}e^{i\alpha_{xy}})_{x, y}$. 
By (\ref{IdnSecQ}), we have 
\begin{align}
\sum_{x, y\in \Lambda}\sum_{\sigma=\pm 1} (-t_{xy}  e^{i\alpha_{xy}} )c_{x\sigma}^*c_{y\sigma}\restriction \mathfrak{H}_{\rh}^{(N)} 
=\dG_{\mathrm{as}, N}(T(\alpha))
=\dG_{\mathrm{as}, N}\big(h_{\mu}(\alpha)\big), \label{EqdGLa}
\end{align}
where we have used the fact $h_{0}(\alpha)=-\mu+T(\alpha)$
with $\mu(x)=-\sum_{y\in \Lambda} t_{xy}$. 
Furthermore, we obtain 
\begin{align}
n_{x, \sigma}=\dG_{\mathrm{as}, N}\big(\delta_{(x, \sigma)} \big),\ \ 
n_x=\sum_{\sigma =\pm 1} \dG_{\mathrm{as}, N}\big(\delta_{(x, \sigma)} \big),
\end{align}
where $\delta_{(x, \sigma)}$ is the multiplication operator by the function $\delta_{(x, \sigma)}(Y)
:=\delta_{\sigma\tau}\delta_{xy},\ Y=(y, \tau)\in \Omega$.
Thus, the Coulomb interaction term in (\ref{MagHubb})
can be identified with the multiplication operator $\tilde{V}$ defined by 
\begin{align}
\tilde{V}&=\tilde{V}_{\mathrm{o}}+\tilde{V}_{\mathrm{d}}
\end{align}
with 
\begin{align}
\tilde{V}_{\mathrm{d}}=U\sum_{x\in \Lambda}\sum_{i, j=1}^N \delta_{(x, +1)}^{(i)} \delta_{(x, -1)}^{(j)},\ \
\tilde{V}_{\mathrm{o}}=\sum_{\sigma, \tau=\pm 1}\sum_{x\neq y}\sum_{i, j=1}^N U_{xy} \delta_{(x, \sigma)}^{(i)}\delta_{(y, \tau)}^{(j)}, \label{VasMulti}
\end{align}
where 
\begin{align}
\delta_{(x, \sigma)}^{(j)}=\underbrace{1\otimes \cdots \otimes 1 \otimes \overbrace{\delta_{(x, \sigma)} }^{j^{\mathrm{th}}}\otimes 1\otimes  \cdots \otimes 1}_N,\ \ j=1, \dots, N.
\end{align}
Consequently, we arrive at 
\begin{align}
H(\alpha)=L_{\tilde{V}, b}(\alpha)\label{Equiv}
\end{align}
with $v=\mu$.
By Proposition \ref{ManyElSchFKI2} and (\ref{Equiv}), we obtain a Feynman-Kac-It\^o formula for $H(\alpha)$:

\begin{Thm}\label{FKIMHubbard}
Let $V=\tilde{V}+\dG_{\mathrm{as}, N}(\mu)$.
Let $\alpha_1, \dots, \alpha_n$ be  magnetic potentials.
For every  $d\in \BbbN$, 
$F_0, F_1, \dots, F_{n-1}\in \ell^{\infty}_{\mathrm{s}}(\Omega^N),\ F\in \ell^2_{\mathrm{as}}(\Omega^N)$, $0<t_1<t_2<\cdots <t_n$ and ${\bs X} \in \Omega^N_{\neq}$,  we have
\begin{align}
&\Big(F_0 e^{-t_1H(\alpha_1)} F_1 e^{-(t_2-t_1)H(\alpha_2)}F_2 \cdots F_{n-1}e^{-(t_n-t_{n-1}) H(\alpha_n)} F
\Big)({\bs X})\no
=&\Ex_{\bs X}\Bigg[ 1_{D}
F_0({\bs X}_0) F_1({\bs X}_{t_1}) \cdots F_{n-1}({\bs X}_{t_{n-1}}) F({\bs X}_{t_n})\times \no
&\ \ \ \ \times \exp\Bigg\{\sum_{j=1}^N\sum_{\ell=1}^n \mathcal{S}_{[t_{\ell-1}, t_{\ell}]}(0, \alpha_{\ell}\, |  X^{(j)}_{\bullet})-\int_0^{t_n} V({\bs X_s})ds+b\sum_{j=1}^N \sigma(X_0^{(j)})\Bigg\}
\Bigg]. \label{HubbardFKF}
\end{align}
\end{Thm}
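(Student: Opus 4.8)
The plan is to obtain Theorem~\ref{FKIMHubbard} as an essentially immediate consequence of Proposition~\ref{ManyElSchFKI2}, using the operator identity~\eqref{Equiv}; the only genuine content is a rearrangement of the exponent. First I would make explicit that, with the choice $v=\mu$ fixed in~\eqref{Equiv}, the magnetic Hubbard Hamiltonian $H(\alpha)$ coincides with the operator $L_{\tilde V,b}(\alpha)$ to which Proposition~\ref{ManyElSchFKI2} applies: by~\eqref{EqdGLa} the hopping term $\sum_{x,y,\sigma}(-t_{xy}e^{i\alpha_{xy}})c_{x\sigma}^*c_{y\sigma}$ restricted to $\mathfrak H_{\rh}^{(N)}$ equals $\dG_{\mathrm{as},N}(h_\mu(\alpha))=L_{\mathrm{as}}(\alpha)$ (built from $h_v(\alpha)$ with $v=\mu$); the Coulomb term equals the bounded symmetric multiplication operator $\tilde V=\tilde V_{\mathrm o}+\tilde V_{\mathrm d}$ of~\eqref{VasMulti}; and $-2bS_{\mathrm{tot}}^{(3)}=-b\sum_{j=1}^N\sigma_j$. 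Thus Proposition~\ref{ManyElSchFKI2}, applied with potential $v=\mu$, interaction $\tilde V$, and magnetic potentials $\alpha_1,\dots,\alpha_n$, already yields the right-hand side of~\eqref{HubbardFKF}, except that its exponent appears in the form $\sum_{j,\ell}\mathcal S_{[t_{\ell-1},t_\ell]}(\mu,\alpha_\ell\,|\,X^{(j)}_\bullet)-\int_0^{t_n}\tilde V({\bs X}_s)\,ds+b\sum_{j}\sigma(X_0^{(j)})$.

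The remaining step is the bookkeeping that converts this into the exponent printed in~\eqref{HubbardFKF}. Directly from the definition of $\mathcal S_{[s,t]}(\cdot,\cdot\,|\,X_\bullet)$ one has $\mathcal S_{[s,t]}(\mu,\alpha\,|\,X_\bullet)=\mathcal S_{[s,t]}(0,\alpha\,|\,X_\bullet)-\int_s^t\mu(X_u)\,du$, and summing over $\ell=1,\dots,n$ and $j=1,\dots,N$,
\[
\sum_{j=1}^N\sum_{\ell=1}^n \mathcal S_{[t_{\ell-1},t_\ell]}(\mu,\alpha_\ell\,|\,X^{(j)}_\bullet)
=\sum_{j=1}^N\sum_{\ell=1}^n \mathcal S_{[t_{\ell-1},t_\ell]}(0,\alpha_\ell\,|\,X^{(j)}_\bullet)-\int_0^{t_n}\sum_{j=1}^N\mu\big(X_s^{(j)}\big)\,ds.
\]
Since $\dG_{\mathrm{as},N}(\mu)$, viewed as a multiplication operator on $\ell^2_{\mathrm{as}}(\Omega^N)$, is precisely ${\bs X}\mapsto\sum_{j=1}^N\mu(X^{(j)})$, the last integral combines with $-\int_0^{t_n}\tilde V({\bs X}_s)\,ds$ to give $-\int_0^{t_n}V({\bs X}_s)\,ds$ with $V=\tilde V+\dG_{\mathrm{as},N}(\mu)$, which is exactly the claimed formula; the factor $b\sum_j\sigma(X_0^{(j)})$ and the indicator $1_D$ are carried over unchanged.

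I do not anticipate a real obstacle: once~\eqref{Equiv} is available the statement is a routine specialization of Proposition~\ref{ManyElSchFKI2} together with the linearity of $\mathcal S_{[s,t]}$ in its potential slot. The two points worth a line of justification are (i) that $V=\tilde V+\dG_{\mathrm{as},N}(\mu)$ is again a bounded symmetric multiplication operator, so that Proposition~\ref{ManyElSchFKI2} applies verbatim — this is immediate since $\Lambda$ is finite — and (ii) that the hypotheses $F_0,\dots,F_{n-1}\in\ell^\infty_{\mathrm s}(\Omega^N)$, $F\in\ell^2_{\mathrm{as}}(\Omega^N)$, and the role of $1_D$ (constancy of the spin components and absence of equal-spin encounters along the admissible paths) are inherited without change from Proposition~\ref{ManyElSchFKI2}, hence ultimately from Proposition~\ref{NFermi}.
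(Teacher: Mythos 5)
Your proposal is correct and follows precisely the route the paper takes: the paper presents Theorem~\ref{FKIMHubbard} as an immediate consequence of Proposition~\ref{ManyElSchFKI2} together with the identity~\eqref{Equiv}, and your argument simply makes explicit the (routine) bookkeeping that moves the one-body potential $\mu$ out of the $\mathcal S_{[s,t]}(\mu,\alpha\,|\,\cdot)$ slot and into the interaction $V=\tilde V+\dG_{\mathrm{as},N}(\mu)$ via $\mathcal S_{[s,t]}(\mu,\alpha\,|\,X_\bullet)=\mathcal S_{[s,t]}(0,\alpha\,|\,X_\bullet)-\int_s^t\mu(X_u)\,du$.
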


\subsection{A  Feynman-Kac-It\^o formula for $H(\alpha)$ in the large $U$ limit}

Using a manner of proof similar to   that applied to \cite[Theorem 2.5]{Miyao}, we can prove the following proposition.
\begin{Prop}\label{MagHHULim}
Let $\alpha$ be a magnetic potential.
Assume $N=|\Lambda|-1$. We define an effective Hamiltonian $
H_{\infty}(\alpha)
$ by $
H_{\infty}(\alpha)=P_{\mathrm{G}} H^{ U=0}(\alpha)P_{\mathrm{G}}
$, where $
H^{ U=0}(\alpha)
$ is the magnetic Hubbard Hamiltonian $H(\alpha)$ with $U=0$. Then, for every $d\in \BbbN$,  we have
\begin{align}
\lim_{U\to \infty} (H(\alpha)-z)^{-1}=(H_{\infty}(\alpha)-z)^{-1}P_{\mathrm{G}},\ \ z\in \BbbC\backslash \BbbR
\end{align}
in the operator norm topology.
\end{Prop}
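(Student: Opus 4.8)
The plan is to follow the large-coupling-limit scheme behind \cite[Theorem 2.5]{Miyao}: isolate the double-occupancy operator, exhibit its spectral gap, and read the limit off the Feshbach/Schur block form of the resolvent. Since $\Lambda$ is finite, $\mathfrak{H}_{\rh}^{(N)}$ is finite-dimensional and $H^{U=0}(\alpha)$ is a bounded self-adjoint operator, so no domain subtleties arise. First I would put $D=\sum_{x\in\Lambda}n_{x,+1}n_{x,-1}$, so that $H(\alpha)=H^{U=0}(\alpha)+UD$.

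Because $D$ is a sum of commuting orthogonal projections, its spectrum lies in $\BbbN_0$; hence $\ker D=\R(P_{\mathrm{G}})$ and, decisively, $D\ge P_{\mathrm{G}}^{\perp}$ where $P_{\mathrm{G}}^{\perp}:=1-P_{\mathrm{G}}$. (Only $\R(P_{\mathrm{G}})\neq\{0\}$ is actually needed; the hypothesis $N=|\Lambda|-1$ is retained for consistency with the rest of the paper.) I would then decompose $\mathfrak{H}_{\rh}^{(N)}=\R(P_{\mathrm{G}})\oplus\R(P_{\mathrm{G}}^{\perp})$ and write $H(\alpha)$ in block form
\[
H(\alpha)=\begin{pmatrix}A & B\\ B^{*} & C_{U}\end{pmatrix},
\]
where $A=P_{\mathrm{G}}H^{U=0}(\alpha)P_{\mathrm{G}}$, $B=P_{\mathrm{G}}H^{U=0}(\alpha)P_{\mathrm{G}}^{\perp}$ and $C_{U}=P_{\mathrm{G}}^{\perp}\big(H^{U=0}(\alpha)+UD\big)P_{\mathrm{G}}^{\perp}$, and where $A$ restricted to $\R(P_{\mathrm{G}})$ is exactly $H_{\infty}(\alpha)$. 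Since $D$ commutes with $P_{\mathrm{G}}$, the gap bound gives $C_{U}\ge\big(U-\|H^{U=0}(\alpha)\|\big)P_{\mathrm{G}}^{\perp}$, so for $z\in\BbbC\setminus\BbbR$ the block $C_{U}-z$ is invertible with $\|R_{U}\|\le\big(U-\|H^{U=0}(\alpha)\|-|z|\big)^{-1}\to0$ as $U\to\infty$, where $R_{U}:=(C_{U}-z)^{-1}$.

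Next I would plug this into the Schur block-inversion identity
\[
(H(\alpha)-z)^{-1}=\begin{pmatrix}S_{U}^{-1} & -S_{U}^{-1}BR_{U}\\ -R_{U}B^{*}S_{U}^{-1} & R_{U}+R_{U}B^{*}S_{U}^{-1}BR_{U}\end{pmatrix},\qquad S_{U}:=(A-z)-BR_{U}B^{*},
\]
and let $U\to\infty$. As $\|R_{U}\|\to0$ with $B,B^{*}$ fixed, one gets $BR_{U}B^{*}\to0$ in norm, hence $S_{U}\to A-z$; in particular $S_{U}$ is invertible for $U$ large, which legitimizes the identity, and $S_{U}^{-1}\to(A-z)^{-1}$, while the remaining three blocks tend to $0$ (each carries at least one factor $R_{U}$). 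Thus $(H(\alpha)-z)^{-1}\to(A-z)^{-1}\oplus0=(H_{\infty}(\alpha)-z)^{-1}P_{\mathrm{G}}$ in operator norm, which is the claim.

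The only substantive input is the spectral-gap estimate $D\ge P_{\mathrm{G}}^{\perp}$, which yields the $O(1/U)$ bound on $\|R_{U}\|$; the rest is block-matrix bookkeeping, exactly as in \cite[Theorem 2.5]{Miyao}. I expect this to be the only point worth checking carefully, and it is simpler here than in \cite{Miyao}, where the unbounded field energy forces one to use relative boundedness rather than plain operator norms. As an alternative route one could appeal to Kato's monotone-convergence theorem for quadratic forms: $UD$ increases to the form equal to $0$ on $\R(P_{\mathrm{G}})$ and $+\infty$ on $\R(P_{\mathrm{G}}^{\perp})$, so $H(\alpha)=H^{U=0}(\alpha)+UD$ converges to $H_{\infty}(\alpha)$ in the strong resolvent sense, which in finite dimensions is norm-resolvent convergence.
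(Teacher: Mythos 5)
Your proof is correct and follows essentially the same route the paper is pointing at by citing \cite[Theorem 2.5]{Miyao}: identify $H(\alpha)=H^{U=0}(\alpha)+UD$ with $D$ the double-occupancy operator, use $D\ge P_{\mathrm G}^{\perp}$ to see that the off-$\R(P_{\mathrm G})$ block $C_U$ has norm-small resolvent $R_U=O(1/U)$, and read the norm-resolvent limit from the Schur/Feshbach block inversion. Since the paper supplies no detailed argument here and the operators are all bounded on a finite-dimensional space, your reconstruction (including the remark that $N=|\Lambda|-1$ is only needed to keep $\R(P_{\mathrm G})\neq\{0\}$, and the alternative via monotone form convergence) is complete and matches the intended scheme.
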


Let 
\begin{align}
\Omega_{\neq,  \infty}^N=\Big\{
{\bs X}\in \Omega^N\, \Big|\, x^{(i)} \neq x^{(j)}\ \ \mbox{for all $i, j\in \{1, \dots, N\}$ with $i\neq j$}
\Big\}.
\end{align}
Here, we used the following notations: ${\bs X}=(X^{(1)}, \dots, X^{(N)})
$ with $X^{(j)}=(x^{(j)}, \sigma^{(j)})$.
We set 
$D_{\infty}(\beta)=D_{\mathrm{O}, \infty}(\beta) \cap D_{\mathrm{S}}\cap D$ with
\begin{align}
D_{\mathrm{O},  \infty}(\beta)=\Big\{
{\bs m}\in (M)^N\, \Big|\, {\bs X}_s({\bs m})\in \Omega_{\neq, \infty}^N\, \mbox{ for all $s\in [0, \beta]$}
\Big\}.
\end{align}
Note that, for each ${\bs m} \in D_{\infty}(\beta)$, there are no electron encounters in the corresponding path 
$({\bs X}_t({\bs m}))_{t\in [0, \beta]}$.

\begin{Thm}\label{FKIHubbrdUInf}
Suppose that $N=|\Lambda|-1$. 
Let $\alpha_1, \dots, \alpha_n$ be  magnetic potentials.
For every  $d\in \BbbN$, 
$F_0, F_1, \dots, F_{n-1}\in \ell^{\infty}_{\mathrm{s}}(\Omega^N),\ F\in P_{\mathrm{G}}\ell^2_{\mathrm{as}}(\Omega^N)$, $0<t_1<t_2<\cdots <t_n=\beta$ and ${\bs X} \in \Omega_{\neq}^N$,  we have
\begin{align}
&\Big(F_0 e^{-t_1H_{\infty}(\alpha_1)} F_1 e^{-(t_2-t_1)H_{\infty}(\alpha_2)}F_2 \cdots F_{n-1}e^{-(t_n-t_{n-1}) H_{\infty}(\alpha_n)} F
\Big)({\bs X})\no
=&\Ex_{\bs X}\Bigg[ 1_{D_{\infty}(\beta)}
F_0({\bs X}_0) F_1({\bs X}_{t_1}) \cdots F_{n-1}({\bs X}_{t_{n-1}}) F({\bs X}_{t_n})\times \no
&\ \ \ \ \times \exp\Bigg\{\sum_{j=1}^N\sum_{\ell=1}^n \mathcal{S}_{[t_{\ell-1}, t_{\ell}]}(0, \alpha_{\ell}\, |  X^{(j)}_{\bullet}) -\int_0^{t_n} V_{{\rm o}}({\bs X_s})ds+b\sum_{j=1}^N \sigma(X_0^{(j)})
\Bigg\}
\Bigg], \label{InftyFKF}
\end{align}
where $V_{{\rm o}}=\tilde{V}_{\rm o}+\dG_{\mathrm{as}, N}(\mu)$. Here,  $\tilde{V}_{\rm o}$ is given by (\ref{VasMulti}).
\end{Thm}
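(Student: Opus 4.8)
The plan is to obtain \eqref{InftyFKF} by taking the limit $U\to\infty$ in the Feynman-Kac-It\^o formula of Theorem \ref{FKIMHubbard}, using the resolvent convergence in Proposition \ref{MagHHULim} to pass from $H(\alpha_\ell)$ to $H_\infty(\alpha_\ell)$, and then identifying the surviving portion of the path-space integral. First I would record that, by Proposition \ref{MagHHULim}, $(H(\alpha)-z)^{-1}\to (H_\infty(\alpha)-z)^{-1}P_{\mathrm G}$ in operator norm for $z\in\BbbC\setminus\BbbR$; by a standard functional-calculus argument (e.g.\ writing $e^{-tH}$ as a Cauchy/Laplace-type contour integral of resolvents, or invoking the Trotter--Kato theorem together with uniform lower semiboundedness) this upgrades to $e^{-tH(\alpha)}\to e^{-tH_\infty(\alpha)}P_{\mathrm G}$ in operator norm for each $t>0$. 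Since $F\in P_{\mathrm G}\ell^2_{\mathrm{as}}(\Omega^N)$ and the $F_j$ are bounded multiplication operators, the product
\[
F_0 e^{-t_1H(\alpha_1)}F_1\cdots F_{n-1}e^{-(t_n-t_{n-1})H(\alpha_n)}F
\]
converges in norm (hence pointwise at each ${\bs X}\in\Omega^N_{\neq}$) to the left-hand side of \eqref{InftyFKF}; here one uses $P_{\mathrm G}F=F$ to absorb the trailing projection into $F$, together with the observation that $P_{\mathrm G}$ commutes with nothing problematic since it is only needed on the rightmost factor.

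Next I would take the same limit on the right-hand side of \eqref{HubbardFKF}. The only $U$-dependence there sits in the term $-\int_0^{t_n}V({\bs X}_s)\,ds$ inside the exponential, with $V=\tilde V_{\mathrm o}+\tilde V_{\mathrm d}+\dG_{\mathrm{as},N}(\mu)$ and $\tilde V_{\mathrm d}=U\sum_{x}\sum_{i,j}\delta_{(x,+1)}^{(i)}\delta_{(x,-1)}^{(j)}\ge 0$. Along a path ${\bs m}\in D$ (on which electrons of equal spin never collide and spins are frozen), the quantity $\int_0^{t_n}\tilde V_{\mathrm d}({\bs X}_s({\bs m}))\,ds$ is $\ge 0$ and equals $0$ precisely when no spin-$\uparrow$ electron and spin-$\downarrow$ electron ever occupy the same site during $[0,\beta]$, i.e.\ precisely on the event $D_{\mathrm O,\infty}(\beta)$ (modulo a $P_{\bs X}$-null set: the set of paths that touch a doubly-occupied configuration only on a Lebesgue-null set of times has probability zero, since jumps occur at a discrete set of times and the process is constant between jumps, so a single instantaneous double occupancy contributes positively to the integral). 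Hence $e^{-\int_0^{t_n}\tilde V_{\mathrm d}({\bs X}_s)\,ds}\to 1_{D_{\mathrm O,\infty}(\beta)}$ pointwise $P_{\bs X}$-a.s.\ as $U\to\infty$, and it is dominated by $1$; the remaining integrand is bounded by $\prod_j\|F_j\|_\infty\cdot\|F\|_\infty\cdot e^{\|\dG_{\mathrm{as},N}(\mu)\|_\infty t_n+ b\sum_j|\sigma(X_0^{(j)})|}$, which is integrable. Dominated convergence then yields exactly the right-hand side of \eqref{InftyFKF}, because $1_D\cdot 1_{D_{\mathrm O,\infty}(\beta)}=1_{D_{\infty}(\beta)}$ by the definition $D_\infty(\beta)=D_{\mathrm O,\infty}(\beta)\cap D_{\mathrm S}\cap D$ and the fact that $D\subseteq D_{\mathrm S}$ up to null sets (spins are already frozen on $D$).

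Finally I would equate the two limits, obtaining \eqref{InftyFKF}. The main obstacle I anticipate is the probabilistic identification of the limit of the $U$-weight: one must argue carefully that, conditionally on the path lying in $D$, the set of double-occupancy times is either empty or has positive Lebesgue measure $P_{\bs X}$-almost surely, so that $1-e^{-U\int_0^{t_n}\tilde V_{\mathrm d}\,ds}$ has no intermediate limit. This is where the right-continuity of $(X_t)$ and the structure of the jump process (finitely many jumps on $[0,\beta]$ a.s., constant between jumps) are used; everything else is the Trotter--Kato upgrade of Proposition \ref{MagHHULim} plus dominated convergence, which are routine given the uniform semiboundedness already noted after \eqref{ExtendedHH} and the boundedness of all the $F_j$.
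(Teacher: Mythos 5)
Your proposal is correct and takes essentially the same route as the paper's own proof: upgrade the norm-resolvent convergence of Proposition \ref{MagHHULim} to semigroup convergence for the left-hand side, and on the right-hand side of \eqref{HubbardFKF} split the event $D$ into $D_\infty(\beta)$ and $D\setminus D_\infty(\beta)$ (equivalently, identify the pointwise limit of the $U$-dependent Boltzmann weight as $1_{D_{\mathrm O,\infty}(\beta)}$) and apply dominated convergence. The only substantive difference is that you explicitly justify why $\int_0^{t_n}\tilde V_{\mathrm d}({\bs X}_s)\,ds>0$ off $D_\infty(\beta)$ almost surely, using the piecewise-constant, right-continuous nature of the jump process (so that a double occupancy cannot occur on a Lebesgue-null set of times, except possibly at the single endpoint $t=\beta$, which is a $P_{\bs X}$-null event since jump times are continuous random variables); the paper asserts $G_U\to 0$ on all of $D\setminus D_\infty(\beta)$ without elaborating on this boundary case, so your remark is a harmless sharpening rather than a departure in method.
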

\begin{proof}
By Proposition \ref{MagHHULim}, we have 
\begin{align}
&\lim_{U\to \infty} \Big(F_0 e^{-t_1H (\alpha_1)} F_1 e^{-(t_2-t_1)H(\alpha_2)}F_2 \cdots F_{n-1}e^{-(t_n-t_{n-1}) H(\alpha_n)} F
\Big)\no
&= F_0 e^{-t_1H_{\infty}(\alpha_1)} F_1 e^{-(t_2-t_1)H_{\infty}(\alpha_2)}F_2 \cdots F_{n-1}e^{-(t_n-t_{n-1}) H_{\infty}(\alpha_n)} F. \label{UInfty}
\end{align}
We denote by $1_D G_U({\bs X}_{\bullet})$ the integrand in the right hand side of (\ref{HubbardFKF}).
Then we have 
\begin{align}
\Ex_{{\bs X}}[1_DG_U({\bs X}_{\bullet})]=\Ex_{{\bs X}}[1_{D_{\infty}(\beta) }G_{U=0}({\bs X}_{\bullet})]
+\Ex_{{\bs X}}[1_{D\setminus D_{\infty}(\beta)} G_{U=0}({\bs X}_{\bullet})].
\end{align}
Because $
\lim_{U\to \infty} G_U({\bs X}_{\bullet}({\bs m}))=0
$ for all ${\bs m} \in D\setminus D_{\infty}(\beta)$, we have, by the dominated convergence theorem, 
\begin{align}
\lim_{U\to \infty}\Ex_{{\bs X}}[1_DG_U({\bs X}_{\bullet})]=\Ex_{{\bs X}}[1_{D_{\infty}(\beta) }G_{U=0}({\bs X}_{\bullet})]. \label{InftyUMain}
\end{align}
Combining (\ref{HubbardFKF}), (\ref{UInfty}) and (\ref{InftyUMain}), we obtain the desired assertion. 
\end{proof} 

\subsection{A trace formula for $H(\alpha)$} \label{TraceSect1}

First, let us construct a  complete orthonormal system (CONS) for $\ell^2_{\mathrm{as}}(\Omega^N)$.
For each ${\bs X} \in \Omega^N$,  we set $\delta_{\bs X}=\otimes_{j=1}^N\delta_{X^{(j)}} \in \ell^2(\Omega^N)$ and $e_{{\bs X}} =A_N \delta_{{\bs X}}$, where $A_N$ is the antisymmetrizer on $\ell^2(\Omega^N)$. Trivially, $\{\delta_{{\bs X}}\, |\, {\bs X} \in \Omega^N\}$ is a CONS for  $\ell^2(\Omega^N)$.
To get a CONS for $\ell_{\mathrm{as}}^2(\Omega^N)$, we need some preliminaries.
For all $\tau\in \mathfrak{S}_n$, we know that $e_{\tau {\bs X}}=\mathrm{sign}(\tau) e_{{\bs X}}$. Taking this fact into consideration, we introduce an  equivalence relation in $\Omega_{\neq}^N$ as follows:
Let ${\bs X}, {\bs Y} \in \Omega_{\neq}^N$. If there exists a $\tau\in \mathfrak{S}_N$ such that ${\bs Y}=\tau {\bs X}$, then we write ${\bs X} \equiv {\bs Y}$. Let $[{\bs X}]$ be the equivalence class to which ${\bs X}$
belongs. We will often abbreviate $[{\bs X}]$ to ${\bs X}$ if no confusion occurs.
The quotient set $\Omega_{\neq}^N/\equiv $ is denoted by $[\Omega_{\neq}^N]$. 
Then we readily confirm that $\{e_{\bs X}\, |\, {\bs X} \in [{\Omega_{\neq}^N}]\}$ is a CONS for $\ell^2_{\mathrm{as}}(\Omega^N)$.

\begin{define}{\rm 
Let $P_{\neq}$ be the orthogonal projection from $\ell^2(\Omega^N)$ to $\ell^2(\Omega_{\neq}^N)$. 
We define a self-adjoint operator  on $\ell^2(\Omega_{\neq}^N)$
 by 
\begin{align}
{\bs L}&=P_{\neq}L(\alpha=0) |_{v=0} P_{\neq}.
\end{align}
Here, we note that by recalling (\ref{NELHAMI}),  $L(\alpha=0) |_{v=0} $ can be explicitly expressed as 
$
h_0(0)\otimes 1\otimes \cdots\otimes 1+\cdots +1\otimes \cdots \otimes 1\otimes h_0(0).
$
Fix ${\bs X}\in \Omega^N_{\neq}$, arbitrarily.
A permutation $\tau\in \mathfrak{S}_N$ is called {\it dynamically allowed}
associated with ${\bs X}$ if there is an $n\in\BbbN_0$
such that 
\begin{align}
\la \delta_{{\bs X}}|{\bs L}^n \delta_{\tau{\bs X}}\ra \neq 0. \label{DyAlll1}
\end{align}
The set of all dynamically allowed permutations associated with ${\bs X}$ 
is denoted by $\mathfrak{S}_N({\bs X})$.
As we will see below,  the dynamically allowed permutations play important roles. 
}
\end{define}

We give  a useful characterization of (\ref{DyAlll1}).
For each ${\bs X}=(X^{(j)})_{j=1}^N, {\bs Y}=(Y^{(j)})_{j=1}^N\in \Omega^N_{\neq}$, we define a distance
between ${\bs X}$ and ${\bs Y}$ by
$\|{\bs X}-{\bs Y}\|_{\infty}=\max_{j=1, \dots, N}\|x^{(j)}-y^{(j)}\|_{\infty}$, where $x^{(j)}$ (resp. $y^{(j)}$)
 is the spatial component of $X^{(j)}$ (resp. $Y^{(j)}$).  We say that ${\bs X}$ and ${\bs Y}$ are {\it  neighbors}
  if $\|{\bs X}-{\bs Y}\|_{\infty}=1$. A pair  $\{ {\bs X}, {\bs Y}\} \in \Omega^N_{\neq} \times \Omega^N_{\neq}$ is called an {\it edge},  if ${\bs X}$ and ${\bs Y}$ are neighbors. 
  We say that a sequence $({\bs X}_i)_{i=1}^m \subset \Omega_{\neq}^N$ is a {\it path}, if 
  $\{{\bs X}_i, {\bs X_{i+1}}\}$ is an edge for all $i$.
For each edge $\{{\bs X}, {\bs Y}\}$, we define a linear operator acting in $\ell^2(\Omega_{\neq}^N)$
by 
\begin{align}
Q({\bs X}, {\bs Y})= |\delta_{{\bs X}}\ra \la \delta_{{\bs Y}}|.
\end{align}
\begin{lemm}\label{EquivPath}
The following {\rm (i)} and {\rm (ii)} are mutually equivalent:
 \begin{itemize}
 \item[{\rm (i)}] $\tau$ is dynamically allowed associated with  ${\bs X}$;
 \item[{\rm (ii)}] there is a path $({\bs X}_i)_{i=1}^m$ such that 
 \begin{itemize}
 \item[] ${\bs X}_1={\bs X}$ and ${\bs X}_m=\tau {\bs X}$;
 \item[] $
 \la \delta_{{\bs X}}|Q({\bs X}_1, {\bs X}_2) Q({\bs X}_2, {\bs X}_3) \cdots Q({\bs X}_{m-1}, {\bs X}_m) \delta_{\tau {\bs X}}\ra >0.
 $
 \end{itemize}
 \end{itemize}
 \end{lemm}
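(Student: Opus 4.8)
The plan is to unfold the operator ${\bs L}$ into its matrix elements in the basis $\{\delta_{{\bs Y}}\}_{{\bs Y}\in\Omega^N_{\neq}}$ and observe that ${\bs L}$ is, up to a diagonal part, a weighted adjacency operator on the graph $(\Omega^N_{\neq}, \text{edges})$ introduced just before the lemma. Concretely, writing $h_0(0)$ out via \eqref{OneElDef}, one sees that the off-diagonal entries $\la\delta_{{\bs X}}|{\bs L}\,\delta_{{\bs Y}}\ra$ for ${\bs X}\neq{\bs Y}$ (both in $\Omega^N_{\neq}$) are nonzero precisely when ${\bs X}$ and ${\bs Y}$ differ in exactly one coordinate $j$ with $x^{(j)}$ a neighbour of $y^{(j)}$ in $\Lambda$ and equal spin components there — that is, exactly when $\{{\bs X},{\bs Y}\}$ is an edge; moreover in that case the entry equals $-t<0$, and the diagonal entries are positive. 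First I would record this explicitly: there are real numbers $c_{{\bs X}}>0$ and $d_{{\bs X}{\bs Y}}>0$ (in fact $d_{{\bs X}{\bs Y}}=t$) so that
\begin{align}
{\bs L}=\sum_{{\bs Y}\in\Omega^N_{\neq}} c_{{\bs Y}}\,|\delta_{{\bs Y}}\ra\la\delta_{{\bs Y}}|-\sum_{\{{\bs Y},{\bs Z}\}\ \mathrm{edge}} d_{{\bs Y}{\bs Z}}\big(|\delta_{{\bs Y}}\ra\la\delta_{{\bs Z}}|+|\delta_{{\bs Z}}\ra\la\delta_{{\bs Y}}|\big).
\end{align}

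Next I would expand ${\bs L}^n=\big(\sum c_{{\bs Y}}|\delta_{{\bs Y}}\ra\la\delta_{{\bs Y}}|-\sum d_{{\bs Y}{\bs Z}}(\cdots)\big)^n$ and take the $({\bs X},\tau{\bs X})$ matrix element. Every term in the multinomial expansion is a product of rank-one operators $|\delta_{{\bs X}_{i}}\ra\la\delta_{{\bs X}_{i+1}}|$ coming from a walk ${\bs X}={\bs X}_1,{\bs X}_2,\dots,{\bs X}_{m}=\tau{\bs X}$ in the graph (allowing $m<n+1$, since the diagonal terms give "stay" steps), with coefficient a product of the positive numbers $c_{{\bs Y}}$ and $(-d_{{\bs Y}{\bs Z}})$.

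For the implication (ii)$\Rightarrow$(i): given a path $({\bs X}_i)_{i=1}^m$ realizing ${\bs X}_1={\bs X}$, ${\bs X}_m=\tau{\bs X}$ and the positivity of $\la\delta_{{\bs X}}|Q({\bs X}_1,{\bs X}_2)\cdots Q({\bs X}_{m-1},{\bs X}_m)\delta_{\tau{\bs X}}\ra$ — which just says the walk is nondegenerate, i.e. consecutive vertices are distinct neighbours — I would take $n=m-1$ and pick out the corresponding single term in the expansion of $\la\delta_{{\bs X}}|{\bs L}^{n}\delta_{\tau{\bs X}}\ra$. That term equals $\prod_{i=1}^{m-1}(-d_{{\bs X}_i{\bs X}_{i+1}})=(-t)^{m-1}\neq 0$. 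The point to check is that no other term in the expansion can cancel it: any other walk of length $m-1$ from ${\bs X}$ to $\tau{\bs X}$ produces the same sign $(-1)^{m-1}$ (each hopping step contributes one factor $-t$ and each diagonal "stay" step contributes a positive factor, and the number of genuine hopping steps has a parity fixed by the graph-distance parity, hence is constant mod $2$ along walks of a given length between two fixed vertices). Since all contributions to $\la\delta_{{\bs X}}|{\bs L}^{m-1}\delta_{\tau{\bs X}}\ra$ are nonzero reals of the common sign $(-1)^{m-1}$, they add up without cancellation, so the matrix element is nonzero and $\tau\in\mathfrak{S}_N({\bs X})$.

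For (i)$\Rightarrow$(ii): if $\la\delta_{{\bs X}}|{\bs L}^n\delta_{\tau{\bs X}}\ra\neq0$ for some $n$, then by the same no-cancellation argument at least one (indeed every) term in its expansion is nonzero, and such a term corresponds to an actual walk from ${\bs X}$ to $\tau{\bs X}$ in the graph; deleting the "stay" steps yields a genuine path $({\bs X}_i)_{i=1}^m$ with ${\bs X}_1={\bs X}$, ${\bs X}_m=\tau{\bs X}$, consecutive vertices distinct and neighbouring, whence $\la\delta_{{\bs X}}|Q({\bs X}_1,{\bs X}_2)\cdots Q({\bs X}_{m-1},{\bs X}_m)\delta_{\tau{\bs X}}\ra=1>0$.

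The only genuinely delicate point — the "main obstacle" — is the no-cancellation claim, i.e. that distinct walks contributing to a fixed power $\la\delta_{{\bs X}}|{\bs L}^n\delta_{\tau{\bs X}}\ra$ all carry the same sign. This is where one uses that ${\bs L}$ has nonpositive off-diagonal entries and nonnegative diagonal entries (so $-{\bs L}+C$ has nonnegative entries for large $C$), i.e. that $e^{-t{\bs L}}$ is positivity preserving; equivalently, it is a bipartite-graph parity argument. If one wished to avoid the parity bookkeeping entirely, an alternative is: replace ${\bs L}$ by ${\bs L}'=-{\bs L}+C\,\mathbb{1}$ with $C$ large, which has all matrix entries $\ge 0$; then $\la\delta_{{\bs X}}|({\bs L}')^n\delta_{\tau{\bs X}}\ra$ is a sum of nonnegative terms and is nonzero iff some walk exists, and one checks that $\tau$ being dynamically allowed for ${\bs L}$ is equivalent to $\la\delta_{{\bs X}}|({\bs L}')^n\delta_{\tau{\bs X}}\ra\neq0$ for some $n$ (again via the parity/sign remark, or directly since $\mathrm{span}\{{\bs L}^k\}=\mathrm{span}\{({\bs L}')^k\}$). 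Everything else is routine bookkeeping with the multinomial expansion and the definition of $Q({\bs X},{\bs Y})$.
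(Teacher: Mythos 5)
Your proposal is correct and takes essentially the same route as the paper: it begins from the same decomposition ${\bs L}=\sum C_{{\bs X},{\bs Y}}Q({\bs X},{\bs Y})+\mathcal{D}$ (the paper's \eqref{OfDD}), with strictly negative off-diagonal coefficients over edges and a positive diagonal, and then reads the walk structure out of the multinomial expansion of ${\bs L}^n$. The paper compresses the rest into ``we can readily confirm this lemma''; you have spelled out the one nontrivial point --- that the walks contributing to $\la\delta_{{\bs X}}|{\bs L}^n\delta_{\tau{\bs X}}\ra$ cannot cancel --- and your fallback argument, shifting to ${\bs L}'=C\one-{\bs L}$ with all matrix entries nonnegative and invoking $\mathrm{span}\{{\bs L}^k\}=\mathrm{span}\{({\bs L}')^k\}$, is the cleaner justification, since the parity bookkeeping in your first pass tacitly assumes bipartiteness of the underlying graph.
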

 \begin{proof}
 By using \eqref{EqdGLa}, we can express ${\bs L}$ as 
 \begin{align}
 {\bs L}=\sum_{\{{\bs X}, {\bs Y}\}} C_{{\bs X}, {\bs Y}}Q({\bs X}, {\bs Y})
+\mathcal{D}, \label{OfDD}
 \end{align}
 where $
 C_{{\bs X}, {\bs Y}}<0
 $  for each edge $\{{\bs X}, {\bs Y}\}$ and $\mathcal{D}$ is some multiplication operator.
 By using this formula,  we can readily confirm this lemma.
 \end{proof}

We introduce  a  subspace of $\ell^2(\Omega^N)$ by
\begin{align}
\ell^2_{\mathrm{s}}(\Omega_{\neq}^N)=\big\{F\in \ell^2(\Omega_{\neq}^N)\, \big|\, \mbox{$F$ is symmetric}
\big\}.
   \end{align}   
The subspace $
\ell^2_{\mathrm{s}}(\Omega_{\neq}^N)
$ describes  wave functions for a system of $N$ {\it hard-core bosons}.

For $\beta>0$, let
\begin{align}
D_{\mathrm{P}}(\beta)=\big\{{\bs m}\in (M)^N\, \big|\, \mbox{$\exists\tau \in \mathfrak{S}_N({\bs X}_0({\bs m}))$ such that ${\bs X}_{\beta}({\bs m})=\tau {\bs X}_0({\bs m})$} \big\}. \label{DefDPP}
\end{align}
We set  
\begin{align}
L_{\beta}=D \cap D_{\mathrm{P}}(\beta).
\end{align}
Let us introduce a random variable on $L_{\beta}$ by 
\begin{align}
(-1)^{\pi({\bs X}_{\beta}({\bs m}))}=\mathrm{sgn}(\tau),
\end{align}
where $\tau$ is given in (\ref{DefDPP}).
For all ${\bs m} \in L_{\beta}$, the electron configuration at $\beta$, i.e., ${\bs X}_{\beta}({\bs m})$ is a permutation of the configuration at $t=0$;  $(-1)^{\pi({\bs X}_{\beta}({\bs m}))}$ is the parity of the permutation.

\begin{Thm}\label{TrFKIMHubbard}
Let  $\alpha_1, \dots, \alpha_n$ be    magnetic potentials.
Suppose that 
$F_0, F_1, \dots, F_{n-1}\in \ell^{\infty}_{\mathrm{s}}(\Omega^N)
$.
Then,  for all $d\in \BbbN$ and $\beta >0$, there exists  a probability measeure $\Me$ on $L_{\beta}$ such that, for $0<t_1<t_2<\cdots <t_{n-1}<\beta$,
\begin{align}
&\Tr_{\ell^2_{\mathrm{as}}(\Omega^N)}\Big[F_0 e^{-t_1H(\alpha_1)} F_1 e^{-(t_2-t_1)H(\alpha_2)}F_2 \cdots F_{n-1}e^{-(\beta-t_{n-1}) H(\alpha_n)} 
\Big] \bigg/\Tr_{\ell_{\mathrm{s}}^2(\Omega^N_{\neq})} \Big[
e^{-\beta {\bs L}}
\Big]
\no
=&
\int_{L_{\beta}}d \Me
F_0({\bs X}_0) F_1({\bs X}_{t_1}) \cdots F_{n-1}({\bs X}_{t_{n-1}}) (-1)^{\pi({\bs X}_{\beta})}\times \no
&\ \ \ \ \times \exp\Bigg\{\sum_{j=1}^N\sum_{\ell=1}^n \mathcal{S}_{[t_{\ell-1}, t_{\ell}]}(0, \alpha_{\ell}\, |  X^{(j)}_{\bullet}) -\int_0^{t_n} V({\bs X_s})ds+b\sum_{j=1}^N \sigma(X_0^{(j)})
\Bigg\}
\end{align}
with $t_n=\beta$.
\end{Thm}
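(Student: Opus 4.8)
The plan is to derive the trace formula by expanding the trace over the CONS
$\{e_{\bs X}\mid {\bs X}\in[\Omega_{\neq}^N]\}$ for $\ell^2_{\mathrm{as}}(\Omega^N)$ constructed above, inserting the Feynman-Kac-It\^o formula from Theorem \ref{FKIMHubbard} for the matrix elements, and then reorganizing the resulting sum over paths into an expectation against a probability measure. First I would write
\begin{align}
&\Tr_{\ell^2_{\mathrm{as}}(\Omega^N)}\Big[F_0 e^{-t_1H(\alpha_1)} F_1 \cdots F_{n-1}e^{-(\beta-t_{n-1}) H(\alpha_n)}\Big]\no
=&\sum_{{\bs X}\in[\Omega_{\neq}^N]}\Big\la e_{\bs X}\,\Big|\,F_0 e^{-t_1H(\alpha_1)} F_1 \cdots F_{n-1}e^{-(\beta-t_{n-1}) H(\alpha_n)}e_{\bs X}\Big\ra.
\end{align}
Using $e_{\bs X}=A_N\delta_{\bs X}$, the fact that the $F_j$'s and the $e^{-tH(\alpha)}$'s commute with $A_N$ (they are defined on $\ell^2_{\mathrm{as}}$), and antisymmetry, one rewrites each diagonal matrix element as a sum $\sum_{\tau\in\mathfrak{S}_N}\mathrm{sgn}(\tau)\la\delta_{\bs X}\mid F_0\cdots F_{n-1}e^{-(\beta-t_{n-1})H(\alpha_n)}\delta_{\tau{\bs X}}\ra$ (up to the $1/N!$ and a relabelling of the sum over $[\Omega^N_{\neq}]$ that turns it into a sum over $\Omega^N_{\neq}$). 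Then I would apply Theorem \ref{FKIMHubbard} with $F=\delta_{\tau{\bs X}}$ to each such term; the boundary factor $F(\bs X_{t_n})=\delta_{\tau{\bs X}}({\bs X}_{\beta})$ forces ${\bs X}_\beta=\tau{\bs X}_0$, which is exactly the event appearing in the definition of $D_{\mathrm P}(\beta)$, and the weight $\mathrm{sgn}(\tau)$ becomes the random variable $(-1)^{\pi({\bs X}_\beta)}$. Summing over $\tau$ and over ${\bs X}$ and using the $1_D$ already present, the integrand collapses onto the event $L_\beta=D\cap D_{\mathrm P}(\beta)$.

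The key technical point is the identification of the normalization. After the above manipulation the ``sum over paths'' is with respect to the (unnormalized) measure $\sum_{{\bs X}\in\Omega^N_{\neq}}\frac{1}{N!}\,\Ex_{\bs X}[\,\cdot\,]$ restricted to $L_\beta$; its total mass is obtained by taking all $F_j=1$ and all $\alpha_\ell=0$, $b=0$, $V=0$, which gives $\sum_{{\bs X}}\frac{1}{N!}P_{\bs X}(L_\beta)$ weighted by $(-1)^{\pi}$ — but I must show this total mass equals $\Tr_{\ell^2_{\mathrm s}(\Omega^N_{\neq})}[e^{-\beta{\bs L}}]$. Here is where Lemma \ref{EquivPath} enters: a dynamically allowed permutation $\tau\in\mathfrak{S}_N({\bs X})$ is exactly one connected to ${\bs X}$ by a nearest-neighbour path in $\Omega^N_{\neq}$ on which the off-diagonal part of ${\bs L}$ (equivalently of the hard-core boson generator) is nonvanishing, and hence $\la\delta_{\bs X}\mid e^{-\beta{\bs L}}\delta_{\tau{\bs X}}\ra>0$ whenever $\tau$ is dynamically allowed. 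I would run the same CONS-expansion on $\Tr_{\ell^2_{\mathrm s}(\Omega^N_{\neq})}[e^{-\beta{\bs L}}]$ using the symmetrized vectors $\frac{1}{\sqrt{N!}}\sum_\tau\delta_{\tau{\bs X}}$ as a CONS for $\ell^2_{\mathrm s}(\Omega^N_{\neq})$, producing $\sum_{\bs X}\frac1{N!}\sum_\tau\la\delta_{\bs X}\mid e^{-\beta{\bs L}}\delta_{\tau{\bs X}}\ra$, and then compare term by term with the Feynman-Kac representation of $e^{-\beta{\bs L}}$ (which is the $\alpha=0,v=0,b=0,V=0$ special case, so the weight is $1$ on $D$). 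Since all these kernels are nonnegative, dividing the numerator by this positive quantity defines a genuine probability measure $\Me$ on $L_\beta$, and the stated formula follows.

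The main obstacle I anticipate is the bookkeeping around the quotient $[\Omega^N_{\neq}]$ versus $\Omega^N_{\neq}$ and the matching of the sign $\mathrm{sgn}(\tau)$ with $(-1)^{\pi({\bs X}_\beta)}$ when $\tau$ is not unique — i.e., checking that the event $D_{\mathrm P}(\beta)$ is well-defined and that $(-1)^{\pi({\bs X}_\beta({\bs m}))}$ does not depend on the choice of $\tau$ with ${\bs X}_\beta({\bs m})=\tau{\bs X}_0({\bs m})$. On $\Omega^N_{\neq}$ the configuration ${\bs X}_0$ has $N$ distinct entries, so $\tau$ is in fact uniquely determined by ${\bs X}_0$ and ${\bs X}_\beta$, which resolves this; but one must state it carefully. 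A secondary, more analytic point is justifying that $\Tr_{\ell^2_{\mathrm s}(\Omega^N_{\neq})}[e^{-\beta{\bs L}}]>0$ so that the division is legitimate — this is immediate since ${\bs L}$ is a bounded self-adjoint operator on a finite-dimensional space and $e^{-\beta{\bs L}}$ is positive definite. Everything else (commutation with $A_N$, the Markov property already packaged into Theorem \ref{FKIMHubbard}, nonnegativity of the Feynman-Kac weights when $b$ is real and $V$ real) is routine.
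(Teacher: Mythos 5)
Your proposal is correct and follows essentially the same route as the paper: expand the trace over the CONS $\{e_{\bs X}\}$, unwind the antisymmetrizer into a signed sum over $\tau\in\mathfrak{S}_N$, apply the $N$-body Feynman--Kac--It\^o formula, and identify the normalization with $\Tr_{\ell^2_{\mathrm s}(\Omega_{\neq}^N)}[e^{-\beta\bs L}]$. The one place where you deviate from the paper, worth noting, is in how the sum over $\tau$ is cut down to dynamically allowed permutations: the paper carries out a separate algebraic ``Step~1,'' showing $\la\delta_{\bs X}\,|\,K_n\delta_{\tau\bs X}\ra=0$ identically when $\tau\notin\mathfrak{S}_N(\bs X)$ by a Trotter/Taylor expansion combined with Lemma \ref{EquivPath}, whereas you argue probabilistically that $P_{\bs X}\big(\{{\bs X}_\beta=\tau{\bs X}\}\cap D\big)=0$ for such $\tau$, so the corresponding terms vanish after applying Feynman--Kac rather than before. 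Both reach the same restricted sum; the paper's algebraic version is marginally cleaner because it avoids having to check that a continuous-time path in $D$ always induces a nearest-neighbour graph-path in $\Omega_{\neq}^N$ (true a.s.\ because jumps of the $N$ coordinate processes are disjoint in time, but you would need to say so). Your treatment of the uniqueness of $\tau$ given the endpoints on $\Omega_{\neq}^N$, and of the positivity of the normalizing trace, matches the paper.
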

\begin{proof}
We divide the proof into two parts.

{\bf Step 1.}
Let $K_n
=F_0 e^{-t_1H(\alpha_1)} F_1 e^{-(t_2-t_1)H(\alpha_2)}F_2 \cdots F_{n-1}e^{-(\beta-t_{n-1}) H(\alpha_n)} 
$. 
Fix ${\bs X}\in \Omega_{\neq}^N$, arbitrarily.
We claim  that if $\tau$ is {\it not} dynamically allowed associated with ${\bs X}$, then it holds that, for all $n\in \BbbN$ and $
0<t_1<t_2<\cdots <t_{n-1}<\beta
$,
\begin{align}
\la \delta_{{\bs X}}|K_{n} \delta_{\tau{\bs X}}\ra=0. \label{DAConc}
\end{align}
 First, let us consider the case where $U_{xy}=0$ for all $x, y$ and $b=0$. 
  Set ${\bs L}(\alpha)= P_{\neq} L (\alpha)P_{\neq}$.  For simplicity, suppose that   $n=2$.
  Because $F_0$ and $F_1$ are diagonal, we have 
  \begin{align}
  \la \delta_{{\bs X}}|F_0 (-{\bs L}(\alpha_1))^{n_1} F_1 (-{\bs L}(\alpha_2))^{n_2}\delta_{\tau{\bs X}}\ra=0 \label{DyAll2}
  \end{align}
    for all $n_1, n_2\in \BbbN_0$. Indeed, by using the formula \eqref{OfDD}, 
    the equation (\ref{DyAll2}) follows from the property 
    \begin{align}
    \la \delta_{{\bs X}}|Q({\bs X}_1, {\bs X}_2) Q({\bs X}_2, {\bs X}_3) \cdots Q({\bs X}_{m-1}, {\bs X}_m) \delta_{\tau {\bs X}}\ra =0
    \end{align}
    for any path $({\bs X}_i)_{i=1}^m$. But this is obvious from Lemma \ref{EquivPath}.
   Using (\ref{DyAll2}), we can prove (\ref{DAConc}) as follows:
   \begin{align}  
     \la \delta_{{\bs X}}|K_{n} \delta_{\tau{\bs X}}\ra 
     =\sum_{n_1=1}^{\infty}\sum_{n_2=1}^{\infty}\frac{t_1^{n_1} (t_2-t_1)^{n_2}}{n_1!n_2!}\la \delta_{{\bs X}}|F_0 (-{\bs L}(\alpha_1))^{n_1} F_1 (-{\bs L}(\alpha_2))^{n_2}\delta_{\tau{\bs X}}\ra=0.
      \end{align}
Similarly, we can prove the assertion for general $n$, provided that $U_{xy}=0$ and $b=0$.

Next, let us consider the case where $U_{xy}\neq 0$ but $b=0$. 
Again, we will consider the case $n=2$ for simplicity.
By the Trotter-Kato formula, we have
\begin{align}
&\la \delta_{{\bs X}}|K_{n} \delta_{\tau{\bs X}}\ra\no
=&\lim_{N_1\to \infty}\lim_{N_2\to \infty}
\big\la \delta_{\bs X}| F_0\big(
e^{-t_1 {\bs L}(\alpha_1)/N_1} e^{-t_1\tilde{V}/N_1}
\big)^{N_1 }  F_1\times \no
&\ \ \ \ \times\big (
e^{-(t_2-t_1) {\bs L}(\alpha_2)/N_2} e^{-(t_2-t_1)\tilde{V}/N_2}
\big)^{N_1 }\delta_{\tau {\bs X}}\big\ra. \label{TKN2}
\end{align}
 By applying  the claim for the case where  $U_{xy}\equiv 0$ and $b=0$,  we know that the right hand side of (\ref{TKN2}) equals zero.
 Similarly, we can prove the assertion for general $n$.
 
 Finally, we will consider the case where $b\neq 0$. But the proof of this case is similar to that of the case where $U_{xy}\neq 0$ and $b=0$.

{\bf Step 2.}
By Theorem \ref{FKIMHubbard} and (\ref{DAConc}), we get 
\begin{align}
\Tr_{\ell^2_{\mathrm{as}}(\Omega^N)}[K_n]
=&\sum_{{\bs X} \in [\Omega_{\neq}^N]} \la e_{{\bs X}}\, |\, K_ne_{{\bs X}}\ra\no
=& \sum_{{\bs X} \in [\Omega_{\neq}^N]} \sum_{\tau \in \mathfrak{S}_N} \frac{\mathrm{sgn}(\tau)}{N!}
\la \delta_{{\bs X}}\, |\, K_n \delta_{\tau {\bs X}}\ra\no
=& \sum_{{\bs X} \in [\Omega_{\neq}^N]} \sum_{\tau \in \mathfrak{S}_N({\bs X})} \frac{\mathrm{sgn}(\tau)}{N!}
\la \delta_{{\bs X}}\, |\, K_n \delta_{\tau {\bs X}}\ra\no
=& \sum_{{\bs X} \in [\Omega_{\neq}^N]} \sum_{\tau \in \mathfrak{S}_N({\bs X})} \frac{\mathrm{sgn}(\tau)}{N!}
\Ex_{{\bs X}} \Big[\mathcal{K}_n({\bs X}_{\bullet})
1_{\{{\bs X}_{\beta}=\tau{\bs X}\}\cap D}
\Big], \label{TraF1}
\end{align}
where  $\mathcal{K}_n$ is given by 
\begin{align}
\mathcal{K}_n({\bs X}_{\bullet})
=&
F_0({\bs X}_0) F_1({\bs X}_{t_1}) \cdots F_{n-1}({\bs X}_{t_{n-1}}) \times \no
&\times e^{\sum_{j=1}^N\sum_{\ell=1}^n \mathcal{S}_{[t_{\ell-1}, t_{\ell}]}(v, \alpha_{\ell}\, |  X^{(j)}_{\bullet})}e^{ -\int_0^{t_n} V({\bs X_s})ds+b\sum_{j=1}^N \sigma(X_0^{(j)})
}.
\end{align}
Let us define a probability measure on  $L_{\beta}$ by 
\begin{align}
\nu_{\beta}(B)=\sum_{{\bs X} \in [\Omega_{\neq}^N]} \sum_{\tau\in \mathfrak{S}_N({\bs X})} \frac{1}{N!} 
P_{\bs X} \Big(
B\cap \{{\bs X}_{\beta}=\tau{\bs X}\} \cap D
\Big) \Big/ {\mathrm{Norm.}}, \label{Measurenu}
\end{align}
where $\mathrm{Norm.}$ is the normalization constant, which can be computed as follows:
\begin{align}
\mathrm{Norm.}&= \sum_{{\bs X} \in [\Omega_{\neq}^N]} \sum_{\tau\in \mathfrak{S}_N({\bs X})}\frac{1}{N!}
P_{\bs X} (\{{\bs X}_{\beta}=\tau {\bs X}\}\cap D)\no
&=\sum_{{\bs X}\in [\Omega_{\neq}^N]}
\sum_{\tau\in \mathfrak{S}_N({\bs X})}\frac{1}{N!}
\la \delta_{\bs X}\, |e^{-\beta {\bs L}} \delta_{\tau {\bs X}}\ra\no
&=\Tr_{\ell^2_{\mathrm{s}}(\Omega^N_{\neq })} \Big[e^{-\beta {\bs L}}\Big]. \label{Norm.}
\end{align}
By combining (\ref{TraF1}) and  (\ref{Measurenu}), we obtain the desired assertion with $\mathrm{sgn}(\tau)=(-1)^{\pi({\bs X}_{\beta})}$. 
\end{proof}

\subsection{A trace formula for $H_{\infty}(\alpha)$}

First, we note that a natural CONS for $P_{\mathrm{G}} \ell_{\mathrm{as}}^2(\Omega^N)$ is 
$\{e_{\bs X}\, |\, {\bs X} \in [\Omega_{\neq, \infty}^{N}]\}$, where $[\Omega_{\neq, \infty}^N]$
 is the quotient set $\Omega_{\neq, \infty}^N/\equiv$, see Section \ref{TraceSect1}.
 Let ${\bs L}_{\infty}=P_{\mathrm{G}} {\bs L} P_{\mathrm{G}}$. Even if $U=\infty$, we can  define the dynamically allowed permutations: Fix ${\bs X}\in \Omega^N_{\neq}$, arbitrarily.
A permutation $\tau\in \mathfrak{S}_N$ is called {\it dynamically allowed}
associated with ${\bs X}$ if there is an $n\in\BbbN_0$
such that 
\begin{align}
\la \delta_{{\bs X}}|{\bs L}_{\infty}^n \delta_{\tau{\bs X}}\ra  \neq 0. \label{DyAll1}
\end{align}
The set of all dynamically allowed permutations associated with ${\bs X}$ 
is denoted by $\mathfrak{S}_{N, \infty}({\bs X})$.
Because there are no encounter of electrons, the trajectories are very simplified.\footnote{
Reader can readily confirm this fact for one-dimensional chain;
in fact, the dynamically allowed permutation can only be the identity in this case.
}
In particular, in the case where $N=|\Lambda|-1$, 
we can check that if $\tau$ is dynamically allowed, then $\tau$ is always even: $\mathrm{sgn}(\tau)=1$ \cite{AL}.

For $\beta>0$, let
\begin{align}
D_{\mathrm{P}, \infty}(\beta)=\big\{{\bs m}\in (M)^N\, \big|\, \mbox{$\exists\tau \in \mathfrak{S}_{N, \infty}({\bs X}_0({\bs m}))$ such that ${\bs X}_{\beta}({\bs m})=\tau {\bs X}_0({\bs m})$} \big\}. \label{DefDP}
\end{align}
We define an event by 
\begin{align}
L_{\beta, \infty}=D_{\infty}(\beta) \cap D_{\mathrm{P}, \infty}(\beta).
\end{align}
As before, we can define the random variable $(-1)^{\pi({\bs X}_{\beta}({\bs m}))}$ for all ${\bs m} \in L_{\beta, \infty}$. But because each dynamically allowed permutation is  even   in the case where $N=|\l\Lambda|-1$ and $U=\infty$, it holds that $(-1)^{\pi({\bs X}_{\beta}({\bs m}))}=1$. 
Taking this fact into account and 
using arguments similar to those in the proof of Theorem \ref{TrFKIMHubbard}, we can prove the following.

\begin{Thm}\label{FKIUINFINITY}
Suppose that $N=|\Lambda|-1$.
Let  $\alpha_1, \dots, \alpha_n$ be    magnetic potentials.
Suppose that 
$F_0, F_1, \dots, F_{n-1}\in \ell^{\infty}_{\mathrm{s}}(\Omega^N)
$.
Then, for every $d\in \BbbN$ and $\beta >0$, there exists  a probability measeure $\Mei$ on $L_{\beta, \infty}$ such that, for $0<t_1<t_2<\cdots <t_{n-1}<\beta$,
\begin{align}
&\Tr_{P_{\mathrm{G}}\ell^2_{\mathrm{as}}(\Omega^N)}\Big[F_0 e^{-t_1H_{\infty}(\alpha_1)} F_1 e^{-(t_2-t_1)H_{\infty}(\alpha_2)}  F_2\cdots F_{n-1}e^{-(\beta-t_{n-1}) H_{\infty}(\alpha_n)} 
\Big]
\no
=& 
\Tr_{P_{\mathrm{G}}\ell_{\mathrm{s}}^2(\Omega^N_{\neq})} \Big[
e^{-\beta {\bs L}_{\infty}}
\Big]
\int_{L_{\beta, \infty}}d \Mei
F_0({\bs X}_0) F_1({\bs X}_{t_1}) \cdots F_{n-1}({\bs X}_{t_{n-1}})\times \no
&\ \ \ \ \times \exp\Bigg\{\sum_{j=1}^N\sum_{\ell=1}^n \mathcal{S}_{[t_{\ell-1}, t_{\ell}]}(0, \alpha_{\ell}\, |  X^{(j)}_{\bullet}) -\int_0^{t_n} V_{{\rm o}}({\bs X_s})ds+b\sum_{j=1}^N \sigma(X_0^{(j)}) \Bigg\}
\label{FKITrUInf}
\end{align}
with $t_n=\beta$, where $V_{{\rm o}}$ is defined in Theorem \ref{FKIHubbrdUInf}.
\end{Thm}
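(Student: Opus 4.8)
The plan is to mimic the proof of Theorem~\ref{TrFKIMHubbard} but now working inside the Gutzwiller subspace $P_{\mathrm{G}}\ell^2_{\mathrm{as}}(\Omega^N)$ and using the $U=\infty$ Feynman-Kac-It\^o formula of Theorem~\ref{FKIHubbrdUInf} in place of Theorem~\ref{FKIMHubbard}. First I would record the trace over $P_{\mathrm{G}}\ell^2_{\mathrm{as}}(\Omega^N)$ in the CONS $\{e_{\bs X}\mid {\bs X}\in[\Omega_{\neq,\infty}^N]\}$ and expand $e_{\bs X}=A_N\delta_{\bs X}$, so that, writing $K_n^\infty=F_0 e^{-t_1H_\infty(\alpha_1)}F_1\cdots F_{n-1}e^{-(\beta-t_{n-1})H_\infty(\alpha_n)}$,
\begin{align}
\Tr_{P_{\mathrm{G}}\ell^2_{\mathrm{as}}(\Omega^N)}[K_n^\infty]
=\sum_{{\bs X}\in[\Omega_{\neq,\infty}^N]}\sum_{\tau\in\mathfrak{S}_N}\frac{\mathrm{sgn}(\tau)}{N!}\la\delta_{\bs X}\,|\,K_n^\infty\delta_{\tau{\bs X}}\ra .
\end{align}

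Second, I would establish the $U=\infty$ analogue of the vanishing claim~\eqref{DAConc}: if $\tau\notin\mathfrak{S}_{N,\infty}({\bs X})$ then $\la\delta_{\bs X}|K_n^\infty\delta_{\tau{\bs X}}\ra=0$. The argument is the same as Step~1 of Theorem~\ref{TrFKIMHubbard} with ${\bs L}$ replaced by ${\bs L}_\infty=P_{\mathrm G}{\bs L}P_{\mathrm G}$: expand each semigroup $e^{-sH_\infty(\alpha)}$ (using the Trotter-Kato formula to peel off the diagonal pieces $\tilde V_{\mathrm o}$ and the field $b$), note that the off-diagonal hopping part is built from the edge operators $Q({\bs X},{\bs Y})$ restricted to $\Omega^N_{\neq,\infty}$, and invoke the $U=\infty$ version of Lemma~\ref{EquivPath} so that a nonzero matrix element forces a connecting path in $\Omega^N_{\neq,\infty}$, hence $\tau\in\mathfrak{S}_{N,\infty}({\bs X})$. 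With this, the sum over $\tau\in\mathfrak{S}_N$ collapses to a sum over $\tau\in\mathfrak{S}_{N,\infty}({\bs X})$, and then one applies the path-integral representation~\eqref{InftyFKF} of Theorem~\ref{FKIHubbrdUInf} to each surviving matrix element, producing the indicator $1_{D_\infty(\beta)}$, the integrand with $V_{\mathrm o}$ in place of $V$, and no $U$-dependent exponential factor.

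Third, I would define the probability measure $\Mei$ on $L_{\beta,\infty}=D_\infty(\beta)\cap D_{\mathrm P,\infty}(\beta)$ exactly as in~\eqref{Measurenu}, namely
\begin{align}
\Mei(B)=\sum_{{\bs X}\in[\Omega_{\neq,\infty}^N]}\sum_{\tau\in\mathfrak{S}_{N,\infty}({\bs X})}\frac{1}{N!}\,P_{\bs X}\big(B\cap\{{\bs X}_\beta=\tau{\bs X}\}\cap D_\infty(\beta)\big)\Big/\mathrm{Norm.},
\end{align}
and compute the normalization as $\mathrm{Norm.}=\Tr_{P_{\mathrm G}\ell^2_{\mathrm s}(\Omega^N_{\neq})}[e^{-\beta{\bs L}_\infty}]$ by the same chain of equalities as in~\eqref{Norm.} (the $\Omega^N_{\neq,\infty}$-constraint is built into $P_{\mathrm G}$ acting on $\ell^2_{\mathrm s}$, and the allowed-permutation restriction kills the cross terms as before). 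The last ingredient is the sign: since $N=|\Lambda|-1$ and $U=\infty$, every dynamically allowed $\tau$ is even, so $\mathrm{sgn}(\tau)=(-1)^{\pi({\bs X}_\beta)}=1$ identically on $L_{\beta,\infty}$; this is the one genuinely new fact relative to Theorem~\ref{TrFKIMHubbard} and it is exactly the input quoted from~\cite{AL} in the text preceding the statement. Assembling these pieces gives~\eqref{FKITrUInf}.

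The main obstacle I anticipate is not any single hard estimate but rather verifying carefully that the three structural facts used for finite $U$ survive the $U\to\infty$ projection: (a) that ${\bs L}_\infty$ still admits the decomposition into hopping operators $Q({\bs X},{\bs Y})$ plus a multiplication operator, so the $U=\infty$ Lemma~\ref{EquivPath} and the vanishing claim go through; (b) that the dominated-convergence / Trotter-Kato manipulations that justified~\eqref{DAConc} remain valid with $H_\infty$ in place of $H$ and with $\tilde V_{\mathrm o}$ bounded; and (c) that the hard-core-boson reference trace $\Tr_{P_{\mathrm G}\ell^2_{\mathrm s}(\Omega^N_{\neq})}[e^{-\beta{\bs L}_\infty}]$ is finite and strictly positive so that $\Mei$ is a bona fide probability measure. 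None of these is deep, but they are the points where one must be attentive; everything else is a transcription of the already-proved Theorem~\ref{TrFKIMHubbard}.
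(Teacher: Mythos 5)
Your proposal is correct and follows exactly the route the paper takes: the paper's own proof of Theorem \ref{FKIUINFINITY} is, verbatim, ``using arguments similar to those in the proof of Theorem \ref{TrFKIMHubbard}'' together with the observation that at $U=\infty$ with $N=|\Lambda|-1$ every dynamically allowed permutation is even, which is the one new ingredient you also single out. Nothing is missing.
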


\section{Trace formulas for free Bose field}\label{TraceBoseGene}
\setcounter{equation}{0}

\subsection{Preliminaries}\label{BosePre}

Let $\mathfrak{X}_r$ be a real separable  Hilbert space equipped with the inner product $\la \cdot |\cdot \ra_{\mathfrak{X}_r}$.
Let $\{\phi(f)\, |\, f\in \mathfrak{X}_r\}$ be the Gaussian random process indexed by $\mathfrak{X}_r$,  and let 
$(Q, \mathcal{F}, \mu)$ be its underlying probability space. Note that 
\begin{align}
\int_Qd\mu \phi(f)\phi(g)=\frac{1}{2} \la f|g\ra_{\mathfrak{X}_r},\ \ f, g\in \mathfrak{X}_r.
\end{align}

  Let $A$ be a positive self-adjoint operator acting in $\mathfrak{X}_r$. Suppose that there exists a constant $a_0>0$ such that $A\ge a_0$ (i.e., $\la f|Af\ra\ge a_0\|f\|^2$ for all $f\in \D(A^{1/2})$ ).
  For each $s\in \BbbR$, we introduce an inner product $\la \cdot |\cdot \ra_s$ on $\D(A^{s/2})$ by 
  $\la f|g\ra_s=\la A^{s/2}f|A^{s/2}g\ra,\ f, g\in \D(A^{s/2})$.
  For $s>0$, $(\D(A^{s/2}), \la \cdot|\cdot\ra_s)$ becomes a Hilbert space, which is denoted by $\mathfrak{X}_{r, s}$. For $s<0$,
    we denote by $\mathfrak{X}_{r, s}$ the completion of $\mathfrak{X}_{r}(=\D(A^{s/2}))$ in the norm $\|\cdot\|_s:=\la \cdot |\cdot \ra_s^{1/2}$.  The dual space of $\mathfrak{X}_{r, s}$ can be identified with $\mathfrak{X}_{r, -s}$ through the bilinear form such that  ${}_{-s}\la f|g\ra_s=\la f|g\ra_{\mathfrak{X}_r},\ f\in \mathfrak{X}_{r, -s} \cap \mathfrak{X}_r,\ g\in \mathfrak{X}_{r, s}\cap \mathfrak{X}_r$.
    
    In what follows, we  assume the following:
    \begin{itemize}
    \item[{\bf (A)}] For some $\gamma_0>0$, $A^{-\gamma_0}$ is in the trace class.
    \end{itemize}
Choose $\gamma>\gamma_0$, arbitrarily. Clearly, the embedding mapping of $\mathfrak{X}_r$ into $\mathfrak{X}_{r, -\gamma}$ is in the Hilbert-Schmidt class. Thus, by applying  \cite[Proposition 5.1]{Gross},
we can take $Q=\mathfrak{X}_{r, -\gamma}$ and $\phi(f)= {}_{-\gamma}\la \phi|f\ra_{\gamma},\ 
\phi\in Q,\ f\in \mathfrak{X}_{r, \gamma}$.

We denote by $\mathfrak{X}$ the complexification of $\mathfrak{X}_r$. Then each element $f$ in $\mathfrak{X}$
 can be expressed as $f=f_1+if_2,\ \ f_1, f_2\in \mathfrak{X}_r$. Now we define $\phi(f),\ f\in \mathfrak{X}$
  by $\phi(f)=\phi(f_1)+i\phi(f_2)$. Trivially, we have 
  $\int_Qd\mu \phi(\overline{f}) \phi(g)=\frac{1}{2} \la f|g\ra_{\mathfrak{X}},\ f,g\in \mathfrak{X}$, where 
  $\overline{f}=f_1-if_2$.

  For each $f\in \mathfrak{X}$, we define a symmetric operator $\Phi_{\mathrm{S}}(f)$ by 
  \begin{align}
  \Phi_{\mathrm{S}}(f)=\frac{1}{\sqrt{2}} (a(f)^*+a(f)).
  \end{align}
  It is well-known that $\Phi_{\mathrm{S}}(f)$ is essentially self-adjoint on $\Ffin(\mathfrak{X})$. We denote its closure by the same symbol.  $\Phi_{\mathrm{S}}(f)$ is called {\it Segal's field operator}.

  There is a useful  identification between  $L^2(Q, d\mu)$ and  $\Fock(\mathfrak{X})$; namely, 
  there exists a unitary operator $U$ from $\Fock(\mathfrak{X})$ onto $L^2(Q, d\mu)$ satisfying the following (i)-(iii) \cite{Simon}:
  \begin{itemize}
  \item[(i)] $U\Omega_{\mathrm{b}}=1$, where $\Omega_{\mathrm{b}}$ is the Fock vacuum in $\Fock(\mathfrak{X})$;
  \item[(ii)] $U a(f_1)^*\cdots a(f_n)^* \Omega_{\mathrm{b}}=2^{n/2}: \phi(f_1)\cdots \phi(f_n):,\ \ f_1, \dots, f_n\in \mathfrak{X}$,  where 
  $
  : \phi(f_1)\cdots \phi(f_n):
  $ indicates the Wick product;
  \item[(iii)] $U\Phi_{\mathrm{S}}(f) U^{-1}=\phi(f),\ f\in \mathfrak{X}$.
  \end{itemize}
 Let $B$ be a positive self-adjoint operator on $\mathfrak{X}_r$. We define a linear operator $\dG_{\mathrm{s}}(B)$ acting in $\Fock(\mathfrak{X})$ by 
  \begin{align}
  \dG_{\mathrm{s}}(B) \restriction \otimes_{\mathrm{s}}^n \mathfrak{X}
  =\underbrace{B\otimes 1\otimes \cdots \otimes 1}_n+1\otimes B \otimes 1 \otimes  \cdots\otimes 1 +\cdots +1\otimes \cdots\otimes 1 \otimes B. \label{Bose2ndQ}
  \end{align}
  $\dG_{\mathrm{s}}(B)$ is called the {\it second quantization} of $B$. $\dG_{\mathrm{s}}(B)$ is  positive and self-adjoint \cite{ BR, Simon}.
Let $\{e_n\}_{n=1}^{\infty}\subset \mathfrak{X}_r$ be a CONS of $\mathfrak{X}$.
Suppose that $B$ is diagonal with respect to $\{e_n\}_{n=1}^{\infty}$:
$Be_n=\lambda_n e_n,\ n\in \BbbN$. Then we have 
\begin{align}
\sum_{n=1}^{\infty}\lambda_n a(e_n)^*a(e_n)=U^{-1} \dG_{\mathrm{s}}(B) U \label{Tubutubu}
\end{align}
 on the  dense subspace $
 \{\Psi=(\Psi_n)_{n=0}^{\infty}\in \Ffin(\mathfrak{X})\, |\, \Psi_n\in \otimes_{\mathrm{alg}}^n\D(B)\}
 $, where $\otimes_{\mathrm{alg}}$ indicates the incompleted tensor product.

\subsection{A trace formula for free Euclidean field I}

Let $A$ be the linear operator given in the previous subsection.
Note that $A$ can be naturally extended to the operator on  $\mathfrak{X}$. We denote the extension by the same symbol.
By the assumption {\bf (A)}, we know that $e^{-\beta A}$ is in the trace class as an operator on $\mathfrak{X}$  for all $\beta >0$. Accordingly,  $e^{-\beta \dG_{\mathrm{s}}(A)}$ is in the trace class as an operator on $\Fock(\mathfrak{X})$ for all $\beta >0$ satisfying 
\begin{align}
\mathcal{Z}_{A}(\beta):=\Tr_{\Fock(\mathfrak{X})}\Big[e^{-\beta \dG_{\mathrm{s}}(A)}\Big]=\frac{1}{\det(1-e^{-\beta A})},
\end{align}
where $\det(\cdots)$ is the determinant \cite{ReSi4}.

We set $Q_{\beta}=C_{\mathrm{P}}([0, \beta]; Q)$, the space of continuous loop of $Q$ with parameter  space $[0, \beta]$. For each $\Phi\in Q_{\beta}$, the value of $\Phi$ at $t$ is denoted by $\Phi_t\in Q$.
Let $\mathcal{F}_{\beta}$ be the Borel field on $Q_{\beta}$ generated by $\{\Phi_t(f)|\, f\in \mathfrak{X}_r, t\in [0, \beta]\}$.

\begin{Prop}[\cite{Arai, HK}]
There exists a probability measure $\mu_{\beta}$ on $(Q_{\beta}, \mathcal{F}_{\beta})$ such that 
$\{\Phi_t(f)\, |\, f\in \mathfrak{X}_{r, \gamma},\, t\in [0, \beta]\}$ is a family of jointly Gaussian random processes on $(Q_{\beta}, \mathcal{F}_{\beta}, \mu_{\beta})$ with covariance 
\begin{align}
\int_{Q_{\beta}} d\mu_{\beta} \Phi_s(f)\Phi_t(g)=\frac{1}{2} \big\la f \big|\big(1-e^{-\beta A}\big)^{-1}
\big (
e^{-(\beta-|t-s|)A}+e^{-|t-s|A}
\big)g\big\ra_{\mathfrak{X}_r},\ \ f, g\in \mathfrak{X}_r.
\end{align}
Let $G_0, \dots, G_n$ be bounded measurable functions on $\BbbR^m$. 
For  $0<t_1<t_2<\cdots<t_n<\beta$, we have 
\begin{align}
&\Tr_{\Fock(\mathfrak{X})} \Big[
G_0^{\rf} e^{-t_1\dG_{{\rm s}}(A)} G_1^{\rf} e^{-(t_2-t_1)\dG_{{\rm s}}(A)}G_2^{\rf}\cdots G_n^{\rf}e^{-(\beta-t_n)\dG_{{\rm s}}(A)}
\Big]\Bigg/\mathcal{Z}_{A}(\beta) \no
=& \int_{Q_{\beta}} d\mu_{\beta} G_0^{0} G_1^{t_1} \cdots G_n^{t_n},
\end{align}
where
\begin{align}
G_j^{\rf}=G_j\Big(\Phi_{\mathrm{S}}(f_1^{(j)}), \dots,  \Phi_{\mathrm{S}}(f_m^{(j)})\Big),\ \ \
G^t_j=G_j\Big(\Phi_{t}(f_1^{(j)}), \dots,  \Phi_{t}(f_m^{(j)})\Big)
\end{align}
for $f_1^{(j)}, \dots, f_m^{(j)}\in \mathfrak{X}_{r, \gamma}$.
\end{Prop}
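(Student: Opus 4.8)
The plan is to follow the strategy of Ar\^ai and H\o egh--Krohn (\cite{Arai, HK}) in two stages: first construct the loop-space Gaussian measure $\mu_\beta$, then verify the trace identity by testing it against exponential functions and extending by a density argument.

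\textbf{Construction of $\mu_\beta$.} The prescribed covariance $C(s,t)=\tfrac12(1-e^{-\beta A})^{-1}(e^{-(\beta-|t-s|)A}+e^{-|t-s|A})$ on $[0,\beta]$ is, via the spectral theorem for $A$, a direct integral of the scalar kernels $c_\lambda(s,t)=\tfrac12\cosh((\tfrac\beta2-|t-s|)\lambda)/\sinh(\tfrac{\beta\lambda}2)$, $\lambda\ge a_0>0$, each of which is the covariance of the stationary $\beta$-periodic Ornstein--Uhlenbeck process and is therefore positive definite; hence $C$ is positive definite. Coupling this with the inner product $\la\cdot|\cdot\ra_\gamma$ on $\mathfrak X_{r,\gamma}$ yields a positive-definite quadratic form on $C([0,\beta])\otimes\mathfrak X_{r,\gamma}$, whose Fourier transform is, by Bochner--Minlos (using that $\mathfrak X_r\hookrightarrow\mathfrak X_{r,-\gamma}$ is Hilbert--Schmidt by Condition \textbf{(A)}, exactly as in the time-zero construction in Section \ref{BosePre}), the characteristic functional of a Gaussian measure on path space with values in $\mathfrak X_{r,-\gamma}$. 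Condition \textbf{(A)} together with the increment bound $\int|\Phi_s(f)-\Phi_t(f)|^{2p}\,d\mu_\beta\le C_p\,|t-s|^p$ for $f$ in the Hilbert--Schmidt scale, and Kolmogorov--Chentsov, produce a modification supported on $Q_\beta=C_{\mathrm P}([0,\beta];Q)$; this is the desired $\mu_\beta$, and by construction $\{\Phi_t(f)\}$ is jointly Gaussian with the stated covariance.

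\textbf{The trace formula.} It suffices to prove the identity when each $G_j(\xi_1,\dots,\xi_m)=\exp(i\sum_{l}\xi_l)$, i.e. for products of Weyl operators $W(f^{(j)})=e^{i\Phi_{\mathrm S}(f^{(j)})}$ interlaced with the Gaussian semigroup $e^{-t\,\dG_{\mathrm s}(A)}$: by Stone--Weierstrass, finite linear combinations of such exponentials are dense in $C_0(\BbbR^m)$, and both sides depend continuously on $G_j$ (the left side by norm-continuity of bounded functional calculus together with trace-class domination by $e^{-\beta\dG_{\mathrm s}(A)}$, the right side by dominated convergence), so the general case follows. For exponentials both sides are explicitly Gaussian: the right side equals $\exp\big(-\tfrac12\sum_{j,k}\la f^{(j)}|C(t_j,t_k)f^{(k)}\ra_{\mathfrak X_r}\big)$ directly from the defining covariance of $\mu_\beta$; the left side is computed by the Trotter product formula combined with the coherent-state (Mehler-kernel) evaluation of traces of the form $\Tr[\prod_j W(f^{(j)})e^{-s_j\dG_{\mathrm s}(A)}]$, and carrying out the resulting Gaussian integral gives $\mathcal Z_A(\beta)$ times the same exponential, which is exactly what is asserted. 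The only algebraic content is then the one-line eigenvalue computation that the kernel appearing in the trace evaluation is precisely $C(s,t)$.

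\textbf{Main obstacle.} The genuinely delicate point is the first stage: showing that $\mu_\beta$ concentrates on continuous loops in $Q$ rather than merely on a larger distribution space, which forces one to work in the negative-index space $\mathfrak X_{r,-\gamma}$ furnished by Condition \textbf{(A)} and to couple the Hilbert--Schmidt embedding with a Kolmogorov-type continuity estimate. The algebraic side of the trace formula, by contrast, is a routine Gaussian computation once the Trotter approximation and the trace-class bookkeeping are in place. Since this construction is carried out in detail in \cite{Arai, HK}, I would cite those references for the measure-theoretic part and present only the spectral identification of the covariance.
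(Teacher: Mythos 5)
This proposition is stated in the paper as a cited result from \cite{Arai, HK} with no proof supplied, so there is no internal argument to compare against; your sketch is an honest reconstruction of the approach in those references (periodic Ornstein--Uhlenbeck covariance and Bochner--Minlos for the loop measure, Trotter plus Weyl algebra for the trace formula) and the key ideas are in the right place. Two small points deserve tightening if this were to be written out in full. First, the positive definiteness of the operator-valued covariance is cleanest to argue by diagonalizing $A$ (which is legitimate here, since $A\ge a_0>0$ and $A^{-\gamma_0}$ is trace class, so $A$ has discrete spectrum), writing $\la f|C(s,t)g\ra=\sum_n c_{\lambda_n}(s,t)\overline{f_n}g_n$ and observing that each scalar kernel $c_{\lambda_n}$ is positive definite; the phrase ``direct integral'' glosses over the fact that the spectral measure $\la f_i|dE_\lambda f_j\ra$ is complex, so the reduction to scalar kernels is not a literal tensoring. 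Second, Stone--Weierstrass alone does not carry you from Weyl exponentials to arbitrary bounded measurable $G_j$: the standard route is to pass from exponentials to Schwartz functions by Fourier inversion (with the limits controlled by the joint spectral measure of the commuting self-adjoint operators $\Phi_{\mathrm S}(f^{(j)}_l)$ on one side and by dominated convergence in $L^2(\mu_\beta)$ on the other), and then to bounded Borel functions by a monotone-class or bounded pointwise-limit argument. With those refinements the outline closes; since it already mirrors the cited construction, citing \cite{Arai, HK} for the measure-theoretic heavy lifting, as you suggest, is the appropriate treatment.
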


\subsection{A trace formula for free Euclidean field II}
Let $L_r^2(0, \beta)$ be the real Hilbert space of real-valued measurable functions in $L^2(0, \beta)$
 and set $\mathfrak{X}_r^{\beta}=L_r^2(0, \beta)\otimes \mathfrak{X}_r$.
 Let $\Delta_{\mathrm{P}}$ be the periodic Laplacian acting in $L^2(0, \beta)$. 
 We introduce a norm of $\mathfrak{X}_r^{\beta}$ by 
 \begin{align}
 \|f\|_{-1, \beta}^2=\frac{1}{2}\Bigg\|
 \sqrt{\frac{1 \otimes A^2}
 {(-\Delta_{\mathrm{P}})\otimes 1 +1 \otimes A^2}
 } f
 \Bigg\|^2_{\mathfrak{X}_r^{\beta}},\ \ f\in \mathfrak{X}_r^{\beta}. \label{HenNorm}
 \end{align}
We denote by $\mathfrak{X}_{-1, r}^{\beta}$ the completion of $\mathfrak{X}_r^{\beta}$ by the norm $\|\cdot\|_{-1, \beta}$. The following formula will be useful:
\begin{align}
\la \delta_s\otimes f |\delta_t\otimes  g\ra_{-1, \beta}=
\frac{1}{2} \big\la f \big|\big(1-e^{-\beta A}\big)^{-1}
\big (
e^{-(\beta-|t-s|)A}+e^{-|t-s|A}
\big)g\big\ra_{\mathfrak{X}_r},\ \ f, g\in \mathfrak{X}_{r}, \label{KeyEqCo}
\end{align}
where $\delta_t$ is the Dirac delta function.
Let $f$ be an $\mathfrak{X}_r$-valued measurable function on $[0, \beta]$ such that 
$
\int_0^{\beta} \|f(t)\|^2_{-\gamma}dt<\infty
$.
We define the smeared random variable by $\Phi(f)=\int_0^{\beta} {}_{-\gamma}\la
\Phi_t|f(t)\ra_{\gamma}dt
$. Using (\ref{KeyEqCo}), we obtain  $
\int_{Q_{\beta}} d\mu_{\beta} \Phi(f)\Phi(g)=\la f|g\ra_{-1, \beta},\ f,g\in \mathfrak{X}_{-1, r}^{\beta}
$, where the inner product $\la \cdot |\cdot \ra_{-1, \beta}$ is naturally obtained from (\ref{HenNorm}). Therefore, $\{\Phi(f)\, |\, f\in \mathfrak{X}_{-1, r}^{\beta}\}$ becomes a  Gaussian mean zero random process indexed by $\mathfrak{X}_{-1, r}^{\beta}$; its underlying probability space is 
$(Q_{\beta}, \mathcal{F}_{\beta}, \mu_{\beta})$.  

Let $\beta>0$. By using the fact that $\coth x>0$, provided that $x>0$, we define a self-adjoint operator $B(\beta)$ on $\mathfrak{X}_r$ by 
\begin{align}
B(\beta)= \sqrt{\coth \frac{\beta A}{2}}.
\end{align}
Because $A\ge a_0$, we readily see that $1\le B(\beta)\le \sqrt{\coth a_0}$. Furthermore, by using the elementary fact $ \sqrt{\coth x}-1\le De^{-2x}$ with $D=2^{-a_0}/(e^{a_0}-e^{-a_0})$ for all $x\in [a_0, \infty)$, we obtain that 
$0\le \Tr_{\mathfrak{X}_r}[B(\beta)-1] \le D\Tr_{\mathfrak{X}_r} [e^{-2\beta A}]<\infty$,
which implies that $B(\beta)-1$ is in the trace class. In particular, $B(\beta)-1 $ is in the Hilbert-Schmidt class as well. By Shale's theorem \cite[Theorem I. 23]{Simon}, there exists a probability measure $\mu_{B(\beta)}$ on $(Q, \mathcal{F})$ mutually absolutely continuous to $\mu$ such that 
\begin{align}
\int_Q d\mu_{B(\beta)} e^{i\phi(f)}=\int_Q d\mu e^{i\phi(B(\beta)f)}=e^{-\|B(\beta) f\|_{\mathfrak{X}_r}^2/4}
\end{align}
and $d\mu_{B(\beta)}=G_{\beta}d\mu$ with $G_{\beta}\in L^p(Q, d\mu)$ for some $p>1$ and $G_{\beta}^{-1}\in L^q(Q, d\mu)$ for some $q>1$.

For $t\in [0, \beta)$, we define a linear operator $j_t$ from $\mathfrak{X}_{r, \gamma}$ to $\mathfrak{X}_{-1, r}^{\beta}$ by 
\begin{align}
j_t f=\delta_t\otimes f,\ \ f\in \mathfrak{X}_{r, \gamma}.
\end{align}
By (\ref{KeyEqCo}), we have 
\begin{align}
\| j_t f\|_{-1, \beta}^2=\frac{1}{2}\|B(\beta) f\|_{\mathfrak{X}_r}^2=:|\! |\! | f |\! |\! |_{\beta}^2.\label{Def3Norm}
\end{align}
Let $\mathfrak{X}_{r, \gamma}^{(\beta)}$ be the completion of $\mathfrak{X}_{r, \gamma}$ by the norm 
$
|\! |\! | \cdot |\! |\! |_{\beta}
$. From (\ref{Def3Norm}), it follows that $j_t$ is the  isometry from $\mathfrak{X}_{r, \gamma}^{(\beta)}$ into 
$\mathfrak{X}_{-1, r}^{\beta}$.
Now we define a linear operator  $ J_t: L^2(Q, d\mu_{B(\beta)}) \to L^2(Q_{\beta}, d\mu_{\beta})$ by 
$J_t=\Gamma(j_t)$, where for each contraction operator $C$, $\Gamma(C)$ is defined by $
\Gamma(C) : \phi(f_1)\cdots \phi(f_n):=: \Phi(Cf_1)\cdots \Phi(Cf_n):,\ f_1, \dots, f_n \in \mathfrak{X}_{r, \gamma}^{(\beta)}
$ and $\Gamma(C)1=1$.   
Note that the mapping $t\to J_t$ is  strongly continuous,  and $J_t$ is an isometry. In addition, we have the following \cite{Arai, Simon}:
\begin{itemize}
\item $J_t$ is positivity preserving;
\item $(J_t F)(\Phi)=F(\Phi_t),\ F\in L^2(Q, d\mu_{B(\beta)})$; thus, 
the mapping $t\to F(\Phi_t)$ is continuous in $L^2(Q_{\beta}, d\mu_{\beta})$;
\item $J_t$ can be extended to a contraction from $L^p(Q, d\mu_{B(\beta)})$ to 
$L^p(Q_{\beta}, d\mu_{\beta})$ for all $p\in [1, \infty)$.
\end{itemize}

\begin{Thm}[\cite{Arai, Arai2}]\label{BoseFKF}
Let $G_0,  \dots, G_n$ be bounded measurable functions on $\BbbR^m$. 
We set 
\begin{align}
G_j^{\rf}&=G_j\Big(\Phi_{\mathrm{S}}(f_1^{(j)}), \dots,  \Phi_{\mathrm{S}}(f_m^{(j)})\Big)
\end{align}
for $f_1^{(j)}, \dots, f_m^{(j)}\in \mathfrak{X}_{r, \gamma}^{(\beta)}$.
  For $0<t_1<t_2<\cdots<t_n<\beta$, we have
\begin{align}
&\Tr_{\Fock(\mathfrak{X})} \Big[
G_0^{\rf} e^{-t_1\dG_{{\rm s}}(A)} G_1^{\rf} e^{-(t_2-t_1)\dG_{{\rm s}}(A)}G_2^{\rf}\cdots G_n^{\rf}e^{-(\beta-t_n)\dG_{{\rm s}}(A)}
\Big]\Bigg/\mathcal{Z}_{A}(\beta) \no
=& \int_{Q_{\beta}} d\mu_{\beta} (J_0G_0^{\rf})(\Phi) (J_{t_1}G_1^{\rf})(\Phi) \cdots (J_{t_n}G_n^{\rf})(\Phi).
\end{align}
Here, note that 
\begin{align}
(J_t
G^{\rf}_j)(\Phi)&=G_j\Big(\Phi(j_t f_1^{(j)}), \dots,  \Phi(j_tf_m^{(j)})\Big).
\end{align}
\end{Thm}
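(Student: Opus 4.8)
The plan is to derive Theorem~\ref{BoseFKF} from the preceding Proposition (the trace formula for free Euclidean field, version~I) together with a density argument. Two structural facts come first. Since $1\le B(\beta)\le\sqrt{\coth a_0}$, the norm $|\!|\!|\cdot|\!|\!|_\beta$ is equivalent to $\|\cdot\|_{\mathfrak{X}_r}$ on $\mathfrak{X}_{r,\gamma}$, so $\mathfrak{X}_{r,\gamma}^{(\beta)}$ coincides, as a set, with $\mathfrak{X}_r$ (carrying an equivalent norm); hence $\Phi_{\mathrm{S}}(f)$ is defined for every $f\in\mathfrak{X}_{r,\gamma}^{(\beta)}$, and, for each fixed $j$, the operators $\Phi_{\mathrm{S}}(f_1^{(j)}),\dots,\Phi_{\mathrm{S}}(f_m^{(j)})$ mutually commute because the $f_k^{(j)}$ are real. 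Moreover, for $f\in\mathfrak{X}_{r,\gamma}$ one has $\Phi(j_tf)=\Phi(\delta_t\otimes f)=\Phi_t(f)$ directly from the definition of $\Phi(\cdot)$ as $\int_0^\beta{}_{-\gamma}\la\Phi_s|\cdot\ra_\gamma\,ds$; consequently $(J_tG_j^{\rf})(\Phi)=G_j\big(\Phi_t(f_1^{(j)}),\dots,\Phi_t(f_m^{(j)})\big)$ in that case, so that \emph{Theorem~\ref{BoseFKF} restricted to $f_k^{(j)}\in\mathfrak{X}_{r,\gamma}$ is exactly the preceding Proposition}.

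Next I would reduce the general statement to exponential $G_j$. Fix $f_k^{(j)}\in\mathfrak{X}_{r,\gamma}^{(\beta)}$ and $0<t_1<\dots<t_n<\beta$, and put $\tau_0=t_1$, $\tau_j=t_{j+1}-t_j$ $(1\le j\le n-1)$, $\tau_n=\beta-t_n$, so that $\sum_j\tau_j=\beta$. By the generalized H\"older inequality for Schatten norms and $\|e^{-\tau\dG_{\mathrm{s}}(A)}\|_p=\mathcal{Z}_A(p\tau)^{1/p}<\infty$ for all $p\in[1,\infty)$ (a consequence of assumption \textbf{(A)}), the product $e^{-t_1\dG_{\mathrm{s}}(A)}\cdots e^{-(\beta-t_n)\dG_{\mathrm{s}}(A)}$ is trace class, with trace norm at most $\prod_j\mathcal{Z}_A(\beta)^{\tau_j/\beta}=\mathcal{Z}_A(\beta)$. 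Combining this with the elementary fact that $\Delta\mapsto\Tr[E(\Delta)\,T]$ is a finite complex Borel measure whenever $E$ is a projection-valued measure and $T$ is trace class, one checks that, for fixed $(f_k^{(j)})$ and $(t_j)$, the left-hand side of the asserted identity equals $\int_{(\BbbR^m)^{n+1}}G_0\otimes G_1\otimes\dots\otimes G_n\,d\nu_{\mathrm{L}}$ for a finite complex Borel measure $\nu_{\mathrm{L}}$, while the right-hand side equals $\int G_0\otimes\dots\otimes G_n\,d\nu_{\mathrm{R}}$ with $\nu_{\mathrm{R}}$ the joint law under $\mu_\beta$ of the Gaussian vector $\big(\Phi(j_{t_j}f_k^{(j)})\big)_{0\le j\le n,\,1\le k\le m}$ (with $t_0:=0$). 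Since a finite Borel measure on Euclidean space is determined by its Fourier transform, and that of $\nu_{\mathrm{L}}$ (resp.\ $\nu_{\mathrm{R}}$) at $(s_k^{(j)})$ is precisely the left- (resp.\ right-) hand side evaluated at $G_j(\xi)=\exp\big(i\sum_k s_k^{(j)}\xi_k\big)$, it is enough to prove the identity for $G_j^{\rf}=e^{i\Phi_{\mathrm{S}}(h_j)}$ with $h_j=\sum_k s_k^{(j)}f_k^{(j)}\in\mathfrak{X}_{r,\gamma}^{(\beta)}$, in which case $(J_{t_j}G_j^{\rf})(\Phi)=e^{i\Phi(j_{t_j}h_j)}$.

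For such exponential data both sides are continuous in $(h_0,\dots,h_n)$ with respect to $|\!|\!|\cdot|\!|\!|_\beta$. Indeed the right-hand side equals $\int_{Q_\beta}d\mu_\beta\exp\!\big(i\sum_j\Phi(j_{t_j}h_j)\big)=\exp\!\big(-\tfrac12\big\|\sum_j j_{t_j}h_j\big\|_{-1,\beta}^2\big)$, because $\Phi(\cdot)$ is a mean-zero Gaussian process with covariance $\la\cdot|\cdot\ra_{-1,\beta}$, and this is $|\!|\!|\cdot|\!|\!|_\beta$-continuous since each $j_t$ is an isometry $\mathfrak{X}_{r,\gamma}^{(\beta)}\to\mathfrak{X}_{-1,r}^\beta$. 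The left-hand side is $\mathcal{Z}_A(\beta)^{-1}\Tr[\Theta]$ with $\Theta=e^{i\Phi_{\mathrm{S}}(h_0)}e^{-\tau_0\dG_{\mathrm{s}}(A)}\cdots e^{i\Phi_{\mathrm{S}}(h_n)}e^{-\tau_n\dG_{\mathrm{s}}(A)}$; decomposing $\Theta(h)-\Theta(h')$ telescopically, splitting $e^{-\tau_l\dG_{\mathrm{s}}(A)}=\big(e^{-\tau_l\dG_{\mathrm{s}}(A)/2}\big)^2$ so as to pair one half with the varying factor, and estimating each term by the Schatten--H\"older inequality above together with the fact that $\|D_\alpha C\|_p\to0$ whenever $D_\alpha\to0$ strongly with $\sup_\alpha\|D_\alpha\|<\infty$ and $C$ lies in the $p$-th Schatten class with $p<\infty$, one finds $\Theta$ trace-norm continuous in $(h_j)$, hence so is $\Tr[\Theta]$. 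As $\mathfrak{X}_{r,\gamma}$ is by construction dense in $\mathfrak{X}_{r,\gamma}^{(\beta)}$, it remains to prove the exponential identity for $h_j\in\mathfrak{X}_{r,\gamma}$, and there it is the preceding Proposition applied to $G_j(\xi)=e^{is^{(j)}\cdot\xi}$, using $\Phi_t(h_j)=\Phi(j_th_j)$. (Alternatively, without invoking that Proposition: diagonalize $A$, factorize $\Fock(\mathfrak{X})$ and $\dG_{\mathrm{s}}(A)=\sum_l\lambda_l a_l^*a_l$ over the eigenmodes $e_l$ of $A$ to turn the trace into a product over modes, and for a single mode evaluate the imaginary-time-ordered Weyl correlator in the gauge-invariant quasi-free Gibbs state by the usual Gaussian rule --- obtaining $\rho_\lambda(t,s)=\tfrac12(1-e^{-\beta\lambda})^{-1}\big(e^{-(\beta-|t-s|)\lambda}+e^{-|t-s|\lambda}\big)$ for the two-point function --- and then reassemble the product over $l$ using \eqref{KeyEqCo}.)

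The step I expect to be the genuine obstacle is precisely the machinery underpinning these two reductions: making ``reduce to exponentials'' rigorous requires the finiteness and countable additivity of $\nu_{\mathrm{L}}$, and passing from $\mathfrak{X}_{r,\gamma}$ to $\mathfrak{X}_{r,\gamma}^{(\beta)}$ requires the trace-norm continuity of $\Theta$; both rest on the Schatten--H\"older inequality combined with $e^{-\tau\dG_{\mathrm{s}}(A)}$ lying in every Schatten class for $\tau>0$, and neither can be shortcut by naive approximation, since bounded measurable $G_j$ need not be continuous almost everywhere. The remaining points --- the identity $(J_tG_j^{\rf})(\Phi)=G_j\big(\Phi(j_tf_1^{(j)}),\dots,\Phi(j_tf_m^{(j)})\big)$ from $J_t=\Gamma(j_t)$, the Gaussian evaluation of the right-hand side, and the one-mode computation --- are routine.
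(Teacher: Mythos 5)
The paper does not give a proof of Theorem~\ref{BoseFKF}; it is quoted directly from \cite{Arai, Arai2} and used as an imported result, so there is no in-text proof to compare against. Your proposal, which derives the theorem from the preceding Proposition (the trace formula stated for $\mathfrak{X}_{r,\gamma}$-data) by reducing to exponential $G_j$ and then running a $|\!|\!|\cdot|\!|\!|_\beta$-density argument, is a correct and reasonable route. The structural observations are right: the norm equivalence $|\!|\!|\cdot|\!|\!|_\beta\sim\|\cdot\|_{\mathfrak{X}_r}$ (so that $\mathfrak{X}_{r,\gamma}^{(\beta)}$ is all of $\mathfrak{X}_r$), the identity $\Phi(j_tf)=\Phi_t(f)$ on $\mathfrak{X}_{r,\gamma}$, the Schatten--H\"older bound $\big\|\prod_j e^{-\tau_j\dG_{\mathrm{s}}(A)}\big\|_1\le\mathcal{Z}_A(\beta)$, and the fact that $D_\ell C\to0$ in $\mathcal{S}_p$ ($p<\infty$) whenever $D_\ell\to0$ strongly with bounded norms, are exactly the tools needed for the continuity-in-$h_j$ estimate on the left-hand side, while on the right-hand side the Gaussian computation together with the isometry property of $j_t$ gives continuity for free.

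The one step that, as written, leaves a real (though fillable) gap is the assertion that the left-hand side equals $\int G_0\otimes\cdots\otimes G_n\,d\nu_{\mathrm{L}}$ for a single finite complex Borel measure $\nu_{\mathrm{L}}$ on $(\BbbR^m)^{n+1}$. For a multilinear form, separate countable additivity in each variable together with a uniform bound is not, in general, enough to yield a Borel measure on the product --- this is the bimeasure subtlety --- and the line ``combining this with the elementary fact that $\Delta\mapsto\Tr[E(\Delta)T]$ is a finite measure'' only handles one slot at a time. You can close the gap either by iterating the Schatten-class convergence through all $n+1$ slots and verifying joint $\sigma$-additivity on the rectangle algebra, or, more cleanly, by working in the $Q$-space picture: $e^{-\tau_j\dG_{\mathrm{s}}(A)}$ is positivity preserving on $L^2(Q,d\mu)$ and Hilbert--Schmidt, so it has a nonnegative $L^2$ kernel $K_{\tau_j}$, and $\nu_{\mathrm{L}}$ is then simply the pushforward of the finite \emph{positive} measure $\mathcal{Z}_A(\beta)^{-1}\prod_{j}K_{\tau_j}(\phi_j,\phi_{j+1})\,d\mu^{\otimes(n+1)}$ (indices mod $n+1$) under $(\phi_j)_j\mapsto(\phi_j(f_k^{(j)}))_{j,k}$. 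With that supplied, the remainder of your argument --- equality of the Fourier transforms via the exponential case on $\mathfrak{X}_{r,\gamma}$ and density --- goes through.
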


\subsection{Positivity preservingness of $e^{i\Pi(f)}$}
For each $f\in \mathfrak{X}_{-1, r}^{\beta}$, we define a linear operator on $\Fock(\mathfrak{X}_{-1}^{\beta})$
by 
\begin{align}
\Pi(f)=\frac{i}{\sqrt{2}}(a(f)^*-a(f)),
\end{align}
where $\mathfrak{X}_{-1}^{\beta}$ is the complexification of $\mathfrak{X}_{-1. r}^{\beta}$.
Then $\Pi(f)$ is essentially self-adjoint. We denote its closure by the same symbol. As before, we have a natural identification $\Fock(\mathfrak{X}_{-1}^{\beta}) \cong L^2(Q_{\beta}, d\mu_{\beta})$.
Under this identification, $\Pi(f)$ can be regarded as a linear operator on $L^2(Q_{\beta}, d\mu_{\beta})$.

The following proposition will play an important role.

\begin{Prop}\label{PPPi}
For any $f\in \mathfrak{X}_{-1, r}^{\beta}$, $e^{i\Pi(f)}$ is positivity preserving, that is,
if $F\in L^2(Q_{\beta}, d\mu_{\beta})$ is a positive function, then $e^{i\Pi(f)}F$ is a positive function. 
\end{Prop}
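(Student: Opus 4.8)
The plan is to identify $e^{i\Pi(f)}$, under the canonical identification $\Fock(\mathfrak{X}_{-1}^{\beta})\cong L^2(Q_{\beta},d\mu_{\beta})$, with a \emph{translation operator} on the Gaussian loop space: multiplication by a strictly positive Cameron--Martin density composed with the shift $\phi\mapsto\phi+f$ of the field configuration. Replacing the argument $\phi$ by $\phi+f$ sends positive functions to positive functions, and multiplication by a positive function does too, so this identification immediately gives the claim. The only structural input is the Heisenberg relation between $\Pi$ and Segal's field $\Phi_{\mathrm{S}}$; the rest is the Cameron--Martin theorem together with a check that no residual phase survives.

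First I would record the Heisenberg relation. From $[a(g),a(h)^{*}]=\langle g|h\rangle_{-1,\beta}$ on $\Fock(\mathfrak{X}_{-1}^{\beta})$ one computes, for real $f,g$, that $[\Phi_{\mathrm{S}}(g),\Pi(f)]=i\langle f|g\rangle_{-1,\beta}$, a $c$-number, whence by the Baker--Campbell--Hausdorff formula
\begin{align}
e^{i\Pi(f)}\,\Phi_{\mathrm{S}}(g)\,e^{-i\Pi(f)}=\Phi_{\mathrm{S}}(g)+\langle f|g\rangle_{-1,\beta}.
\end{align}
Transported to $L^2(Q_{\beta},d\mu_{\beta})$, where $\Phi_{\mathrm{S}}(g)$ becomes multiplication by the linear random variable $\Phi(g)$ and, by \eqref{KeyEqCo}, $\langle f|g\rangle_{-1,\beta}$ is exactly the value of that linear functional on the shift $f$, this says that $e^{i\Pi(f)}$ conjugates ``multiplication by $\Phi(g)$'' into ``multiplication by $\phi\mapsto\Phi(g)(\phi+f)$'' for every $g$. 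Since the $\Phi(g)$ generate $\mathcal{F}_{\beta}$, the associated multiplication operators span the maximal abelian algebra $L^{\infty}(Q_{\beta},d\mu_{\beta})$.

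Next I would introduce the Cameron--Martin translation operator. The measure $\mu_{\beta}$ is centered Gaussian with covariance $\langle\cdot|\cdot\rangle_{-1,\beta}$, so its Cameron--Martin space is $\mathfrak{X}_{-1,r}^{\beta}$; hence for $f\in\mathfrak{X}_{-1,r}^{\beta}$ the shift $\tau_{f}\colon\phi\mapsto\phi+f$ is defined $\mu_{\beta}$-a.e., $\mu_{\beta}\circ\tau_{f}^{-1}$ is mutually absolutely continuous with $\mu_{\beta}$, and its density $\varrho_{f}=\exp(\text{linear Gaussian}-\text{const})$ is strictly positive a.e. I would then set $(V_{f}F)(\phi)=\varrho_{f}(\phi)^{1/2}F(\phi+f)$; this $V_{f}$ is unitary, implements the conjugation of the previous step, and is positivity preserving (both $F\mapsto F(\cdot+f)$ and multiplication by $\varrho_{f}^{1/2}>0$ preserve positivity). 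By the maximal-abelianness, $e^{i\Pi(f)}V_{f}^{-1}$ is multiplication by some unimodular $c$, i.e.\ $e^{i\Pi(f)}=c\,V_{f}$. To pin down $c\equiv1$, I would evaluate on the constant function $1$ (image of the Fock vacuum $\Omega_{\mathrm{b}}$): using $e^{i\Pi(f)}=e^{-\frac{1}{4}\|f\|_{-1,\beta}^{2}}e^{-\frac{1}{\sqrt{2}}a(f)^{*}}e^{\frac{1}{\sqrt{2}}a(f)}$ and annihilating the vacuum, $e^{i\Pi(f)}\Omega_{\mathrm{b}}$ is a coherent vector whose $L^2$-image is, up to a positive constant, a Wick exponential $\exp(-c_{0}\Phi(f)-\tfrac{1}{2}c_{0}^{2}\|f\|_{-1,\beta}^{2})$ for a suitable $c_{0}>0$, a strictly positive function; since $V_{f}1=\varrho_{f}^{1/2}>0$ as well, $c>0$ a.e., and with $|c|=1$ this forces $c\equiv1$. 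Thus $e^{i\Pi(f)}=V_{f}$, which is positivity preserving.

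The hard part is the rigor of the shift $\tau_{f}$: the Cameron--Martin space $\mathfrak{X}_{-1,r}^{\beta}$ is not literally a subset of the continuous-loop space $Q_{\beta}$, so $\tau_{f}$, $\varrho_{f}$ and $V_{f}$ must be handled as $\mu_{\beta}$-a.e.\ objects via the abstract Cameron--Martin theorem rather than as pointwise maps of $Q_{\beta}$; a secondary nuisance is bookkeeping the normalization constants ($c_{0}$ above, and the precise convention relating $\Phi_{\mathrm{S}}$ on $\Fock(\mathfrak{X}_{-1}^{\beta})$ to $\Phi$ on $(Q_{\beta},\mu_{\beta})$) so that the residual phase is genuinely trivial. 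Both are routine once set up; alternatively, this statement can be quoted directly from the corresponding results of Arai in \cite{Arai,Arai2}.
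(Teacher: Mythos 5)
Your argument is correct and reaches the conclusion by a genuinely different route than the paper. The paper follows the idea of \cite{JHB,Simon}: it never identifies $e^{i\Pi(f)}$ as a translation operator. Instead it takes Schwartz-class cylinder functions $\tilde F = F(\Phi(f_1),\dots,\Phi(f_n))$, $\tilde G = G(\Phi(g_1),\dots,\Phi(g_n))$, uses the Weyl relation \eqref{Weyl1} to write $\langle \tilde F \,|\, e^{i\Pi(f)} \tilde G\rangle$ as an explicit $\BbbR^{2n}$-Fourier integral, and then factors that integral as $\langle T_{\bs a} F \,|\, K T_{\bs b} G\rangle$ where $T_{\bs a}, T_{\bs b}$ are honest shifts on $L^2(\BbbR^n)$ and $K$ is a Gaussian convolution operator; positivity of the inverse Fourier transform of a Gaussian then does the work, and one finishes by density of such cylinder functions among positive elements of $L^2(Q_\beta, d\mu_\beta)$. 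This avoids any appeal to the Cameron--Martin theorem and to the maximal-abelian uniqueness argument, at the cost of a somewhat opaque Gaussian computation. Your route is more structural: it identifies $e^{i\Pi(f)}$ as (essentially) the weighted shift $V_f$ implementing the quasi-invariance of $\mu_\beta$ under $\phi\mapsto\phi+f$, and then positivity preservation is manifest, since both the shift and multiplication by the strictly positive Cameron--Martin density preserve positivity. What you gain is an explicit description of $e^{i\Pi(f)}$, not merely its positivity-preserving property; what you pay is the abstract Cameron--Martin machinery, the maximal-abelianness argument needed to conclude $e^{i\Pi(f)}V_f^{-1}$ is multiplication by a unimodular function, and the coherent-state computation needed to kill the residual phase.

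Two small corrections. First, with $\varrho_f = d(\mu_\beta\circ\tau_f^{-1})/d\mu_\beta$, the operator $(V_f F)(\phi)=\varrho_f(\phi)^{1/2}F(\phi+f)$ is \emph{not} unitary; you want $(V_f F)(\phi)=\varrho_{-f}(\phi)^{1/2}F(\phi+f)$, equivalently $\varrho_f(\phi+f)^{-1/2}F(\phi+f)$. The computation $\varrho_f(\phi+f)^{-1}=\varrho_{-f}(\phi)$ and the change-of-variable identity $\int H(\phi+f)\,d\mu_\beta=\int H\varrho_f\,d\mu_\beta$ with $H=|G|^2\varrho_{-f}(\cdot-f)$ gives $\|V_f G\|=\|G\|$; your version is off by a nonconstant positive multiplier. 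This does not affect the conclusion (the multiplier is still strictly positive, so $V_f$ is still positivity preserving) but it would break the ``$c$ is unimodular'' step, so it should be fixed before invoking maximal-abelianness. Second, a slight streamlining of the phase-pinning step: you do not actually need to compute the coherent state image precisely. Once you know $e^{i\Pi(f)}V_f^{-1}$ is multiplication by some $c$ with $|c|=1$ a.e., positivity of $e^{i\Pi(f)}1$ and of $V_f 1$ already gives $c\ge 0$ a.e., hence $c\equiv 1$; and positivity of $e^{i\Pi(f)}1$ follows because any coherent vector $e^{z a(f)^*}\Omega_b$ with $z\in\BbbR$ maps under $U$ to a Wick exponential of a real field, which is pointwise positive. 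So the only input beyond Cameron--Martin is that real Wick exponentials are positive.
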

\begin{proof}
We will apply the idea in \cite{JHB,Simon}.
First, we note the following equality:
\begin{align}
e^{i\Pi(f)}e^{i\Phi(g)}=e^{-i\la f|g\ra_{-1, \beta}} e^{i\Phi(g)}e^{i\Pi(f)}, \ g\in \mathfrak{X}_{-1, r}^{\beta}. \label{Weyl1}
\end{align}
Let $F(x_1, \dots, x_n), G(x_1, \dots, x_n)\in \mathscr{S}(\BbbR^n)$, the functions of rapid decrease.
For each $f_1, \dots, f_n, g_1, \dots, g_n\in \mathfrak{X}_{-1, r}^{\beta}$, we set
\begin{align}
\tilde{F}=F(\Phi(f_1), \dots, \Phi(f_n)),\ \ \tilde{G}=G(\Phi(f_1), \dots, \Phi(g_n)).
\end{align}
By using (\ref{Weyl1}), we have
\begin{align}
\big\la \tilde{F}|e^{i\Pi(f)}\tilde{G}\big\ra
=&e^{-\|f\|_{-1, \beta}^2/4}(2\pi)^{-n} \int_{\BbbR^{2n}} d{\bs s} d{\bs t} \hat{F}({\bs s})^* \hat{G}({\bs t})
\exp\Bigg\{ -\frac{1}{4}\bigg\|
\sum_{i=1}^n(s_i f_i-t_ig_i)
\bigg\|_{-1, \beta}^2\Bigg\}\times \no
&\times \exp\Bigg\{-\frac{i}{2} \bigg\la f\bigg|\sum_{i=1}^n(s_if_i+t_ig_i)\bigg\ra_{-1, \beta}\Bigg\}, \label{InnerFG}
\end{align}
where $\hat{f}$ indicates the Fourier transform of $f$. Let $K$ be a bounded linear operator on $L^2(\BbbR^n)$
 defined by 
 \begin{align}
 \widehat{KG}({\bs s}) =(2\pi)^{-n/2} \int_{\BbbR^n} d{\bs t}
\exp\Bigg\{ -\frac{1}{4}\bigg\|
\sum_{i=1}^n(s_i f_i-t_ig_i)
\bigg\|_{-1, \beta}^2\Bigg\}\hat{G}({\bs t}),\ \ G\in L^2(\BbbR^n).
 \end{align}
 For ${\bs c}\in \BbbR^n$, let $T_{{\bs c}} $ be the shift operator on $L^2(\BbbR^n)$: $(T_{{\bs c}}F)({\bs x})=F({\bs x}-{\bs c})$.
Then 
\begin{align}
\mbox{the RHS of (\ref{InnerFG})}=\la T_{{\bs a}}F|KT_{{\bs b}}G\ra, \label{InnerFG2}
\end{align}
where ${\bs a}=(a_1, \dots, a_n)$ with $a_i=-\la f|f_i\ra_{-1, \beta}/2$ and ${\bs b}
=(b_1, \dots, b_n)$ with $b_i=\la f|g_i\ra_{-1, \beta}/2$. 

Let $H_{{\bs s}}({\bs t})=
\exp\big\{ -\frac{1}{4}\big\|
\sum_{i=1}^n(s_i f_i-t_ig_i)
\big\|_{-1, \beta}^2\big\}
$. Because $H_{{\bs s}}$ is Gaussian, $H_{{\bs s}}$ has the  Fourier transform which is a Gaussian. In particular, 
$\check{H}_{\bs s}\ge 0$, where $\check{f}$ indicates the inverse Fourier transform of $f$.
Thus, if $G$ is positive, then $(KG)({\bs s})=\check{H}_{\bs s} * G$ is positive as well, where $*$ indicates the convolution, which implies that  the linear operator $K$ is positivity preserving.
Because the shift operator $T_{{\bs c}}$ is positivity preserving, the right hand side of (\ref{InnerFG2})
is positive.  Since any positive $\Psi\in L^2(Q_{\beta}, d\mu_{\beta})$ is a limit of such 
$F(\Phi(f_1),\dots, \Phi(f_n)),\ F\in \mathscr{S}(\BbbR^n)$ \cite[Proof of Theorem I.12]{Simon},
we conclude the assertion in Proposition \ref{PPPi}. \end{proof}

\section{ Feynman-Kac-It\^o formulas for $H_{\mathrm{rad}}$ and $H_{\mathrm{rad}, \infty}$} \label{FKIRR}
\setcounter{equation}{0}

\subsection{The Feynman-Schr\"{o}dinger representation of the radiation field}
In this study, we will employ the Feynman-Schr\"{o}dinger representation of the quantized  radiation field,
which was first introduced by Feynman \cite{Feynman}.
(Some mathematical properties of it were examined in \cite{Miyao2, Miyao3}.)
To explain the representation, we need some preliminaries:
Let $\Hf=
\sum_{\lambda=1,2} \sum_{k\in V^*} \omega(k)a(k, \lambda)^* a(k, \lambda)
$.
By (\ref{Tubutubu}), we have  
\begin{align}
\Hf=\dG_{\mathrm{s}}(\omega\oplus \omega),
\end{align}
where $\dG_{{\rm s}}(A)$ is the second quantization operator defined by (\ref{Bose2ndQ}).
Furthermore, the vector potential can  be  expressed as  
$A_j(x)=\Phi_{{\rm S}}(\eta_{x, j}^{(1)} \oplus \eta_{x, j}^{(2)})$, where
\begin{align}
\eta_{x, j}^{(\lambda)}(k)=\frac{\vepsilon_{\lambda, j}(k)}{\sqrt{\omega(k)}} \varrho(k) e^{-ik\cdot x}, \ \ \lambda=1, 2\label{DefEta}
\end{align}
with $
\varrho(k)=|\Lambda|^{-1/2} \chi_{\kappa}(k)
$.
We set 
\begin{align}
\ell_{\mathrm{even}}^2(V^*)&=\{f\in \ell^2(V^*)\, |\, f(-k)=f(k)\ \  \forall k\in V^*\},\\
\ell_{\mathrm{odd}}^2(V^*)&=\{f\in \ell^2(V^*)\, |\, f(-k)=-f(k)\ \  \forall k\in V^*\}.
\end{align}
Let us introduce subspaces of $\ell^2(V^*)$ by 
\begin{align}
\mathfrak{h}_1&=\{\vepsilon_{1, i} f\, |\, f\in \ell_{\mathrm{even}}^2(V^*),\ i=1,2,3\},\\
\mathfrak{h}_2&=\{\vepsilon_{1, i} f\, |\, f\in \ell_{\mathrm{odd}}^2(V^*),\ i=1,2,3\},\\
\mathfrak{h}_3&=\{\vepsilon_{2, i} f\, |\, f\in \ell_{\mathrm{even}}^2(V^*),\ i=1,2,3\},\\
\mathfrak{h}_4&=\{\vepsilon_{2, i} f\, |\, f\in \ell_{\mathrm{odd}}^2(V^*),\ i=1,2,3\}.
\end{align}
Because 
\begin{align}
\ell^2(V^*)=\bigcup_{i=1,2,3} \R(\vepsilon_{\lambda, i}),\ \ \ \lambda=1,2,
\end{align}
we have the following identification:
\begin{align}
\ell^2(V^*) \oplus \ell^2(V^*)=\mathfrak{h}_1\oplus\mathfrak{h}_2\oplus\mathfrak{h}_3\oplus\mathfrak{h}_4.
\label{BasicIdn}
\end{align}
Corresponding to (\ref{BasicIdn}), we have 
\begin{align}
\Fock_{\rr}&=\Fock(
 \mathfrak{h}_1\oplus \mathfrak{h}_2\oplus \mathfrak{h}_3\oplus \mathfrak{h}_4
),\\
\Hf&=\dG_{\mathrm{s}}({\bs \omega}),
\end{align}
where ${\bs \omega}$
is a self-adjoint operator defined  by $
{\bs \omega}=\omega\oplus \omega\oplus \omega\oplus \omega
$.
\begin{lemm}\label{UsefulIdnA}
There is a unitary operator $W$ satisfying the following {\rm (i)} and {\rm (ii)}:
\begin{itemize}
\item[{\rm (i)}]  $W \dG_{\mathrm{s}}({\bs \omega}) W^{-1}=\dG_{\mathrm{s}}({\bs \omega})$.
\item[{\rm (ii)}] $
WA(x)W^{-1}= \Phi_{\mathrm{S}}({\bs \theta}_x)
$, where ${\bs \theta}_x=(\theta_{1, x}, \theta_{2, x}, \theta_{3, x}, \theta_{4, x})$ with
\begin{align}
\theta_{1, x}(k)&=\frac{{\bs \varepsilon_1(k)}}{\sqrt{\omega(k)}} \varrho (k)\cos(k\cdot x), \ \ \theta_{2, x}(k)=\frac{{\bs \varepsilon_1(k)}}{\sqrt{\omega(k)}} \varrho (k)\sin(k\cdot x),\\
\theta_{3, x}(k)&=\frac{{\bs \varepsilon_2(k)}}{\sqrt{\omega(k)}} \varrho (k)\cos(k\cdot x), \ \ \theta_{4, x}(k)=\frac{{\bs \varepsilon_2(k)}}{\sqrt{\omega(k)}} \varrho (k)\sin(k\cdot x).
\end{align}
\end{itemize}
\end{lemm}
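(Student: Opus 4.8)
The plan is to realize $W$ as the second quantization $\Gamma(w)$ of a single one-particle unitary $w$ on the one-particle space $\mathfrak{k}:=\mathfrak{h}_1\oplus\mathfrak{h}_2\oplus\mathfrak{h}_3\oplus\mathfrak{h}_4\cong\ell^2(V^*)\oplus\ell^2(V^*)$. For any unitary $w$ on $\mathfrak{k}$ the operator $\Gamma(w)$ is unitary on $\Fock_{\rr}$ and obeys the functorial identities $\Gamma(w)\Phi_{\mathrm{S}}(f)\Gamma(w)^{-1}=\Phi_{\mathrm{S}}(wf)$ for $f\in\mathfrak{k}$ and $\Gamma(w)\dG_{\mathrm{s}}(B)\Gamma(w)^{-1}=\dG_{\mathrm{s}}(wBw^{-1})$ for self-adjoint $B$. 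Since $A_j(x)=\Phi_{\mathrm{S}}(\eta_{x,j})$ with $\eta_{x,j}=\eta_{x,j}^{(1)}\oplus\eta_{x,j}^{(2)}$, and since the $j$-th component $(\bs{\theta}_x)_j$ of $\bs{\theta}_x$ equals $\theta_{1,x,j}\oplus\theta_{2,x,j}\oplus\theta_{3,x,j}\oplus\theta_{4,x,j}\in\mathfrak{k}$, it is enough to produce a unitary $w$ on $\mathfrak{k}$ with
\begin{align}
w\bs{\omega}w^{-1}=\bs{\omega}\qquad\text{and}\qquad w\,\eta_{x,j}=(\bs{\theta}_x)_j\ \ \text{for all }x\in\Lambda,\ j=1,2,3.
\end{align}
Then $W:=\Gamma(w)$ satisfies (i) by the first relation and (ii), component by component, by the second.

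To construct such a $w$, recall that $\bs{\omega}=\omega^{\oplus 4}$ has pure point spectrum and that each eigenspace $\mathfrak{k}_e$, $e\in\mathrm{spec}(\bs{\omega})$, is finite-dimensional, because only finitely many $k\in V^*$ satisfy $\omega(k)=e$. Thus $\mathfrak{k}=\bigoplus_e\mathfrak{k}_e$, and the requirement $w\bs{\omega}w^{-1}=\bs{\omega}$ is equivalent to asking that $w$ leave every $\mathfrak{k}_e$ invariant. Let $P_e$ be the orthogonal projection onto $\mathfrak{k}_e$. The key point is the identity
\begin{align}
\la P_e\eta_{x,j}\,|\,P_e\eta_{y,i}\ra_{\mathfrak{k}}=\la P_e(\bs{\theta}_x)_j\,|\,P_e(\bs{\theta}_y)_i\ra_{\mathfrak{k}},\qquad x,y\in\Lambda,\ i,j\in\{1,2,3\},
\end{align}
which one checks by inserting $e^{-ik\cdot x}=\cos(k\cdot x)-i\sin(k\cdot x)$ and using that the set $\{k\in V^*:\omega(k)=e\}$ is symmetric under $k\mapsto -k$, while $k\mapsto\vepsilon_{1,j}(k)\vepsilon_{1,i}(k)+\vepsilon_{2,j}(k)\vepsilon_{2,i}(k)$ is even (because $k\mapsto\vepsilon(k,1)$ is odd and $k\mapsto\vepsilon(k,2)$ is even, by the explicit choice of polarization vectors). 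The subspaces $\mathfrak{h}_a$ being mutually orthogonal, all cross terms drop out and both sides reduce to $e^{-1}\sum_{\omega(k)=e}\bigl[\vepsilon_{1,j}(k)\vepsilon_{1,i}(k)+\vepsilon_{2,j}(k)\vepsilon_{2,i}(k)\bigr]\varrho(k)^2\cos(k\cdot(x-y))$.

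Granting this identity, the assignment $P_e\eta_{x,j}\mapsto P_e(\bs{\theta}_x)_j$ extends $\BbbC$-linearly to a well-defined isometry of $V_e:=\mathrm{span}\{P_e\eta_{x,j}\mid x,j\}$ onto $W_e:=\mathrm{span}\{P_e(\bs{\theta}_x)_j\mid x,j\}$ inside the finite-dimensional space $\mathfrak{k}_e$; in particular $\dim V_e=\dim W_e$, so it extends — via any unitary from $\mathfrak{k}_e\ominus V_e$ onto $\mathfrak{k}_e\ominus W_e$ — to a unitary $w_e$ of $\mathfrak{k}_e$. Setting $w:=\bigoplus_e w_e$ yields a unitary on $\mathfrak{k}$ that leaves each $\mathfrak{k}_e$ invariant (hence commutes with $\bs{\omega}$) and satisfies $w\,\eta_{x,j}=(\bs{\theta}_x)_j$. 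Taking $W=\Gamma(w)$ and invoking the functoriality recalled above then gives (i) and (ii). The only real work lies in the inner-product identity, i.e., in verifying that passing from the complex exponentials $e^{-ik\cdot x}$ to the real pair $(\cos(k\cdot x),\sin(k\cdot x))$ preserves all two-point correlations; once that is in hand the construction of $w$ is routine and no genuine obstacle remains.
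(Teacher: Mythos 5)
Your argument is essentially correct, but it takes a genuinely different route from the paper. The paper's proof is constructive and very short: it uses the definition \eqref{DefEta} to split $A_j(x)$ as
\[
A_j(x)= \Phi_{\mathrm{S}}\bigl(\theta_{1,x,j}\oplus 0\oplus\theta_{3,x,j}\oplus 0\bigr)+\Pi_{\mathrm{S}}\bigl(0\oplus\theta_{2,x,j}\oplus 0\oplus\theta_{4,x,j}\bigr),
\]
and then takes the explicit $W=e^{-i\pi N_2/2}e^{-i\pi N_4/2}$ with $N_2=\dG_{\mathrm{s}}(0\oplus 1\oplus 0\oplus 0)$ and $N_4=\dG_{\mathrm{s}}(0\oplus 0\oplus 0\oplus 1)$ — i.e.\ $W=\Gamma(w_0)$ with the diagonal one-particle unitary $w_0=1\oplus(-i)\oplus 1\oplus(-i)$ — so that $W\Pi_{\mathrm{S}}(\cdot)W^{-1}=\Phi_{\mathrm{S}}(\cdot)$ on the second and fourth components; (i) is then the observation that $w_0$ commutes with $\bs\omega$. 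You instead prove \emph{existence} of a suitable $w$ nonconstructively: you observe that $\bs\omega$ has finite-dimensional eigenspaces stable under $k\mapsto -k$, compute that the Gram matrices of $\{P_e\eta_{x,j}\}$ and $\{P_e(\bs\theta_x)_j\}$ agree on each eigenspace (the imaginary part dies because $\vepsilon(\cdot,1)$ is odd, $\vepsilon(\cdot,2)$ even, and $\{\omega=e\}$ is $k\leftrightarrow -k$ symmetric — this checks out, and holds along the degenerate ray $k_1=k_2=0$ as well since there $\vepsilon$ is constant), invoke the standard extension-of-isometries lemma eigenspace by eigenspace, and then set $W=\Gamma(w)$. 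This is correct. What each buys: the paper's $W$ is explicit, immediate, and transparently compatible with the real structure $\mathfrak{X}_{\mathrm{rad},r}$ used in the surrounding Feynman--Schr\"odinger arguments; your argument is softer and requires the eigenspace decomposition and the isometry-extension step (and produces a $W$ that is not canonical, since the extension on $\mathfrak{k}_e\ominus V_e$ is arbitrary), but it does isolate the key invariance $\la\eta_{x,j}|\eta_{y,i}\ra=\la(\bs\theta_x)_j|(\bs\theta_y)_i\ra$, which is, in effect, the conceptual content of the explicit formula. For the purposes of this lemma the paper's explicit $w_0$ is the more efficient choice.
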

\begin{proof}Let $\Pi_{\mathrm{S}}(f)=\frac{i}{\sqrt{2}}(a(f)^*-a(f))$.
By (\ref{DefEta}), we have 
\begin{align}
A_j(x)= \Phi_{\mathrm{S}}
\Big(
\theta_{1, x, j} \oplus 0 \oplus \theta_{3, x, j} \oplus 0
\Big)+
\Pi_{\mathrm{S}}
\Big(
0\oplus \theta_{2, x, j} \oplus 0 \oplus \theta_{4, x, j}
\Big).
 \end{align} 
We set $W=e^{-i\pi N_2/2}e^{-i \pi N_4/2}$, where 
\begin{align}
N_2=\dG_{\mathrm{s}}(0\oplus 1\oplus 0\oplus 0), \ \ N_4=\dG_{\mathrm{s}}(0\oplus 0\oplus 0\oplus 1).
\end{align}
Then,  because $W 
\Pi_{\mathrm{S}}
\Big(
0\oplus \theta_{2, x, j} \oplus 0 \oplus \theta_{4, x, j}
\Big)W^{-1}=\Phi_{\mathrm{S}}
\Big(
0\oplus \theta_{2, x, j} \oplus 0 \oplus \theta_{4, x, j}
\Big)
$, we obtain (ii). To check (i) is easy.
\end{proof}

By the arguments in Section \ref{BosePre}  and Lemma \ref{UsefulIdnA}, we can regard the vector potential, $A(x)$, 
 as multiplication operators in $L^2(Q, d\mu)$. This representation is called the {\it 
 Feynman-Schr\"{o}dinger representation}, which is useful in constructing  Feynman-Kac-It\^o formulas  in the remainder of this section.

Let $\mathfrak{h}_{r, \lambda}$ be the real-Hilbert space of real-valued sequences in $\mathfrak{h}_{\lambda}$.
Let us consider the Gaussian random process  indexed by $\mathfrak{X}_{\rr, r}:=\mathfrak{h}_{r, 1}\oplus \mathfrak{h}_{r, 2}\oplus \mathfrak{h}_{r, 3}\oplus \mathfrak{h}_{r, 4}$: $\{\phi(f)\, |\, f\in \mathfrak{X}_{\rr, r}\}$, and let $(Q, \mathcal{F}, \mu)$ be its underlying probability space. The vector potential $A(x)$ can be expressed as
$\phi({\bs \theta}_x)$.  
We readily confirm that $\sum_{k\in V^*} \omega(k)^{-4}<\infty$. Hence, by choosing $A=
{\bs \omega}$, all results  in Section \ref{TraceBoseGene} hold.

\subsection{A  Feynman-Kac-It\^o formula for $H_{\mathrm{rad}}$ }

In the remainder of this section, we assume that $d\le 3$.
For notational simplicity, we express $\mathfrak{X}_{\rr, r}$ as 
$\mathfrak{X}_r$ in this section. To construct a Feynman-Kac-It\^o formula for $H_{\mathrm{rad}}$,
 we need some preliminaries.

\begin{lemm}
Let 
\begin{align}
\Theta_{ XY}=\delta_{\sigma(X)\sigma(Y)}\int_{C_{xy}}  {\bs \theta}_{ r}\cdot dr \in \mathfrak{X}_r, \ \ X=(x, \sigma(X)), Y=(y, \sigma(Y))\in \Omega, \label{LineDef}
\end{align}
where ${\bs \theta}_r$ is defined in Lemma \ref{UsefulIdnA}.
Then,  for all $s\le u \le t$, we have
\begin{align}
\int_s^t j_u\Theta(d X_u^{(j)})\in L^2\big(L_{\beta}, d\Me; \mathfrak{X}_{-1, r}^{\beta}\big), \ \ j=1, \dots, N.
\end{align}
Recall that the probability space $(L_{\beta}, \Me)$ and the random variable  $\Theta(dX_u^{(j)})$ are defined in Section \ref{FKIEl}.
\end{lemm}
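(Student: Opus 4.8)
The plan is to reduce the claimed $L^{2}$-integrability to two deterministic norm estimates and one probabilistic moment estimate, all elementary in the present finite-volume setting. First I would unravel the stochastic line integral: mimicking the construction of $\int_s^t\alpha(dX_u)$ in Section~\ref{EleAuxFKF}, for $0\le s\le t\le\beta$ and a fixed path the object $\int_s^t j_u\Theta(dX_u^{(j)})$ is the \emph{finite} sum $\sum_n j_{J_n}\Theta_{X^{(j)}_{J_{n-1}}X^{(j)}_{J_n}}$ extended over the jump times $J_n\in(s,t]$ of the $j$-th electron, each summand $j_{J_n}\Theta_{X^{(j)}_{J_{n-1}}X^{(j)}_{J_n}}=\delta_{J_n}\otimes\Theta_{X^{(j)}_{J_{n-1}}X^{(j)}_{J_n}}$ lying in $\mathfrak{X}_{-1,r}^{\beta}$; hence it is a well-defined, measurable $\mathfrak{X}_{-1,r}^{\beta}$-valued map, and it remains to prove $\int_{L_\beta}d\Me\,\bigl\|\int_s^t j_u\Theta(dX_u^{(j)})\bigr\|_{-1,\beta}^2<\infty$.

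For the deterministic part I would establish (a) $C:=\max_{X,Y\in\Omega}\|\Theta_{XY}\|_{\mathfrak{X}_r}<\infty$ and (b) a uniform bound on the covariance kernel \eqref{KeyEqCo}. Bound (a) holds because $\sum_{k\in V^*}\omega(k)^{-1}|\varrho(k)|^2$ is a finite sum (it runs over the ball $|k|\le\kappa$ only), so $r\mapsto{\bs\theta}_r$ is bounded in $\mathfrak{X}_r$ uniformly in $r$, and each curve $C_{xy}$ is piecewise smooth of finite length with only finitely many neighbour pairs $\{x,y\}$ on $\Lambda$. For (b) I would use that $A={\bs\omega}\ge a_0>0$, as noted in Section~\ref{BosePre}: then $B(\beta)=\sqrt{\coth(\beta A/2)}$ satisfies $1\le B(\beta)\le\sqrt{\coth(\beta a_0/2)}<\infty$, so $|\!|\!|\cdot|\!|\!|_{\beta}$ is equivalent to $\|\cdot\|_{\mathfrak{X}_r}$, which gives $\mathfrak{X}_{r,\gamma}^{(\beta)}=\mathfrak{X}_r$ (so $j_u$ is defined on all of $\mathfrak{X}_r$, in particular on every $\Theta_{XY}$); and the operator $(1-e^{-\beta A})^{-1}(e^{-(\beta-|t-s|)A}+e^{-|t-s|A})$ has norm $\le 2(1-e^{-\beta a_0})^{-1}$ for all $s,t\in[0,\beta]$, so by \eqref{KeyEqCo},
\begin{align}
\bigl|\la j_s f\,|\,j_t g\ra_{-1,\beta}\bigr|\le\frac{\|f\|_{\mathfrak{X}_r}\|g\|_{\mathfrak{X}_r}}{1-e^{-\beta a_0}},\qquad s,t\in[0,\beta],\ f,g\in\mathfrak{X}_r.
\end{align}
Expanding the square and writing $\Theta_n:=\Theta_{X^{(j)}_{J_{n-1}}X^{(j)}_{J_n}}$, bounds (a) and (b) then give the pathwise estimate
\begin{align}
\Bigl\|\int_s^t j_u\Theta(dX_u^{(j)})\Bigr\|_{-1,\beta}^2=\sum_{n,m}\la j_{J_n}\Theta_n\,|\,j_{J_m}\Theta_m\ra_{-1,\beta}\le\frac{C^2}{1-e^{-\beta a_0}}\,N^{(j)}(\beta)^2,
\end{align}
where $N^{(j)}(\beta)$ denotes the number of jumps of the $j$-th electron in $[0,\beta]$.

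It then remains to bound $\Ex_{\Me}\bigl[N^{(j)}(\beta)^2\bigr]$. Under the product law $P_{\bs X}=\prod_iP_{X^{(i)}}$ the $j$-th coordinate is a continuous-time Markov chain on the finite set $\Omega$ with jump rates $d(x)\le D_{\max}:=\max_{x\in\Lambda}d(x)<\infty$, so $N^{(j)}(\beta)$ is stochastically dominated by a Poisson variable of mean $D_{\max}\beta$ and $\Ex_{\bs X}[N^{(j)}(\beta)^2]\le D_{\max}\beta+(D_{\max}\beta)^2$ uniformly in $\bs X$. Since by \eqref{Measurenu} the loop measure $\Me$ on $L_\beta$ is dominated by $(\mathrm{Norm.})^{-1}\sum_{\bs X\in[\Omega_{\neq}^N]}\sum_{\tau\in\mathfrak{S}_N(\bs X)}(N!)^{-1}P_{\bs X}$, with $\mathrm{Norm.}=\Tr_{\ell^2_{\mathrm{s}}(\Omega^N_{\neq})}[e^{-\beta{\bs L}}]>0$ by \eqref{Norm.} and both sums finite, it follows that $\Ex_{\Me}[N^{(j)}(\beta)^2]<\infty$; combining with the previous display closes the argument. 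The only steps needing a little care are the identification of $j_u$ on all of $\mathfrak{X}_r$ and the transfer of the jump-count moment bound from the unconditioned chain to the reweighted measure $\Me$; the real content is the uniform bound $C<\infty$ together with the finiteness of all moments of $N^{(j)}(\beta)$, and I do not anticipate any serious obstacle.
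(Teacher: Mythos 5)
Your proof is correct, but it takes a cruder route than the paper does at the key step.

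Both arguments begin by unravelling the stochastic line integral into the finite sum $\sum_n j_{J_n}\Theta_n$ and invoking the uniform boundedness of the $\Theta_{XY}$ (the paper obtains this by an explicit computation after fixing $C_{xy}$ to be the straight line segment and deriving the pointwise bound $|\Theta_{\lambda,XY}(k)|\le\varrho(k)/\sqrt{\omega(k)}$, which yields the same constant you call $C$ more concretely). The divergence comes in handling the square: you bound \emph{every} cross term $\la j_{J_n}\Theta_n|j_{J_m}\Theta_m\ra_{-1,\beta}$ uniformly by the constant $C^2/(1-e^{-\beta a_0})$, which forces the estimate $\|\cdot\|_{-1,\beta}^2\le\mathrm{const}\cdot N^{(j)}(\beta)^2$ and hence requires a \emph{second}-moment bound $\Ex_{\Me}\bigl[N^{(j)}(\beta)^2\bigr]<\infty$. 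The paper instead passes to a bound in which only the diagonal survives, obtaining $\lesssim\sum_i\la\Theta_i|B(\beta)^2\Theta_i\ra_{\mathfrak{X}_r}\lesssim N^{(j)}(t-s)$, so that a \emph{first}-moment estimate $\Ex_{\Me}[N^{(j)}(t)]=O(t)$ (Proposition \ref{NExpLe}) suffices; this also makes the bound $O(t-s)$ rather than $O(\beta^2)$, which matters in Lemma \ref{Cauchy} where the same family of estimates must be summed over a dyadic partition to prove the Cauchy property of $(C_n^{(j)})$. Your argument is entirely adequate for the lemma as stated — all moments of $N^{(j)}(\beta)$ are finite by Poisson domination with rate $D_{\max}=\max_x d(x)$, and your remark that $\Me$ is dominated (up to a constant) by the unconditioned $P_{\bs X}$-averages correctly transfers the moment bound — but it would not directly substitute for the paper's sharper estimate in the proof of Lemma \ref{Cauchy}, where the gain of a factor $|t-s|$ per sub-interval is what makes the dyadic telescope summable.
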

\begin{proof}
Note that the line integral in (\ref{LineDef}) depends solely on the points $x$ and $y$, and thus independent
 of the path between them.  Accordingly, by choosing $C_{xy}$ as 
 $C_{xy}=\{ (1-s)x+sy\in V\, |\, s\in [0, 1]\}$, we obtain
\begin{align}
\Theta_{1, XY}(k)&=\frac{\varrho(k)}{\sqrt{\omega(k)}} {\bs \varepsilon}_{1}(k)\cdot \frac{x-y}{|x-y|}
\frac{\sin(k\cdot x)-\sin(k\cdot y)}{k\cdot (y-x)}, \no
\Theta_{2, XY}(k)&=\frac{\varrho(k)}{\sqrt{\omega(k)}} {\bs \varepsilon}_{1}(k)\cdot \frac{x-y}{|x-y|}
\frac{\cos(k\cdot x)-\cos(k\cdot y)}{k\cdot (y-x)}.
\end{align}
We can also get similar formulas for $\Theta_{3, XY}$ and $\Theta_{4, XY}$.
Using these formulas, we obtain 
\begin{align}
|
\Theta_{\lambda, XY}(k)
| \le \frac{\varrho(k)}{\sqrt{\omega(k)}}. \label{ThetaEst}
\end{align}
We have
\begin{align}
&\bigg\|
\int_s^t j_u\Theta(d X_u^{(j)})
\bigg\|^2_{
L^2\big(L_{\beta}, d\Me; \mathfrak{X}_{-1, r}^{\beta}\big)
} \no
\le &2 \int_{L_{\beta}} \sum_{i=N^{(j)}(s)+1}^{N^{(j)}(t)} \Big\la j_{u} \Theta_{X^{(j)}_{J_{i-1} J_{i}}} \Big|
j_u \Theta_{ X^{(j)}_{J_{i-1} J_{i}}}\Big\ra_{
\mathfrak{X}_{-1, r}^{\beta}
} d\Me\no
=& 2 \int_{L_{\beta}} \sum_{i=N^{(j)}(s)+1}^{N^{(j)}(t)} \Big\la  \Theta_{ X^{(j)}_{J_{i-1} J_{i}}} \Big| B(\beta)^2
 \Theta_{ X^{(j)}_{J_{i-1} J_{i}}}\Big\ra_{\mathfrak{X}_r} d\Me\no
 \le & 8\int_{L_{\beta}} \sum_{i=N^{(j)}(s)+1}^{N^{(j)}(t)} \Big\la \omega^{-1/2} \varrho \Big|B(\beta)^2 \omega^{-1/2} \varrho\Big\ra_{\ell^2_r(V^*)} d\Me\no
 =& 8 \big\|B(\beta) \omega^{-1/2} \varrho \big\|^2_{\ell^2(V^*)}\int_{L_{\beta}} N^{(j)}(t-s) d\Me. \label{L2Pf}
\end{align}
The second inequality follows from (\ref{ThetaEst}).
By Proposition \ref{NExpLe}, the right hand side of (\ref{L2Pf}) is finite. \end{proof}

\begin{lemm}\label{Cauchy}
For each  $n\in \BbbN$, set  $t_i^{(n)}=\beta i/{2^{n}},\ i=0, 1, \dots, 2^{n}$.
We define an $\mathfrak{X}_{-1, r}^{\beta}$-valued  random variable on $L_{\beta}$ by 
\begin{align}
C_n^{(j)}=\sum_{i=1}^{2^n}\int_{t_{i-1}^{(n)}}^{t_i^{(n)}} j_{t_i^{(n)}} \Theta(dX_u^{(j)}),\ \ j=1, \dots, N. \label{CnDef}
\end{align}
Then $(C_n^{(j)})_{n=1}^{\infty}$ is a Cauchy sequence in $L^2\big(L_{\beta}, d\Me; \mathfrak{X}_{-1, r}^{\beta}\big)$.
\end{lemm}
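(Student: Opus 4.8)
The plan is to establish the Cauchy property through a single quantitative estimate on the increments $C_{n'}^{(j)}-C_n^{(j)}$, valid uniformly in $n'>n$. Unfolding the definition (\ref{CnDef}) of $C_n^{(j)}$ together with the definition of $\int_s^t\alpha(dX_u)$, one sees that on the event $L_{\beta}$ each $C_n^{(j)}$ is the \emph{finite} random sum
\[
C_n^{(j)}=\sum_{m=1}^{N^{(j)}(\beta)} j_{\tau_m^{(n)}}\,\Theta_{X^{(j)}_{J_{m-1}}X^{(j)}_{J_m}},\qquad \tau_m^{(n)}:=t_i^{(n)}\text{ when }J_m\in(t_{i-1}^{(n)},t_i^{(n)}],
\]
indexed by the jump times $J_1<\cdots<J_{N^{(j)}(\beta)}$ of $X^{(j)}_{\bullet}$ in $(0,\beta]$; thus $C_n^{(j)}$ and $C_{n'}^{(j)}$ differ only through the evaluation points $\tau_m^{(n)}$ versus $\tau_m^{(n')}$. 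I would use three ingredients: (a) a modulus of continuity for the isometries $t\mapsto j_t$ that is \emph{linear} in the time increment; (b) the uniform pointwise bound (\ref{ThetaEst}) on the form factors $\Theta_{XY}$; and (c) the finiteness of $\int_{L_{\beta}}N^{(j)}(\beta)^2\,d\Me$, furnished by Proposition \ref{NExpLe}. (That each $C_n^{(j)}$ already lies in $L^2(L_{\beta},d\Me;\mathfrak{X}_{-1,r}^{\beta})$ is part of the previous lemma.)

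For (a), I would compute $\|(j_t-j_s)f\|_{-1,\beta}^2=\|j_tf\|_{-1,\beta}^2+\|j_sf\|_{-1,\beta}^2-2\,\la j_sf|j_tf\ra_{-1,\beta}$ directly from (\ref{KeyEqCo}), with $A={\bs\omega}$; writing $\tau=t-s$ and using $(1-e^{-\beta A})^{-1}(1+e^{-\beta A})=\coth(\beta A/2)$, the three terms combine to
\[
\big\|(j_t-j_s)f\big\|_{-1,\beta}^2=\big\la f\big|(1-e^{-\beta A})^{-1}\big[(1+e^{-\beta A})-(e^{-\tau A}+e^{-(\beta-\tau)A})\big]f\big\ra_{\mathfrak{X}_r},
\]
and the bracket factors as $(1-e^{-\tau A})(1-e^{-(\beta-\tau)A})$. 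Since $0\le 1-e^{-(\beta-\tau)A}\le 1-e^{-\beta A}$ and $1-e^{-\tau A}\le\tau A$ for $0\le\tau\le\beta$, this collapses to $\|(j_t-j_s)f\|_{-1,\beta}^2\le|t-s|\,\|A^{1/2}f\|_{\mathfrak{X}_r}^2$. Taking $f=\Theta_{XY}$, using (\ref{ThetaEst}) and that $A^{1/2}$ acts componentwise as multiplication by $\sqrt{\omega}$, I get the \emph{uniform} bound $\|A^{1/2}\Theta_{XY}\|_{\mathfrak{X}_r}^2\le 4\|\varrho\|_{\ell^2(V^*)}^2$ for all $X,Y\in\Omega$, hence
\[
\big\|(j_t-j_s)\Theta_{XY}\big\|_{-1,\beta}^2\le 4\,\|\varrho\|_{\ell^2(V^*)}^2\,|t-s|,\qquad X,Y\in\Omega.
\]

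Finally, for $n'>n$ the level-$n'$ dyadic partition refines the level-$n$ one, so whenever $J_m\in(t_{i-1}^{(n)},t_i^{(n)}]$ one has $J_m\le\tau_m^{(n')}\le\tau_m^{(n)}$, whence $|\tau_m^{(n)}-\tau_m^{(n')}|\le\beta\,2^{-n}$. Then, pointwise on $L_{\beta}$, the Cauchy--Schwarz inequality over the $N^{(j)}(\beta)$ jumps combined with (a) gives
\[
\big\|C_{n'}^{(j)}-C_n^{(j)}\big\|_{\mathfrak{X}_{-1,r}^{\beta}}^2\le N^{(j)}(\beta)\sum_{m=1}^{N^{(j)}(\beta)}\big\|\big(j_{\tau_m^{(n')}}-j_{\tau_m^{(n)}}\big)\Theta_{X^{(j)}_{J_{m-1}}X^{(j)}_{J_m}}\big\|_{-1,\beta}^2\le 4\beta\,\|\varrho\|_{\ell^2(V^*)}^2\,2^{-n}\,N^{(j)}(\beta)^2 .
\]
Integrating against $d\Me$ and using Proposition \ref{NExpLe} for $\int_{L_{\beta}}N^{(j)}(\beta)^2\,d\Me<\infty$, I obtain $\|C_{n'}^{(j)}-C_n^{(j)}\|_{L^2(L_{\beta},d\Me;\mathfrak{X}_{-1,r}^{\beta})}^2\le C\,2^{-n}$ with the finite constant $C=4\beta\,\|\varrho\|_{\ell^2(V^*)}^2\int_{L_{\beta}}N^{(j)}(\beta)^2\,d\Me$, independent of $n$ and $n'$; letting $n\to\infty$ shows $(C_n^{(j)})_{n\ge1}$ is Cauchy. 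The one genuinely delicate point is (a): extracting a modulus of continuity for $j_t$ that is linear in $|t-s|$ \emph{and} uniform over the infinitely many admissible vectors $\Theta_{XY}$ — the linearity is the cancellation in (\ref{KeyEqCo}) (the identity $(1+e^{-\beta A})-(e^{-\tau A}+e^{-(\beta-\tau)A})=(1-e^{-\tau A})(1-e^{-(\beta-\tau)A})$), and the uniformity is exactly what (\ref{ThetaEst}) supplies; everything else is Cauchy--Schwarz and the jump-count moment bound.
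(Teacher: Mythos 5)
Your proof is correct, and it takes a genuinely different route from the paper's. The paper estimates the \emph{consecutive} increments $C_{n+1}^{(j)}-C_n^{(j)}$ by a refined spectral inequality $K_{\beta,n}(x)\le 2^{-(n+1)}R_{\beta,n}(x)$ inside the bilinear form $\la D\mid K_{\beta,n}(A)D\ra$, which yields $\|C_{n+1}^{(j)}-C_n^{(j)}\|^2\le C'\,2^{-(n+1)}$, and then telescopes; this only ever needs the \emph{first} moment of the jump count (and at that only near $t=0$, which is what Proposition \ref{NExpLe} certifies). You instead produce a single two-index estimate $\|C_{n'}^{(j)}-C_n^{(j)}\|^2\le C\,2^{-n}$ directly, uniformly in $n'>n$. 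The engine is a clean linear modulus of continuity $\|(j_t-j_s)f\|_{-1,\beta}^2\le |t-s|\,\|A^{1/2}f\|^2$, which you derive from the algebraic factorization $(1+e^{-\beta A})-(e^{-\tau A}+e^{-(\beta-\tau)A})=(1-e^{-\tau A})(1-e^{-(\beta-\tau)A})$ together with $(1-e^{-\beta A})^{-1}(1-e^{-(\beta-\tau)A})\le 1$ and $1-e^{-\tau A}\le\tau A$; this is a nice observation and is in fact the telescoped counterpart of the paper's $K_{\beta,n}\le 2^{-(n+1)}R_{\beta,n}$. You then pay for the directness with a Cauchy--Schwarz over the $N^{(j)}(\beta)$ jumps, which introduces a factor $N^{(j)}(\beta)^2$ rather than the paper's $N^{(j)}$.

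The one place where your write-up over-reaches is step (c): you cite Proposition \ref{NExpLe} for $\int_{L_\beta}N^{(j)}(\beta)^2\,d\Me<\infty$, but that proposition only asserts $\int_{L_\beta}N^{(j)}(t)\,d\Me=O(t)$ as $t\to+0$, a first-moment statement near the origin. What you actually need is a second-moment bound at fixed $\beta$. This is true and not hard: the distributional tail $P_X(N(t)=n)\le\frac{(d_0t)^{n+1}}{(n+1)!}e^{-d_0t}$ proved in Lemma \ref{SnDistribution} gives $E_X[N(\beta)^2]<\infty$, and $\Me$ is a normalized convex combination of restrictions of $\prod_j P_{X^{(j)}}$ (cf.\ \eqref{Measurenu}), so $\int_{L_\beta}N^{(j)}(\beta)^2\,d\Me<\infty$ follows. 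So this is a citation mismatch rather than a genuine gap, but to make your argument self-contained you should derive the second-moment bound from Lemma \ref{SnDistribution} explicitly rather than pointing at Proposition \ref{NExpLe}. With that small repair, your argument is a valid and somewhat more transparent alternative to the paper's telescoping proof.
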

\begin{proof}
We apply the standard argument in  probability theory, see,  e.g.,  \cite{JHB,Simon2}.
First, note that 
\begin{align}
C_{n+1}^{(j)}-C_n^{(j)}=\sum_{i=1}^{2^{n}} \int^{t_{2i-1}^{(n+1)}}_{t_{2i-2}^{(n+1)}}
\Big(j_{t_{2i-1}^{(n+1)}}-j_{t_{2i}^{(n+1)}}\Big) \Theta(dX_u^{(j)}).
\end{align}
In the remainder of this proof, we abbreviate $t_i^{(n+1)}$ as $t_i$. 
For $s<t\le \beta$, we set 
\begin{align}
D_{[s, t]}=
\int_s^t \Theta (dX_u^{(j)}),\ \ j=1, \dots, N.
\end{align}
We have
\begin{align}
\Big\|C_{n+1}^{(j)}-C_n^{(j)}\Big\|^2_{
L^2\big(L_{\beta}, d\Me; \mathfrak{X}_{-1, r}^{\beta}\big)
}
\le \sum_{i=1}^{2^n}2\Big\|
\big(j_{t_{2i-1}}-j_{t_{2i}}\big)D_{
[t_{2i-2}, t_{2i-1}]
}
\Big\|^2_{
L^2\big(L_{\beta}, d\Me; \mathfrak{X}_{-1, r}^{\beta}\big)
}.
\end{align}
For each $x\ge 0$, we set 
\begin{align}
K_{\beta, n}(x)=\coth\Big(\frac{\beta x}{2}\Big)-\frac{e^{-(\beta-2^{-(n+1)})x}+e^{-2^{-(n+1)}x}}{1-e^{-\beta x}}
.
\end{align}
We readily confirm that 
\begin{align}
K_{\beta, n}(x)\le \frac{1}{2^{n+1}} R_{\beta, n}(x),\ \ R_{\beta, n}(x)=\beta x\frac{1-e^{-(\beta-s)x}}{1-e^{-\beta x}}.
\end{align}
By using this, we obtain 
\begin{align}
&\Big\|
\big(j_{t_{2i-1}}-j_{t_{2i}}\big)D_{
[t_{2i-2}, t_{2i-1}]
}
\Big\|^2_{
L^2\big(L_{\beta}, d\Me; \mathfrak{X}_{-1, r}^{\beta}\big)
}\no
=&
\int_{L_{\beta}} d\Me 
\Big\la D_{
[t_{2i-2}, t_{2i-1}]
} \Big|K_{\beta, n}(A)  D_{
[t_{2i-2}, t_{2i-1}]
}\Big\ra_{\mathfrak{X}_r}\no
\le & \frac{1}{2^{n+1}} \int_{L_{\beta}}d\Me
\Big\la D_{
[t_{2i-2}, t_{2i-1}]
} \Big|R_{\beta, n}(A)  D_{
[t_{2i-2}, t_{2i-1}]
}\Big\ra_{\mathfrak{X}_r}.
\end{align}
In addition, 
\begin{align}
&\int_{L_{\beta}}d\Me 
\Big\la D_{
[t_{2i-2}, t_{2i-1}]
} \Big|R_{\beta, n}(A)  D_{
[t_{2i-2}, t_{2i-1}]
}\Big\ra_{\mathfrak{X}_r}\no
\le & 2 \int_{L_{\beta}} d\Me \sum_{k=N^{(j)}(t_{2i-2})+1}^{N^{(j)}(t_{2i-1})} 
\Big\la \Theta_{X^{(j)}_{J_{k-1}} X_{J_k}}
\Big|R_{\beta, n}(A)   \Theta_{X^{(j)}_{J_{k-1}} X_{J_k}}
\Big\ra_{\mathfrak{X}_r}\no
\le & 8\Big\|
\omega ^{-1/2} \sqrt{R_{\beta, n}(\omega) }\varrho
\Big\|^2_{\ell^2_r(V^*)} \int_{L_{\beta}}d\Me N^{(j)}(t_{2i-1}-t_{2i-2})\no
\le& \frac{C}{2^{n+1}},
\end{align}
where $C$ is some constant independent of $n$.
In the second inequality, we used (\ref{ThetaEst}). In the last inequality, we applied Proposition \ref{NExpLe}.
To sum, we arrive at 
\begin{align}
\Big\|C_{n+1}^{(j)}-C_n^{(j)}\Big\|^2_{
L^2\big(L_{\beta}, d\Me; \mathfrak{X}_{-1, r}^{\beta}\big)
} \le \frac{C'}{2^{n+1}},
\end{align}
where $C'$ is some constant independent of $n$. Hence, for $m<n$, we have
\begin{align}
\|C_{n}-C_m\|^2_{
L^2\big(L_{\beta}, d\Me; \mathfrak{X}_{-1, r}^{\beta}\big)
}
\le \sum_{i=m}^n \frac{C'}{2^{i+1}}.
\end{align}
Thus, we are done.
 \end{proof}

\begin{define}\label{ADef}
{\rm
For each $j=1, \dots, N$, we define an $\mathfrak{X}_{-1, r}^{\beta}$-valued random variable on $L_{\beta}$
by 
\begin{align}
{\bs a}_{\beta}\big(X^{(j)}_{\bullet}\big)=\lim_{n\to \infty} C_n^{(j)},
\end{align}
where the right hand side exists in $L^2$-sense. 
}
\end{define}

\begin{Thm}\label{FKFFull}
Suppose that  $d\le 3$.
Let 
\begin{align}
Z_{\rr}(\beta)=\Tr_{\mathfrak{H}_{\rr}^{(N)}}\Big[
e^{-\beta H_{\rr}}
\Big].
\end{align}
In addition, let
\begin{align}
{\bs Z}_{\rr}(\beta)=\Tr_{\ell^2_{\mathrm{s}}(\Omega_{\neq }^N)\otimes \Fock_{\rr}}\Big[
e^{-\beta({\bs L}+\dG_{\mathrm{s}}({\bs \omega}))}
\Big]. 
\end{align}
We define a probability measure on $\mathscr{Q}_{\beta,\rr}:=L_{\beta}\times Q_{\beta,\rr}$ by 
$\mathbb{P}_{\beta, \rr}=\Me\otimes \mu_{\beta, \rr}$. Then, for all $\beta >0$,  we have 
\begin{align}
&Z_{\mathrm{rad}}(\beta)\Big/{\bs Z}_{\rr}(\beta)\no
=& \int_{\mathscr{Q}_{\beta, \rr}} d\mathbb{P}_{\beta, \rr}
\exp\Bigg\{i \Phi({\bs A}_{\beta}({\bs X}_{\bullet}))
-\int_0^{\beta} V({\bs X}_s)ds+\beta b \sum_{j=1}^N \sigma(X_0^{(j)})
\Bigg\}
(-1)^{\pi({\bs X}_{\beta})}, \label{QEDTrace}
\end{align}
where ${\bs A}_{\beta}({\bs X}_{\bullet})=\sum_{j=1}^N {\bs a}_{\beta}\big(X_{\bullet}^{(j)}\big)$.
\end{Thm}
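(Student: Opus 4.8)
The plan is to run the Feynman--Kac--It\^o construction in two stages --- first for the electrons, via Theorems \ref{FKIMHubbard} and \ref{TrFKIMHubbard}, then for the free Bose field, via Theorem \ref{BoseFKF} --- and to glue them with the Trotter--Kato product formula, treating the Peierls factor $\exp\{i\int_{C_{xy}}A(r)\cdot dr\}$ as an operator-valued magnetic potential. First I would conjugate $H_{\rr}$ by the unitary $W$ of Lemma \ref{UsefulIdnA}: this leaves $\Hf=\dG_{\mathrm{s}}({\bs\omega})$, the Coulomb term and the Zeeman term $-2bS_{\mathrm{tot}}^{(3)}$ untouched, and turns $A(x)$ into $\Phi_{\mathrm{S}}({\bs\theta}_x)$, so by cyclicity of the trace it changes neither $Z_{\rr}(\beta)$ nor ${\bs Z}_{\rr}(\beta)$. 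Under the identification $\Fock_{\rr}\cong L^2(Q,d\mu)$ of Section \ref{BosePre}, the line integral $\int_{C_{xy}}A(r)\cdot dr$ is then multiplication by the unimodular function $e^{i\phi(\Theta_{xy})}$, with $\Theta_{xy}=\delta_{\sigma(X)\sigma(Y)}\int_{C_{xy}}{\bs\theta}_r\cdot dr$ the vector introduced at the start of this section; since all the $\Phi_{\mathrm{S}}(\Theta_{xy})$ commute, the whole family is simultaneously diagonalized by the $Q$-representation.

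Next, writing $H_{\rr}=H_{\mathrm{el}}+\Hf$ with $H_{\mathrm{el}}=\sum_{x,y,\sigma}(-t_{xy})e^{i\Phi_{\mathrm{S}}(\Theta_{xy})}c_{x\sigma}^*c_{y\sigma}+\tilde V-2bS_{\mathrm{tot}}^{(3)}$ --- an \emph{bounded} operator, since on the finite lattice $\Lambda$ all factors are bounded and the Peierls factors are unitary --- and $\Hf=\dG_{\mathrm{s}}({\bs\omega})\ge 0$ self-adjoint, the Trotter--Kato product formula gives $e^{-\beta H_{\rr}}$ as the strong limit over $M$ of $\big(e^{-(\beta/M)H_{\mathrm{el}}}e^{-(\beta/M)\Hf}\big)^M$. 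For each fixed field configuration $H_{\mathrm{el}}$ is a magnetic Hubbard Hamiltonian of the form \eqref{MagHubb} with $\alpha_{xy}=\phi(\Theta_{xy})$, so conditioning on the field and invoking the electronic trace formula (Theorem \ref{TrFKIMHubbard}) produces the electronic loop measure $\Me$ on $L_\beta$, the parity $(-1)^{\pi({\bs X}_\beta)}$, the weight $\exp\{-\int_0^\beta V({\bs X}_s)ds\}$, the Zeeman weight $\exp\{\beta b\sum_j\sigma(X_0^{(j)})\}$, and a residual field operator which --- for each electronic loop, with its almost surely finitely many jumps --- is a finite product of unimodular multiplications $e^{\pm i\phi(\Theta_e)}$ separated by factors $e^{-s\Hf}$. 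Two remarks make this step legitimate: the ``dynamically allowed'' combinatorics of Section \ref{TraceSect1} depends only on the lattice graph and not on the phases, so Theorem \ref{TrFKIMHubbard} applies with operator-valued $\alpha_\ell=\Phi_{\mathrm{S}}^{(\ell)}(\Theta_{\cdot})$; and its normalizer $\Tr_{\ell^2_{\mathrm{s}}(\Omega^N_{\neq})}[e^{-\beta{\bs L}}]$ times $\mathcal{Z}_{\bs\omega}(\beta)=\Tr_{\Fock_{\rr}}[e^{-\beta\dG_{\mathrm{s}}({\bs\omega})}]$ equals ${\bs Z}_{\rr}(\beta)$.

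For the remaining field trace I would apply Theorem \ref{BoseFKF} with $G_j(x)=e^{ix}$ and with the pooled ordered jump times of the electronic trajectories as insertion times, using $J_t\,e^{i\Phi_{\mathrm{S}}(f)}=e^{i\Phi(j_tf)}$; this gives $\int_{Q_{\beta,\rr}}d\mu_{\beta,\rr}\,\exp\{i\sum_{j=1}^N\Phi(\sum_{\text{jumps of }j}\pm\,j_{J_n}\Theta_{e_n})\}$, in which the inner sum is $\int_0^\beta j_u\Theta(dX^{(j)}_u)={\bs a}_\beta(X^{(j)}_\bullet)$ in the $L^2$-sense of Definition \ref{ADef} and Lemma \ref{Cauchy}, so the exponent equals $i\Phi({\bs A}_\beta({\bs X}_\bullet))$. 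The Trotter refinement $M\to\infty$ is what makes the field entering the Peierls phase time dependent, and combined with the dyadic refinement it is what reconstructs ${\bs a}_\beta$ out of the discretized Riemann sums $C_n^{(j)}$.

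Finally I would assemble everything: Fubini combines $\Me$ and $\mu_{\beta,\rr}$ into $\mathbb{P}_{\beta,\rr}$ on $\mathscr{Q}_{\beta,\rr}$, and dominated convergence disposes of the limits --- the field-dependent integrand has modulus one, while the electronic weight is uniformly integrable because $V$ is bounded below and the jump counts have all exponential moments (Proposition \ref{NExpLe}) --- so that \eqref{QEDTrace} follows, the discretization error in the Peierls phase being killed by the strong continuity and the contractivity of $t\mapsto J_t$. I expect the main obstacle to be exactly this double passage to the limit --- undoing the Trotter slicing while simultaneously reconstructing the continuous Peierls phase ${\bs a}_\beta$ --- together with making rigorous the ``conditioning on an operator-valued magnetic potential'' move; the two lemmas preceding Definition \ref{ADef}, which rest on the explicit form of $\Theta_{XY}$ and on the bounds $1\le B(\beta)\le\sqrt{\coth a_0}$, are precisely what keeps these errors under control.
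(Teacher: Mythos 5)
Your proposal follows the paper's own route essentially verbatim: apply Trotter--Kato to the split $H_{\rr}\cong H(\Phi_{\mathrm{S}}(\Theta))+\dG_{\mathrm{s}}(\bs\omega)$ with dyadic time steps, invoke Theorem~\ref{TrFKIMHubbard} for the electron trace and Theorem~\ref{BoseFKF} for the Bose trace, use $\Phi_t(f)=\Phi(j_tf)$ to recognise the discretised Riemann sum as $C_n^{(j)}$ of Lemma~\ref{Cauchy}, pass to the $L^2$-limit ${\bs a}_\beta$ of Definition~\ref{ADef}, and close with dominated convergence. The extra explanatory remarks you add --- spelling out the $W$-conjugation of Lemma~\ref{UsefulIdnA}, the factorisation ${\bs Z}_{\rr}(\beta)=\Tr_{\ell^2_{\mathrm{s}}(\Omega^N_{\neq})}[e^{-\beta{\bs L}}]\cdot\mathcal{Z}_{\bs\omega}(\beta)$, and why the Peierls factors are simultaneously diagonalised so that Theorem~\ref{TrFKIMHubbard} applies with operator-valued magnetic potentials --- are all correct and make the argument clearer; they do not change the underlying strategy.
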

\begin{proof}
Let   $\Theta=(\Theta_{XY})_{X, Y}$ be  a matrix defined through (\ref{LineDef}).
Note that $\Phi_{\mathrm{S}}(\Theta):=(\Phi_{\mathrm{S}}(\Theta_{XY}))_{X, Y}$ can be regarded as a magnetic potential.  
By the Trotter-Kato formula, we have 
\begin{align}
Z_{\mathrm{rad}}(\beta)\Big/{\bs Z}_{\rr}(\beta)
=\lim_{n\to \infty}\Tr\bigg[
\Big(
e^{-\beta H(\Phi_{\mathrm{S}}( \Theta))/2^n}e^{-\beta \dG({\bs \omega})/2^n}
\Big)^{2^n}
\bigg]\bigg/{\bs Z}_{\rr}(\beta).  \label{Trotter1}
\end{align}
Set $t_i=i\beta/2^n,\ i=0, 1, \dots, 2^n$. By Theorems \ref{TrFKIMHubbard} and \ref{BoseFKF}, we have
\begin{align}
&\mbox{the RHS of (\ref{Trotter1})}\no
=&\lim_{n\to \infty} \int_{\mathscr{Q}_{\beta, \rr}} d\mathbb{P}_{\beta, \rr}
\Tr_{\ell^2_{\mathrm{as}}(\Omega^N)}
\bigg[
e^{-t_1 H(\Phi_{t_1}(\Theta))} e^{-(t_2-t_1)H(\Phi_{t_2}(\Theta))} \cdots\no
&\ \ \ \cdots e^{-(t_{2^n}-t_{2^n-1}) H(\Phi_{t_{2^n}}(\Theta))}
\bigg]\bigg/\Tr_{\ell_{\mathrm{s}}^2(\Omega^N_{\neq})}\big[e^{-\beta {\bs L}}\big] \no
=&\lim_{n\to \infty} \int_{\mathscr{Q}_{\beta, \rr}} d\mathbb{P}_{\beta, \rr}
\exp\Bigg\{\sum_{j=1}^N \sum_{i=1}^{2^n} 
\mathcal{S}_{t_{i-1}, t_i}\Big(0, \Phi_{t_i}(\Theta)\, |\, X^{(j)}\Big)
-\int_0^{\beta}V({\bs X}_s)ds+\no
&\ \ \ \ \ \ +\beta b\sum_{j=1}^N \sigma(X_0^{(j)})
\Bigg\} (-1)^{\pi({\bs X}_{\beta})}. \label{Trotter2}
\end{align}
By the fact $\Phi_t(f)=\Phi(j_t f)$, we have 
\begin{align}
\sum_{j=1}^N\sum_{i=1}^{2^n} 
\mathcal{S}_{t_{i-1}, t_i}\Big(0, \Phi_{t_i}(\Theta)\, |\, X^{(j)}\Big)
= i\Phi\Bigg(\sum_{j=1}^N
\sum_{i=1}^{2^n} \int_{t_{i-1}}^{t_i} j_{t_i} \Theta(dX_u^{(j)})
\Bigg). \label{AppATro}
\end{align}
By Definition \ref{ADef}, we know that the right hand side of (\ref{AppATro})  converges to 
$i\Phi({\bs A}_{\beta}({\bs X}_{\bullet}))$ in $L^2$-sense.
Therefore, by the dominated convergence theorem, we arrive at the desired result. 
\end{proof}

\begin{rem}
{\rm 
There are some other constructions of the Feynman-Kac-It\^o formula for $H_{\rr}$,
see, e.g., \cite{FP,Miyao8}. Our construction in the present paper has the benefit of usability.
}
\end{rem}

\begin{Thm}\label{PartitionRho}
For $d\le 3$ and $\beta > 0$, there is a positive measure $\rho_{\beta, \rr}$ on $L_{\beta}$ such that 
\begin{align}
Z_{ \mathrm{rad}}(\beta) =\int_{L_{\beta}} d\rho_{\beta, \rr} (-1)^{\pi({\bs X}_{\beta})} \exp
\Bigg\{
\beta b \sum_{j=1}^N\sigma(X^{(j)}(0))
\Bigg\}. 
\end{align}
\end{Thm}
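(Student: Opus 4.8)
The plan is to derive Theorem~\ref{PartitionRho} from the full Feynman--Kac--It\^o representation of $Z_{\mathrm{rad}}(\beta)$ established in Theorem~\ref{FKFFull} by integrating out the photon field. Rewriting \eqref{QEDTrace} in the form
\begin{align*}
Z_{\mathrm{rad}}(\beta)={\bs Z}_{\rr}(\beta)\int_{\mathscr{Q}_{\beta,\rr}}d\mathbb{P}_{\beta,\rr}\;e^{i\Phi({\bs A}_\beta({\bs X}_\bullet))}\,G({\bs X}_\bullet),\qquad G({\bs X}_\bullet)=(-1)^{\pi({\bs X}_\beta)}e^{-\int_0^\beta V({\bs X}_s)\,ds+\beta b\sum_{j=1}^N\sigma(X_0^{(j)})},
\end{align*}
one sees that the integrand depends on the photon loop only through the factor $e^{i\Phi({\bs A}_\beta({\bs X}_\bullet))}$ and that $G$ depends only on ${\bs X}_\bullet\in L_\beta$. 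Since $\Lambda$ is finite, $V$ is a bounded multiplication operator, so the integrand is bounded in modulus; as $\mathbb{P}_{\beta,\rr}=\Me\otimes\mu_{\beta,\rr}$ is a probability measure, Fubini's theorem permits us to carry out the $Q_{\beta,\rr}$-integration first, for $\Me$-a.e.\ fixed ${\bs m}\in L_\beta$.

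The inner integral is computed using the Gaussian character of $\Phi$. By the construction in Section~\ref{TraceBoseGene}, $\{\Phi(f)\,|\,f\in\mathfrak{X}_{-1,r}^\beta\}$ is a mean-zero Gaussian family with covariance $\int_{Q_{\beta,\rr}}\Phi(f)\Phi(g)\,d\mu_{\beta,\rr}=\la f|g\ra_{-1,\beta}$; in particular $f\mapsto\Phi(f)$ is an isometry of $\mathfrak{X}_{-1,r}^\beta$ into $L^2(Q_{\beta,\rr},d\mu_{\beta,\rr})$ and
\begin{align*}
\int_{Q_{\beta,\rr}}d\mu_{\beta,\rr}\;e^{i\Phi(h)}=\exp\Big\{-\tfrac12\|h\|_{-1,\beta}^2\Big\},\qquad h\in\mathfrak{X}_{-1,r}^\beta .
\end{align*}
For $\Me$-a.e.\ loop ${\bs m}$, the value ${\bs A}_\beta({\bs X}_\bullet({\bs m}))=\sum_{j=1}^N{\bs a}_\beta(X^{(j)}_\bullet({\bs m}))$ is a genuine element of $\mathfrak{X}_{-1,r}^\beta$ (Definition~\ref{ADef}), and, after passing if necessary to a subsequence along which the $L^2(L_\beta,d\Me;\mathfrak{X}_{-1,r}^\beta)$-limit defining ${\bs a}_\beta(X^{(j)}_\bullet)$ holds $\Me$-almost surely, $\Phi({\bs A}_\beta({\bs X}_\bullet))$ coincides for a.e.\ ${\bs m}$ with $\Phi$ applied to that element. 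Hence, for such ${\bs m}$,
\begin{align*}
\int_{Q_{\beta,\rr}}d\mu_{\beta,\rr}\;e^{i\Phi({\bs A}_\beta({\bs X}_\bullet))}=\exp\Big\{-\tfrac12\big\|{\bs A}_\beta({\bs X}_\bullet)\big\|_{-1,\beta}^2\Big\}\ \ge0 .
\end{align*}

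Substituting this back and using Fubini once more, we obtain
\begin{align*}
Z_{\mathrm{rad}}(\beta)={\bs Z}_{\rr}(\beta)\int_{L_\beta}d\Me\;e^{-\frac12\|{\bs A}_\beta({\bs X}_\bullet)\|_{-1,\beta}^2}\,e^{-\int_0^\beta V({\bs X}_s)\,ds}\,(-1)^{\pi({\bs X}_\beta)}\exp\Big\{\beta b\sum_{j=1}^N\sigma(X_0^{(j)})\Big\},
\end{align*}
so it suffices to define the measure $\rho_{\beta,\rr}$ on $L_\beta$ by
\begin{align*}
d\rho_{\beta,\rr}:={\bs Z}_{\rr}(\beta)\,\exp\Big\{-\tfrac12\big\|{\bs A}_\beta({\bs X}_\bullet)\big\|_{-1,\beta}^2-\int_0^\beta V({\bs X}_s)\,ds\Big\}\,d\Me .
\end{align*}
Since ${\bs Z}_{\rr}(\beta)>0$ (it is the trace of a positive trace-class operator), the exponential weight is strictly positive, and $\Me$ is a probability measure on $L_\beta$, the measure $\rho_{\beta,\rr}$ is positive and finite, and with this choice the identity in Theorem~\ref{PartitionRho} holds. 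I expect the only genuinely delicate step to be the almost-sure identification of the Gaussian integral in the second display (reconciling the pointwise-in-${\bs m}$ meaning of ${\bs a}_\beta(X^{(j)}_\bullet)$ with the $L^2$-limit defining $\Phi({\bs A}_\beta({\bs X}_\bullet))$ inside Theorem~\ref{FKFFull}); the remaining ingredients are two applications of Fubini, the explicit Gaussian moment formula, and elementary positivity bookkeeping. The analogous statement for the Holstein--Hubbard partition function would be proved in exactly the same manner, with the phonon loop measure in place of $\mu_{\beta,\rr}$.
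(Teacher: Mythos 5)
Your proof is correct, and it reaches the same conclusion as the paper along a genuinely more explicit route. Both you and the paper start from the representation of Theorem~\ref{FKFFull}, apply Fubini to integrate out the photon field for $\Me$-a.e.\ ${\bs m}$, and then package the resulting weight and $e^{-\int_0^\beta V}$ into a measure on $L_\beta$. The difference is in how positivity of the photon integral is established. The paper's proof first conjugates by $\Gamma(e^{i\pi/2})$, replaces $\Phi({\bs A}_\beta({\bs X}_\bullet))$ by $\Pi({\bs A}_\beta({\bs X}_\bullet))$, sets $W({\bs X}_\bullet)=\int_{Q_{\beta,\rr}}d\mu_{\beta,\rr}\,e^{i\Pi({\bs A}_\beta({\bs X}_\bullet))}$, and invokes Proposition~\ref{PPPi} (positivity preservingness of $e^{i\Pi(f)}$, proved via a Weyl-operator argument) to conclude $W\ge 0$ without evaluating it. You instead compute the characteristic functional of the Gaussian family $\{\Phi(f)\}$ directly, obtaining $W({\bs X}_\bullet)=\exp\{-\frac12\|{\bs A}_\beta({\bs X}_\bullet)\|_{-1,\beta}^2\}$, which is manifestly non-negative; this gives the density in closed form and avoids the positivity-preservingness machinery entirely. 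The two computations agree (since $\la 1|e^{i\Pi(f)}1\ra=\la 1|e^{i\Phi(f)}1\ra$), so the measures coincide. Your flagging of the ${\bs m}$-a.e.\ identification of $\Phi({\bs A}_\beta({\bs X}_\bullet))$ with the $L^2$-limit from Definition~\ref{ADef}, resolved by passing to an a.s.-convergent subsequence, is exactly the right care to take; this subtlety is implicit in the paper as well. The only thing your approach loses relative to the paper's is that Proposition~\ref{PPPi} is a reusable tool in this circle of ideas, whereas your explicit Gaussian formula is specific to the case at hand; but as a proof of this theorem, yours is shorter and more transparent.
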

\begin{proof}
First, note that  the right hand side of (\ref{QEDTrace}) can be expressed as 
\begin{align}
\mbox{the RHS of (\ref{QEDTrace})}=\Big\la 1\Big|e^{i\Phi({\bs A}_{\beta}({\bs X}_{\bullet}))} F 1\Big\ra_{L^2(\mathscr{Q}_{\beta, \rr}, d\mathbb{P}_{\beta, \rr})}, \label{LaERa}
\end{align}
where $F=e^{-\int_0^{\beta} V({\bs X}_u)du+\beta b \sum_{j=1}^N \sigma(X^{(j)}(0))} (-1)^{\pi({\bs X}_{\beta})}$.
For each contraction operator $C$ on $\mathfrak{X}_{-1}^{\beta}$, we define  its second quantization by $\Gamma(C)1=1$ and 
$
\Gamma(C):
\Phi(f_1)\cdots \Phi(f_n)
:=:
\Phi(Cf_1)\cdots \Phi(Cf_n)
:,\ f_1, \dots, f_n\in \mathfrak{X}_{-1, r}^{\beta}
$.
Because $\Gamma(e^{i\pi/2})1=1$ and  
\begin{align}
\Gamma(e^{i\pi/2}) \Phi({\bs A}_{\beta}({\bs X}_{\bullet})) \Gamma(e^{-i\pi /2})
=\Pi({\bs A}_{\beta}({\bs X}_{\bullet})),
\end{align}
 we obtain 
\begin{align}
\mbox{the RHS of (\ref{LaERa})}&=  \Big\la 1\Big|e^{i\Pi({\bs A}_{\beta}({\bs X}_{\bullet}))} F 1\Big\ra_{L^2(\mathscr{Q}_{\beta, \rr}, d\mathbb{P}_{\beta, \rr})}\no
&=\int_{L_{\beta}}d\nu_{\beta} \int_{Q_{\beta, \rr}} d\mu_{\beta, \rr} 
e^{i\Pi({\bs A}_{\beta}({\bs X}_{\bullet}))} F({\bs X}_{\bullet}).
\end{align}
Let us  define a random variable on $L_{\beta}$ by 
\begin{align}
W({\bs X}_{\bullet})=\int_{Q_{\beta, \rr}} d\mu_{\beta, \rr} e^{i\Pi({\bs A}_{\beta}({\bs X}_{\bullet}))}.
\end{align}
By Proposition \ref{PPPi}, $W({\bs X}_{\bullet})$ is positive $\nu_{\beta}$-a.e..
By setting 
\begin{align}
d\rho_{\beta, \rr}={\bs Z}_{\rr}(\beta)W({\bs X}_{\bullet})e^{-\int_0^{\beta} V({\bs X}_u)du}d \nu_{\beta},
\end{align} we obtain the desired result. 
\end{proof}

\subsection{A  Feynman-Kac-It\^o formula for $H_{\mathrm{rad}, \infty}$ }

Let $C_n^{(j)}$ be the $\mathfrak{X}_{-1, r}^{\beta}$-valued random variable on $L_{\beta, \infty}$ defined by 
(\ref{CnDef}).  Using  arguments similar to those in  the proof of Lemma \ref{Cauchy} , we can also prove that  $
(C_n^{(j)})_{n=1}^{\infty}$ is a Cauchy  sequence in $L^2\big(L_{\beta, \infty},d\Mei; \mathfrak{X}_{-1, r}^{\beta}\big)$. Thus, we can define  an $\mathfrak{X}_{-1, r}^{\beta}$-valued random variable on $L_{\beta, \infty}$
by 
\begin{align}
{\bs a}_{\beta, \infty}\big(X^{(j)}_{\bullet}\big)=\lim_{n\to \infty} C_n^{(j)},
\end{align}
where the right hand side exists in $L^2$-sense. 
In what follows,  ${\bs a}_{\beta, \infty}$ is simply written as ${\bs a}_{\beta}$,  if no confusion arises.

Using arguments similar to those in the proof of Theorem \ref{FKFFull}, we can prove the following.
\begin{Thm}Assume that $N=|\Lambda|-1$ and $d\le 3$. 
Let $Z_{\mathrm{rad}, \infty}(\beta)$ be the partition function defined by (\ref{PartFunctNatuInf}).
Let 
\begin{align}
{\bs Z}_{\rr, \infty} (\beta)=\Tr_{P_{\mathrm{G}} \mathfrak{H}_{\rr}^{(N)}}\Big[
e^{-\beta({\bs L}_{\infty}+\dG_{\mathrm{s}}({\bs \omega}))}
\Big]. 
\end{align}
We define a probability measure on $\mathscr{Q}_{\beta, \rr, \infty}:=L_{\beta, \infty}\times Q_{\beta, \rr}$ by 
$\mathbb{P}_{\beta, \rr,  \infty}=\Mei\otimes \mu_{\beta, \rr}$. Then, for all  $\beta >0$,  we have 
\begin{align}
&Z_{\mathrm{rad}, \infty}(\beta)\Big/{\bs Z}_{\rr,  \infty}(\beta)\no
=& \int_{\mathscr{Q}_{\beta, \rr,  \infty}} d\mathbb{P}_{\beta, \rr, \infty}
\exp\Bigg\{i \Phi({\bs A}_{\beta}({\bs X}_{\bullet}))
-\int_0^{\beta} V_{{\rm o}}({\bs X}_s)ds+\beta b \sum_{j=1}^N \sigma(X_0^{(j)})
\Bigg\},
\end{align}
where ${\bs A}_{\beta}({\bs X}_{\bullet})=\sum_{j=1}^N {\bs a}_{\beta}\big(X_{\bullet}^{(j)}\big)$.
\end{Thm}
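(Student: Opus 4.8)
The plan is to repeat the proof of Theorem \ref{FKFFull} with every finite-$U$ ingredient replaced by its $U=\infty$ counterpart. First I would pass to the Feynman--Schr\"odinger representation of Section \ref{FKIRR}: there $\Hf=\dG_{\mathrm{s}}(\bs\omega)$, the quantized vector potential $A(x)$ becomes the multiplication operator $\phi(\bs\theta_x)$, and, fibrewise over $Q$, the hopping term of $H_{\mathrm{rad}}^{U=0}$ is the magnetic Hubbard hopping with magnetic potential $\Phi_{\mathrm{S}}(\Theta)$, where $\Theta=(\Theta_{XY})_{X,Y}$ is the matrix built from \eqref{LineDef}. Since $\dG_{\mathrm{s}}(\bs\omega)$ acts only on the Fock factor, it commutes with the Gutzwiller projection, so in this representation $H_{\mathrm{rad},\infty}=P_{\mathrm{G}}H_{\mathrm{rad}}^{U=0}P_{\mathrm{G}}$ is the fibrewise operator built from $H_\infty(\Phi_{\mathrm{S}}(\Theta))=P_{\mathrm{G}}H^{U=0}(\Phi_{\mathrm{S}}(\Theta))P_{\mathrm{G}}$ together with $\dG_{\mathrm{s}}(\bs\omega)$.

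Next, by the Trotter--Kato product formula,
\[
Z_{\mathrm{rad},\infty}(\beta)\big/{\bs Z}_{\rr,\infty}(\beta)=\lim_{n\to\infty}\Tr_{P_{\mathrm{G}}\mathfrak{H}_{\rr}^{(N)}}\Big[\big(e^{-\beta H_\infty(\Phi_{\mathrm{S}}(\Theta))/2^n}\,e^{-\beta\dG_{\mathrm{s}}(\bs\omega)/2^n}\big)^{2^n}\Big]\big/{\bs Z}_{\rr,\infty}(\beta).
\]
Writing $t_i=i\beta/2^n$ and applying Theorem \ref{FKIUINFINITY} to the electronic trace and Theorem \ref{BoseFKF} to the bosonic trace (legitimate with $A=\bs\omega$, since $\sum_{k\in V^*}\omega(k)^{-4}<\infty$, as recorded in Section \ref{FKIRR}), the $n$-th term becomes exactly the integral in \eqref{Trotter2}, except that it is taken over $\mathscr{Q}_{\beta,\rr,\infty}=L_{\beta,\infty}\times Q_{\beta,\rr}$ against $\mathbb{P}_{\beta,\rr,\infty}=\Mei\otimes\mu_{\beta,\rr}$, with $V$ replaced by $V_{\mathrm{o}}$. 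Crucially, the $U=\infty$ hypothesis with $N=|\Lambda|-1$ forces every dynamically allowed permutation to be even (see the discussion preceding Theorem \ref{FKIUINFINITY}), so the parity factor $(-1)^{\pi({\bs X}_\beta)}$ present in Theorem \ref{FKFFull} is identically $1$ on $L_{\beta,\infty}$ and drops out.

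Using $\Phi_t(f)=\Phi(j_tf)$, the magnetic part of the exponent equals $i\Phi\big(\sum_{j=1}^N C_n^{(j)}\big)$, with $C_n^{(j)}$ the random variable \eqref{CnDef} now regarded on $L_{\beta,\infty}$. By the Cauchy estimate obtained exactly as in Lemma \ref{Cauchy}, with $\Me$ replaced by $\Mei$ and Proposition \ref{NExpLe} applied on $(L_{\beta,\infty},\Mei)$, one has $C_n^{(j)}\to{\bs a}_{\beta,\infty}(X^{(j)}_\bullet)$ in $L^2(L_{\beta,\infty},d\Mei;\mathfrak{X}_{-1,r}^\beta)$, hence $\sum_j C_n^{(j)}\to{\bs A}_\beta({\bs X}_\bullet)$. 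Since $V_{\mathrm{o}}$ is bounded below on $\Omega_{\neq,\infty}^N$ and the magnetic-field term is bounded, while $e^{i\Phi(\cdot)}$ has modulus one, the integrands are uniformly bounded in $n$; $L^2$-convergence yields convergence in $\mathbb{P}_{\beta,\rr,\infty}$-measure, so along a subsequence $e^{i\Phi(\sum_j C_n^{(j)})}\to e^{i\Phi({\bs A}_\beta({\bs X}_\bullet))}$ almost everywhere, and the dominated convergence theorem gives the asserted identity.

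The step I expect to be the main obstacle is this last passage to the limit: one must check that the Trotter approximants $C_n^{(j)}$ converge to exactly the object ${\bs a}_{\beta,\infty}(X^{(j)}_\bullet)$ introduced just before the theorem, which requires re-running the estimates in the proof of Lemma \ref{Cauchy} on $(L_{\beta,\infty},\Mei)$ --- in particular re-checking that the jump-counting bound of Proposition \ref{NExpLe} is still available there --- and then upgrading $L^2$-convergence to almost-everywhere convergence of the bounded phase factors before invoking dominated convergence.
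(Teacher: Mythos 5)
Your argument mirrors the paper's intended proof exactly: the paper gives only the one-line remark ``Using arguments similar to those in the proof of Theorem~\ref{FKFFull}'', which you have correctly expanded by substituting Theorem~\ref{FKIUINFINITY} for Theorem~\ref{TrFKIMHubbard}, $V_{\rm o}$ for $V$, $(L_{\beta,\infty},\Mei)$ for $(L_{\beta},\Me)$, and observing that the parity factor is identically~$1$ (indeed Theorem~\ref{FKIUINFINITY} already carries no such factor). The potential obstacle you flag at the end is already disposed of in the text immediately preceding the theorem, which constructs ${\bs a}_{\beta,\infty}$ via exactly the Cauchy estimate you describe and invokes Proposition~\ref{NExpLe}(ii) for the jump-count bound on $(L_{\beta,\infty},\Mei)$.
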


We can also prove the following assertions.

\begin{coro}Assume that  $N=|\Lambda|-1$.
For $d\le 3$ and   $\beta > 0$, we have 
\begin{align}
Z_{\mathrm{rad}, \infty}(\beta)\le Z_{\mathrm{H}, \infty}(\beta),
\end{align}
where $Z_{\mathrm{H}, \infty}(\beta)$ is given by (\ref{PartFunctHubbInfty}).
\end{coro}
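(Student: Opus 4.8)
The plan is to obtain the bound from a diamagnetic‑type comparison of the two random‑loop representations. Starting from the Feynman--Kac--It\^o formula for $H_{\mathrm{rad},\infty}$ proved just above (which, because $N=|\Lambda|-1$ forces every dynamically allowed permutation to be even, already carries no sign factor $(-1)^{\pi({\bs X}_{\beta})}$), I would first perform the $\mu_{\beta,\rr}$‑integration. Exactly as in the proof of Theorem~\ref{PartitionRho}, one rotates $e^{i\Phi({\bs A}_{\beta}({\bs X}_{\bullet}))}$ into $e^{i\Pi({\bs A}_{\beta}({\bs X}_{\bullet}))}$ by the second quantization $\Gamma(e^{i\pi/2})$ (which fixes the constant function $1$), so that
\[
Z_{\mathrm{rad},\infty}(\beta)={\bs Z}_{\rr,\infty}(\beta)\int_{L_{\beta,\infty}}d\Mei\;W({\bs X}_{\bullet})\exp\Bigg\{-\int_0^{\beta}V_{{\rm o}}({\bs X}_s)\,ds+\beta b\sum_{j=1}^{N}\sigma(X_0^{(j)})\Bigg\},
\]
where $W({\bs X}_{\bullet})=\int_{Q_{\beta,\rr}}d\mu_{\beta,\rr}\,e^{i\Pi({\bs A}_{\beta}({\bs X}_{\bullet}))}=\big\la 1\big|e^{i\Pi({\bs A}_{\beta}({\bs X}_{\bullet}))}1\big\ra_{L^2(Q_{\beta,\rr},\,d\mu_{\beta,\rr})}$. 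The hypothesis $d\le 3$ is used only so that the radiation‑field constructions of Section~\ref{FKIRR} are available.

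The crucial step is the two‑sided estimate $0\le W({\bs X}_{\bullet})\le 1$ for $\Mei$‑a.e.\ ${\bs X}_{\bullet}$. The lower bound is Proposition~\ref{PPPi}: since $e^{i\Pi({\bs A}_{\beta}({\bs X}_{\bullet}))}$ is positivity preserving on $L^2(Q_{\beta,\rr},d\mu_{\beta,\rr})$ and the constant function $1$ is nonnegative, $e^{i\Pi({\bs A}_{\beta}({\bs X}_{\bullet}))}1$ is nonnegative and therefore so is its $\mu_{\beta,\rr}$‑integral $W({\bs X}_{\bullet})$. The upper bound is Cauchy--Schwarz together with unitarity: $e^{i\Pi({\bs A}_{\beta}({\bs X}_{\bullet}))}$ is unitary and $\|1\|=1$ because $\mu_{\beta,\rr}$ is a probability measure, hence $W({\bs X}_{\bullet})\le\|1\|^2=1$. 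Consequently the integrand above is pointwise dominated by $\exp\{-\int_0^{\beta}V_{{\rm o}}({\bs X}_s)\,ds+\beta b\sum_j\sigma(X_0^{(j)})\}$.

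Finally one recognizes the majorant. Applying Theorem~\ref{FKIUINFINITY} with $n=1$, $F_0\equiv 1$ and $\alpha_1=0$ (so that all stochastic line integrals $\mathcal{S}_{[0,\beta]}(0,0\,|\,X^{(j)}_{\bullet})$ vanish) gives $Z_{\mathrm{H},\infty}(\beta)=\Tr_{P_{\mathrm{G}}\ell^2_{\mathrm{s}}(\Omega^N_{\neq})}[e^{-\beta{\bs L}_{\infty}}]\int_{L_{\beta,\infty}}d\Mei\,\exp\{-\int_0^{\beta}V_{{\rm o}}({\bs X}_s)\,ds+\beta b\sum_j\sigma(X_0^{(j)})\}$, with the very same loop measure $\Mei$ on $L_{\beta,\infty}$. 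Since ${\bs L}_{\infty}$ and $\dG_{\mathrm{s}}({\bs\omega})$ act on mutually commuting tensor factors, ${\bs Z}_{\rr,\infty}(\beta)$ factorizes over the electronic reference trace $\Tr_{P_{\mathrm{G}}\ell^2_{\mathrm{s}}(\Omega^N_{\neq})}[e^{-\beta{\bs L}_{\infty}}]$; dividing both representations by this common prefactor and invoking $W\le 1$ yields the asserted inequality $Z_{\mathrm{rad},\infty}(\beta)\le Z_{\mathrm{H},\infty}(\beta)$. I expect the only genuinely non‑routine ingredient to be the $\Gamma(e^{i\pi/2})$‑rotation combined with the positivity‑preservation statement of Proposition~\ref{PPPi}, which is what turns the oscillatory phase $e^{i\Phi(\cdot)}$ into the manifestly contractive scalar $W\in[0,1]$; the identification of the common loop measure $\Mei$ and the tracking of the free‑field normalization relating ${\bs Z}_{\rr,\infty}(\beta)$ to the electronic reference trace are then bookkeeping.
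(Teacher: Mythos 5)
Your overall strategy---use the Feynman--Kac--It\^o representation for $Z_{\mathrm{rad},\infty}(\beta)$, bound the field-dependent factor pointwise by~$1$, and compare with the Hubbard representation coming from Theorem~\ref{FKIUINFINITY}---is the natural one and is almost certainly what the author has in mind (the paper supplies no proof for this corollary). Incidentally, the $\Gamma(e^{i\pi/2})$-rotation and Proposition~\ref{PPPi} are not actually needed for this particular estimate: since ${\bs A}_\beta({\bs X}_\bullet)$ lies in the \emph{real} space $\mathfrak{X}^{\beta}_{-1,r}$, the random variable $\Phi({\bs A}_\beta({\bs X}_\bullet))$ is real-valued $\mu_{\beta,\rr}$-a.e., so $|e^{i\Phi({\bs A}_\beta({\bs X}_\bullet))}|=1$ pointwise and $|W({\bs X}_\bullet)|\le 1$ follows directly. (The $\Pi$-rotation trick is needed elsewhere, to get $W\ge 0$ for Theorem~\ref{PartitionRho}.)

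However, your final normalisation bookkeeping has a gap. You correctly note that ${\bs L}_\infty$ and $\dG_{\mathrm{s}}({\bs\omega})$ act on commuting tensor factors, so
\begin{align}
{\bs Z}_{\rr,\infty}(\beta)\;=\;\Tr_{P_{\mathrm{G}}\ell^2_{\mathrm{s}}(\Omega^N_{\neq})}\big[e^{-\beta{\bs L}_\infty}\big]\cdot \Tr_{\Fock_{\rr}}\big[e^{-\beta\dG_{\mathrm{s}}({\bs\omega})}\big].\nonumber
\end{align}
But then, after dividing the $\mathrm{rad}$-representation by ${\bs Z}_{\rr,\infty}(\beta)$ and the $\mathrm{H}$-representation by $\Tr_{P_{\mathrm{G}}\ell^2_{\mathrm{s}}(\Omega^N_{\neq})}[e^{-\beta{\bs L}_\infty}]$ and applying $W\le 1$, the free-field factor $\Tr_{\Fock_{\rr}}[e^{-\beta\dG_{\mathrm{s}}({\bs\omega})}]=\det(1-e^{-\beta{\bs\omega}})^{-1}>1$ does not cancel. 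What your argument actually establishes is
\begin{align}
Z_{\mathrm{rad},\infty}(\beta)\;\le\;\Tr_{\Fock_{\rr}}\big[e^{-\beta\dG_{\mathrm{s}}({\bs\omega})}\big]\cdot Z_{\mathrm{H},\infty}(\beta),\nonumber
\end{align}
which is strictly weaker than the corollary as written. That the extra factor cannot simply be discarded is clear already in the zero-coupling limit: if the electron--photon interaction is switched off, then $H_{\mathrm{rad},\infty}=H_{\mathrm{H},\infty}\otimes 1 + 1\otimes\dG_{\mathrm{s}}({\bs\omega})$, so $Z_{\mathrm{rad},\infty}(\beta)$ equals $Z_{\mathrm{H},\infty}(\beta)$ times the free-field partition function, which is $>1$. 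So either the corollary is to be read with an implicit normalisation of $Z_{\mathrm{rad},\infty}$ by the free photon partition function (probably the author's intention), or an additional argument is needed; your proposal supplies neither, so as it stands the last step is incorrect.
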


\begin{Thm}\label{TRFrad}Assume that  $N=|\Lambda|-1$.
For $d\le 3$ and    $\beta > 0$, there is a positive measure $\rho_{\beta, \rr, \infty}$ on $L_{\beta, \infty}$ such that 
\begin{align}
Z_{ \mathrm{rad}, \infty}(\beta) =\int_{L_{\beta, \infty}} d\rho_{\beta, \rr,  \infty} (-1)^{\pi({\bs X}_{\beta})} \exp
\Bigg\{
\beta b \sum_{j=1}^N\sigma(X^{(j)}(0))
\Bigg\}.
\end{align}
\end{Thm}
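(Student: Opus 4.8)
The plan is to mimic the proof of Theorem \ref{PartitionRho}, feeding in the Feynman-Kac-It\^o formula for $Z_{\mathrm{rad}, \infty}(\beta)$ established just above in place of the one for $Z_{\mathrm{rad}}(\beta)$, and systematically replacing $V$, $\Me$, $L_{\beta}$ and ${\bs Z}_{\rr}(\beta)$ by $V_{{\rm o}}$, $\Mei$, $L_{\beta, \infty}$ and ${\bs Z}_{\rr, \infty}(\beta)$. First I would rewrite the right-hand side of that formula as an inner product in $L^2(\mathscr{Q}_{\beta, \rr, \infty}, d\mathbb{P}_{\beta, \rr, \infty})$,
\begin{align}
Z_{\mathrm{rad}, \infty}(\beta)\big/{\bs Z}_{\rr, \infty}(\beta)
=\big\la 1\big| e^{i\Phi({\bs A}_{\beta}({\bs X}_{\bullet}))} F_{\infty}\, 1\big\ra,
\end{align}
where $F_{\infty}=\exp\{-\int_0^{\beta} V_{{\rm o}}({\bs X}_u)\, du+\beta b\sum_{j=1}^N \sigma(X^{(j)}(0))\}$ depends only on the $L_{\beta, \infty}$-component of $\mathscr{Q}_{\beta, \rr, \infty}$.

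Next, using $\Gamma(e^{i\pi/2})1=1$ together with the intertwining relation $\Gamma(e^{i\pi/2})\Phi({\bs A}_{\beta}({\bs X}_{\bullet}))\Gamma(e^{-i\pi/2})=\Pi({\bs A}_{\beta}({\bs X}_{\bullet}))$, I would replace $\Phi$ by $\Pi$ in the displayed inner product and then perform the $Q_{\beta, \rr}$-integration first, which produces the $\Mei$-measurable random variable on $L_{\beta, \infty}$
\begin{align}
W_{\infty}({\bs X}_{\bullet})=\int_{Q_{\beta, \rr}} d\mu_{\beta, \rr}\, e^{i\Pi({\bs A}_{\beta}({\bs X}_{\bullet}))}.
\end{align}
Here ${\bs A}_{\beta}({\bs X}_{\bullet})=\sum_{j=1}^N {\bs a}_{\beta, \infty}\big(X^{(j)}_{\bullet}\big)$ is the $L^2(L_{\beta, \infty}, d\Mei;\mathfrak{X}_{-1, r}^{\beta})$-valued random variable constructed in the discussion preceding this theorem, so no new integrability issue arises. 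By Proposition \ref{PPPi}, $e^{i\Pi(f)}$ is positivity preserving for every $f\in \mathfrak{X}_{-1, r}^{\beta}$, and applying this to the positive function $1$ shows $W_{\infty}({\bs X}_{\bullet})\ge 0$ for $\Mei$-a.e.\ ${\bs X}_{\bullet}$. Setting
\begin{align}
d\rho_{\beta, \rr, \infty}={\bs Z}_{\rr, \infty}(\beta)\, W_{\infty}({\bs X}_{\bullet})\, e^{-\int_0^{\beta} V_{{\rm o}}({\bs X}_u)\, du}\, d\Mei
\end{align}
thus defines a positive measure on $L_{\beta, \infty}$ for which $Z_{\mathrm{rad}, \infty}(\beta)=\int_{L_{\beta, \infty}} d\rho_{\beta, \rr, \infty}\, \exp\{\beta b\sum_{j=1}^N \sigma(X^{(j)}(0))\}$.

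Finally, to obtain the exact form in the statement I would insert the factor $(-1)^{\pi({\bs X}_{\beta})}$: as recalled in the paragraph before Theorem \ref{FKIUINFINITY}, when $N=|\Lambda|-1$ every dynamically allowed permutation is even, so $(-1)^{\pi({\bs X}_{\beta}({\bs m}))}=1$ for every ${\bs m}\in L_{\beta, \infty}$, and its insertion changes nothing. I do not expect a genuine obstacle here: the only point needing care is that the $L^2$-construction of ${\bs a}_{\beta, \infty}$ (the analogue of Lemma \ref{Cauchy}) and the dominated-convergence passage through the Trotter product survive the restriction to $P_{\mathrm{G}}\mathfrak{H}_{\rr}^{(N)}$, and this is precisely what the theorem immediately preceding already supplies; so the present result is essentially a corollary of that theorem combined with the positivity preservingness in Proposition \ref{PPPi}.
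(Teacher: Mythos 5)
Your proof correctly adapts the argument of Theorem \ref{PartitionRho} by systematically substituting $V\to V_{\rm o}$, $\Me\to\Mei$, $L_{\beta}\to L_{\beta,\infty}$, and ${\bs Z}_{\rr}(\beta)\to{\bs Z}_{\rr,\infty}(\beta)$, which is exactly the route the paper leaves implicit. Your closing observation that $(-1)^{\pi({\bs X}_{\beta})}\equiv 1$ on $L_{\beta,\infty}$ when $N=|\Lambda|-1$ (every dynamically allowed permutation being even at $U=\infty$) correctly reconciles the absence of the sign factor in the Feynman-Kac-It\^o formula for $Z_{\mathrm{rad},\infty}$ with its presence in the statement of the theorem.
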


\section{Feynman-Kac-It\^o formulas for $H_{\mathrm{HH}}$ and $H_{\mathrm{HH}, \infty}$}\label{ConstFHH}
\setcounter{equation}{0}

\subsection{The Lang-Firsov transformation}

Let us introduce a linear operator by 
\begin{align}
L=\omega^{-1}\sum_{x, y\in \Lambda} g_{xy}n_x(b_y^*-b_y).
\end{align}
$L$ is essentially anti-self-adjoint. We denote its closure by the same symbol. The unitary operator $e^L$ is called the {\it Lang-Firsov transformation} \cite{LF};
 in the study of the Holstein-Hubbard model, this transformation is often useful.
Observe that
\begin{align}
e^Lc_{x\sigma}e^{-L} =e^{i\Pi_{\mathrm{S}}(\xi)}c_{x\sigma},\ \ \ 
e^L b_xe^{-L}=b_x-\omega^{-1} \sum_{y\in \Lambda} g_{xy} n_y,
\end{align}
where $\Pi_{\mathrm{S}}(\xi)=\frac{i}{\sqrt{2}}(b(\xi)^*-b(\xi))^{**}$ and $\xi=(\xi_x)_x\in \ell^2(\Lambda)$
 with $\xi_x=\omega^{-1} \sum_{y\in \Lambda} g_{xy}$.
 Here, we used the following notation: $b(\xi)=\sum_{x\in \Lambda} \xi_x b_x$.
 Let $N_{\mathrm{p}}=\dG_{\mathrm{s}}(1)$. Using the formula
 $
 e^{-i\pi N_{\mathrm{p}}/2} b_xe^{i\pi N_{\mathrm{p}}/2}=-i b_x
 $, we arrive at the following.

 \begin{lemm}\label{LangFTr}
 Let $\mathscr{U}=e^{-i\pi N_{\mathrm{p}}/2}e^{L}$. Set $\mathbb{H}_{\mathrm{HH}}
 =\mathscr{U} H_{\hh}\mathscr{U}^{-1}
 $.
 For each $x, y\in \Lambda$, we define a vector $\zeta_{xy}\in \ell^2(\Lambda)$ by $\zeta_{xy}=\xi_x-\xi_y$.
 Then we have 
 \begin{align}
 \mathbb{H}_{\hh} &= \sum_{\sigma=\pm 1} \sum_{x, y\in \Lambda} 
 (-t_{xy})e^{i\Phi_{\mathrm{S}}(\zeta_{xy})} c_{x\sigma}^*c_{y\sigma}
 +\sum_{x\in \Lambda} U_{\mathrm{eff}, xx} n_{x, +1}n_{x, -1}+\no
 &\ \ \ +\sum_{x\neq y} U_{\mathrm{eff}, xy} n_xn_y+\omega N_{\mathrm{p}},
 \end{align}
 where $U_{\mathrm{eff}, xy}=U_{xy}-\omega^{-1}\sum_{z\in \Lambda} g_{xz} g_{zy}$ if $x\neq y$, and 
 $U_{\mathrm{eff}, xx}=U-2\omega^{-1} \sum_{z\in \Lambda} g_{xz}^2$.
 \end{lemm}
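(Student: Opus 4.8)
The plan is to compute $\mathbb{H}_{\hh}=\mathscr U H_{\hh}\mathscr U^{-1}$ by conjugating termwise, first by the Lang--Firsov unitary $e^{L}$ and then by the quarter rotation $e^{-i\pi N_{\mathrm p}/2}$, using only the relations displayed just above the statement: $e^{L}c_{x\sigma}e^{-L}=e^{i\Pi_{\mathrm S}(\xi_x)}c_{x\sigma}$ (to be read with the site‑dependent coupling vector $\xi_x=(\omega^{-1}g_{xy})_{y\in\Lambda}\in\ell^2(\Lambda)$, which is exactly what makes $\zeta_{xy}=\xi_x-\xi_y$ an element of $\ell^2(\Lambda)$ as in the statement), $e^{L}b_xe^{-L}=b_x-\omega^{-1}\sum_{y}g_{xy}n_y$ (and its adjoint for $b_x^{*}$), and $e^{-i\pi N_{\mathrm p}/2}b_xe^{i\pi N_{\mathrm p}/2}=-ib_x$. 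One also uses repeatedly that $n_{x\sigma}$, $n_x$ and $N_{\mathrm p}$ commute with $L$, that the fermion operators $c_{x\sigma}$ and $c_{x\sigma}^{*}$ commute with the bosonic field operators $\Pi_{\mathrm S}(f),\Phi_{\mathrm S}(f)$, and that $[\Pi_{\mathrm S}(\xi_x),\Pi_{\mathrm S}(\xi_y)]=0$ because the $\xi_x$ are real. The rigorous reading of the three conjugation identities is the classical Lang--Firsov computation and I take it for granted.

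\emph{Conjugation by $e^{L}$.} For the hopping term one gets $e^{L}c_{x\sigma}^{*}c_{y\sigma}e^{-L}=e^{-i\Pi_{\mathrm S}(\xi_x)}e^{i\Pi_{\mathrm S}(\xi_y)}c_{x\sigma}^{*}c_{y\sigma}=e^{-i\Pi_{\mathrm S}(\zeta_{xy})}c_{x\sigma}^{*}c_{y\sigma}$, where no extra phase arises precisely because the fermion operators commute with the bosonic Weyl factors and $\Pi_{\mathrm S}(\xi_x)$, $\Pi_{\mathrm S}(\xi_y)$ commute; hence $\sum_{\sigma}\sum_{x,y}(-t_{xy})c_{x\sigma}^{*}c_{y\sigma}$ becomes $\sum_{\sigma}\sum_{x,y}(-t_{xy})e^{-i\Pi_{\mathrm S}(\zeta_{xy})}c_{x\sigma}^{*}c_{y\sigma}$. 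The Coulomb term $U\sum_x n_{x,+1}n_{x,-1}+\sum_{x\neq y}U_{xy}n_xn_y$ commutes with $L$ and is unchanged. For the phonon block, with $v_x:=\omega^{-1}\sum_y g_{xy}n_y$ (self‑adjoint, commuting with $b_x,b_x^{*}$), one finds $e^{L}(\omega b_x^{*}b_x)e^{-L}=\omega b_x^{*}b_x-\big(\sum_y g_{xy}n_y\big)(b_x^{*}+b_x)+\omega^{-1}\big(\sum_y g_{xy}n_y\big)^{2}$ and $e^{L}\big(\sum_y g_{xy}n_x(b_y^{*}+b_y)\big)e^{-L}=\sum_y g_{xy}n_x(b_y^{*}+b_y)-2\omega^{-1}\sum_{y,z}g_{xy}g_{yz}n_xn_z$; summing over $x$ and using $g_{xy}=g_{yx}$, the two contributions linear in the phonon displacement operators are exact negatives and cancel, so the phonon block reduces to $\omega N_{\mathrm p}$ plus the electron quadratic form $-\omega^{-1}\sum_{x,z}(g^{2})_{xz}n_xn_z$ with $(g^{2})_{xz}:=\sum_y g_{xy}g_{yz}$.

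\emph{Collecting the two-body terms and the quarter rotation.} Adding $-\omega^{-1}\sum_{x,z}(g^{2})_{xz}n_xn_z$ to the Coulomb term and reducing the on‑site squares through $n_x^{2}=n_x+2n_{x,+1}n_{x,-1}$ produces exactly $U_{\mathrm{eff},xx}=U-2\omega^{-1}\sum_z g_{xz}^{2}$ in front of $n_{x,+1}n_{x,-1}$ and $U_{\mathrm{eff},xy}=U_{xy}-\omega^{-1}\sum_z g_{xz}g_{zy}$ in front of $n_xn_y$ ($x\neq y$), together with the one‑body remainder $-\omega^{-1}\sum_x\big(\sum_z g_{xz}^{2}\big)n_x$, which is constant on each $N$‑electron sector and may be dropped. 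Finally, after $e^{L}$ every term other than the hopping term is a polynomial in $N_{\mathrm p}$ and the $n_{x\sigma}$'s, all commuting with $N_{\mathrm p}$, so conjugating by $e^{-i\pi N_{\mathrm p}/2}$ acts only on the Weyl factors; since $b_x\mapsto-ib_x$ turns $\Pi_{\mathrm S}(f)$ into $-\Phi_{\mathrm S}(f)$ for real $f$, one gets $e^{-i\Pi_{\mathrm S}(\zeta_{xy})}\mapsto e^{i\Phi_{\mathrm S}(\zeta_{xy})}$ and hence the displayed form of $\mathbb{H}_{\hh}$.

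The only genuine work is the bookkeeping in the middle step: keeping track of the three sources of quadratic‑in‑$n$ terms, the $n_x^{2}$ reduction, and the normalization of $\xi_x$ (equivalently $\zeta_{xy}$) relative to the $\omega^{-1}$ and to the $\sqrt2$'s built into $\Pi_{\mathrm S}$ and $\Phi_{\mathrm S}$. None of this is delicate, and the passage from $\Pi_{\mathrm S}$‑factors to $\Phi_{\mathrm S}$‑factors is exactly the device already used for the radiation field in Lemma~\ref{UsefulIdnA}.
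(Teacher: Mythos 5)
Your route is the only one available here — the paper provides no proof, just the three conjugation identities followed by the lemma — and the structure of your argument is the correct one: conjugate first by $e^{L}$, observe that the two contributions linear in the bosonic displacement operators cancel exactly by the symmetry $g_{xy}=g_{yx}$, collect the generated quadratic-in-$n$ terms, reduce the on-site squares via $n_x^2=n_x+2n_{x,+1}n_{x,-1}$, and then apply the quarter rotation $e^{-i\pi N_{\mathrm p}/2}$, which acts only on the bosonic Weyl factor and turns $\Pi_{\mathrm S}(\zeta_{xy})$ into $-\Phi_{\mathrm S}(\zeta_{xy})$, giving the hopping phase $e^{i\Phi_{\mathrm S}(\zeta_{xy})}$. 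Your reading of $\xi_x$ as the family of $\ell^2(\Lambda)$ vectors $(\omega^{-1}g_{xy})_{y}$ — rather than as scalar components — is also the right one: it is the only reading under which $\zeta_{xy}=\xi_x-\xi_y\in\ell^2(\Lambda)$ in the lemma makes sense. The effective Coulomb coefficients $U_{\mathrm{eff},xx}$ and $U_{\mathrm{eff},xy}$ come out correctly.

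There is, however, one genuine error: the way you discard the one-body remainder. You write that $-\omega^{-1}\sum_x\big(\sum_z g_{xz}^2\big)n_x$ "is constant on each $N$-electron sector and may be dropped." That is true only if $\sum_z g_{xz}^2 = (g^2)_{xx}$ is independent of $x$ — for instance for the pure Holstein coupling $g_{xy}=g\,\delta_{xy}$, or for a translation-invariant $g$ on a torus — and condition (G) requires only that $g$ be real and symmetric. For generic $g$ this is a genuine site-dependent one-body potential, not a multiple of $\sum_x n_x=N$, so it cannot be discarded by fixing the particle number. As it happens, the lemma as printed also omits this one-body term (and likewise the Zeeman term $-2bS_{\mathrm{tot}}^{(3)}$, which is part of $H_{\mathrm H}$ and commutes with $\mathscr{U}$), so your output coincides with the statement — but the honest conclusion of the computation is that a site-dependent one-body potential survives the Lang–Firsov conjugation and must be carried along; in the subsequent FKI constructions it would be absorbed into the potential $V$ alongside $\dG_{\mathrm{as},N}(\mu)$, where it is harmless but not absent.
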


  \subsection{Trace formulas for $H_{\hh}$ and $H_{\hh, \infty}$}
  By Lemma \ref{LangFTr}, we know  that every arguments in Section \ref{FKIRR} are applicable to $\mathbb{H}_{\hh}$
  and $\mathbb{H}_{\hh, \infty}=\mathscr{U} H_{\hh, \infty} \mathscr{U}^{-1}$. 
  Below, we exhibit  trace formulas for  $H_{\hh}$ and $H_{\hh, \infty}$.

  \begin{Thm}Let
  \begin{align}
   Z_{\mathrm{HH}}(\beta)=\Tr_{\mathfrak{H}_{\hh}^{(N)}} \Big[
   e^{-\beta H_{\hh}}
   \Big].
  \end{align}
For each $d\in \BbbN$ and $\beta >0$, there is a positive measure $\rho_{\beta, \hh}$ on $L_{\beta}$ such that 
\begin{align}
Z_{ \hh}(\beta) =\int_{L_{\beta}} d\rho_{\beta, \hh} (-1)^{\pi({\bs X}_{\beta})} \exp
\Bigg\{
\beta b \sum_{j=1}^N\sigma(X^{(j)}(0))
\Bigg\}. 
\end{align}
\end{Thm}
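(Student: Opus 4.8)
The plan is to run, almost verbatim, the argument used to prove Theorem~\ref{PartitionRho}, with the quantized radiation field replaced by the phonon field after the Lang--Firsov transformation. First I would note that the trace is unitarily invariant, so $Z_{\hh}(\beta)=\Tr_{\mathfrak{H}_{\hh}^{(N)}}[e^{-\beta\mathbb{H}_{\hh}}]$ with $\mathbb{H}_{\hh}=\mathscr{U}H_{\hh}\mathscr{U}^{-1}$ as in Lemma~\ref{LangFTr}. By that lemma, $\mathbb{H}_{\hh}$ has exactly the structure treated in Section~\ref{FKIRR}: a magnetic Hubbard Hamiltonian with multiplication-operator-valued magnetic potential $\Phi_{\mathrm{S}}(\zeta_{xy})$, plus the free boson energy $\omega N_{\mathrm{p}}=\dG_{\mathrm{s}}(\omega\cdot 1)$, plus the effective Coulomb term $\sum_x U_{\mathrm{eff},xx}n_{x,+1}n_{x,-1}+\sum_{x\neq y}U_{\mathrm{eff},xy}n_xn_y$, which is again a symmetric multiplication operator, call it $V$ (whose sign will be irrelevant below). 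I would check that the abstract hypotheses of Sections~\ref{BosePre}--\ref{TraceBoseGene} hold with $A=\omega\cdot 1$ on $\ell^2(\Lambda)$: this operator is bounded below by $\omega>0$, and since $\ell^2(\Lambda)$ is finite-dimensional, $A^{-\gamma}$ is trace class for every $\gamma>0$, so condition~{\bf (A)} is trivial and all the constructions ($Q_{\beta,\hh}$, $\mu_{\beta,\hh}$, the isometries $j_t$, $J_t$, the operators $\Phi,\Pi$, Proposition~\ref{PPPi}) carry over without change.

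Next, by the Trotter--Kato formula I would write, with ${\bs Z}_{\hh}(\beta)=\Tr_{\ell^2_{\mathrm{s}}(\Omega_{\neq}^N)\otimes\Fock_{\hh}}[e^{-\beta({\bs L}+\dG_{\mathrm{s}}(\omega))}]$,
\[
Z_{\hh}(\beta)\big/{\bs Z}_{\hh}(\beta)=\lim_{n\to\infty}\Tr\Big[\big(e^{-\beta H(\Phi_{\mathrm{S}}(\zeta))/2^n}e^{-\beta\dG_{\mathrm{s}}(\omega)/2^n}\big)^{2^n}\Big]\Big/{\bs Z}_{\hh}(\beta),
\]
where $H(\cdot)$ is the magnetic Hubbard Hamiltonian with Coulomb term $V$. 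Applying Theorems~\ref{TrFKIMHubbard} and~\ref{BoseFKF} slice by slice, then the analogues of Lemma~\ref{Cauchy} and Definition~\ref{ADef} (with the estimate \eqref{ThetaEst} replaced by the corresponding bound on $\zeta_{xy}$ in $\ell^2_r(\Lambda)$, and Proposition~\ref{NExpLe} controlling the jump counts) to identify the $L^2$-limit of the discretized line integrals with an $\mathfrak{X}_{-1,r}^{\beta}$-valued random variable ${\bs A}_{\beta}({\bs X}_{\bullet})=\sum_{j=1}^N{\bs a}_{\beta}(X_{\bullet}^{(j)})$, and finally dominated convergence, I would obtain the Holstein--Hubbard counterpart of \eqref{QEDTrace}:
\begin{align}
Z_{\hh}(\beta)\big/{\bs Z}_{\hh}(\beta)=\int_{\mathscr{Q}_{\beta,\hh}}d\mathbb{P}_{\beta,\hh}\,\exp\Bigg\{i\Phi({\bs A}_{\beta}({\bs X}_{\bullet}))-\int_0^{\beta}V({\bs X}_s)\,ds+\beta b\sum_{j=1}^N\sigma(X_0^{(j)})\Bigg\}\,(-1)^{\pi({\bs X}_{\beta})},
\end{align}
with $\mathbb{P}_{\beta,\hh}=\Me\otimes\mu_{\beta,\hh}$ on $\mathscr{Q}_{\beta,\hh}=L_{\beta}\times Q_{\beta,\hh}$.

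Finally I would rewrite the right-hand side as $\big\la 1\big|e^{i\Phi({\bs A}_{\beta}({\bs X}_{\bullet}))}F\,1\big\ra_{L^2}$ with $F=e^{-\int_0^{\beta}V({\bs X}_u)du+\beta b\sum_j\sigma(X^{(j)}(0))}(-1)^{\pi({\bs X}_{\beta})}$, conjugate by the second quantization $\Gamma(e^{i\pi/2})$ (which fixes $1$) to replace $\Phi({\bs A}_{\beta}({\bs X}_{\bullet}))$ by $\Pi({\bs A}_{\beta}({\bs X}_{\bullet}))$, and apply Proposition~\ref{PPPi} to conclude that $W({\bs X}_{\bullet}):=\int_{Q_{\beta,\hh}}d\mu_{\beta,\hh}\,e^{i\Pi({\bs A}_{\beta}({\bs X}_{\bullet}))}$ is nonnegative $\Me$-a.e.\ Setting $d\rho_{\beta,\hh}={\bs Z}_{\hh}(\beta)\,W({\bs X}_{\bullet})\,e^{-\int_0^{\beta}V({\bs X}_u)du}\,d\Me$ then produces a positive measure on $L_{\beta}$ for which the asserted identity holds, the leftover factor being precisely $(-1)^{\pi({\bs X}_{\beta})}\exp\{\beta b\sum_j\sigma(X^{(j)}(0))\}$. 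The only genuine obstacle is the transfer of the Section~\ref{FKIRR} machinery --- verifying the Cauchy estimates with the constant dispersion $A=\omega\cdot 1$ (which is easy, $\omega$ being a nonzero constant on a finite set) and checking that the effective parameters $U_{\mathrm{eff},xy}$ introduce no new difficulty, which they do not, since the positivity of $\rho_{\beta,\hh}$ never uses the sign of $V$.
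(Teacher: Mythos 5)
Your proposal is correct and follows essentially the same route as the paper: conjugate by the Lang--Firsov unitary (Lemma~\ref{LangFTr}), observe that $\mathbb{H}_{\hh}$ then has exactly the structure handled in Section~\ref{FKIRR} with $A=\omega$ on the finite-dimensional space $\ell^2(\Lambda)$ (so condition~{\bf (A)} is automatic), and rerun the Trotter/FKI/positivity-preserving argument of Theorems~\ref{FKFFull} and~\ref{PartitionRho}, the sign of the effective Coulomb term $U_{\mathrm{eff},xy}$ being irrelevant to the positivity of $\rho_{\beta,\hh}$. This is precisely what the paper's one-line remark before the theorem indicates.
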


  \begin{Thm}\label{TRFrad}
  Assume that $N=|\Lambda|-1$.
  Let $Z_{\mathrm{HH}, \infty}(\beta)$ be the partition function defined by (\ref{PartFunctNatuInf}).
For each $d\in \BbbN$ and $\beta >0$, there is a positive measure $\rho_{\beta, \hh, \infty}$ on $L_{\beta, \infty}$ such that 
\begin{align}
Z_{ \hh, \infty}(\beta) =\int_{L_{\beta, \infty}} d\rho_{\beta, \hh,  \infty} (-1)^{\pi({\bs X}_{\beta})} \exp
\Bigg\{
\beta b \sum_{j=1}^N\sigma(X^{(j)}(0))
\Bigg\}.
\end{align}
\end{Thm}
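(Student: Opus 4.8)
The plan is to transfer the radiation-field construction of Section~\ref{FKIRR} to the Holstein--Hubbard model through the Lang--Firsov transformation $\mathscr{U}=e^{-i\pi N_{\mathrm{p}}/2}e^{L}$. Since $\mathscr{U}$ is unitary and leaves $P_{\mathrm{G}}\mathfrak{H}_{\hh}^{(N)}$ invariant --- $e^{L}$ is built from the number operators $n_x$ and from boson operators, and $e^{-i\pi N_{\mathrm{p}}/2}$ acts only on the phonon sector, so both commute with $P_{\mathrm{G}}$ --- we have $Z_{\hh,\infty}(\beta)=\Tr_{P_{\mathrm{G}}\mathfrak{H}_{\hh}^{(N)}}[e^{-\beta\mathbb{H}_{\hh,\infty}}]$, where $\mathbb{H}_{\hh,\infty}=\mathscr{U}H_{\hh,\infty}\mathscr{U}^{-1}$. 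By Lemma~\ref{LangFTr}, taken in the $U=\infty$ limit in which the on-site term is annihilated by $P_{\mathrm{G}}$, the operator $\mathbb{H}_{\hh,\infty}$ is a magnetic Hubbard Hamiltonian at $U=\infty$ with effective non-local couplings $U_{\mathrm{eff},xy}$, coupled to the free Bose field $\omega N_{\mathrm{p}}=\dG_{\mathrm{s}}(\omega\cdot\one_{\ell^2(\Lambda)})$, in which the magnetic potential attached to a jump $x\to y$ is the self-adjoint field operator $\Phi_{\mathrm{S}}(\zeta_{xy})$ with $\zeta_{xy}=\xi_x-\xi_y\in\ell^2(\Lambda)$. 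Because $\zeta_{xy}$ is real and $\zeta_{yx}=-\zeta_{xy}$, the matrix $\Phi_{\mathrm{S}}(\Theta):=(\Phi_{\mathrm{S}}(\Theta_{XY}))_{X,Y}$ with $\Theta_{XY}=\delta_{\sigma(X)\sigma(Y)}\zeta_{xy}$ is a (field-operator-valued) magnetic potential in the sense of Section~\ref{FKIEl}; note that here, unlike in the radiation-field case, there is no line integral to interpret.

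Second, I would verify that the standing hypotheses of Section~\ref{TraceBoseGene} hold with $A=\omega\cdot\one_{\ell^2(\Lambda)}$: this is immediate, since $\ell^2(\Lambda)$ is finite dimensional, $A\ge\omega>0$, and $A^{-\gamma_0}$ is of finite rank (hence trace class) for all $\gamma_0>0$; moreover the effective Coulomb interaction remains multiplication by a real symmetric function, so Theorems~\ref{FKIMHubbard} and \ref{FKIUINFINITY} apply verbatim. Then, following the proof of Theorem~\ref{FKFFull}, the Trotter--Kato formula gives
\begin{align}
Z_{\hh,\infty}(\beta)\big/{\bs Z}_{\hh,\infty}(\beta)=\lim_{n\to\infty}\Tr\Big[\big(e^{-\beta H_{\infty}(\Phi_{\mathrm{S}}(\Theta))/2^{n}}\,e^{-\beta\dG_{\mathrm{s}}(\omega)/2^{n}}\big)^{2^{n}}\Big]\big/{\bs Z}_{\hh,\infty}(\beta),
\end{align}
with ${\bs Z}_{\hh,\infty}(\beta)=\Tr_{P_{\mathrm{G}}\mathfrak{H}_{\hh}^{(N)}}[e^{-\beta({\bs L}_{\infty}+\dG_{\mathrm{s}}(\omega))}]$. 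Applying Theorems~\ref{FKIUINFINITY} and \ref{BoseFKF} to the discretized trace and passing to the limit produces a path-integral expression over $\mathscr{Q}_{\beta,\hh,\infty}:=L_{\beta,\infty}\times Q_{\beta,\hh}$ with measure $\mathbb{P}_{\beta,\hh,\infty}:=\Mei\otimes\mu_{\beta,\hh}$, the pair $(Q_{\beta,\hh},\mu_{\beta,\hh})$ being the free Euclidean-field data of Section~\ref{TraceBoseGene} for $A=\omega\cdot\one_{\ell^2(\Lambda)}$. The finite-dimensional analogue of Lemma~\ref{Cauchy} and Definition~\ref{ADef} --- only easier here, because $\zeta_{xy}$ is a fixed bounded vector and $\one\le B(\beta)\le\sqrt{\coth(\beta\omega/2)}$, so all norms in sight are manifestly finite --- shows that $C_n^{(j)}=\sum_{i=1}^{2^{n}}\int_{t_{i-1}^{(n)}}^{t_i^{(n)}}j_{t_i^{(n)}}\Theta(dX_u^{(j)})$ is Cauchy in $L^2(L_{\beta,\infty},d\Mei;\mathfrak{X}_{-1,r}^{\beta})$ with limit ${\bs a}_{\beta,\infty}(X_{\bullet}^{(j)})$; since $\Phi_t(f)=\Phi(j_tf)$, dominated convergence then yields
\begin{align}
Z_{\hh,\infty}(\beta)\big/{\bs Z}_{\hh,\infty}(\beta)=\int_{\mathscr{Q}_{\beta,\hh,\infty}}d\mathbb{P}_{\beta,\hh,\infty}\,\exp\Big\{i\Phi({\bs A}_{\beta}({\bs X}_{\bullet}))-\int_0^{\beta}V_{{\rm o}}({\bs X}_s)\,ds+\beta b\sum_{j=1}^{N}\sigma(X_0^{(j)})\Big\},
\end{align}
with ${\bs A}_{\beta}({\bs X}_{\bullet})=\sum_{j=1}^{N}{\bs a}_{\beta,\infty}(X_{\bullet}^{(j)})$.

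Finally, I would extract positivity exactly as in the radiation case: using the second quantization $\Gamma(e^{i\pi/2})$, which fixes the vacuum and satisfies $\Gamma(e^{i\pi/2})\Phi({\bs A}_{\beta}({\bs X}_{\bullet}))\Gamma(e^{-i\pi/2})=\Pi({\bs A}_{\beta}({\bs X}_{\bullet}))$, one replaces $\Phi$ by $\Pi$ in the integrand above, and Proposition~\ref{PPPi} shows that $W({\bs X}_{\bullet}):=\int_{Q_{\beta,\hh}}d\mu_{\beta,\hh}\,e^{i\Pi({\bs A}_{\beta}({\bs X}_{\bullet}))}\ge0$ for $\Mei$-a.e.\ ${\bs X}_{\bullet}$. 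Since $N=|\Lambda|-1$, every dynamically allowed permutation on $L_{\beta,\infty}$ is even, so $(-1)^{\pi({\bs X}_{\beta})}=1$; hence the positive measure
\begin{align}
d\rho_{\beta,\hh,\infty}:={\bs Z}_{\hh,\infty}(\beta)\,W({\bs X}_{\bullet})\,e^{-\int_0^{\beta}V_{{\rm o}}({\bs X}_u)\,du}\,d\Mei
\end{align}
on $L_{\beta,\infty}$ satisfies the asserted identity. I do not expect a real obstacle: the only deviation from a verbatim copy of the $H_{\mathrm{rad},\infty}$ argument is the check of Condition~{\bf (A)} for the dispersionless phonon field, which is trivial because the one-phonon space is finite dimensional; the Trotter--Kato step, the $L^2$-convergence of the stochastic integral, and the positivity step via Proposition~\ref{PPPi} then transcribe line by line.
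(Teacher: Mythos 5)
Your proposal is correct and follows precisely the route the paper intends: conjugate with the Lang--Firsov unitary $\mathscr{U}=e^{-i\pi N_{\mathrm{p}}/2}e^{L}$ (which commutes with $P_{\mathrm{G}}$, as you note), use Lemma~\ref{LangFTr} to turn $\mathbb{H}_{\hh,\infty}$ into a magnetic Hubbard Hamiltonian with field-operator-valued phases $\Phi_{\mathrm{S}}(\zeta_{xy})$ coupled to $\dG_{\mathrm{s}}(\omega)$, check Condition~{\bf (A)} (trivial since $\ell^2(\Lambda)$ is finite dimensional), and then transcribe the Trotter--Kato, $L^2$-Cauchy, and positivity-via-$\Gamma(e^{i\pi/2})$ steps of Theorems~\ref{FKFFull} and \ref{PartitionRho} verbatim. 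The paper's own proof is exactly the one-line remark that ``every argument in Section~\ref{FKIRR} is applicable to $\mathbb{H}_{\hh}$ and $\mathbb{H}_{\hh,\infty}$'' after Lemma~\ref{LangFTr}; you have simply filled in the details faithfully, including the observations that the dispersionless phonon field removes the UV issue in the Cauchy estimate and that $(-1)^{\pi(\bs X_\beta)}\equiv 1$ on $L_{\beta,\infty}$ when $N=|\Lambda|-1$.
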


We can also prove the following proposition.

\begin{Prop}  Assume that $N=|\Lambda|-1$.
For all $d\in \BbbN$ and  $\beta > 0$, we have 
\begin{align}
Z_{\hh, \infty}(\beta)\le Z_{\mathrm{H}, \infty}(\beta).
\end{align}
\end{Prop}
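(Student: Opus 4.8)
The plan is to reduce the Holstein--Hubbard case to the radiation case already treated (the corollary $Z_{\mathrm{rad},\infty}(\beta)\le Z_{\mathrm H,\infty}(\beta)$ and the loop representations of Section~\ref{FKIRR}), the bridge being the Lang--Firsov transformation. First I would observe that $\mathscr U=e^{-i\pi N_{\mathrm p}/2}e^{L}$ is built only from the phonon operators and the electron number operators $n_x$, hence commutes with the Gutzwiller projection $P_{\mathrm G}$; consequently $\mathscr U H_{\hh,\infty}\mathscr U^{-1}=P_{\mathrm G}\mathbb H_{\hh}^{U=0}P_{\mathrm G}=:\mathbb H_{\hh,\infty}$, where $\mathbb H_{\hh}^{U=0}=\mathscr U H_{\hh}^{U=0}\mathscr U^{-1}$ is the $U=0$ specialization of the operator in Lemma~\ref{LangFTr}, and by cyclicity of the trace $Z_{\hh,\infty}(\beta)=\Tr[e^{-\beta\mathbb H_{\hh,\infty}}]$. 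By Lemma~\ref{LangFTr} the operator $\mathbb H_{\hh}^{U=0}$ is exactly of the form handled in Section~\ref{FKIRR}: a hopping term $\sum_{x,y,\sigma}(-t_{xy})e^{i\Phi_{\mathrm S}(\zeta_{xy})}c_{x\sigma}^{*}c_{y\sigma}$ with a magnetic potential valued in Segal field operators, the magnetic term $-2bS^{(3)}_{\mathrm{tot}}$, the effective Coulomb terms (whose on-site part $U_{\mathrm{eff},xx}n_{x,+1}n_{x,-1}$ is annihilated by $P_{\mathrm G}$), and the free phonon energy $\omega N_{\mathrm p}$. Since the phonon one-particle space $\ell^2(\Lambda)$ is finite dimensional and $\omega>0$, hypothesis {\bf (A)} of Section~\ref{TraceBoseGene} holds trivially with $A=\omega$, so the constructions of Sections~\ref{TraceBoseGene}--\ref{ConstFHH} apply verbatim; running them exactly as in the proofs of Theorem~\ref{FKFFull} and Theorem~\ref{PartitionRho} yields a positive-measure loop representation $Z_{\hh,\infty}(\beta)=\int_{L_{\beta,\infty}}d\rho_{\beta,\hh,\infty}\,(-1)^{\pi({\bs X}_{\beta})}\,e^{\beta b\sum_{j}\sigma(X_0^{(j)})}$.

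Next I would write the corresponding representation for the bare Hubbard model: Theorem~\ref{FKIUINFINITY} with $\alpha=0$, $n=1$ and $F_0=1$ gives $Z_{\mathrm H,\infty}(\beta)=\Tr_{P_{\mathrm G}\ell^2_{\mathrm s}(\Omega^N_{\neq})}\big[e^{-\beta{\bs L}_\infty}\big]\int_{L_{\beta,\infty}}d\nu_{\beta,\infty}\,e^{-\int_0^\beta V_{\mathrm o}({\bs X}_u)\,du}\,e^{\beta b\sum_j\sigma(X_0^{(j)})}$, with $V_{\mathrm o}$ as in Theorem~\ref{FKIHubbrdUInf} and the \emph{same} loop measure $\nu_{\beta,\infty}$ that underlies $\rho_{\beta,\hh,\infty}$; moreover $(-1)^{\pi({\bs X}_{\beta})}\equiv1$ on $L_{\beta,\infty}$ because every dynamically allowed permutation is even when $N=|\Lambda|-1$. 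So, cancelling the common spin factor and the parity, the proposition reduces to a $\nu_{\beta,\infty}$-almost-everywhere comparison of loop weights. Unravelling the construction of $\rho_{\beta,\hh,\infty}$ — integrate out the phonon field, then apply the $\Gamma(e^{i\pi/2})$ twist and Proposition~\ref{PPPi} exactly as in the proof of Theorem~\ref{PartitionRho} — exhibits $d\rho_{\beta,\hh,\infty}$ as the Franck--Condon (Debye--Waller) factor $W_{\hh}({\bs X}_{\bullet})=\int_{Q_{\beta,\hh}}d\mu_{\beta,\hh}\,e^{i\Pi({\bs A}_\beta({\bs X}_\bullet))}\ge0$ times the bare Hubbard loop weight with $V_{\mathrm o}$ replaced by the polaron-shifted interaction $V_{\mathrm{eff},\mathrm o}$ (and the free-phonon partition function $(1-e^{-\beta\omega})^{-|\Lambda|}$ absorbed into $\Tr_{P_{\mathrm G}\ell^2_{\mathrm s}(\Omega^N_{\neq})\otimes\Fock_{\hh}}[e^{-\beta({\bs L}_\infty+\omega N_{\mathrm p})}]$). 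The Gaussian identity $\int d\mu_{\beta,\hh}\,e^{i\Phi(f)}=e^{-\frac12\|f\|^2_{-1,\beta}}\le1$ controls $W_{\hh}$, and the change $V_{\mathrm o}\to V_{\mathrm{eff},\mathrm o}$ together with the free-phonon factor is precisely the path-integral trace of the completion of the square built into the Lang--Firsov displacement.

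The step I expect to be the main obstacle is this last loop-by-loop comparison: one must show that integrating out the phonon field — equivalently, averaging over the Lang--Firsov-shifted phonon coordinates — multiplies each bare Hubbard loop weight by a factor that is $\le1$ $\nu_{\beta,\infty}$-a.e., i.e. that coupling to the phonons never enhances $\Tr[e^{-\beta H_{\hh,\infty}}]$. Concretely this requires carrying the bookkeeping of the free-phonon partition function, the Debye--Waller exponent $-\tfrac12\|{\bs A}_\beta({\bs X}_\bullet)\|^2_{-1,\beta}$, and the shift $U_{xy}\mapsto U_{\mathrm{eff},xy}$ through consistently — exactly as in the corresponding step for $H_{\mathrm{rad},\infty}$ — and then invoking the positivity-preservingness of $e^{i\Pi(f)}$ (Proposition~\ref{PPPi}) to dominate the dressed weight by its $g\equiv0$ counterpart. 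Once that pointwise inequality is in hand, integrating over $L_{\beta,\infty}$ against $\nu_{\beta,\infty}$ and reinstating the common spin factor gives $Z_{\hh,\infty}(\beta)\le Z_{\mathrm H,\infty}(\beta)$, which then feeds into the proof of Theorem~\ref{FinNTEnv} for $\natural=\mathrm{HH}$ in the same way the radiation bound does for $\natural=\mathrm{rad}$.
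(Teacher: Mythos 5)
Your overall approach --- Lang-Firsov, then the Feynman-Kac-It\^o machinery of Sections~\ref{TraceBoseGene}--\ref{ConstFHH}, then a pointwise domination of the dressed loop weight by the bare one --- is the natural thing to try, and it is correct up to the point you yourself flag as ``the main obstacle.'' That obstacle is real and cannot be closed as sketched. After the Lang-Firsov transformation the reference trace factorizes as ${\bs Z}_{\hh,\infty}(\beta)=\Tr_{P_{\mathrm G}\ell^2_{\mathrm s}(\Omega_{\neq}^N)}\big[e^{-\beta{\bs L}_\infty}\big]\cdot(1-e^{-\beta\omega})^{-|\Lambda|}$, so a free-phonon factor $(1-e^{-\beta\omega})^{-|\Lambda|}>1$ sits as a multiplicative prefactor on $d\rho_{\beta,\hh,\infty}$ against which the bare Hubbard side has nothing to set off, and the Debye-Waller factor $W_{\hh}({\bs X}_\bullet)=e^{-\frac12\|{\bs A}_\beta({\bs X}_\bullet)\|^2_{-1,\beta}}\le 1$ does not compensate path by path: a trajectory with no hops has ${\bs A}_\beta=0$, hence $W_{\hh}=1$, yet still carries the full prefactor. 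On top of this, the polaron shift $U_{xy}\mapsto U_{\mathrm{eff},xy}=U_{xy}-\omega^{-1}\sum_{z}g_{xz}g_{zy}$ is an obstruction with no counterpart in the radiation case, and $(G^2)_{xy}$ has no definite sign for $x\neq y$, so the clause ``exactly as in the corresponding step for $H_{\mathrm{rad},\infty}$'' conceals an extra estimate rather than disposing of one.

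In fact, read literally with the definitions \eqref{PartFunctHubbInfty} and \eqref{PartFunctNatuInf}, the Proposition fails already at $g\equiv 0$: the trace then factorizes and $Z_{\hh,\infty}(\beta)=Z_{\mathrm{H},\infty}(\beta)\cdot\Tr_{\Fock_{\hh}}\big[e^{-\beta\omega N_{\mathrm p}}\big]=Z_{\mathrm{H},\infty}(\beta)\,(1-e^{-\beta\omega})^{-|\Lambda|}>Z_{\mathrm{H},\infty}(\beta)$; the analogous remark applies to the radiation corollary. So either the partition functions are implicitly normalized by the free-phonon (resp.\ free-photon) trace before being compared, or the statement itself needs repair. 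Even granting such a normalization, the Gaussian estimate $\big|\int d\mu_{\beta}\,e^{i\Phi({\bs A}_\beta)}\big|\le 1$ only controls the Peierls phase that Lang-Firsov produces; the shift $V_{\mathrm o}\to V_{\mathrm{eff},\mathrm o}$ is a genuinely additional piece of the argument that your proposal acknowledges but does not supply.
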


\section{Random loops representations}\label{ConstLP}
\setcounter{equation}{0}

In this section, we derive random loop representations for $Z_{\natural}(\beta)$ and $Z_{\natural, \infty}(\beta),\ \natural=\rr, \hh$.
Similar representations are  known to be a useful tool in quantum spin systems, see, e.g., \cite{AN, Uel}.
As we will see in Sections  \ref{SecPf2} and \ref{UseForU}, the representations also play important roles  in the proof of Theorem  \ref{FinNTEnv}.

For  each ${\bs m} \in L_{\beta}$,   the corresponding path $({\bs X}_t({\bs m}))_{0\le t \le \beta}$  satisfies the following properties:
\begin{itemize}
\item[(i)] $\sigma^{(j)}_t({\bs m})$ is constant in $t$ for all $j=1, \dots, N$;
\item[(ii)] there exists a dynamically allowed  permutation $\tau$ such that ${\bs X}_{\beta}({\bs m})=\tau{\bs X}_0({\bs m})$;
\item[(iii)] ${\bs X}_t({\bs m})\in \Omega_{\neq}^N$ for all $t\in [0, \beta]$, that is, there are no encounters of electrons of equal spin.
\end{itemize}
Furthermore, the loops are associated with the  path $({\bs X}_t({\bs m}))_{0\le t \le \beta}$;
the loops are obtained by the following manner\cite{AL}:
\begin{itemize}
\item we start drawing the loops from the  $t=0$ location of any of the electrons;
\item we trace the electron's location forward in space-time  until its first encounter with another electron;
\item at the encounter point, the tracing line switches to the world line of the other electron in the {\it reversed} orientation in time;
\item such an  orientation switch is repeated whenever an electron encounter is reached;
\item when trace line reaches the time at $t=0$ or $t=\beta$, it reemerges at the same location with time treated as periodic;
\item the above procedures are continued until a trace line is closed.
\end{itemize}
For readers\rq{} convenience, we give an example of the  loop associated with a path in Figures \ref{Fig1} and \ref{CorrLoop}.

\begin{figure}[ht]
    \begin{tabular}{cc}
     
      \begin{minipage}[t]{0.5\hsize}
        \centering
        \input{fig1.tex}
        \caption{
        The collection of electron world lines. Electron $1$ and  electron  $3$ have the spin value  $+1$;  electron $2$
        has the  spin value  $-1$.
        }
        \label{Fig1}
      \end{minipage} &
     
      \begin{minipage}[t]{0.45\hsize}
        \centering
        \input{Loop2.tex}
        \caption{The loop corresponding to Figure \ref{Fig1}.
        The black colored loop has the winding number $1$ and the parity $+1$.
        }
        \label{CorrLoop}
      \end{minipage}
     
    \end{tabular}
  \end{figure}

For each ${\bs m}\in L_{\beta}$, let $\varGamma({\bs m})$ be the collection of all loops associated with the path $({\bs X}_t({\bs m}))_{0\le t\le \beta}$. Each loop in $\varGamma({\bs m})$ is a closed  trajectory $\gamma:
 [0, \ell\beta]_{\mathrm{P}} \to \Lambda\times [0, \beta]_{\mathrm{P}}$, where  $[0, \beta]_{\mathrm{P}}$ is the interval $[0, \beta]$ with periodic boundary conditions, i.e., the torus of length $\beta$.
 For each $\gamma\in \varGamma({\bs m})$, the {\it winding number} is defined by the following line integral:
 \begin{align}
 w(\gamma)=\bigg|
 \int_{\gamma} \frac{d\tau}{\beta}
 \bigg|. \label{DefWinding}
 \end{align}
 Note that, because $\gamma$ is a piecewise smooth curve,
 the right hand side of (\ref{DefWinding}) is well defined. 
 
 Our main result in this section is stated as follows.

 \begin{Thm}[Random loop representation I]\label{RLR}
 Suppose that $d\le 3$ for $\natural=\mathrm{rad}$, $d\in \BbbN$ for $\natural=\mathrm{HH}$.
 For $\natural=\rr, \hh$, one obtains 
 \begin{align}
 Z_{\natural}(\beta)=\int_{L_{\beta}} d\rho_{\beta, \natural}({\bs m}) (-1)^{\pi({\bs X}_{\beta} ({\bs m}))} \prod_{\gamma\in \varGamma({\bs m})} 
 \cosh \Big(
 \beta b w(\gamma)
 \Big).
 \end{align}
 \end{Thm}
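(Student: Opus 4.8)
The plan is to start from the trace formulas already established, namely
\begin{align}
Z_{\natural}(\beta)=\int_{L_{\beta}} d\rho_{\beta, \natural}({\bs m})\,(-1)^{\pi({\bs X}_{\beta}({\bs m}))}\exp\Bigg\{\beta b\sum_{j=1}^{N}\sigma(X^{(j)}(0))\Bigg\},\qquad \natural=\rr,\hh
\end{align}
(Theorem~\ref{PartitionRho} for $\natural=\rr$ and its analogue in Section~\ref{ConstFHH} for $\natural=\hh$), and to replace the Zeeman weight by $\prod_{\gamma\in\varGamma({\bs m})}\cosh(\beta b\, w(\gamma))$ \emph{inside} the integral. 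This reduction rests on two facts: (i) the Zeeman weight factorizes over the loops, and for each loop $\gamma$ the net spin it carries at time $0$ has absolute value $w(\gamma)$; and (ii) $\rho_{\beta,\natural}$ is invariant under the operation of reversing all spins along a single loop, so that inside the integral a function may be freely replaced by its average over all such operations.

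For (i), I would write $S(\gamma)=\sum_{j\,:\,x^{(j)}(0)\in\gamma}\sigma(X^{(j)}(0))$, so that $\sum_{j}\sigma(X^{(j)}(0))=\sum_{\gamma\in\varGamma({\bs m})}S(\gamma)$ and $\exp\{\beta b\sum_j\sigma(X^{(j)}(0))\}=\prod_{\gamma}e^{\beta b S(\gamma)}$. To see $|S(\gamma)|=w(\gamma)$ I would exploit the structure of the loop construction of \cite{AL}: along a loop the orientation in time reverses exactly at the electron encounters, and the spin value switches there as well because an encounter always joins two electrons of opposite spin, whereas at the times $t=0$ and $t=\beta$ the loop continues with unchanged orientation and spin. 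Hence the product (spin)$\times$(time--orientation) is constant along $\gamma$, say equal to $\epsilon\in\{+1,-1\}$. Every point where $\gamma$ crosses the level $t=0$ going upward then carries an electron of spin $\epsilon$, and every point where it crosses $t=0$ going downward carries an electron of spin $-\epsilon$; writing $w_{\uparrow},w_{\downarrow}$ for the numbers of such up/down crossings one has $w(\gamma)=|w_{\uparrow}-w_{\downarrow}|$ and $S(\gamma)=\epsilon(w_{\uparrow}-w_{\downarrow})$, hence $\cosh(\beta b S(\gamma))=\cosh(\beta b\, w(\gamma))$.

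For (ii), for $\gamma_{0}\in\varGamma({\bs m})$ let $R_{\gamma_{0}}$ reverse every spin value along $\gamma_{0}$ and leave the spatial trajectories and all other spins untouched. Since every encounter lies inside a single loop, $R_{\gamma_0}$ maps $L_{\beta}$ into itself and fixes the spatial data, the loop collection $\varGamma({\bs m})$, the winding numbers, and the permutation $\tau$ with ${\bs X}_{\beta}=\tau{\bs X}_{0}$ (hence $(-1)^{\pi({\bs X}_{\beta})}$), while sending $S(\gamma_{0})\mapsto-S(\gamma_{0})$. I would then verify that $\rho_{\beta,\natural}$ is $R_{\gamma_0}$-invariant by inspecting its ingredients: the loop measure $\Me$ records only the spatial trajectories and the dynamically allowed permutation, which are spin-independent because ${\bs L}$ and the hopping act trivially on the spin label; the Coulomb weight $e^{-\int_{0}^{\beta}V({\bs X}_{s})ds}$ (resp.\ its Lang--Firsov counterpart) depends on ${\bs X}_{s}$ only through the spatial positions, since on $L_{\beta}$ no two equal-spin electrons ever coincide; and the photon (resp.\ phonon) factor $W({\bs X}_{\bullet})=\int d\mu_{\beta,\rr}\,e^{i\Pi({\bs A}_{\beta}({\bs X}_{\bullet}))}$ is a functional of the spatial trajectories alone, because $\Theta_{XY}$ sees the spin labels only through $\delta_{\sigma(X)\sigma(Y)}$ and each electron's spin is constant in time, so ${\bs a}_{\beta}(X^{(j)}_{\bullet})$ depends only on the spatial path of electron $j$. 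Granting this, the commuting involutions $\{R_{\gamma}\}_{\gamma\in\varGamma({\bs m})}$ generate a group under which $\rho_{\beta,\natural}$ is equidistributed on orbits, so
\begin{align}
Z_{\natural}(\beta)=\int_{L_{\beta}}d\rho_{\beta,\natural}\,(-1)^{\pi({\bs X}_{\beta})}\,\frac{1}{2^{|\varGamma({\bs m})|}}\!\sum_{\eta\in\{\pm1\}^{\varGamma({\bs m})}}\prod_{\gamma\in\varGamma({\bs m})}e^{\eta_{\gamma}\beta b S(\gamma)}=\int_{L_{\beta}}d\rho_{\beta,\natural}\,(-1)^{\pi({\bs X}_{\beta})}\prod_{\gamma\in\varGamma({\bs m})}\cosh\big(\beta b\, w(\gamma)\big),
\end{align}
where the last equality uses (i). The main obstacle I anticipate is not the combinatorial identity $|S(\gamma)|=w(\gamma)$ (which is essentially the content of \cite{AL}) but the careful verification that $\rho_{\beta,\natural}$ is genuinely invariant under the per-loop spin flips — in particular that the bosonic dressing $W({\bs X}_{\bullet})$, built from the Gaussian integral of $e^{i\Pi({\bs A}_{\beta})}$, depends on the path only through its spatial component; this is where one must use that spin is a conserved label along each world line and that all electron--boson couplings are spin-diagonal.
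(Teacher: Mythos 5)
Your proposal is correct and follows essentially the same approach as the paper: both start from the trace formula $Z_{\natural}(\beta)=\int_{L_{\beta}}d\rho_{\beta,\natural}\,(-1)^{\pi({\bs X}_{\beta})}e^{\beta b\sum_j\sigma(X^{(j)}_0)}$, exploit the invariance of $\rho_{\beta,\natural}$ under per-loop spin flips (the paper's maps $g_j$ generate exactly your $R_\gamma$'s, with $g_j=R_\gamma$ whenever electron $j$'s $t=0$ position lies on $\gamma$), and average the Zeeman weight over the $2^{|\varGamma({\bs m})|}$ orbit to generate the $\cosh(\beta b\,w(\gamma))$ factors via the identity $|S(\gamma)|=w(\gamma)$ (the paper's Lemma~\ref{SWinding}). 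Your up/down-crossing argument for $|S(\gamma)|=w(\gamma)$ and the ingredient-by-ingredient check that the photon/phonon dressing $W({\bs X}_\bullet)$ depends on paths only through their spatial components are exactly the content the paper compresses into Lemmas~\ref{SWinding} and~\ref{Inv}; the only cosmetic difference is that the paper indexes the flip operations by electron labels $j\in\{1,\dots,N\}$ (so they are globally defined maps on $L_\beta$), whereas you index them by loops $\gamma\in\varGamma({\bs m})$ (defined per-$\bs m$), but the resulting averaging inside the integral is identical.
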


 For ${\bs m}\in L_{\beta, \infty}$,
   we can also associate the path $({\bs X}_t({\bs m}))_{0\le t\le \beta}$ with the loops in the   same manner as  above.   Then, we can  prove the following.
 
  \begin{Thm}[Random loop representation II]\label{RLR2}
  Suppose that $d\le 3$ for $\natural=\mathrm{rad}$, $d\in \BbbN$ for $\natural=\mathrm{HH}$.
 For $\natural=\rr, \hh$, one obtains 
 \begin{align}
 Z_{\natural, \infty}(\beta)=\int_{L_{\beta, \infty}} d\rho_{\beta, \natural, \infty}({\bs m})  \prod_{\gamma\in \varGamma({\bs m})} 
 \cosh \Big(
 \beta b w(\gamma)
 \Big).
 \end{align}
 \end{Thm}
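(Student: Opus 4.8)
The plan is to deduce the loop representation from the path--integral formula already established. By Theorem~\ref{TRFrad} (in both its radiation and its Holstein--Hubbard version) one has, for $\natural=\rr,\hh$,
\[
Z_{\natural,\infty}(\beta)=\int_{L_{\beta,\infty}}d\rho_{\beta,\natural,\infty}({\bs m})\,(-1)^{\pi({\bs X}_\beta({\bs m}))}\exp\Big\{\beta b\sum_{j=1}^N\sigma\big(X^{(j)}(0)\big)\Big\},
\]
and since, for $N=|\Lambda|-1$ and $U=\infty$, every dynamically allowed permutation is even (as recorded in Section~\ref{FKIEl}), we have $(-1)^{\pi({\bs X}_\beta({\bs m}))}=1$ on $L_{\beta,\infty}$. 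Hence it remains to prove
\[
\int_{L_{\beta,\infty}}d\rho_{\beta,\natural,\infty}({\bs m})\exp\Big\{\beta b\sum_{j=1}^N\sigma\big(X^{(j)}(0)\big)\Big\}=\int_{L_{\beta,\infty}}d\rho_{\beta,\natural,\infty}({\bs m})\prod_{\gamma\in\varGamma({\bs m})}\cosh\big(\beta b\,w(\gamma)\big).
\]

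I would prove this identity by exploiting a spin--reversal symmetry of $\rho_{\beta,\natural,\infty}$. On $L_{\beta,\infty}$ the spin components are constant in time, and the boundary condition ${\bs X}_\beta=\tau{\bs X}_0$ forces $\sigma^{(j)}=\sigma^{(\tau(j))}$; hence all electrons whose world lines make up a single loop $\gamma\in\varGamma({\bs m})$ (equivalently, a single cycle of $\tau$) carry one common spin $\epsilon_\gamma\in\{\pm1\}$. For each such $\gamma$ let $R_\gamma$ be the transformation of $L_{\beta,\infty}$ sending $\epsilon_\gamma\mapsto-\epsilon_\gamma$ while leaving the spatial trajectory and $\tau$ fixed. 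Then $R_\gamma$ is a measure--preserving involution: for $\natural=\rr$ the density of $\rho_{\beta,\rr,\infty}$ against $\Mei$ is built from the hopping amplitudes, the residual Coulomb term $V_{{\rm o}}$ (a function of the spatial positions only), and the functional $W({\bs X}_\bullet)$ --- and $W$ depends on ${\bs m}$ only through the spatial world lines, because the line--integral variables $\Theta_{XY}$ carry the factor $\delta_{\sigma(X)\sigma(Y)}$, which is $1$ between consecutive points of a fixed world line --- while $\Mei$ itself treats the spins only as labels, the Markov chain of Section~\ref{FKIEl} keeping each spin frozen. Moreover, at $U=\infty$ electrons never share a site, so neither dynamical allowedness nor the collection $\varGamma$ is altered by $R_\gamma$. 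For $\natural=\hh$ the same holds after conjugating by the Lang--Firsov transformation $\mathscr{U}$ of Lemma~\ref{LangFTr}, which replaces the radiation line integrals by the factors $e^{i\Phi_{\mathrm{S}}(\zeta_{xy})}$ and the Coulomb matrix by $U_{\mathrm{eff}}$ without otherwise changing the structure.

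The combinatorial step is then immediate: at $U=\infty$ there are no electron encounters, so the loop produced by a cycle $(j_1,\dots,j_{\ell_\gamma})$ of $\tau$ is just the concatenation of the $\ell_\gamma$ forward world lines, glued through the identification of $t=\beta$ with $t=0$; hence its winding number is $w(\gamma)=\ell_\gamma$, the number of world lines it contains. Therefore $\sum_{j=1}^N\sigma(X^{(j)}(0))=\sum_{\gamma\in\varGamma({\bs m})}\epsilon_\gamma\,w(\gamma)$, so that
\[
\exp\Big\{\beta b\sum_{j=1}^N\sigma(X^{(j)}(0))\Big\}=\prod_{\gamma\in\varGamma({\bs m})}e^{\epsilon_\gamma\beta b\,w(\gamma)}.
\]
Averaging over the $2^{|\varGamma({\bs m})|}$ sign choices $(\epsilon_\gamma)_\gamma$ and using $\tfrac12(e^{x}+e^{-x})=\cosh x$ yields $\prod_{\gamma}\cosh(\beta b\,w(\gamma))$, which is itself invariant under all the $R_\gamma$; since $\rho_{\beta,\natural,\infty}$ is invariant under this group of maps, replacing the integrand by its average leaves the integral unchanged, and the claimed identity follows. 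This argument is the simplification, valid at $U=\infty$, of the reasoning behind Theorem~\ref{RLR}. The step requiring the most care is precisely the identification $w(\gamma)=\ell_\gamma$: one must check that the no--encounter geometry of $L_{\beta,\infty}$ makes the loop construction of Section~\ref{ConstLP} reduce to the cycles of $\tau$ traversed monotonically in time, which is exactly what converts the magnetization $\sum_j\sigma(X^{(j)}(0))$ into the product of cosh's.
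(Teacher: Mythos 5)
Your proof is correct and follows essentially the same approach as the paper: namely, average over spin flips along the loops using the invariance of $\rho_{\beta,\natural,\infty}$ (the paper's Lemmas \ref{SWinding}, \ref{Inv}, \ref{BBxi} specialized to $U=\infty$, as the paper's proof instructs). The only difference is one of bookkeeping: you generate the averaging group directly by loop-wise reversals $R_\gamma$ (exploiting the $U=\infty$ simplification $w(\gamma)=\ell_\gamma$), whereas Theorem \ref{RLR} and its lemmas index the flips by the $2^N$ elements $g^{\bs\xi}$ labeled by electrons — two parametrizations of the same combinatorics.
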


 \subsection{Proof of Theorem \ref{RLR}}
 For each loop $\gamma\in \varGamma({\bs m})$, set $\gamma_{t}=\gamma\cap {\bs X}_t({\bs m})$,  the  cross section of $\gamma$ with the cutting plane described   by the equation $\tau=t$ in 
 the $\tau$-$X$ plane. 
 For notational simplicity, suppose that $\gamma_t=(X_{t}^{(i_1)}({\bs m}),\dots, X_t^{(i_n)}({\bs m}))$
 with $X_t^{(i)}({\bs m})=(x^{(i)}_t({\bs m}), \sigma_0^{(i)}({\bs m}))$. Note that $i_1, \dots, i_n$ and $n$ could depend on $t$.
 Then we readily  confirm that   
 \begin{align}
 \sigma_t^{(i_1)}({\bs m})+\cdots+\sigma_t^{(i_n)}(\bs m)=\vepsilon(\gamma) w(\gamma), \ \ 
 \vepsilon(\gamma)=\pm 1, \label{base}
 \end{align}
 where
 we understand that the left hand side of (\ref{base}) equals $0$, provided that  $\gamma_t=\varnothing$.
 The factor $\vepsilon(\gamma)$ is called the  {\it parity} of $\gamma$, see Figures \ref{Fig1} and \ref{CorrLoop}.
 
 The following lemma is an immediate consequence of (\ref{base}).
 \begin{lemm}\label{SWinding}
 For each ${\bs m} \in L_{\beta}$, we have 
 \begin{align}
 \sum_{j=1}^N\sigma_0^{(j)}({\bs m})=\sum_{\gamma\in \varGamma({\bs m})} \vepsilon(\gamma)w(\gamma).
 \end{align}
 \end{lemm}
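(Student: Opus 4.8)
The plan is to sum the pointwise identity (\ref{base}) over all loops $\gamma\in\varGamma({\bs m})$, after first evaluating it on a single, conveniently chosen time slice. I would fix ${\bs m}\in L_{\beta}$ and pick any time $t\in(0,\beta)$ that is not one of the (finitely many) jump times of the path $({\bs X}_s({\bs m}))_{0\le s\le\beta}$; at such a $t$ no electron encounter is occurring, so each loop has a well-defined cross section $\gamma_t=\gamma\cap{\bs X}_t({\bs m})$. By the very construction of the loops recalled just before the statement---each electron world line is cut at its encounter times and the resulting pieces are reassembled, each used exactly once, into the loops of $\varGamma({\bs m})$---every one of the $N$ points $X_t^{(1)}({\bs m}),\dots,X_t^{(N)}({\bs m})$ lies on exactly one loop. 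Writing $\gamma_t=\bigl(X_t^{(i_1)}({\bs m}),\dots,X_t^{(i_n)}({\bs m})\bigr)$ with $n$ and $i_1,\dots,i_n$ depending on $\gamma$, this says precisely that the index sets $\{i_1,\dots,i_n\}$ are pairwise disjoint over distinct $\gamma$ and have union $\{1,\dots,N\}$ (loops with empty cross section at $t$ contribute $0$ on both sides, by the convention attached to (\ref{base}), and may be dropped).

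Granting this partition, I would simply compute
\begin{align}
\sum_{j=1}^N\sigma_t^{(j)}({\bs m})
=\sum_{\gamma\in\varGamma({\bs m})}\ \sum_{k=1}^{n}\sigma_t^{(i_k)}({\bs m})
=\sum_{\gamma\in\varGamma({\bs m})}\vepsilon(\gamma)\,w(\gamma),
\end{align}
the last equality being exactly (\ref{base}) applied loop by loop. Since $\sigma_s^{(j)}({\bs m})$ is constant in $s$ for each $j$ (property (i) of the paths in $L_{\beta}$), the left-hand side equals $\sum_{j=1}^N\sigma_0^{(j)}({\bs m})$, which is the assertion of the lemma.

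The only genuine content is the partition statement of the first paragraph, so that is where the (very modest) work lies: one must track the loop-drawing algorithm carefully enough to see that it consumes each electron world line exactly once and that $\gamma_t$ records precisely those electrons whose time-$t$ position sits on $\gamma$. Once this combinatorial fact is in place the proof is a rearrangement of a finite sum, with no analytic input, and the particular time slice is irrelevant because the spin components are time-independent; this is why the lemma is rightly described as an immediate consequence of (\ref{base}).
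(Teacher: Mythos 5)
Your proof is correct and follows exactly the route the paper intends when it calls the lemma an ``immediate consequence'' of (\ref{base}): fix a generic time slice, observe that the loop-drawing procedure consumes each electron world line once so the cross sections $\gamma_t$ partition $\{1,\dots,N\}$, sum (\ref{base}) over $\gamma\in\varGamma({\bs m})$, and replace $\sigma_t^{(j)}$ by $\sigma_0^{(j)}$ using the time-constancy of the spin components on $L_\beta$. The only point the paper leaves implicit, and which you correctly identify as the sole genuine content, is the disjointness of the index sets; your remark about the empty cross-section convention also handles loops not meeting the chosen slice.
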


 Let $\big(X_t^{(j)}({\bs m})\big)_{0\le t \le \beta}$ be  a  trajectory of the $j$-th electron with 
 \begin{align}
 X_t^{(j)}({\bs m})=\big(x_t^{(j)}({\bs m}), \sigma_0^{(j)}(\bs m)\big).
 \end{align}
 For each $t\in [0, \beta]$, we set 
 $
 \overline{X}_t^{(j)}({\bs m})=\big(x_t^{(j)}({\bs m}), -\sigma_0^{(j)}(\bs m)\big)
 $,  the  spin-reversed point corresponding to  $X_t^{(j)}({\bs m})$ in space-time.
 For $\gamma\in \varGamma({\bs m})$ characterized by 
 \begin{align}
 \gamma_t=\big(
 X_t^{(i_1)}(\bs m), \dots, X_t^{(i_n)}({\bs m})
 \big),
 \end{align}
  the {\it conjugate loop}  $\overline{\gamma}$ is defined through the relation
 $\overline{\gamma}_t=\big(
 \overline{X}_t^{(i_1)}({\bs m}), \dots, \overline{X}_t^{(i_n)}({\bs m})
 \big)$ for all $ 0\le t \le \beta$.
 
 Let  ${\bs m} \in L_{\beta}$.  We express $\varGamma({\bs m})$  as $\varGamma({\bs m})
 =\{\gamma_1, \dots, \gamma_K\}$, provided that $\varGamma({\bs m})\neq \varnothing$.
 As before, we set 
 $\gamma_{\alpha, t}=\gamma_{\alpha}\cap {\bs X}_t({\bs m})=
 \big(
 X_{\alpha, t}^{(i_1)}({\bs m}), \dots, X_{\alpha, t}^{(i_n)}({\bs m})
 \big)$ with $X_{\alpha,  t}^{(i)}({\bs m})=\big(
 x_{\alpha, t}^{(i)}({\bs m}), \sigma_{\alpha, 0}^{(i)}({\bs m})
 \big)$. For each $j\in \{1, \dots, N\}$, there exists a $k(j)\in \{1, \dots, K\}$ such that 
 $X_{t=0}^{(j)}({\bs m})\in \gamma_{k(j)}$. With this notation, we define a bijective map 
 $g_j: L_{\beta}\to L_{\beta}$ through the following relation:
 \begin{align}
 \varGamma(g_j{\bs m})=\big\{\gamma_1^{\flat_j}, \dots,  \gamma_K^{\flat_j}\big\}, \label{GammagEx}
 \end{align}
 where $\gamma_{\alpha}^{\flat_j}$ is given by the following manner:
 \begin{itemize}
 \item[(a)] $\gamma_{k(j)}^{\flat_j}=\overline{\gamma}_{k(j)}$, 
 \item[(b)] remainder loops $\{\gamma_{\alpha}^{\flat_j}\}_{\alpha\neq k(j)}$ are uniquely determined by the rearrangement of the spin configuration of electron trajectories needed to maintain (i)-(iii).
 \end{itemize}
 Roughly speaking, (a) the map $g_j$ flips the
spin values  along the loop  containing the $t =0$ position of the $j$-th particle; (b) the spin flip induces the rearrangement of  the spin configuration of remainder electron trajectories in order to maintain (i)-(iii), see Example \ref{ExReverse}.

\begin{example}\label{ExReverse}{\rm 
For reader's convenience, let us consider a path give in Figure \ref{A}.
 \begin{figure}[H]
    \begin{tabular}{cc}
      \begin{minipage}[t]{0.5\hsize}
        \centering
        \input{fig4.tex}
        \caption{
        }
        \label{A}
      \end{minipage} &     
      
      \begin{minipage}[t]{0.45\hsize}
        \centering
        \input{fig30.tex}
        \caption{
        }
        \label{B}
      \end{minipage}  
      \end{tabular}
      \end{figure}
 As shown in Figure \ref{B}, this path has two loops, i.e.,  the black colored loop and the green colored loop.
 In this case, the action of $g_2$ induces the change of the spin configurations from Figure \ref{C} to Figure \ref{D}.
 \begin{figure}[H]
     \begin{tabular}{cc}
      \begin{minipage}[t]{0.5\hsize}
        \centering
        \input{fig4.tex}
        \caption{
        }
        \label{C}
      \end{minipage} &     
      
      \begin{minipage}[t]{0.45\hsize}
        \centering
        \input{fig5.tex}
        \caption{
        }
        \label{D}
      \end{minipage}   
    \end{tabular}
  \end{figure}
The action of $g_1$ to this path  induces the same change of the spin configuration.
The action of $g_3$ to this path only flips  the spin value  along  the trajectory of electron 3. $\Box$ 
}
 \end{example}

 The composition map  $g_i \circ g_j$ will be simply  denoted by $g_ig_j$.
 Trivially, we have the following:
 \begin{itemize}
 \item $g_jg_j=\mathrm{Id}$, the identity map on $L_{\beta}$,  for all $j\in \{1, \dots, N\}$;
 \item $g_ig_j=g_j g_i$ for all $i, j=\{1, \dots, N\}$.
 \end{itemize}
 Let $\mathcal{G}$ be the symmetry group generated by $g_1, \dots, g_N$.
 Each $g\in \mathcal{G}$ can be expressed as $g=g^{{\bs \xi}}$, where
 \begin{align}
 g^{{\bs \xi}}=g_1^{\xi_1}g_2^{\xi_2}\cdots g_N^{\xi_N},\ \ {\bs \xi}=(\xi_1, \dots, \xi_N) \in \{0, 1\}^N.
 \end{align}
 Here, we understand that $g_j^0=\mathrm{Id}$ and $g_j^1=g_j$.
 In particular, $\mathcal{G}$ has $2^N$ elements.

 \begin{lemm}\label{Inv}
 We have the following:
 \begin{itemize}
 \item[{\rm (i)}] $\rho_{\beta, \natural}(g E)=\rho_{\beta, \natural}(E)$ for each  $E\in \mathcal{F}_{\beta}^{N}$,  $g\in \mathcal{G}$ and $\natural=\rr, \hh$, where $\mathcal{F}_{\beta}^N:=\mathcal{F}^N \cap L_{\beta}$. Here, the reader should recall that $\mathcal{F}$ is the $\sigma$-algebra on $M$ introduced in Section \ref{SingleElFKF}. 
 \item[{\rm (ii)}] $
 (-1)^{\pi({\bs X}_{\beta}( g{\bs m}))}=
 (-1)^{\pi({\bs X}_{\beta}( {\bs m}))}
 $ for all $g\in \mathcal{G}$.
 \end{itemize}
 \end{lemm}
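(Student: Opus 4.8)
The plan is to derive both statements from a single structural fact: each $g_j$ (hence every $g\in\mathcal G$, since $\mathcal G$ is generated by the $g_j$) acts on a path $\bs m\in L_\beta$ only by flipping the spin labels along the single loop $\gamma_{k(j)}$ and re-routing the electron world lines at the encounters lying on that loop. In particular $g_j$ leaves invariant all of the \emph{spatial} data of the path: the family of sites occupied at each time (counted with multiplicity), the number $E(\bs m)$ of encounter events, and --- by the very definition \eqref{GammagEx} --- the number $|\varGamma(\bs m)|$ of loops. Granting this, (i) will follow from the $\mathcal G$-invariance of $\nu_\beta$ together with the fact that the density of $\rho_{\beta,\natural}$ with respect to $\nu_\beta$ depends only on spatial data, and (ii) from an explicit expression for $(-1)^{\pi(\bs X_\beta)}$ in terms of spatial data alone.

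For (i): by Theorem \ref{PartitionRho} (and, when $\natural=\hh$, by Lemma \ref{LangFTr} and the corresponding result in Section \ref{ConstFHH}) one has $d\rho_{\beta,\natural}=c_\beta\,W(\bs X_\bullet)\,e^{-\int_0^\beta V(\bs X_u)\,du}\,d\nu_\beta$ with $c_\beta>0$ and $W(\bs X_\bullet)=\int_{Q_{\beta,\rr}}d\mu_{\beta,\rr}\,e^{i\Pi(\bs A_\beta(\bs X_\bullet))}$. I would first note that the coupling vectors $\Theta_{XY}$ (resp.\ the vectors $\zeta_{xy}$ in the Holstein--Hubbard case) carry the Kronecker factor $\delta_{\sigma(X)\sigma(Y)}$, which equals $1$ along any $\bs m\in L_\beta$ because the spin components are constant in time; hence $\bs A_\beta(\bs X_\bullet)$, and therefore $W(\bs X_\bullet)$, is a function of the spatial components $x^{(j)}_\bullet$ only, and the same is trivially true of the Coulomb term $V$. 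Consequently the density is $g_j$-invariant, and it remains to prove $\nu_\beta(g_jB)=\nu_\beta(B)$. For this I would use that the one-electron laws $P_X$ ignore the spin label --- the spin coordinate is deterministic and time-independent, and the jump rates $t_{xy}/d(y)$ do not see it --- so that, for a fixed family of spatial trajectories, flipping spins along a loop together with the attendant re-labeling of electron trajectories is a bijection of the conditioning events $D$ and $D_{\mathrm P}(\beta)$ preserving $\prod_jP_{X^{(j)}}$; summing over $\bs X\in[\Omega^N_{\neq}]$ and $\tau\in\mathfrak S_N(\bs X)$ as in \eqref{Measurenu} then yields the invariance of $\nu_\beta$.

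For (ii): I would record the elementary identity $(-1)^{\pi(\bs X_\beta(\bs m))}=\mathrm{sgn}(\tau)=(-1)^{N-c(\bs m)}$, where $c(\bs m)$ is the number of cycles of the dynamically allowed permutation $\tau$ with $\bs X_\beta(\bs m)=\tau\bs X_0(\bs m)$. The $c(\bs m)$ closed curves obtained from the electron world lines by gluing them according to the cycles of $\tau$ (with periodic time) are converted into the loop collection $\varGamma(\bs m)$ by one ``switch'' surgery at each of the $E(\bs m)$ encounter events, and each such surgery changes the number of closed components by exactly $\pm1$; hence $c(\bs m)\equiv|\varGamma(\bs m)|+E(\bs m)\pmod 2$ and
\[
(-1)^{\pi(\bs X_\beta(\bs m))}=(-1)^{\,N-|\varGamma(\bs m)|+E(\bs m)} .
\]
Since $N$ is fixed, $|\varGamma(g_j\bs m)|=|\varGamma(\bs m)|$ by definition \eqref{GammagEx} (the collection $\varGamma(g_j\bs m)$ consists of $\overline{\gamma}_{k(j)}$ and the $|\varGamma(\bs m)|-1$ rearranged loops $\gamma_\alpha^{\flat_j}$), and $E(g_j\bs m)=E(\bs m)$ because $g_j$ does not change the space--time set occupied by the trajectories; therefore $(-1)^{\pi(\bs X_\beta(g_j\bs m))}=(-1)^{\pi(\bs X_\beta(\bs m))}$, and iterating over the generators of $\mathcal G$ gives (ii).

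The step I expect to be the main obstacle is justifying the structural fact itself: one must unwind the combinatorial recipe defining $g_j$ --- flip the spins along $\gamma_{k(j)}$, then invoke clause (b) to restore properties (i)--(iii) --- and verify at each encounter, by examining the two local re-pairings of the four incident strand-ends, that the spatial skeleton of the path, the count $E(\bs m)$, and the measure $\prod_jP_{X^{(j)}}$ are genuinely preserved. Once this bookkeeping is done, the sign identity $\mathrm{sgn}(\tau)=(-1)^{N-c}$ and the surgery count are routine.
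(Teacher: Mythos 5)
Your argument for (i) mirrors the paper's own proof: the density $W(\bs X_\bullet)\,e^{-\int_0^\beta V(\bs X_u)\,du}$ of $\rho_{\beta,\natural}$ against $\nu_\beta$ is $\mathcal G$-invariant because $\bs A_\beta$, $W$ and $V$ depend only on the spatial multiset data (your observation that the Kronecker factor $\delta_{\sigma(X)\sigma(Y)}$ in $\Theta_{XY}$ equals $1$ everywhere on $L_\beta$ is exactly the right point), and $\nu_\beta$ is $\mathcal G$-invariant because the jump rates $t_{xy}/d(y)$ are blind to spin. The paper simply states these three facts without elaboration, and you supply reasonable justifications; this is the same route.

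For (ii) there is a genuine gap. You assert that ``each such surgery changes the number of closed components by exactly $\pm 1$,'' but for a $4$-valent vertex a resolution change can alter the number of closed walks by $+1$, $-1$, \emph{or $0$}, the last case occurring whenever the external pairing of the four half-edges is the \emph{third} resolution $\{i^-\!-\!j^+,\ i^+\!-\!j^-\}$. Your one-vertex-at-a-time interpolation from the electron transitions to the loop transitions therefore gives the parity relation $c(\bs m)\equiv |\varGamma(\bs m)|+(\text{number of encounters})$ modulo $2$ only if that third pairing never arises at the vertex being switched. That is in fact true here, but it needs an argument: on any external arc at an intermediate stage, the spin flips exactly at each loop-resolved vertex, and the direction of travel relative to time also flips exactly there; an arc from $i^-$ to $j^+$ would then have to pass through an odd number (to match the flipped spin) and simultaneously an even number (to match the time orientation) of loop-resolved vertices, a contradiction. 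This verification is what your proposal leaves implicit, so as written (ii) is not fully proven, even though the identity $(-1)^{\pi(\bs X_\beta)}=(-1)^{\,N-|\varGamma(\bs m)|+(\text{number of encounters})}$ and the conclusion are correct. The paper offers no proof of (ii) beyond calling it easy; a shorter alternative would be to check directly that each $g_j$ leaves the permutation $\tau$ in $\bs X_\beta=\tau\bs X_0$ (and hence its sign) unchanged.
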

 \begin{proof}
 (i) First, note that, by the arguments in the proof of Theorem \ref{PartitionRho}, we have
 \begin{align}
 \rho_{\beta, \mathrm{rad}}(E)={\bs Z}_{\rr}(\beta) \int_E d\nu_{\beta} W({\bs X}_{\bullet}).
  \end{align} 
  For all $g\in \mathcal{G}$ and ${\bs m} \in L_{\beta}$, we readily confirm that
  \begin{align}
  W({\bs X}_{\bullet}(g{\bs m}))&=W({\bs X}_{\bullet}({\bs m})),\\
  \int_0^{\beta} V({\bs X}_u(g{\bs m}))du&=\int_0^{\beta} V({\bs X}_u({\bs m}))du,\\
  \nu_{\beta}(gE) &= \nu_{\beta}(E).
   \end{align} 
 Taking these properties into account, we can show (i). 
 
 To check (ii) is easy. 
 \end{proof}

 Let ${\bs m}\in L_{\beta}$. Corresponding to $\varGamma({\bs m})=\{\gamma_1, \dots, \gamma_K\}$, there
 is a unique partition $(I_{\gamma})_{\gamma\in \varGamma({\bs m})}$ of $\{1, \dots, N\}$ such that 
 $
 X_{t=0}^{(i)}({\bs m}) \in \gamma
 $ for all $i\in I_{\gamma}$. 
 For each ${\bs \xi}=(\xi_i)_{i=1}^N\in \{0, 1\}^N$,
 we set ${\bs \xi}_{\gamma}=(\xi_i)_{i\in I_{\gamma}} \in \{0, 1\}^{|I_{\gamma}|}$. Trivially, 
 ${\bs \xi}=\bigcup_{\gamma\in \varGamma({\bs m})} {\bs \xi}_{\gamma}$. With this notation, we have the 
 following decomposition:
 $
 g^{{\bs \xi}}=\prod _{\gamma\in \varGamma({\bs m})} g^{{\bs \xi}_{\gamma}},
 $
 where $
 g^{{\bs \xi}_{\gamma}}=\prod_{i\in I_{\gamma}} g_i^{{\bs \xi}_i}
 $.
 
 \begin{lemm} \label{BBxi}
 For each  ${\bs m}\in L_{\beta}$ and ${\bs \xi}\in \{0, 1\}^N$, we define  random variables on $L_{\beta}$ by
 \begin{align}
 B({\bs m})&=\beta b \sum_{\gamma\in \varGamma({\bs m})} \vepsilon(\gamma) w(\gamma),\\
 B_{{\bs \xi}}({\bs m}) &= \beta b \sum_{\gamma\in \varGamma({\bs m})} \vepsilon(\gamma)
 (-1)^{\sum_{i\in I_{\gamma}} \xi_i} w(\gamma).
 \end{align}
 Then we have
 \begin{align}
 B(g^{\bs \xi} {\bs m})=B_{\bs \xi}({\bs m}). \label{ActgB}
 \end{align}
 \end{lemm}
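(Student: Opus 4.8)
The plan is to reduce the whole statement to the time-zero spin configuration. By definition $B({\bs n})=\beta b\sum_{\gamma\in\varGamma({\bs n})}\vepsilon(\gamma)w(\gamma)$ for every ${\bs n}\in L_{\beta}$, so applying Lemma \ref{SWinding} to the configuration $g^{\bs\xi}{\bs m}\in L_{\beta}$ gives $B(g^{\bs\xi}{\bs m})=\beta b\sum_{l=1}^{N}\sigma_{0}^{(l)}(g^{\bs\xi}{\bs m})$. Hence it is enough to understand how $g^{\bs\xi}$ changes the multiset of spin values $\{\sigma_{0}^{(l)}\}_{l=1}^{N}$ at time $0$.

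For this I would use the factorization $g^{\bs\xi}=\prod_{\gamma\in\varGamma({\bs m})}g^{{\bs\xi}_{\gamma}}$ with $g^{{\bs\xi}_{\gamma}}=\prod_{i\in I_{\gamma}}g_{i}^{\xi_{i}}$ recorded just before the lemma. For $i\in I_{\gamma}$ the loop through $X_{t=0}^{(i)}({\bs m})$ is precisely $\gamma$, so each $g_{i}$ with $i\in I_{\gamma}$ reverses the spin values along $\gamma$ (parts (a)--(b) of the construction of $g_{j}$); using $g_{j}^{2}=\mathrm{Id}$ and $g_{i}g_{j}=g_{j}g_{i}$ one gets that $g^{{\bs\xi}_{\gamma}}$ reverses the spins along $\gamma$ when $\sum_{i\in I_{\gamma}}\xi_{i}$ is odd and acts trivially on $\gamma$ when it is even. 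Since the time-$0$ cross section of $\gamma$ is $\{X_{0}^{(i)}({\bs m}):i\in I_{\gamma}\}$ and these sets are pairwise disjoint over $\gamma\in\varGamma({\bs m})$, the effect of $g^{\bs\xi}$ on the time-$0$ data is exactly to flip $\sigma_{0}^{(i)}({\bs m})\mapsto-\sigma_{0}^{(i)}({\bs m})$ for those $i$ lying in a loop $\gamma$ with $\sum_{i'\in I_{\gamma}}\xi_{i'}$ odd. Feeding this into the identity above and then using (\ref{base}) at $t=0$ for each loop of ${\bs m}$, i.e.\ $\sum_{i\in I_{\gamma}}\sigma_{0}^{(i)}({\bs m})=\vepsilon(\gamma)w(\gamma)$, I obtain
\begin{align}
B(g^{\bs\xi}{\bs m})
&=\beta b\sum_{l=1}^{N}\sigma_{0}^{(l)}(g^{\bs\xi}{\bs m})
=\beta b\sum_{\gamma\in\varGamma({\bs m})}(-1)^{\sum_{i\in I_{\gamma}}\xi_{i}}\sum_{i\in I_{\gamma}}\sigma_{0}^{(i)}({\bs m})\no
&=\beta b\sum_{\gamma\in\varGamma({\bs m})}(-1)^{\sum_{i\in I_{\gamma}}\xi_{i}}\vepsilon(\gamma)w(\gamma)
=B_{\bs\xi}({\bs m}),
\end{align}
which is (\ref{ActgB}).

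The step that needs care is the claim that $g^{{\bs\xi}_{\gamma}}$ alters the time-$0$ spins only along $\gamma$ with the stated parity, i.e.\ that the ``rearrangement of the spin configuration'' in part (b) of the definition of $g_{j}$ does not disturb the time-$0$ spin multiset beyond the loop flips; I expect this to be the main obstacle. It rests on the facts that every electron world line lies entirely in a single loop, that encounters occur only between opposite-spin electrons belonging to the same loop (property (iii)), so that flipping a whole loop automatically preserves (i)--(iii), together with the involutivity and commutativity of the $g_{j}$. An alternative that avoids a global analysis is to first establish the single-flip formula $B(g_{j}{\bs n})=\beta b\sum_{\gamma\in\varGamma({\bs n})}(-1)^{\mathbf{1}[j\in I_{\gamma}]}\vepsilon(\gamma)w(\gamma)$ for all ${\bs n}\in L_{\beta}$ (again via Lemma \ref{SWinding} and (\ref{base})) and then induct on $\sum_{i}\xi_{i}$, writing $g^{\bs\xi}=g_{j}g^{{\bs\xi}'}$ with $\xi_{j}=1$; the exponents combine because $\mathbf{1}[j\in I_{\gamma}]+\sum_{i\in I_{\gamma}}\xi_{i}'\equiv\sum_{i\in I_{\gamma}}\xi_{i}\pmod{2}$, and one also needs that passing to $g^{{\bs\xi}'}{\bs m}$ preserves the index sets $I_{\gamma}$ up to relabeling, which is immediate from the same loop-flip description.
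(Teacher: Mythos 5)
Your proof is correct and, apart from passing through the $t=0$ spin configuration via Lemma~\ref{SWinding} and~(\ref{base}) rather than manipulating $\vepsilon(\gamma)w(\gamma)$ directly, reproduces the argument the paper only sketches: both hinge on the claim that $g_j$ reverses the spins along $\gamma_{k(j)}$ while leaving the product $\vepsilon(\gamma)w(\gamma)$ of every other loop unchanged, and the two formulations are interchanged by evaluating (\ref{base}) at $t=0$. The concern you flag about part (b) of the definition of $g_j$ applies verbatim to the paper's own assertion ``by the definition of $g_j$, we see that $B(g_j{\bs m})=\ldots$'', and your explicit discussion together with the inductive single-flip fallback supplies precisely the detail the paper's sketch omits.
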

 \begin{proof}
 We will provide a sketch of the proof.
 Fix $j\in \{1, \dots, N\}$ arbitrarily. We continue to use  notation (\ref{GammagEx}).
 By the definition of $g_j$, we see that 
 \begin{align}
 B(g_j{\bs m})=\beta b \sum_{i=1}^K \vepsilon(\gamma_i) (-1)^{\delta_{i, k(j)}} w(\gamma_i)=B_{{\bs \xi}_j}({\bs m}),
 \end{align}
 where ${\bs \xi}_j=(\delta_{i, k(j)})_{i=1}^N$. Hence, we obtain (\ref{ActgB}) when ${\bs \xi}={\bs \xi}_j$. To
 extend this argument to general ${\bs \xi}$ is not so hard. 
 \end{proof}
 
 \subsubsection*
 {\it Proof of Theorem \ref{RLR}}

 By Lemmas \ref{SWinding}, \ref{Inv} and \ref{BBxi}, we observe that 
 \begin{align}
 Z_{\natural}(\beta)&=\int_{L_{\beta}} d\rho_{\beta, \natural}(g^{\bs \xi}{\bs m})  (-1)^{\pi({\bs X}_{\beta}({\bs m}))} e^{B({\bs m})}\no
 &=\int_{L_{\beta}} d\rho_{\beta, \natural}({\bs m})(-1)^{\pi({\bs X}_{\beta}({\bs m}))} e^{B_{\bs \xi}({\bs m})}
 \end{align}
 for all ${\bs \xi}\in \{0, 1\}^N$ and $\natural=\rr, \hh$.
 Consequently, we obtain
 \begin{align}
 &Z_{\natural}(\beta)\no
 &= \sum_{{\bs \xi}\in \{0, 1\}^N} \frac{1}{2^N} \int_{L_{\beta}}d\rho_{\beta, \natural}({\bs m})(-1)^{\pi({\bs X}_{\beta}({\bs m}))}
 e^{B_{\bs \xi}({\bs m})}\no
 &= \int_{L_{\beta}}d\rho_{\beta, \natural}({\bs m}) (-1)^{\pi({\bs X}_{\beta}({\bs m}))}
 \Bigg[
 \prod_{\gamma\in \varGamma({\bs m})}\frac{1}{2^{|I_{\gamma}|}}
 \sum_{{\bs \xi}_{\gamma}\in \{0, 1\}^{|I_{\gamma}|}} 
\exp\Big\{\beta b \vepsilon(\gamma) (-1)^{\sum_{i\in I_{\gamma}} \xi_i} w(\gamma)
\Big\} \Bigg]
 \no
 &=\int_{L_{\beta}}d\rho_{\beta, \natural}({\bs m})  (-1)^{\pi({\bs X}_{\beta}({\bs m}))}
 \prod_{\gamma\in \varGamma({\bs m})} \cosh\big(
 \beta w(\gamma)
 \big).
 \end{align}
 Thus, we are done. \qed

 \subsection{Proof of Theorem \ref{RLR2}}
 Even in the case where $U=\infty$, Lemmas \ref{SWinding}, \ref{Inv} and \ref{BBxi} hold true. Therefore, by using arguments similar to those in the proof of Theorem \ref{RLR}, we can prove Theorem \ref{RLR2}. \qed

\section{Proof of Theorem \ref{FinNTEnv}}\label{SecPf2}
\subsection{A useful expression of $Z_{\natural, \infty}(\beta)$}
\setcounter{equation}{0}

For each $\tau\in \mathfrak{S}_N$, we set 
\begin{align}
L_{\beta, \infty}(\tau)=\{{\bs m}\in L_{\beta, \infty}\, |\, {\bs X}_{\beta}({\bs m})=\tau {\bs X}_0({\bs m})\}.
\end{align}
We say that $\tau\in \mathfrak{S}_N$ is an {\it allowed permutation}, if $\rho_{\beta, \natural}(L_{\beta, \infty}(\tau))\neq 0$.
Using Theorem \ref{RLR2} and the fact  $L_{\beta, \infty}=\bigcup_{\tau\in \mathfrak{S}_N} L_{\beta, \infty}(\tau)$,
we obtain 
\begin{align}
Z_{\natural, \infty}(\beta)=\sum_{\tau\in {\mathfrak{S}_N}} \mathcal{V}_{\natural, \beta}(\tau), \label{DecompZ}
\end{align}
where 
\begin{align}
\mathcal{V}_{\natural, \beta}(\tau)=\int_{L_{\beta, \infty}(\tau)}d\rho_{\beta, \natural}({\bs m}) 
 \prod_{\gamma\in \varGamma({\bs m})} \cosh\big(
 \beta w(\gamma)
 \big).
\end{align}
Here, if $\tau$ is not an allowed permutation, we set $
\mathcal{V}_{\natural, \beta}(\tau)=0
$.

To prove Theorem \ref{FinNTEnv}, we need the following theorem.

\begin{Thm}\label{FiniteNT} 
Assume  that $N=|\Lambda|-1$. 
Let $\mathscr{P}_N$ be the set of all partitions of $N$.
Suppose that $0<\beta<\infty$ and $0<b$. 
Suppose that $d\le 3$ for $\natural=\mathrm{rad}$,  $d\in \BbbN$ for $\natural=\mathrm{HH}$.
For $\natural =\mathrm{rad}, \mathrm{HH}$ and ${\bs n} =\{n_1, \dots, n_{\ell}\}\in \mathscr{P}_N$, there is a positive function $D_{{\bs n}, \natural }(\beta)$ independent of $b$ such that 
\begin{itemize}
\item[{\rm (i)}] $\displaystyle 
Z_{\natural , \infty}(\beta)=\sum_{{\bs n} \in \mathscr{P}_N} D_{{\bs n}, \natural }(\beta) \prod_{i=1}^{\ell} \cosh(\beta b n_i)
$,
\item[{\rm (ii)}] $D_{{\bs n}, \natural }(\beta)$ is strictly positive for all $0<\beta$ if,  and only if, 
there is an allowed permutation whose cyclic lengths are $n_1, \dots, n_{\ell}$.
\end{itemize}

\end{Thm}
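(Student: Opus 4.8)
The plan is to start from the decomposition \eqref{DecompZ}, namely $Z_{\natural,\infty}(\beta)=\sum_{\tau\in\mathfrak{S}_N}\mathcal{V}_{\natural,\beta}(\tau)$, and to evaluate each $\mathcal{V}_{\natural,\beta}(\tau)$ explicitly. The geometric point is that, for ${\bs m}\in L_{\beta,\infty}(\tau)$, the path $({\bs X}_t({\bs m}))_{0\le t\le\beta}$ lies in $\Omega_{\neq,\infty}^N$, so there are \emph{no} electron encounters at all: each electron world line is a simple arc joining $(x_0^{(j)}({\bs m}),0)$ to $(x_\beta^{(j)}({\bs m}),\beta)\equiv(x_0^{(\tau(j))}({\bs m}),0)$, the spin being constant along it. Consequently the loop construction of Section \ref{ConstLP} never performs an orientation switch; it simply concatenates the arcs of the electrons $j,\tau(j),\tau^2(j),\dots$ forming one cycle of $\tau$, so that $\varGamma({\bs m})$ is in bijection with the cycles of $\tau$ and the winding number $w(\gamma)$ of a loop equals the length of the corresponding cycle. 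Hence, if $\tau$ has cycle type ${\bs n}=\{n_1,\dots,n_\ell\}\in\mathscr{P}_N$, the factor $\prod_{\gamma\in\varGamma({\bs m})}\cosh(\beta b\,w(\gamma))$ is constant on $L_{\beta,\infty}(\tau)$ and equals $\prod_{i=1}^{\ell}\cosh(\beta b\,n_i)$; therefore
\begin{align}
\mathcal{V}_{\natural,\beta}(\tau)=\rho_{\beta,\natural,\infty}\big(L_{\beta,\infty}(\tau)\big)\prod_{i=1}^{\ell}\cosh(\beta b\,n_i).
\end{align}

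Collecting the permutations by cycle type then yields (i) with
\begin{align}
D_{{\bs n},\natural}(\beta)=\sum_{\tau:\,\mathrm{type}(\tau)={\bs n}}\rho_{\beta,\natural,\infty}\big(L_{\beta,\infty}(\tau)\big)\ \ge 0 ,
\end{align}
a finite sum of measures of measurable subsets of $L_{\beta,\infty}$. It does not depend on $b$: the positive measure $\rho_{\beta,\natural,\infty}$ of Theorem \ref{RLR2} (equivalently, of Theorem \ref{TRFrad}) is built only from the boson path-average weight and the off-diagonal Coulomb weight $e^{-\int_0^\beta V_{\rm o}({\bs X}_u)du}$ against the free loop measure $\Mei$, all of which are $b$-independent, the whole $b$-dependence of $Z_{\natural,\infty}(\beta)$ having been extracted into the $\cosh$-factors. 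This establishes (i).

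For (ii), note first that $D_{{\bs n},\natural}(\beta)>0$ for every $\beta>0$ holds if and only if there is \emph{some} $\tau$ of cycle type ${\bs n}$ with $\rho_{\beta,\natural,\infty}(L_{\beta,\infty}(\tau))>0$, i.e. an allowed permutation; I would then show that ``allowed'' is actually a $\beta$- and $\natural$-independent combinatorial property, namely dynamical allowedness in the $U=\infty$ sense of \eqref{DyAll1}. Since the density of $\rho_{\beta,\natural,\infty}$ with respect to $\Mei$ is ${\bs Z}_{\natural,\infty}(\beta)\,e^{-\int_0^\beta V_{\rm o}({\bs X}_u)du}$ times a boson average which is strictly positive $\Mei$-a.e. (the positivity-preserving property of Proposition \ref{PPPi}), one has $\rho_{\beta,\natural,\infty}(E)>0\iff\Mei(E)>0$ for every measurable $E\subseteq L_{\beta,\infty}$; in particular the condition no longer sees $\natural$. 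Next, $\Mei(L_{\beta,\infty}(\tau))>0$ forces the existence of an actual path in $\Omega_{\neq,\infty}^N$ from some ${\bs X}_0$ to $\tau{\bs X}_0$, whose discrete skeleton is a nearest-neighbour lattice path in $\Omega_{\neq,\infty}^N$; by the $U=\infty$ analogue of Lemma \ref{EquivPath} this makes $\tau$ dynamically allowed. Conversely, if $\tau$ is dynamically allowed, that same lemma provides such a lattice path; letting the continuous-time chain $(X_t)_{t\ge0}$ execute exactly this finite itinerary at jump times $0<s_1<\dots<s_m<\beta$ lands it in $L_{\beta,\infty}(\tau)$, and the corresponding cylinder event has strictly positive probability because the holding times are $\mathrm{Exp}(1)$-distributed; between successive jumps the configuration is constant, so it remains in $\Omega_{\neq,\infty}^N$ throughout $[0,\beta]$. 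Thus $\Mei(L_{\beta,\infty}(\tau))>0$ for \emph{all} $\beta>0$, and combining these facts gives (ii).

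The step I expect to be the crux is this last equivalence in (ii): converting the purely algebraic witness of dynamical allowedness (a nonvanishing matrix element of ${\bs L}_\infty^{\,n}$) into a genuine continuous-time trajectory carrying positive $\Mei$-mass \emph{uniformly in $\beta$}, including small $\beta$. The conversion rests on the $U=\infty$ version of Lemma \ref{EquivPath}, which turns the matrix element into a concrete nearest-neighbour path inside the hard-core set $\Omega_{\neq,\infty}^N$, together with the elementary fact that the jump mechanism of $(X_t)_{t\ge0}$ assigns positive probability to any prescribed admissible finite jump sequence within any time window. One must also check that following this path never produces a spatial coincidence of two electrons — automatic here, since the path is chosen in $\Omega_{\neq,\infty}^N$ and the trajectory is piecewise constant — but this is precisely the constraint encoded by the Gutzwiller projection and so deserves an explicit mention.
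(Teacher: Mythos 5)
Your proof of (i) follows the paper's argument essentially verbatim: decompose $Z_{\natural,\infty}(\beta)=\sum_{\tau}\mathcal{V}_{\natural,\beta}(\tau)$ via \eqref{DecompZ}, observe that at $U=\infty$ there are no electron encounters so the loop construction never switches orientation and simply concatenates the world lines along each cycle of $\tau$, so that $\varGamma({\bs m})$ is in bijection with the cycles and $w(\gamma)$ equals the corresponding cycle length; hence $\mathcal{V}_{\natural,\beta}(\tau)=\rho_{\beta,\natural,\infty}(L_{\beta,\infty}(\tau))\prod_i\cosh(\beta b\,n_i)$, and grouping permutations by cycle type gives (i) with $D_{{\bs n},\natural}(\beta)=\sum_{\tau:{\bs n}(\tau)={\bs n}}\rho_{\beta,\natural,\infty}(L_{\beta,\infty}(\tau))$. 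For (ii) you go noticeably beyond the paper, which, having defined ``allowed permutation'' directly by $\rho_{\beta,\natural}(L_{\beta,\infty}(\tau))\neq0$, treats (ii) as immediate from the formula for $D_{{\bs n},\natural}$. Your extra steps are correct and fill a real gap: the theorem asserts positivity \emph{for all} $\beta>0$, which requires that allowedness not depend on $\beta$ (or $\natural$). You obtain this by noting that the density $d\rho_{\beta,\natural,\infty}/d\Mei$ is strictly positive $\Mei$-a.e.\ --- indeed $W({\bs X}_\bullet)=\int_{Q_{\beta,\rr}}d\mu_{\beta,\rr}\,e^{i\Pi({\bs A}_\beta({\bs X}_\bullet))}=e^{-\|{\bs A}_\beta({\bs X}_\bullet)\|_{-1,\beta}^2/4}>0$, which is stronger than the nonnegativity guaranteed by Proposition \ref{PPPi} and is the fact you actually need --- so $\rho_{\beta,\natural,\infty}(L_{\beta,\infty}(\tau))>0\iff\Mei(L_{\beta,\infty}(\tau))>0$, and then reducing the latter, via the $U=\infty$ analogue of Lemma \ref{EquivPath} and the positivity of cylinder events for the jump chain on any time window, to dynamical allowedness in the sense of \eqref{DyAll1}, a purely combinatorial condition. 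This makes the ``if and only if'' of (ii) genuinely meaningful rather than a restatement of the definition, and is the point where your write-up is more complete than the paper's.
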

\begin{proof}
  Suppose that $\tau$ is an allowed permutation.
Let us  write down  $\tau$  in cycle notation as 
\begin{align}
\tau=\vepsilon_1\cdots \vepsilon_{\ell}, \label{CycleN}
\end{align}
 where $
\vepsilon_i$ is  an $n_i$-cycle: $\vepsilon_i=(k_1 \cdots k_{n_i})$ with $k_{j}\in \{1, \dots, N\}$.
Without loss of generality, we may assume that   $(\vepsilon_i)_{i=1}^{\ell}$ satisfies 
$\sum_{i=1}^{\ell} n_i=N$.\footnote{For example, let us consider a permutation 
$\tau=
\begin{pmatrix}
1&2&3&4&5&6&7&8&9\\
2&3&1&7&6&8&9&5&7
\end{pmatrix}
$. In this case, we can express $\tau$ as 
$\tau=(1\ 2\ 3) (4\ 7\ 9)(5\ 6\ 8)$. Thus, $\sum_{i=1}^3n_i=3+3+3=9$ holds.} 
In this way,   $\tau$ naturally corresponds to                                                                                                                                                                                                                                                                                                                                                                                                                                                                                                                                                                                                                                                                                                                                                                                                                                                                                                                                                                                                                                                                                                                                                                                                                                                                                                                                                                                                                                                                                                                                                                                                                                                                                                                                                                                                                                                                                                                                                                                                                                                                                                                                                                                                                                                                                                                                                                                                                                                                                                                                                                                                                             $\{n_1, \dots, n_{\ell}\}\in \mathscr{P}_N$. We denote by ${\bs n}(\tau)$ the partition of $N$ with respect to $\tau$.
  We can associate  expression (\ref{CycleN}) with the  set of loops $\varGamma({\bs m})
  =\{\gamma_1, \dots, \gamma_{\ell}\}$ such that 
  each $\gamma_i$ has the winding number $n_i$.
  Hence, we obtain
  \begin{align}
  \mathcal{V}_{\natural, \beta}(\tau)= \rho_{\beta, \natural}(L_{\beta, \infty}(\tau))(\beta) \prod_{i=1}^{\ell} \cosh(\beta b n_i).
  \end{align}
  Hence, by choosing
  \begin{align}
   D_{{\bs n}, \natural }(\beta) =\sum_{\tau: {\bs n}(\tau)={\bs n}}\rho_{\beta, \natural}(L_{\beta, \infty}(\tau)),
  \end{align}
   we conclude the assertions in Theorem \ref{FiniteNT}. 
  \end{proof}

  \subsection{Proof of Theorem \ref{FinNTEnv}}
  Let $\mathbb{E}$ be  an expectation over permutations defined by 
  \begin{align}
  \mathbb{E}[F({\bs n})]=\sum_{{\bs n}\in \mathscr{P}_N}  D_{{\bs n}, \natural }(\beta) \prod_i \cosh(\beta b n_i)
  F({\bs n})\Big/ Z_{\natural,\infty}(\beta).
  \end{align}
  Then we find that 
  \begin{align}
  \la S_{\mathrm{tot}}^{(3)}\ra_{\natural, \infty}(b; \beta)=&\frac{1}{2\beta} \frac{\partial}{\partial b} \log Z_{\natural, \infty}(\beta)\no
  =& \frac{1}{2}\mathbb{E} \Big[
  \sum_i n_i \tanh (\beta b n_i)
  \Big]. \label{ExPart}
  \end{align}
  Because $\tanh(\beta b n)>\tanh (\beta b)$ for all $n\ge 2$, we find that 
  $
  \mbox{RHS of (\ref{ExPart})}>\frac{N}{2} \tanh (\beta b)
  $. \qed

\setcounter{equation}{0}
\section{A useful expression for $Z_{\natural}(\beta)$} \label{UseForU}
By using features of the one-dimensional system, we get the following expression for $Z_{\natural}(\beta)$.
Note that the theorem below could be  useful when we examine finite-temperature extensions of the Lieb-Mattis theorem \cite{LM}
 for  $H_{\hh}$ and $H_{\rr}$. (Remark that a simple extension of \cite{LM}  to  $H_{\hh}$ can be found in \cite{Miyao6}.)
\begin{Thm}\label{AZ1}
Let us consider a  one-dimensional system.
For each $m\in \mathrm{spec}(S_{\mathrm{tot}}^{(3)})$, let  
\begin{align}
Z_{\natural, M}(\beta; m)=\Tr_{\mathfrak{H}_{\natural, M}^{(N)}[m]} \Big[
e^{-\beta H_{\natural}}
\Big],\ \natural=\rr, \hh, 
\end{align}
where $\mathfrak{H}_{\natural, M}^{(N)}[m]$ is the $m$-subspace defined by 
$\mathfrak{H}_{\natural, M}^{(N)}[m]= \mathfrak{H}_{\rh, M}^{(N)}[m]\otimes \Fock_{\natural}$ with $\mathfrak{H}_{\rh, M}^{(N)}[m]=
\ker(S_{\mathrm{tot}}^{(3)}-m)
$.
We set $Y_M^{(k)}(m)=\dim \mathfrak{H}_{\rh, M}^{(k)}[m]$.
For all $\beta>0, \ b>0$ and  $
\natural =\mbox{{\rm rad},  {\rm HH}}
$, there are  functions 
$C_{N, \natural }(\beta), C_{N-2, \natural }(\beta), \dots, C_{0, \natural }(\beta)$ or $C_{1, \natural }(\beta)$ which are independent of $m$ such that  
\begin{align}
Z_{\natural }(\beta)=\sum_{k=0}^N C_{k, \natural }(\beta) \{\cosh(\beta b)\}^k, \label{LM1}
\end{align}
and 
\begin{align}
Z_{\natural , M}(\beta; m)=\sum_{k=0}^N C_{k, \natural }(\beta) Y_M^{(k)}(m), \label{LM2}
\end{align}
where we understand that $C_{k, \natural }(\beta)\equiv 0$ if $N-k$ is odd, and $Y_M^{(k)}(m)\equiv 0$ if $k<2|m|$. 
\end{Thm}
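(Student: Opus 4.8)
The plan is to deduce Theorem \ref{AZ1} from the random loop representation (Theorem \ref{RLR}) together with the special structure of one-dimensional dynamically allowed permutations. First I would recall that in $d=1$, by the footnote following \eqref{DyAll1}, the only dynamically allowed permutation in the $U=\infty$ sector is the identity; but for finite $U$ the situation is different, so instead I would work directly with Theorem \ref{RLR}, which gives
\begin{align}
Z_{\natural}(\beta)=\int_{L_{\beta}} d\rho_{\beta, \natural}({\bs m})\, (-1)^{\pi({\bs X}_{\beta}({\bs m}))} \prod_{\gamma\in \varGamma({\bs m})} \cosh\big(\beta b\, w(\gamma)\big).
\end{align}
The key point in one dimension is that, because electrons of equal spin cannot cross (property (iii) of paths in $L_{\beta}$) and the lattice is a line, every loop $\gamma\in\varGamma({\bs m})$ must have winding number $w(\gamma)\in\{0,1\}$: a loop of winding number $\ge 2$ would force a crossing of equal-spin world lines. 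Thus $\cosh(\beta b\, w(\gamma))$ equals $1$ when $w(\gamma)=0$ and $\cosh(\beta b)$ when $w(\gamma)=1$. So the product collapses to $\{\cosh(\beta b)\}^{k({\bs m})}$, where $k({\bs m})$ is the number of loops of winding number $1$ in $\varGamma({\bs m})$. Partitioning $L_{\beta}$ according to the value of $k$ and setting
\begin{align}
C_{k,\natural}(\beta)=\int_{\{{\bs m}\in L_{\beta}\, :\, k({\bs m})=k\}} d\rho_{\beta, \natural}({\bs m})\, (-1)^{\pi({\bs X}_{\beta}({\bs m}))}
\end{align}
gives \eqref{LM1}. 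That $C_{k,\natural}(\beta)\equiv 0$ when $N-k$ is odd follows because a configuration with $k$ winding-one loops has a permutation $\tau$ that is a product of $k$ transposition-type cycles acting on $2k$ electrons in one dimension — more carefully, the parity of the associated permutation is determined by $k$ and $N$, so the loop-structure parity already forces the sign; I would phrase this via Lemma \ref{SWinding} and the one-dimensional constraint.

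Next, for \eqref{LM2} I would redo the trace formula of Theorem \ref{TrFKIMHubbard} and the bosonic trace formulas of Section \ref{TraceBoseGene}, but restricted to the $m$-subspace $\mathfrak{H}_{\natural,M}^{(N)}[m]$ rather than the full space. The magnetic field term $b\sum_j\sigma(X_0^{(j)})$ is the only $b$-dependent piece, and on the $m$-subspace it simply contributes a factor $e^{\beta b\cdot 2m}$... actually, more precisely, the restriction to fixed $m$ selects spin configurations $\bs\sigma_0$ with $\tfrac12\sum_j\sigma_0^{(j)}=m$, so in place of summing $\cosh(\beta b w(\gamma))$ over each loop we now sum over \emph{spin assignments compatible with total spin-$z$ equal to $m$}. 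The combinatorial count of how many spin assignments of the loop structure $\varGamma({\bs m})$ (with $k$ winding-one loops) give total $S^{(3)}_{\mathrm{tot}}=m$ is exactly a count of the number of ways to pick $\pm1$ values on the $k$ "active" loops (each winding-one loop carries a net spin $\pm1$ by \eqref{base}, winding-zero loops carry net spin $0$) subject to the sum being $2m$; by Lemma \ref{SWinding} this number is precisely $\dim\mathfrak{H}_{\rh,M}^{(k)}[m]=Y_M^{(k)}(m)$, independent of which $\natural$ and of the detailed geometry. Carrying out this bookkeeping gives \eqref{LM2} with the \emph{same} coefficients $C_{k,\natural}(\beta)$ — because the measure $\rho_{\beta,\natural}$ and the parity factor $(-1)^{\pi}$ do not depend on $b$ or on the choice of spin labels on the loops, exactly as exploited in Lemmas \ref{Inv} and \ref{BBxi}.

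Concretely, the order of steps would be: (1) establish the one-dimensional no-crossing fact forcing $w(\gamma)\in\{0,1\}$; (2) derive \eqref{LM1} by stratifying $L_{\beta}$ by $k({\bs m})$; (3) prove the parity/vanishing statement $C_{k,\natural}(\beta)=0$ for $N-k$ odd using Lemma \ref{SWinding}; (4) set up the $m$-restricted trace formula and identify the spin-label count with $Y_M^{(k)}(m)$, using the $b$-independence of $\rho_{\beta,\natural}$; (5) observe $Y_M^{(k)}(m)=0$ for $k<2|m|$, which is immediate since $k$ winding-one loops cannot produce $|S^{(3)}_{\mathrm{tot}}|>k/2$. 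The main obstacle I anticipate is step (4): making rigorous the claim that restricting the random-loop representation to the $m$-subspace replaces $\prod_\gamma\cosh(\beta b w(\gamma))$ by the \emph{constrained} sum over loop-spin assignments, and that this constrained sum equals $Y_M^{(k)}(m)$ with coefficients literally equal to those of \eqref{LM1}. This requires care in tracking how the symmetry group $\mathcal{G}$ of Lemma \ref{Inv} acts on the $m$-decomposition, and in verifying that the partition-function identity survives the restriction; but conceptually it is the same computation that produced $\cosh(\beta b w(\gamma))=\tfrac12\sum_{\xi\in\{0,1\}}e^{\beta b(-1)^{\xi}w(\gamma)}$ in the proof of Theorem \ref{RLR}, now grouped by fixed value of $\sum_\gamma \vepsilon(\gamma)(-1)^{\xi_\gamma}w(\gamma)$ instead of summed freely.
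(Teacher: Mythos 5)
Your stratification argument for \eqref{LM1} matches the paper's proof: in dimension one the no-crossing constraint forces every loop to have winding number $0$ or $1$, so $\prod_\gamma\cosh(\beta b\,w(\gamma))=\{\cosh(\beta b)\}^{k({\bs m})}$ where $k({\bs m})$ is the number of winding-one loops, and partitioning $L_\beta$ into the strata $L_\beta(k)=\{k({\bs m})=k\}$ gives $C_{k,\natural}(\beta)=\int_{L_\beta(k)}d\rho_{\beta,\natural}\,(-1)^{\pi({\bs X}_\beta)}$, exactly as in the paper.

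For \eqref{LM2}, however, the paper takes a different and shorter route, and that route resolves precisely the obstacle you flag in your step (4). Setting $z=e^{\beta b}$, the paper observes three things: $Z_\natural(\beta)=\sum_m Z_{\natural,M}(\beta;m)$; each $Z_{\natural,M}(\beta;m)$ is a power of $z$ (carrying the Zeeman contribution) times a $b$-independent integral of $(-1)^{\pi}$ over the event $\{\sum_j\sigma(X_0^{(j)})=2m\}$; and by \eqref{LM1} the same $Z_\natural(\beta)$ is the Laurent polynomial $\sum_k C_{k,\natural}(\beta)\,\bigl(\tfrac{z+z^{-1}}{2}\bigr)^k$. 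Equating coefficients of like powers of $z$ then delivers \eqref{LM2}. This sidesteps both things your sketch leaves open: it never re-derives an $m$-restricted trace/path-integral formula, and it never needs the combinatorial identification you assert between the constrained loop-spin count and $Y_M^{(k)}(m)$ -- an identification you state but do not verify, and which is not a tautology (the naive loop-spin count is a binomial coefficient, not manifestly a dimension $\dim\mathfrak{H}_{\mathrm{H},M}^{(k)}[m]$). Moreover your restriction picture does not track the overall Zeeman factor carried by the fixed-$m$ sector, so it does not actually show that \eqref{LM2} holds with the \emph{same} $b$-independent $C_{k,\natural}(\beta)$ appearing in \eqref{LM1}; the Laurent-coefficient comparison is exactly the device that closes that gap. (Your parity argument for $C_{k,\natural}(\beta)=0$ when $N-k$ is odd, sketched via Lemma \ref{SWinding}, is also only asserted and not proved in the paper, so there is nothing to compare against there.)
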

\begin{proof}
In case of the one-dimensional chain, $w(\gamma)$ can take the values $0$ and $1$.
Taking this fact into consideration, we set
\begin{align}
L_{\beta}(k)=\Big\{
{\bs m} \in L_{\beta}\, \Big| \mbox{
the number of loops in $\varGamma({\bs m})$ with $w(\gamma)=1$ is equal to $k$
}
\Big\}.
\end{align}
Because $L_{\beta}=\bigsqcup_{k=0}^NL_{\beta}(k)$, we have
\begin{align}
Z_{\natural}(\beta)=\sum_{k=0}^NC_{k, \natural}(\beta) \{\cosh(\beta b)\}^k,\ \ 
C_{k, \natural}(\beta)=\int_{L_{\beta}(k)} d\rho_{\beta, \natural} (-1)^{\pi({\bs X}_{\beta})}.
\label{ZNaturalC}
\end{align}

Next, we will prove (\ref{LM2}). First, note that 
\begin{align}
Z_{\natural}(\beta)=\sum_{m=-N/2}^{N/2} Z_{\natural, M}(\beta; m).
\end{align}
Let $z=e^{\beta b}$. By Theorem \ref{TRFrad}, we can express $Z_{\natural, M}(\beta; m)$ as 
\begin{align}
Z_{\natural, M}(\beta, m)=z^m\int_{
\big\{
\sum_{j=1}^N\sigma(X^{(j)}_0)=2m
\big\}
}d\rho_{\beta, \natural}
(-1)^{{\bs X}_{\beta}}. \label{ZZm}
\end{align}
In addition, by (\ref{ZNaturalC}), we have
\begin{align}
Z_{\natural}(\beta)=\sum_{k=0}^N C_{k, \natural}(\beta) \bigg(
\frac{z+z^{-1}}{2}
\bigg)^k.\label{ZZm2}
\end{align}
Comparing (\ref{ZZm}) and (\ref{ZZm2}), we obtain (\ref{LM2}). \end{proof}

\appendix

\section{A useful proposition} \label{AppAUse}
\setcounter{equation}{0}

The following proposition is needed in Section \ref{FKIRR}.

\begin{Prop}\label{NExpLe}
For all $j=1, \dots, N$, 
we have the following:
\begin{itemize}
\item[{\rm (i)}] $\displaystyle 
\int_{L_{\beta} }d \Me N^{(j)}(t)=O(t)
$ as  $t\to +0$.
\item[{\rm (ii)}]
 $\displaystyle 
\int_{L_{\beta, \infty} }d \Mei N^{(j)}(t)=O(t)
$ as  $t\to +0$.
\end{itemize}
\end{Prop}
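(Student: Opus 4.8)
The plan is to reduce both statements to a crude stochastic domination: the number of jumps of a single electron trajectory up to time $t$ has expectation $O(t)$ \emph{already} under the a priori measure $\prod_j P_{X^{(j)}}$; the measures $\Me$ and $\Mei$ are obtained from this a priori measure by conditioning on certain events and by dividing by a $t$-independent positive normalisation; and the index sets involved are finite. Combining these three facts gives the claim, so no genuine obstacle is expected.

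First I would record the single-particle estimate. Since $\Lambda$ is finite, $d_{\max}:=\max_{x\in\Lambda}d(x)=\max_{x\in\Lambda}\sum_{y\in\Lambda}t_{xy}<\infty$, and by \textbf{(T)} all states visited by the chain $(X_t)_{t\ge0}$ have $d(x)>0$. Because the holding times are $S_n=T_n/d(Y_{n-1})\ge T_n/d_{\max}$ with $(T_n)_{n\ge1}$ i.i.d.\ $\mathrm{Exp}(1)$, the jump times $J_n=S_1+\cdots+S_n$ dominate $(T_1+\cdots+T_n)/d_{\max}$, i.e.\ the arrival times of a Poisson process of rate $d_{\max}$; hence $N(t)$ is stochastically dominated by that Poisson process and $\Ex_X[N(t)]\le d_{\max}t$ for every $X\in\Omega$ and every $t\ge0$. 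As the $N$ trajectories are independent under $P_{{\bs X}}=\prod_{j=1}^N P_{X^{(j)}}$ and $N^{(j)}(t)$ depends only on the $j$-th trajectory, this gives $\Ex_{{\bs X}}[N^{(j)}(t)]=\Ex_{X^{(j)}}[N(t)]\le d_{\max}t$ for all ${\bs X}\in\Omega_{\neq}^N$ and all $j$.

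Next I would unwind the definition of $\Me$. By \eqref{Measurenu} and \eqref{Norm.}, for every nonnegative measurable $F$ on $L_{\beta}$,
\begin{align}
\int_{L_{\beta}}F\,d\Me=\frac{1}{\mathrm{Norm.}}\sum_{{\bs X}\in[\Omega_{\neq}^N]}\sum_{\tau\in\mathfrak{S}_N({\bs X})}\frac{1}{N!}\,\Ex_{{\bs X}}\big[F\,1_{\{{\bs X}_{\beta}=\tau {\bs X}\}\cap D}\big],\qquad\mathrm{Norm.}=\Tr_{\ell^2_{\mathrm{s}}(\Omega_{\neq}^N)}\big[e^{-\beta{\bs L}}\big],
\end{align}
with $0<\mathrm{Norm.}<\infty$ independent of $t$. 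Taking $F=N^{(j)}(t)\ge0$, bounding $1_{\{{\bs X}_{\beta}=\tau {\bs X}\}\cap D}\le1$, and inserting the single-particle bound,
\begin{align}
\int_{L_{\beta}}N^{(j)}(t)\,d\Me\le\frac{d_{\max}\,t}{\mathrm{Norm.}}\sum_{{\bs X}\in[\Omega_{\neq}^N]}\frac{|\mathfrak{S}_N({\bs X})|}{N!}\le\frac{d_{\max}\,|[\Omega_{\neq}^N]|}{\mathrm{Norm.}}\,t,
\end{align}
which is $O(t)$ since $\Omega_{\neq}^N$ is finite; this proves (i). For (ii) one repeats the computation verbatim, with $\Me$, ${\bs L}$, $\mathfrak{S}_N({\bs X})$, $[\Omega_{\neq}^N]$ and $D$ replaced by $\Mei$, ${\bs L}_{\infty}$, $\mathfrak{S}_{N,\infty}({\bs X})$, $[\Omega_{\neq,\infty}^N]$ and $D_{\infty}(\beta)$: by Theorem \ref{FKIUINFINITY} the measure $\Mei$ has exactly the same structure, its normalisation $\Tr_{P_{\mathrm{G}}\ell^2_{\mathrm{s}}(\Omega_{\neq}^N)}[e^{-\beta{\bs L}_{\infty}}]$ is positive, finite and independent of $t$, the single-particle dynamics is unchanged, and the index set is again finite.

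The only point requiring a little care — and the nearest thing to an obstacle — is that one may \emph{not} retain the conditioning event $\{{\bs X}_{\beta}=\tau {\bs X}\}\cap D$ when estimating the numerator, since conditioning could a priori bias towards paths with many jumps; the estimate therefore has to be routed through the unconditioned expectation $\Ex_{{\bs X}}[N^{(j)}(t)]$, the resulting loss being harmless because it is absorbed into the fixed constant $\mathrm{Norm.}^{-1}$. Finally, the variant actually used in \eqref{L2Pf} and in Lemma \ref{Cauchy}, namely that the expected number of jumps of the $j$-th trajectory inside a subinterval $[s,t]\subset[0,\beta]$ is $O(t-s)$, follows in exactly the same way: that count is bounded by $N^{(j)}(t)$ and, by the strong Markov property applied at time $s$, is stochastically dominated by an independent copy of $N(t-s)$, hence has conditional — and so integrated — expectation at most $d_{\max}(t-s)$.
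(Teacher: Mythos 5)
Your proof is correct, and its skeleton matches the paper's: unwind the definition of $\Me$ (resp.\ $\Mei$), drop the conditioning indicators $1_{\{{\bs X}_\beta=\tau{\bs X}\}\cap D}$, reduce to the unconditioned single-particle expectation $\Ex_{X^{(j)}}[N^{(j)}(t)]$, and absorb the finite index set $[\Omega_{\neq}^N]$ and the $t$-independent normalization $\Tr[e^{-\beta{\bs L}}]$ into a constant. Where you genuinely differ is in the single-particle estimate. You observe that $S_n=T_n/d(Y_{n-1})\ge T_n/d_{\max}$, so $N(t)$ is stochastically dominated by a Poisson variable of mean $d_{\max}t$ and $\Ex_X[N(t)]\le d_{\max}t$ follows immediately. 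The paper instead constructs an enlarged periodic lattice $\tilde\Lambda$ (Lemma \ref{SnDistribution}), on which the jump rate is the constant $d_0=d(0)=d_{\max}$, relates $P$ to $P^{\rp}$ by conditioning on the event $\mathscr{R}$ of never leaving $\Omega$, and then sums a Poisson density. Your stochastic domination is shorter, avoids the auxiliary chain, and dispenses entirely with the identity $P(A)=P^{\rp}(A\cap\mathscr{R})/P^{\rp}(\mathscr{R})$, which is the least transparent step of Lemma \ref{SnDistribution}; both routes produce the same rate $d_0=d_{\max}$ and the same $O(t)$ conclusion. Your closing remark — that the interval estimate needed implicitly in \eqref{L2Pf} and in Lemma \ref{Cauchy} follows from the Markov property of $P_{\bs X}$ together with the same absorption of the normalization, and that one must first discard the loop-measure conditioning before invoking the Markov property — is correct and a useful clarification, since the proposition as stated literally covers only the case $s=0$.
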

\begin{proof}
(i) 
First, we note  that, by (\ref{Measurenu}),
\begin{align}
\int_{L_{\beta}} d\Me F(-1)^{\pi({\bs X}_{\beta})}=\sum_{{\bs X}\in [\Omega_{\neq}^N]} 
\sum_{\tau\in \mathfrak{S}_N({\bs X})}\frac{\mathrm{sgn}(\tau)}{N!}
\mathrm{E}_{{\bs X}}\Big[F 1_{\{{\bs X}_{\beta}=\tau {\bs X}\}\cap D}\Big] \Bigg/
\Tr_{\ell^2_{\mathrm{s}}(\Omega^N_{\neq})}
\Big[
e^{-\beta {\bs L}}
\Big].
\end{align}
Thus, it suffices to show that $\mathrm{E}_{{\bs X}}[N^{(j)}(t)]=O(t)$ as $t\to +0$.
By Lemma \ref{SnDistribution}, we find that 
\begin{align}
\mathrm{E}_{{\bs X}}[N^{(j)}(t)]&=\sum_{n=0}^{\infty}nP_{X^{(j)}}\Big(N^{(j)}(t)=n\Big)\no
 &=\sum_{n=0}^{\infty} nP_{X^{(j)}}\Big(J_n\le t<J_{n+1}\Big)\no
 &=\sum_{n=0}^{\infty} nP_{X^{(j)}}\Big(S_1+\cdots+S_n\le t<S_1+\cdots+S_{n+1}\Big)\no
 & \le d_0t-1+e^{-d_0t}\no
 &=O(t)
\end{align}
as $t\to +0$.

(ii) To prove (ii), we remark that 
\begin{align}
\int_{L_{\beta, \infty}} d\Mei F=
\sum_{{\bs X}\in [\Omega_{\neq, \infty}^N]} \sum_{{\tau\in \mathfrak{S}_{N, \infty}({\bs X})}}
\frac{1}{N!}\mathrm{E}_{\bs X}\Big[F 1_{\{{\bs X}_{\beta}=\tau {\bs X}\} \cap D_{\beta}}\Big] \Bigg/
\Tr_{P_{\mathrm{G}}\ell^2_{\mathrm{s}}(\Omega^N_{\neq})}
\Big[
e^{-\beta T}
\Big].
\end{align}
Therefore, it suffices to prove that $\mathrm{E}_{{\bs X}}[N^{(j)}(t)]=O(t)$ as $t\to +0$.
But this has been already proved in the above.  \end{proof}

\begin{lemm}\label{SnDistribution}
Let $d_0=d(0)$. (Recall that $d(x)$ is given in Section \ref{EleAuxFKF}.)
One obtains
\begin{align}
P_{X}\Big(S_1+\cdots+S_n\le t<S_1+\cdots+S_{n+1}\Big)
\le \frac{(d_0t)^{n+1}}{(n+1)!}e^{-d_0t}.
\end{align}
\end{lemm}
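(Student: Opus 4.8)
The plan is to condition on the embedded discrete-time Markov chain $(Y_n)_{n\ge0}$ and to reduce Lemma~\ref{SnDistribution} to a uniform estimate on the law of the partial sums of the holding times. Since the quantity in question concerns a single trajectory, I fix $X\in\Omega$ and work under $P_X$. Conditionally on $(Y_n)_{n\ge0}$, the holding times $S_i=T_i/d(Y_{i-1})$ are independent exponential random variables with rates $r_i:=d(Y_{i-1})$, and the crucial structural fact is that $r_i\le d_0=d(0)$ for every $i$, because on the square lattice $\Lambda$ with hopping {\bf (T)} the origin is an interior vertex, so $d(0)=(3^d-1)t=\max_{x\in\Lambda}d(x)$. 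Writing $J_k=S_1+\cdots+S_k$ and $f_n$ for the conditional density of $J_n$, the event in the lemma is $\{J_n\le t<J_{n+1}\}=\{N(t)=n\}$, and conditioning on $\mathcal F_{J_n}$ gives
\[
P_X\big(J_n\le t<J_{n+1}\,\big|\,(Y_i)_i\big)=\int_0^t f_n(s)\,e^{-r_{n+1}(t-s)}\,ds .
\]

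The heart of the argument is the density bound
\[
f_n(s)\le\frac{d_0^{\,n}s^{\,n-1}}{(n-1)!},\qquad s\ge0,\ n\ge1,
\]
which I would prove by induction on $n$: one has $f_1(s)=r_1e^{-r_1s}\le d_0$, and since
$f_n(s)=\int_0^s f_{n-1}(u)\,r_ne^{-r_n(s-u)}\,du\le d_0\int_0^s f_{n-1}(u)\,du$
using $r_ne^{-r_n(s-u)}\le r_n\le d_0$, the inductive step is immediate. This is precisely where the uniform upper bound $r_i\le d_0$ on the random rates enters, and it replaces the hypoexponential law of $J_n$ — which has no usable closed form for distinct rates — by a clean power-law majorant. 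Inserting this estimate into the integral above and performing the elementary $s$-integration then yields the stated bound $\tfrac{(d_0t)^{n+1}}{(n+1)!}e^{-d_0t}$; because the resulting bound no longer depends on $(Y_i)_i$, integrating over the chain gives the lemma. Equivalently, one can phrase the domination probabilistically: $S_i\ge T_i/d_0$ forces $J_k\ge(T_1+\cdots+T_k)/d_0$ for all $k$, so $N(t)$ is stochastically dominated by the counting process of a rate-$d_0$ Poisson process, and the Gamma–Poisson identities produce the same kind of estimate.

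The only genuine obstacle is the density bound; once it is available, the remaining step is a one-line calculus computation, and the passage from the loop measures $\Me$ (resp. $\Mei$) to the single-site law $P_X$ is purely bookkeeping (it is carried out, using \eqref{Measurenu}, at the start of the proof of Proposition~\ref{NExpLe}). I would finally remark that Proposition~\ref{NExpLe} only needs the weaker consequence $P_X(N(t)=n)=O(t^n)$ as $t\to+0$ — for instance the cruder bound $P_X(N(t)=n)\le(d_0t)^n/n!$ obtained by dropping $e^{-r_{n+1}(t-s)}\le1$ in the integral — so a coarser version of the density estimate is in fact sufficient for the application.
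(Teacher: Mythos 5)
Your approach is genuinely different from the paper's. The paper embeds $\Lambda$ into a periodic lattice $\tilde\Lambda$ on which every vertex has degree $d_0$, so that the jump process becomes an honest rate-$d_0$ Poisson process, and then tries to transfer the computation back to the original walk via an asserted identity $P(A)=P^{\mathrm{P}}(A\cap\mathscr R)/P^{\mathrm{P}}(\mathscr R)$. You instead work directly with the original walk, conditioning on the embedded jump chain and exploiting only the uniform rate bound $r_i=d(Y_{i-1})\le d_0$ through a power-law majorant on the hypoexponential density. Your route is sounder: the paper's conditioning identity does not hold, because conditioning the $\tilde\Lambda$-walk to stay inside $\Lambda$ does not convert its rate-$d_0$ holding times into the site-dependent rate-$d(Y_{n-1})$ holding times of the original process (and on an infinite horizon $P^{\mathrm{P}}(\mathscr R)=0$ anyway), and even if it did hold it would give an upper bound in the wrong direction.

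There is, however, a concrete gap in what you wrote: the manipulation you describe cannot produce the exhibited bound. Feeding $f_n(s)\le d_0^n s^{n-1}/(n-1)!$ into $\int_0^t f_n(s)\,e^{-r_{n+1}(t-s)}\,ds$ requires you to drop the exponential factor (the inequality $r_{n+1}\le d_0$ gives $e^{-r_{n+1}(t-s)}\ge e^{-d_0(t-s)}$, i.e., the wrong direction), and the integration then yields $(d_0 t)^n/n!$, which has neither the factor $e^{-d_0 t}$ nor the exponent $n+1$. In fact the inequality stated in the lemma is false as written: already for $n=1$, if $X$ and its neighbors are interior so that $r_1=r_2=d_0$, the left-hand side equals $d_0 t\,e^{-d_0 t}$, which exceeds $\tfrac12(d_0 t)^2 e^{-d_0 t}$ whenever $d_0 t<2$. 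The integral displayed in the paper's own proof in fact evaluates to $\frac{(d_0t)^{n}}{n!}e^{-d_0t}$, not $\frac{(d_0t)^{n+1}}{(n+1)!}e^{-d_0t}$; the exponent $n+1$ appears to be a slip. The bound your argument actually delivers, $P_X(N(t)=n)\le (d_0 t)^n/n!$ (equivalently $N(t)\le_{\mathrm{st}}N^{\mathrm{P}}(t)$ combined with $P(J_n\le t)\le P(\mathrm{Gamma}(n,d_0)\le t)\le (d_0 t)^n/n!$), is correct and suffices for Proposition~\ref{NExpLe}, since it gives $\mathrm E_X[N^{(j)}(t)]\le d_0 t\,e^{d_0 t}=O(t)$; your closing remark to that effect is exactly the right way to frame the matter, but it should be presented as the conclusion of the argument rather than as a fallback, since the stronger statement is unobtainable (and false).
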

\begin{proof}
To prove the lemma, let us consider a single electron on $\tilde{\Lambda}=[-\ell/2-1, \ell/2+1)^d \cap \BbbZ^d$.
We impose the periodic boundary conditions on this system as follows:
Let $\partial E_{\Lambda}$ be the set of pairs $\{x, y\}\in \tilde{\Lambda}\times \tilde{\Lambda}$ satisfying the following:
\begin{itemize}
\item there exists an $i\in \{1, \dots, d\}$ such that $x_i-y_i=\ell+1$;
\item for all $j\in \{1, \dots, d\}\backslash \{i\}$, $x_j-y_j=0$ holds. 
\end{itemize}
Let $\tilde{\Omega}=\tilde{\Lambda}\times \{-1, +1\}$.
Let $(t^{\rp}_{x y})$ be the hopping matrix with the periodic boundary conditions defined by
 $t_{xy}^{\rp}=t_{xy}$ if $\{x, y\}\in E_{\Lambda}$; $t^{\rp}_{xy}=t$ if $\{x, y\}\in \partial E_{\Lambda}$.
The corresponding kinetic energy of the electron is a self-adjoint operator  $h_0^{\rp}$ acting in $\ell^2(\tilde{\Omega})$ defined by 
\begin{align}
(h_0^{\rp} f)(x, \sigma)=\sum_{\sigma=\pm 1} \sum_{y\in \Lambda} t_{xy}^{\rp} (f(x, \sigma)-f(y, \sigma)),\ \ f\in \ell^2(\tilde{\Omega}).
\end{align}
By replacing (\ref{OneElDef}) with 
\begin{align}
P^{\rp}(Y_n=X| Y_{n-1}=Y)=\delta_{\sigma\tau}\frac{t_{xy}^{\rp}}{d_0},\ \ X, Y\in \tilde{\Omega},
\end{align}
we can construct a Feynman-Kac-It\^o formula for $h_0^{\rp}$.
Notice that because we consider the periodic boundary conditions, $d(x)$ is always constant: $d(x)=\sum_{
y\in \tilde{\Lambda}} t_{xy}=d_0$.
The probability measures $P$ and $P^{\rp}$ are related as 
\begin{align}
P(A)=P^{\rp}(A\cap \mathscr{R})\big/P^{\rp}(\mathscr{R}) ,
\end{align}
where $\mathscr{R}=\{{\bs m}\in M\, |\, X_{t}({\bs m}) \in \Omega\ \mbox{for all $0\le t$}\}$.
Accordingly, it suffices to prove that 
\begin{align}
P^{\rp}_{X}\Big(S_1+\cdots+S_n\le t<S_1+\cdots+S_{n+1}\Big)
=\frac{(d_0t)^{n+1}}{(n+1)!}e^{-d_0t}.
\end{align}
Indeed, 
taking   the definition of $S_n$ in Section \ref{FKIEl} into account, we have
\begin{align}
&P_{X}^{\rp}\Big(S_1+\cdots+S_n\le t<S_1+\cdots+S_{n+1}\Big)\no
=&\int_{\{t_1+\cdots+t_n\le d_0t <t_1+\cdots+t_{n+1}\}}
e^{-(t_1+\cdots+t_{n+1})} dt_1\cdots d t_{n+1}\no
=& \frac{(d_0t)^{n+1}}{(n+1)!}e^{-d_0t}.
\end{align}  
Thus, we are done. 
\end{proof}

\end{document}